\def\@getref#1.#2{#2}
\newcommand{\getref}[1]{%
  \begingroup
  \edef\x{\endgroup\noexpand\@getref\getrefnumber{#1}}\x}
\newcommand{\hgetref}[1]{\hyperref[#1]{\getref{#1}}}
\definecolor{ltblue}{rgb}{0,0.4,0.4}
\definecolor{dkblue}{rgb}{0,0.1,0.6}
\definecolor{dkgreen}{rgb}{0,0.35,0}
\definecolor{dkviolet}{rgb}{0.3,0,0.5}
\definecolor{dkred}{rgb}{0.5,0,0}
\definecolor{keywordcolor}{rgb}{0.7, 0.1, 0.1}   
\definecolor{tacticcolor}{rgb}{0.1, 0.2, 0.6}    
\definecolor{commentcolor}{rgb}{0.4, 0.4, 0.4}   
\definecolor{symbolcolor}{rgb}{0.0, 0.1, 0.6}    
\definecolor{sortcolor}{rgb}{0.1, 0.5, 0.1}      
\newcommand\tuple[1]{(#1)}
\newcommand\synhor{\textsc{hor}}
\newcommand\ilSem[2]{{\mathcal {IL}}\left\llbracket#1\right\rrbracket_{#2}}
\newcommand\sem[1]{\left\llbracket#1\right\rrbracket}
\newcommand\tSem[2]{{\mathcal {T}}\left\llbracket#1\right\rrbracket_{#2}}
\newcommand\cSem[2]{{\mathcal C}\left\llbracket#1\right\rrbracket_{#2}}
\newcommand\gSem[1]{\left\llbracket#1\right\rrbracket}
\newcommand\tysem[1]{\left\llbracket#1\right\rrbracket}
\newcommand\compileExp[2] {{\tau_{\textrm{e}}}\left\llbracket#1\right\rrbracket_{#2}}
\newcommand\compileContr[2]{{\tau_{\textrm{c}}}\left\llbracket#1\right\rrbracket_{#2}}
\newcommand\eSem[2]{{\mathcal E}\left\llbracket#1\right\rrbracket_{#2}}
\newcommand\cRed[2]{\stackrel{#2}{\Longrightarrow_{#1}}}
\newcommand\opsem[1]{\left\llbracket#1\right\rrbracket}
\newcommand{\crule}[3][]{\frac{#2}{#3}\ \mbox{\scriptsize \textit{#1}}}
\newcommand\smartScale[2]{\underline{\mathsf{scale}}(#1,#2)}
\newcommand\smartBoth[2]{\underline{\mathsf{both}}(#1,#2)}
\newcommand\smartTransl[2]{\underline{\mathsf{translate}}(#1,#2)}
\newcommand\smartLetc[3]{\smartLetA{#1}{#2}\;\smartLetB{#3}}
\newcommand\smartLetA[2]{\underline{\mathbf{let}}\;#1 = #2}
\newcommand\smartLetB[1]{\underline{\mathbf{in}}\;#1}
\newcommand\smartTplus{\underline{\mathsf{tplus}}}
\newcommand\reals{{\mathbb R}}
\newcommand\nats{{\mathbb N}}
\newcommand\ints{{\mathbb Z}}
\newcommand\bools{{\mathbb B}}
\newcommand\delay[2]{\mathit{delay}(#1,#2)}
\newcommand\type[1]{\mathtt{#1}}
\newcommand\obs{\mathtt{obs}}
\newcommand\acc{\mathtt{acc}}
\newcommand\advEnv[2]{#1/#2}
\newcommand\TC[1]{\mathcal{TC}(#1)}
\newcommand\instdec{\mathsf{inst}}
\newcommand\inst[2]{\instdec(#1,#2)}
\newcommand\cutPayoff[1]{\mathsf{cutPayoff}(#1)}
\newcommand\zero{\mathtt{zero}}
\newcommand\scale[2]{\mathtt{scale}(#1,#2)}
\newcommand\both[2]{\mathtt{both}(#1,#2)}
\newcommand\transl[2]{\mathtt{translate}(#1,#2)}
\newcommand\promote[2]{\mathsf{translExp}(#1,#2)}
\newcommand\transfer[3]{\mathtt{transfer}{(#1,#2,\mathtt{#3})}}
\newcommand\letc[3]{\letA\;#1 = #2\;\letB\;#3}
\newcommand\letA{\mathtt{let}}
\newcommand\letB{\mathtt{in}}
\newcommand\ifwithin[4]{\mathtt{ifWithin}(#1,#2,#3,#4)}
\newcommand\pto{\rightharpoonup}
\newcommand\id[1]{\ensuremath{\mathit{#1}}}
\reservestyle{\command}{\mathtt}
\newcommand{\kt}[1]{\texttt{#1}}
\newcommand{\note}[1]{#1}
\theoremstyle{plain}
\newtheorem{thm}{Theorem}[section]
\newtheorem{lemma}{Lemma}[section]
\newtheorem{corollary}{Corollary}[section]
\theoremstyle{definition}
\newtheorem{defn}{Definition}[section]
\newtheorem{remark}{Remark}[section]
\newtheoremstyle{case}{}{}{}{}{}{:}{ }{}
\theoremstyle{case}
\newtheorem{case}{Case}
\newtheorem{example}{Example}[chapter]
\newcommand{\icode}[1]{\lstinline[mathescape]!#1!}
\newcommand{\icodet}[1]{\textup{\lstinline{#1}}}
\newcommand{\Ref}[1]{(\getref{#1})}
\newcommand{\reff}[1]{(\ref{#1})}
\newcommand{\Atom}{\mathbb{A}}
\newcommand{\Perm}{\id{Perm}~\Atom}
\newcommand{\SubPerm}[1]{\id{Perm}_{#1}~\Atom}
\newcommand{\swap}[2]{(#1 ~ #2)}
\newcommand{\action}[2]{#1 \cdot #2}
\newcommand{\aaction}[3]{#2 \cdot_{#1} #3}
\newcommand{\supp}[1]{\id{supp}~ #1}
\newcommand{\asupp}[2]{\id{supp}_{#1}~#2}
\newcommand{\card}[1]{\id{card}~#1}
\newcommand{\Nom}{\mathbf{Nom}}
\newcommand{\finmap}{\stackrel{\textrm{fin}}{\rightarrow}}
\newcommand{\TE}{\id{TE}\,}
\newcommand{\VE}{\id{VE}\,}
\newcommand{\ME}{\id{ME}\,}
\newcommand{\Env}{{\textrm{Env}}}
\newcommand{\TEnv}{{\textrm{TEnv}}}
\newcommand{\VEnv}{{\textrm{VEnv}}}
\newcommand{\MEnv}{{\textrm{MEnv}}}
\newcommand{\MTEnv}{{\textrm{MTEnv}}}
\newcommand{\Mid}{{\textrm{Mid}}}
\newcommand{\MTid}{{\textrm{MTid}}}
\newcommand{\Mod}{{\textrm{Mod}}}
\newcommand{\FunSig}{{\textrm{FunSig}}}
\newcommand{\TSet}{{\textrm{TSet}}}
\newcommand{\LSet}{{\textrm{LSet}}}
\newcommand{\NSet}{{\textrm{NSet}}}
\newcommand{\Finset}{{\textrm{Fin}_\Atom}}
\newcommand{\MTy}{{\textrm{MTy}}}
\newcommand{\kw}[1]{\mbox{\bfseries{#1}}}
\newcommand{\sembox}[1]{\hfill \normalfont \mbox{\fbox{\(#1\)}}}
\newcommand{\titlesembox}[2]{\subsubsection*{#1}\sembox{#2}}
\newcommand{\Dom}{\textrm{Dom}}
\newcommand{\SEP}{\hspace{2mm}|\hspace{2mm}}
\newcommand{\rar}{\rightarrow}
\newcommand{\cbra}[1]{\texttt{\{}#1\texttt{\}}}
\newcommand{\eps}{\epsilon}
\newcommand{\Rar}{\Rightarrow}
\newcommand{\RAR}{{\footnotesize\Rightarrow}}
\newcommand{\para}[1]{\texttt{(} #1 \texttt{)}}
\newcommand{\fraccn}[2]{\vspace{2mm}\refstepcounter{equation}\mbox{$\frac{\begin{array}{c} #1 \end{array}}{\begin{array}{c} #2 \end{array}}$}~(\arabic{equation})}
\newcommand{\onepart}[1]{\noindent\hfill#1\hfill\mbox{~}}
\newcommand{\twopart}[2]{\noindent\hfill#1\hfill#2\hfill\mbox{~}}
\newcommand{\SP}{\hspace{3mm}}
\newcommand{\LSP}{\hspace{6mm}}
\newcommand\vd\vdash
\newcommand{\Ec}{{\mathcal E}}
\newcommand{\Mc}{{\mathcal M}}
\newcommand{\VEc}{{\mathcal VE}}
\newcommand{\MEc}{{\mathcal ME}}
\newcommand{\Cl}{\mathtt{Cl}}
\newcommand{\MLTT}{\textsf{MLTT}}
\newcommand{\HOTT}{\textsf{HoTT}}
\newcommand{\UIP}{\textsf{UIP}}
\newcommand{\deltop}{\Delta^{\scalebox{0.6}{\textrm{op}}}_{\scalebox{0.6}{$\boldsymbol{+}$}}}
\renewcommand{\C}{\mathcal{C}}
\newcommand{\UU}{\ensuremath{\mathcal{U}}}
\newcommand{\prd}[1]{\Pi_{#1}\mkern1mu}
\newcommand{\smsimple}[1]{\Sigma_{#1}}
\newcommand{\sm}[1]{\Sigma \left(#1\right).\,}
\newcommand{\refl}[1]{\ensuremath{\mathsf{refl}_{#1}}}
\newcommand{\jdeq}{\equiv}
\newcommand{\defeq}{\vcentcolon\equiv}
\newcommand{\trans}[2]{\ensuremath{{#1}_{*}\mathopen{}\left({#2}\right)\mathclose{}}}
\newcommand{\transf}[1]{\ensuremath{{#1}_{*}}} 
\newcommand{\N}{\ensuremath{\mathbb{N}}}
\newcommand{\strict}[1]{#1^\mathrm{s}}
\newcommand{\emptyt}{\ensuremath{\mathbf{0}}}
\newcommand{\unit}{\ensuremath{\mathbf{1}}}
\newcommand{\Fin}{\mathsf{Fin}}
\newcommand{\steq}{\stackrel{\mathrm s}{=}}
\newcommand{\stiso}{\strict \simeq}
\newcommand{\obj}[1]{\vert #1 \vert}
\newcommand{\strictN}{\strict \N}
\newcommand{\strictNop}{(\strictN)^{\scalebox{0.6}{\textrm{op}}}}
\newcommand{\Prop}{\mathsf{Prop}}
\newcommand{\deflabel}[1]{
\textsc{#1}%
\def\@currentlabelname{\ensuremath{#1}}%
}
\newlength\dropp
\newcommand*\titleM{\begingroup
\setlength\dropp{0.08\textheight}
\centering
\vspace*{\dropp}
{\Huge\bfseries \sffamily Adventures in Formalisation: \\[1em] Financial Contracts, Modules, and Two-Level Type Theory\par~}\\[\baselineskip]
\vfill
{\large\bfseries Danil Annenkov}\\
  {DIKU, Department of Computer Science\\
  University of Copenhagen, Denmark}\\[3em]
{\scshape November 2017}\\
\vfill
{\large\bfseries PhD Thesis}\\
{\large\itshape This thesis has been submitted to the PhD School of the Faculty of Science, \\University of Copenhagen, Denmark}\\
\endgroup}
\begin{document}

\begin{titlingpage}
\calccentering{\unitlength}                         
\begin{adjustwidth*}{\unitlength}{-\unitlength}     
    \begin{adjustwidth}{-1cm}{-1cm}                 
      \titleM
    \end{adjustwidth}
\end{adjustwidth*}
\end{titlingpage}

\frontmatter

\begin{abstract}
Over the last few decades, software has become essential for the proper
functioning of systems in the modern world. Formal verification
techniques are slowly being adopted in various industrial
application areas, and there is a big demand for research in the theory and
practice of formal techniques to achieve a wider acceptance of tools for
verification.

We present three projects concerned with applications of certified
programming techniques and proof assistants in the area of programming language
theory and mathematics.

The first project is about a certified compilation technique for a domain-specific
programming language for financial contracts (the CL language). The code in
CL is translated into a simple expression language well-suited for
integration with software components implementing Monte Carlo simulation
techniques (pricing engines). The compilation procedure is accompanied with
formal proofs of correctness carried out in the Coq proof assistant. Moreover,
we develop techniques for capturing the dynamic behaviour of contracts with the
passage of time. These techniques potentially allow for efficient integration
of contract specifications with high-performance pricing engines running on
GPGPU hardware.

The second project presents a number of techniques that allow for formal
reasoning with nested and mutually inductive structures built up from finite
maps and sets (also called semantic objects), and at the same time allow for
working with binding structures over sets of variables. The techniques, which
build on the theory of nominal sets combined with the ability to work with
multiple isomorphic representations of finite maps, make it possible to give a
formal treatment, in Coq, of a higher-order module system, including the
ability to eliminate entirely, at compile time, abstraction barriers introduced
by the module system. The development is based on earlier work on static
interpretation of modules and provides the foundation for a higher-order module
language for Futhark, an optimising compiler targeting data-parallel
architectures, such as GPGPUs.

The third project is related to homotopy type theory (HoTT), a new branch of
mathematics based on a fascinating idea connecting type theory and homotopy
theory. HoTT provides us with a new foundation for mathematics allowing for
developing machine-checkable proofs in various areas of computer science and
mathematics. Along with Vladimir Voevodsky's univalence axiom, HoTT offers a
formal treatment of the informal mathematical principle: equivalent structures
can be identified. However, in some cases, the notion of weak equality available
in HoTT leads to the ``infinite coherence'' problem when defining internally
certain structures (such as a type of $n$-restricted semi-simplicial types,
inverse diagrams and so on). We explain the basic idea of \emph{two-level type
  theory}, a version of Martin-L\"of type theory with two equality types: the
first acts as the usual equality of homotopy type theory, while the second
allows us to reason about strict equality. In this system, we can formalise
results of partially meta-theoretic nature. We develop and explore in details
how two-level type theory can be implemented in a proof assistant, providing a
prototype implementation in the proof assistant \texttt{Lean}. We demonstrate an
application of two-level type theory by developing some results from the theory
of inverse diagrams using our Lean implementation.
\end{abstract}

\clearpage

\renewcommand{\abstractname}{Resumé}
\begin{abstract}
Denne afhandling består af tre dele og omhandler teknikker til
udvikling af certificeret software samt anvendelse af bevisassistenter
indenfor områder som programmeringssprogsteori og matematik.

Den første del omhandler en certificeret oversættelsesteknik for et
domainespecifikt programmeringssprog til finansielle kontrakter
(sproget CL). Kode i CL oversættes til et simpelt udtrykssprog, som er
velegnet til integration med softwarekomponenter, der implementerer
Monte-Carlo simuleringsteknikker
(prisberegningssoftware). Oversættelsesproceduren er akkompagneret af
et formelt korrekthedsbevis, der er etableret ved brug af
bevisassistenten Coq.

Den anden del omhandler en række teknikker, der tillader formel
ræsonnement med nestede og gensidigt induktive strukturer bygget op af
endelige afbildninger og mængder (også kaldet semantiske
objekter). Teknikkerne, som bygger på teorien om nominelle mængder
kombineret med muligheden for at arbejde med multible isomorfe
repræsentationer af endelige afbildninger, gør det muligt at give en
formel behandling, i Coq, af et højere-ordens
modulsystem. Behandlingen understøtter muligheden for at eliminere
alle modulkonstruktioner og abstraktionsbarrierer på
oversættelsestidspunktet. Teknikken baserer sig på tidligere arbejde
indenfor statisk fortolkning af moduler og giver et fundament for et
højere-ordens modulsprog for Futhark, en optimerende oversætter
målrettet data-parallelle arkitekturer som GPGPUer.

Den tredie del omhandler en implementation af to-niveau typeteori, en
version af Martin-L\"ofs typeteori indeholdende to lighedstyper. Den
første fungerer som det sædvanlige lighedsbegreb fra homotopy
typeteori, mens den anden tillader ræssonnementer omkring stringent
lighed. I dette system er det muligt at formalisere resultater af
delvist meta-teoretisk natur. Det undersøges i detaljer hvordan
to-niveau typeteori kan implementeres i en bevisassistent og der
udvikles en prototypeimplementation i bevisassistenten Lean. Ydermere
demonstreres anvendelsen af to-niveau typeteori ved udvikling af nogle
resultater (i den udviklede Lean-implementation) indenfor teorien om
inverse diagrammer.
\end{abstract}

\renewcommand{\abstractname}{Abstract}

\clearpage

\tableofcontents*

\chapter*{Preface}
\addcontentsline{toc}{chapter}{Preface}{}
This dissertation has been submitted to the PhD School of Science, Faculty of
Science, University of Copenhagen, in partial fulfilment of the degree of PhD
at Department of Computer Science (DIKU).

The main content of the dissertation consists of three chapters, an
introduction, and a conclusion. Some results of Chapter \ref{chpt:contracts}
where presented at the Nordic Workshop on Programming Theory 2016 (NWPT'16) by
the author. \note{The full source code of the formalisation presented in Chapter \ref{chpt:contracts}
is available online: \url{https://github.com/annenkov/contracts}.}

Chapter \ref{chpt:modules} presents the author's contribution to
ongoing work on a module system development and formalisation in collaboration
with Martin Elsman, Cosmin Oancea, and Troels Henriksen at the HIPERFIT Research
Center, DIKU, University of Copenhagen. \note{The source code of the implementation of nominal sets in Coq
(using type classes instead of modules) and the proof of normalisation from
Section \ref{sec:modules:stlc-norm} are available online: \url{https://github.com/annenkov/stlcnorm}.}

Chapter \ref{chpt:tltt} presents the author's contribution to the work submitted for a publication with Paolo
Capriotti and Nicolai Kraus, University Of Nottingham (publication preprint
\cite{ann-cap-kra:two-level}). The results of this work were presented by the author
at the workshop on Homotopy Type Theory/Univalent Foundations(co-located with FSCD 2017).
{\note{The full source code of the formalisation presented in Chapter \ref{chpt:tltt} is available online:
\url{https://github.com/annenkov/two-level}.}

\clearpage

\chapter*{Acknowledgments}
\addcontentsline{toc}{chapter}{Acknowledgments}{}

I would like to express my sincere gratitude to the people who made my PhD studies
an excellent and exciting experience.

To my supervisor, Martin Elsman, for introducing me to the programming language
research, for guiding and supporting me in pursuing my ideas, and for making me
a better researcher.

To Fritz Henglein, for his hospitality and for the HIPERFIT Research Center allowing
me to work on exiting problems in the area of finance.

To Patrick Bahr, for his beautiful work on contract formalisation in Coq, which
provided me with the inspiration to work in the area of certified programming.

To Omri Ross, for sharing his experience in the area of finance and exploring
the applications of the contract language to real-world problems.

To Andrzej Filinski, for his Semantics and Types course.

To my colleagues at the DIKU APL section, for the excellent research atmosphere
and for interesting discussions over lunch.

To Thorsen Altenkirch, for inviting me to the FP Lab at University of
Nottingham. To Paolo Capriotti and Nicolai Kraus, for all your patience
explaining me category theory and inspiring me to study mathematics and pursue
research in homotopy type theory. And to all the members of the FP Lab for the
pleasant atmosphere and pubs every Friday.

I am so grateful to my wife Anna and to my daughter Arina, for supporting my
crazy idea of pursuing PhD studies in Denmark and for being with me during this
wonderful adventure. I am so happy that I have you, girls!

\clearpage

\mainmatter
\chapter{Introduction}
Over the last few decades, software has become essential for the proper
functioning of systems in the modern world. Some of these systems are
safety-critical, such as embedded systems in avionics, or nuclear power
plants. But not only in these areas do software correctness play such a
prominent role. For example, in the financial sector, software systems are
responsible for executing financial transactions and managing assets by means of
\emph{smart contracts} on distributed ledgers \cite{ethereumyellow2015}. Formal
verification techniques are slowly being adopted in various industrial
application areas, including the area of finance and financial algorithms
\cite{Passmore2017}. There is a big demand for research in the theoretical
foundations for and practical aspects of formal techniques aimed at achieving a
wider acceptance of tools for verification.

In general, our every day life relies more and more on complex software
systems. Moreover, the development of complex software systems is a very costly
process and discovering errors at the deployment stage may cause a significant
increase in the overall cost of a system. For the last decade, software
verification techniques have become available for the wider use, due to advances
both in theories and in tools for formal verification. There are various
approaches to formal software verification. Here we will focus on a particular
direction based on various flavors of type theory and tools implementing
them. These tools are called interactive theorem provers, or proof
assistants. A number of large-scale verification projects use proof assistants, and
we will mention some of them.

The CompCert project \cite{2006-Leroy-compcert} is one of the large-scale
verification efforts for real-world software. It is a verified compiler for a
significant subset of the C programming language carried out in the Coq proof
assistant. The C programming language is widely used for development of
numerous applications including critical systems. CompCert is used as a part of a
verified toolchain in a number of projects.

Another example of this kind is the JSCert project. JSCert is a specification
of ECMAScript 5 (JavaScript) in Coq. JavaScript is widely used in web
development to write complicated applications running in browsers. Many modern
web applications have large code bases in JavaScript. It is also a well known
fact that JavaScript is a language with many pitfalls, which is why ``The
JSCert project aims to \emph{really} understand JavaScript''. Moreover, the
project features an interpreter in OCaml obtained using Coq's code extraction
facilities.

Apart from the areas related to software verification and programming language
semantics mechanisation, in mathematics, one often wants proofs to be verified
by some automatic procedure in order to ensure correctness. Mathematical proofs
can be very complicated and may require the consideration of a large number of
cases. Examples of large-scale developments in this area include proofs of the
four-color theorem \cite{Gonthier2008} and the Feit–Thompson theorem
\cite{Gonthier2013}.

Moreover, in such abstract areas of mathematics as homotopy theory, for a long
time, it has been almost impossible to use proof assistants to carry out
proofs. But with recent development of homotopy type theory \cite{hott-book}
(HoTT), it has become possible to carry out proofs in homotopy theory
\cite{Licata:2013:CFG:2591370.2591407} and many other areas of mathematics in
the language of type theory. This, in turn, allows for developing formalisations
in proof assistants. HoTT offers a new foundation of mathematics, where types
(or spaces, in the homotopical interpretation) become the basic objects for
developing mathematics. The \emph{Unimath} project \cite{UniMath} takes this
approach and aims at implementing a large body of mathematics in the Coq proof
assistant. Moreover, from the dependently-typed programming perspective, HoTT
offers a generic programming technique, allowing to change between different
isomorphic representations of the same abstract data structure.

\section{Type Theory and the Curry-Howard Correspondence}
Type theory originates from Bertrand Russel's approach to avoid paradoxes in
set theory. Since that time, type theory has been developed by many scientists
including Alonzo Church, Haskell Curry, William Howard, Stephen Kleene, Kurt
G\"odel, Nicolaas de Bruijn, Per Martin-L\"of, and others.  In the form of
\emph{dependent} type theory, it became a foundation for various proof assistants.

The central notion in type theory is a \emph{typing judgment}. That is, a term $a$ has a type
$A$:
\[ a : A \]
Notice the similarity with the set-theoretic proposition $a \in A$. The important
difference is that in type theory, each term comes with the type and internally
in type theory we cannot ask if some term has type $A$ or $B$. Essentially this is
how programmers in statically typed programming languages think about programs
and data types. Functional programming languages have especially strong
connection to type theory, since theoretical foundations for such languages are
variations of the lambda-calculi.

An important step in the development of type theory was the discovery of the
connection between logic and type theory, which is now known as the Curry-Howard
correspondence \cite{Curry:functoriality, howard:formulae-as-types} (for the
details of the discovery and the development of the Curry-Howard correspondence,
see \cite{Wadler:2015:PT:2847579.2699407}). This correspondence is also known in
the literature as propositions-as-types, formulae-as-types, and the Curry-Howard
isomorphism.

Roughly speaking, the idea of this correspondence is that type theory
corresponds to \emph{intuitionistic} logic, propositions correspond to types,
and proofs correspond to terms (or programs). The summary of the correspondence is
given in Table \ref{table:intro:props-as-types}.

\begin{table}
\def\arraystretch{1.5}
\begin{tabular}{ c | c }
  \hline
  \multicolumn{1}{c}{\bfseries Logic} & \multicolumn{1}{c}{\bfseries Type Theory }\\
  \hline
  Proposition $A$ & Type $A$\\
  Proof of a proposition $A$ & Term (program) $a : A$\\
  Proof normalisation & Program execution\\
  True & Unit  \\
  False & Empty type \\
  Conjunction $A \land B$ & Product type $A \times B$ \\
  Disjunction $A \lor B$ & Coproduct type $A + B$ \\
  Implication $A => B$ & Function type $A -> B$\\
  Universal quantification $\forall x \in A, B(x) $ & Dependent product $\Pi (x : A). B(x)$\\
  Existential quantification $\exists x \in A, B(x) $ & Dependent sum $\Sigma (x : A). B(x)$\\
  \hline
\end{tabular}
\caption{Propositions-as-types.}\label{table:intro:props-as-types}
\end{table}

This correspondence was extended further: Joachim Lambek showed the correspondence
between the lambda-calculus and Cartesian closed categories \cite{lambek:lambda-ccc}.

The work of Per Martin-L\"of \cite{martin-lof:bibliopolis} was another important
step in the development of type theory. A number of modern proof assistants
implement some variation of Martin-L\"of type theory. Recent
research in type theory have lead to the discovery of another deep
connection: the connection between type theory and homotopy theory. Homotopy type theory
\cite{hott-book} establishes a correspondence between types and spaces (more
precisely, $\infty$-groupoids), terms and points, identity types and paths in
a space. Moreover, homotopy type theory refines the correspondence outlined in
Table \ref{table:intro:props-as-types}: propositions correspond not to any
types, but to certain types that have at most one inhabitant. These types in the
context of homotopy type theory are called \emph{hProps}.

According to the Curry-Howard correspondence, finding a proof of some theorem
is the same as finding a term of the given type, i.e. \emph{inhabiting}
the type. Following this idea, the process of proving a theorem corresponds to
the process of writing a program that is accepted be the type-checker. This
proving-as-programming idea have lead to software tools supporting this
paradigm, namely, proof assistants.

\section{Proof Assistants and Certified Programming}
Proof assistants, or interactive theorem provers are tools that allow for
stating and proving theorems by interacting with users. That is, users write proofs
in a specialised language and the tool verifies correctness of these proofs. Proof assistants
often offer some degree of proof automation by implementing decision and
semi-decision procedures, or interacting with automated theorem provers (SAT
and SMT solvers). Some proof assistants allow for writing user-defined
automation scripts, or write extensions using a plug-in system. In the present
work we will focus on application of proof assistant based on dependent type
theory (Coq \cite{bertot:coq}, Agda \cite{norell:towards}, Lean
\cite{moura:lean}), although there are other tools based on different
foundations, such as variations of set theory and higher-order logic (Mizar
\cite{Mizar:JFR1980}, Isabelle/HOL \cite{Nipkow:2002:IPA:1791547}), and
meta-logical frameworks (Twelf \cite{Pfenning:1999:SDT:648235.753634}).

That is, in this thesis we will explore and discuss how dependent type
theory (and its implementation in a proof assistant) can be used in particular
cases.  We will explore how to use the expressivity of dependent types to
encode invariants of structures used in formalisations in such a way that it
simplifies development of the formalisation.

With connection to proof assistant technology, by \emph{certified programming}
following \cite{cpdt} we mean a process that produces a program along with
a witness of its correctness with respect to the specification.  The benefit of
using proof assistants based on type theory is that programs and proofs live in
the same realm. That is, one can write (functional) programs and reason
about their properties using the same language. Moreover, some proof assistants
allow for the extraction of computational parts of an implementation into some
(usually functional) programming language.

Next, we briefly describe some proof assistants implementing dependent type theory,
outlining their main features.

\lstset{language=Coq}
\subsection{Coq}
The theoretical foundation of the Coq proof assistant is the calculus of
constructions \cite{COQUAND198895} extended with inductive definitions leading
to the calculus of inductive constructions \cite{Coquand1990,bertot:coq}. The type
theory of Coq distinguishes between two kinds of types: \icode{Prop} and
\icode{Set}. The type of propositions \icode{Prop} is used to encode properties
that will be erased during program extraction, while \icode{Set} is used for
programs containing computational content. Moreover, \icode{Prop} is
impredicative, which means that statements quantifying over \icode{Prop} still
belong to \icode{Prop}. Impredicativity of \icode{Prop} makes working with
logical connectives more convenient, but to maintain consistency, the elimination
principle for propositions only allows the result of elimination to be in
\icode{Prop}.

Coq also features a hierarchy of type universes \icode{Type}$_i$ for
$i\in \nats, i \ge 1$. Types \icode{Prop} and \icode{Set} belong to
\icode{Type}$_1$, and \icode{Type}$_i$:\icode{Type}$_{i+1}$. Most of the time
users do not have to be explicit about universe levels; universe constraints
are handled by the system automatically. Recent versions of Coq support
universe polymorphism.

Coq features the following languages:
\begin{itemize}
\item the \emph{Gallina} language, which is essentially a dependently typed
  programming language (also includes the language of commands, called The
  Vernacular);
\item the \emph{Ltac} language, allowing for writing \emph{tactics} for proof
  automation.
\end{itemize}

The tactic language is often used to build complicated proof terms and to implement
certain proof search strategies. While Gallina programs are always terminating,
since consistency of the underlying logic depends on this property, tactics written
in the Ltac language may fail to terminate without affecting the consistency.
The standard library of Coq offers various useful primitive tactics, along with
proof searching procedures \icode{auto} and \icode{eauto}, which use a user-defined
database of lemmas (``hints'') when trying to solve a goal. The library also
contains several decision procedures, such as \icode{omega} for the Presburger
arithmetic and \icode{tauto} for the intuitionistic propositional calculus.

Coq supports type classes, which are useful for operation overloading and for
proof automation through the resolution mechanism.

It is possible to obtain an implementation in OCaml, Haskell or Scheme from the
Coq formalisation through the code extraction mechanism, provided that the
development follows certain criteria.

\lstset{language=Lean}
\subsection{Lean}
We give a brief outline of features of the Lean proof assistant version 2
\cite{moura:lean}, since this version has been used in this
thesis.\footnote{The Lean 2 repository can be found at
  \url{https://github.com/leanprover/lean2}}  Lean 2 has two different modes:
\begin{itemize}
\item The ``strict'' mode, based on a similar theoretical foundation as Coq:
  the calculus of inductive constructions with impredicative \icode{Prop} and
  definitional proof irrelevance;
\item the ``HoTT'' mode, supporting homotopy type theory (without impredicative
  or proof irrelevant universes), including some higher-inductive types.
\end{itemize}

Lean features the powerful elaboration mechanism allowing to infer universe levels,
implicit arguments (including type class instances), supports notation
overloading, and so on. Type classes allow for some proofs to be automated by the
elaboration mechanism. One of the main motivations behind Lean is to bridge
the gap between automated and interactive theorem provers by pushing automatic
inference as far as possible.

Proofs in Lean can be written in \emph{term} mode, which is basically syntactic
sugar for proof terms in Lean's functional language to make proofs more readable.
Alternatively, one can use tactics, similarly to Coq. Although, unlike in Coq, one
can switch to the ``tactic'' mode in any place of the definition by using
\icode{begin ... end} and filling-in the proof using tactics.

\subsection{Agda}
Agda is a dependently typed programming language implementing a predicative
extension of Martin-L\"of type theory \cite{martin-lof:bibliopolis}. It does
not have a Prop-Set distinction as Coq.

Agda has a hierarchy of universes and supports universe polymorphism, but one has to
be explicit about universe levels.

In comparison with Coq, Agda has more experimental features (like
inductive-recursive and inductive-inductive definitions), and possibility to
turn off some checks (like termination or strict positivity of inductive
definitions), making a system possibly inconsistent, but more suitable for
experimentation.

Working in Adga resembles a programming activity in a functional language. Agda
supports powerful dependent pattern-matching constructs allowing one to write
proofs as programs directly. Similarly to type classes in Coq and Lean, Agda supports
instance arguments that can be inferred using the instance resolution mechanism.
There is no build-in support for tactics in Agda, although there are some developments,
supporting mechanisms similar to Coq's \icode{auto} \cite{Kokke2015}.

\section{Thesis}
The thesis considers three applications of proof assistant technology.

The semantics of domain-specific languages is an important and, at the same
time, a realistic target for the application of formalisation and verification
techniques. Financial contract specifications are often used in a specialised
form in software components performing simulations to estimate the possible
price of a contract. These software components are called pricing engines, which
are optimised for performing simulations efficiently. The first question we consider
is the following:
\begin{itemize}
\item Is it possible to develop a certified implementation of a compilation
  technique for the domain-specific financial contract specification language
  in the style of \cite{BahrBertholdElsman}, allowing for efficient interaction
  with the pricing engine?
\end{itemize}

Module systems provide a powerful abstraction mechanism allowing for writing
generic highly parameterised code. For some application domains it is important
to have static guarantees that module abstractions introduce no
overhead. Formalisation of module systems is hard, and it has turned out to be
essential to develop a number of techniques allowing for development of a module
system formalisation in the Coq proof assistant. Thus, the second question is
the following:
\begin{itemize}
\item Is it possible to develop a formalisation in the Coq proof assistant of a
  higher-order module system for the data-parallel array language Futhark
  \cite{Henriksen:Futhark} in the style of \cite{elsman99}, aiming to keep it
  as close as possible to a pen-and-paper formalisation and to the implementation
  in the Futhark compiler?
\end{itemize}

The third project is related to the internalisation of partially
meta-theoretical results in two-level type theory (homotopy type theory
extended with strict equality). Homotopy type theory is young and a developing
field with a number of open problems. For instance, the notion of weak equality
available in HoTT could lead to the ``infinite coherence'' problem when
defining internally certain structures (such as a type of $n$-restricted
semi-simplicial types, inverse diagrams, and so on). Two-level type theory allows for
approaching this problem. The third question we consider is the following:
\begin{itemize}
\item How can one leverage existing proof assistants to implement two-level type theory,
  and is it possible to use such an implementation for the development of formalisations
  of partially meta-theoretical results?
\end{itemize}

\section{Contributions}
We present three projects concerned with the applications of certified
programming techniques and proof assistants in the area of programming language
theory and in the area of mathematics.

The contributions on certified compilation of financial contracts are as follows:
\begin{itemize}
\item We present an extension of a domain-specific language for financial
  contracts developed by authors of \cite{BahrBertholdElsman}.  The extension
  features contract templates, or \emph{instruments}. We focus on the
  extension allowing for parameterisation of contracts with respect to temporal
  parameters.
\item We develop a payoff intermediate language inspired by traditional payoff
  languages and well-suited for the integration with Monte Carlo simulation techniques.
\item We use the Coq proof assistant to develop a certified compilation
  procedure of contract templates into a parameterised payoff intermediate language.
\item We further parameterise the compiled payoff expressions with the notion of ``current time''
  allowing for capturing the evolution of contracts with the passage of time.
\item We develop the proof of an extended soundness theorem in the Coq proof
  assistant. The theorem establishes a correspondence between the
  time-parameterised compilation scheme and the contract reduction semantics.
\item We argue how the parametric payoff code allows for better performance due to
  avoiding recompilation with the change of parameters.
\end{itemize}

The contributions on the formalisation of a higher-order module system for the
Futhark language are as follows:
\begin{itemize}
\item We develop a formalisation of the static interpretation of a module system
  in the style of \cite{elsman99} in the Coq proof assistant. This is one of the first
  developments in this style in Coq.
\item For implementing the core concept of semantic objects we develop a
  technique that allows for using isomorphic representations of components of
  semantic objects with low proof obligation overhead. We use this technique to
  overcome limitations of the conservative strict positivity check in Coq.
\item To deal with binding in the context of semantic objects, we apply nominal
  techniques. We develop a small library defining nominal sets in a generalised
  setting allowing for sets of variables to be bound at once. We use the developed library
  to define $\alpha$-equivalence of semantic objects.
\end{itemize}

The contributions on formalisation of two-level type theory are as follows:
\begin{itemize}
\item We develop a technique allowing for two-level type theory to be
  implemented in existing proof assistants.
\item We implement two-level type theory in the Lean proof assistant using the
  developed technique.
\item We demonstrate how the fibrant fragment of two-level type theory can be used
  to develop proofs in a similar way as in homotopy type theory.
\item As an application of the implemented type theory, we
  internalise some results on the theory of inverse diagrams in our Lean
  development.
\end{itemize}

\section{Structure of the Dissertation}
Following the outline of the thesis contributions, the main content of the thesis is split
into three chapters.
\begin{itemize}
\item Chapter \ref{chpt:contracts} describes our work on certified compilation of
financial contracts, including a formal semantics, compilation soundness
theorems, and a description of our Coq formalisation.
\item Chapter \ref{chpt:modules} describes our formalisation of a higher-order module system in
Coq, focusing on details of the implementation and the particular techniques applied.
\item Chapter \ref{chpt:tltt} discusses the motivation for two-level type theory and
  describes our approach to implementation, exemplified by a development in
  the Lean proof assistant.
\end{itemize}

\chapter{Certified Compilation of Financial Contracts}\label{chpt:contracts}

\lstset{language=Contracts}
\section{Background and Motivation}

New technologies are emerging that have potential for seriously disrupting the
financial sector. In particular, blockchain technologies, such as Bitcoins
\cite{nakamoto2008bitcoin} and the Ethereum Smart Contract peer-to-peer
platform \cite{ethereumyellow2015}, have entered the realm of the global
financial market and it becomes essential to ask to which degree users can
trust that the underlying implementations are really behaving according to the
specified properties. Unfortunately, the answers are not clear and errors may
result in irreversible high-impact events.

Contract description languages and payoff languages are used in large scale
financial applications \cite{MLFi, SimCorpXpress}, although formalisation of
such languages in proof assistants and certified compilation schemes are less
explored.

The work presented here builds on previous work on specifying financial
contracts
\cite{andersen06sttt,Arnold95analgebraic,Frankau09JFP,hvitved11jlap,SPJ2000}
and in particular on a certified financial contract management engine and its
associated domain-specific contract specification language
\cite{BahrBertholdElsman}.  This framework allows for expressing a wide variety
of financial contracts (a fundamental notion in financial software) and for
reasoning about their functional properties (e.g., horizon and causality).

As in the previous work, the contract language that we consider is equipped
with a denotational semantics, which is independent of stochastic aspects and
depends only on an \emph{external environment} $\type{Env} :\mathbb{N} \times
\type{Label} -> \mathbb{R} \cup \mathbb{B}$, which maps observables (e.g., the
price of a stock on a particular day) to values. We will refer to the contract
language as described in \cite{BahrBertholdElsman} and its extension developed in
this chapter as the CL language.
%
%
As the first contribution of this work, we present a certified compilation
scheme that compiles a contract into a \emph{payoff function}, which aggregates
all cashflows in the contract, after discounting them according to some
model. The result represents a single ``snapshot'' value of the contract. The
payoff language is inspired by traditional payoff languages, and it is well
suited for integration with Monte Carlo simulation techniques for pricing.
It is essentially a small subset of a C-like expression language enriched with
notation for looking up observables in the external environment. We show that
compilation from CL to the payoff language preserves the cashflow
semantics.

The contract language described in \cite{BahrBertholdElsman}, deals
with concrete contracts, such as a one year European call
option on the AAPL (Apple) stock with strike price \$100. The lack of
genericity means that each time a new contract is created (even a
very similar one), the contract management engine needs to compile the
contract into the payoff language and further into a target language
for embedding into the pricing engine. As our second contribution,
we introduce the notion of a \emph{financial instrument},
which allows for templating of contracts and which can be turned into
a concrete contract by instantiating template variables with
particular values. For example, a European call option instrument has
template parameters such as maturity (the end date of the contract),
strike, and the underlying asset that the option is based
on. Compiling such a template once allows the engine to reuse compiled
code, giving various parameter values as input to the pricing engine.

Moreover, an inherent property of contracts is that they evolve over time. This
property is precisely captured by a contract reduction semantics. Each day, a
contract becomes a new ``smaller'' contract, thus, for pricing purposes,
contracts need to be recompiled at each time step, resulting in a dramatic
compile time overhead. As our the third contribution we introduce a mechanism
allowing for avoiding recompilation in relation with the contract evolution. A
payoff expression can be parameterised over the \emph{current time} so that
evaluating the payoff code at time $t$ gives us the same result (upto
discounting) as first advancing the contract to time $t$, then compiling it to
the payoff code, and then evaluating the result. Most of the payoff languages
used in real-world applications require synchronization of the contract and the
payoff code once a contract evolves \cite[Contract State and Pricing
  Synchronization]{MLFi-whitepaper}. But in some cases, as we mentioned
earlier, it is important to capture the reduction semantics in the payoff
language as well. Our result allows for using a single compilation procedure
for both use cases: compiling a contract upfront and synchronizing at each time
step.

The contract analysis and transformation code forms a core code base, which
financial software crucially depends on. A certified programming approach using
the Coq proof assistant allows us to prove various correctness results and
to extract certified executable code.

The rest of the chapter is structured as follows. We describe an extension of
the original contract language \cite{BahrBertholdElsman} with template expressions
for temporal parameters in Section
\ref{subsec:contracts:contracts-syn-sem}. Next, we describe the syntax and the
semantics of the payoff intermediate language designed to capture aspects
relevant for the contract pricing purposes in Section
\ref{sec:contracts:payoffs}.  Section \ref{sec:contracts:compile} describes our
compilation approach, including a novel technique to transfer the contract
evolution behavior to payoff expressions.  We also state and sketch proofs of
soundness theorems with respect to the denotational and the reduction semantics
of contracts.  Formalisation of our contract compilation approach, including
code extraction, along with the example of Haskell code generation, are described in
Section \ref{sec:contracts:coq}.

\section{The Contract Language}
\subsection{Syntax and Semantics}\label{subsec:contracts:contracts-syn-sem}
We assume a countably infinite set of program variables, ranged over by
$\id{v}$. Moreover, we use $\id{n}$, $\id{i}$, $\id{r}$, and $\id{b}$ to range
over natural numbers, integers, reals, and booleans. We use $p$ to range over
parties.
The contract language (CL) that we consider follows the style of \cite{BahrBertholdElsman}
and is extended with template variables (see Figure \ref{fig:contracts:syntax}).
\begin{figure}
\[
\begin{split}
  c ~::=~ & \<zero>~|~\<transfer>(p_1,p_2,a)~|~\<scale>(e,c)~|~ \\
  & \<translate>(t,c) ~|~\<ifWithin>(e,t,c_1,c_2)~|~\<both>(c_1,c_2)\\
  e ~::=~& op(e_1,~e_2, ~\dots,~e_n) ~|~ \<obs>(l,i)~|~\acc(\lambda v .\ e_1, n, e_2)~|~ r ~|~ b \\
  t ~::=~& n ~|~ v \\
  op ~::=~& \<add> ~|~ \<sub> ~|~ \<mult> ~|~ \<lt> ~|~ \<neg> ~|~ \<cond> ~|~\dots
\end{split}
\]
\caption{Syntax of CL.}\label{fig:contracts:syntax}
\end{figure}

Expressions ($e$) may contain \emph{observables}, which are
interpreted in an external environment. \note{The $\acc$ construct allows for accumulating a value over a given number of days $n$.}

A contract ($c$) may be empty($\<zero>$), a transfer of one unit of some asset $a$ ($\<transfer>$),
a scaled contract ($\<scale>$), a translation of a contract into the
future ($\<translate>$), the composition of two contracts ($\<both>$),
or a generalised conditional $\<ifWithin>(cond,t,c_1,c_2)$, which
checks the condition $\id{cond}$ repeatedly during the period given by
$t$ and evaluates to $c_1$ if $\id{cond} = \<true>$ or to $c_2$ if
$\id{cond}$ never evaluates to $\<true>$ during the period $t$.

The main difference between the original version of the contract
language and the version presented here is the introduction of
\emph{template expressions} ($\id{t}$), which, for instance, allows us
to write contract templates with the contract maturity as a parameter.
This feature requires refined reasoning about the temporal properties
of contracts, such as causality. Certain constructs in the original
contract language, such as $\<translate>(n,c)$ and
$\<ifWithin>(\id{cond},n,c_1,c_2)$, are designed such that basic
properties of the contract language, including the property of
causality, are straightforward to reason about. In particular, the
displacement numbers $n$ in the above constructs are constant positive
numbers. For templating, we refine the constructs to support template
expressions in place of positive constants.
One of the consequences of adding template variables is that the semantics of
contracts now depends also on mappings of template variables in a
\emph{template environment} $\type{TEnv}: \type{Var} -> \mathbb{N}$, which is
also the case for many temporal properties of contracts. For example, the type
system for ensuring causality of contracts \cite{BahrBertholdElsman} and the
concept of symbolic contract horizon $\synhor$ are now parameterised by
template environments.  The modified version of $\synhor$ is given in
Figure \ref{fig:synhor}, where $\tSem{t}\delta$ represents the semantics of
template expressions (see Figure \ref{fig:Esem})

\begin{figure}[t]
  \begin{align*}
    \begin{aligned}[t]
      \synhor_\delta(\zero) \\
      \synhor_\delta(\transfer{p}{q}{a})
    \end{aligned}
    &
    \begin{aligned}[t]
      = 0\\
      = 1
    \end{aligned}
    \qquad
    \begin{aligned}[t]
      \synhor_\delta(\scale e c) &= \synhor_\delta(c)\\
    \synhor_\delta(\transl t c) &= \tSem{t}\delta \oplus \synhor_\delta(c)
    \end{aligned}
    \\
    \synhor_\delta(\letc x e c) &= \synhor_\delta(c)\\
    \synhor_\delta(\both {c_1}{c_2}) &= \max(\synhor_\delta(c_1),
                            \synhor_\delta(c_2))\\
    \synhor_\delta(\ifwithin e t {c_1} {c_2}) &= \tSem{t}{\delta} \oplus \max(\synhor_\delta(c_1), \synhor_\delta(c_2))
  \end{align*}
  where \vspace{-1em}
  \[
    a \oplus b =
    \begin{cases}
      0 &\text{if } b = 0\\
      a + b &\text{otherwise}
    \end{cases}
  \]
  \caption{Symbolic horizon.}
  \label{fig:synhor}
\end{figure}

On the other hand, some properties such as \emph{simple} or
\emph{obvious} causality can be verified without information from
a template environment. Although, this property might be too restrictive
for some contracts which are causal, but not obviously
causal (see \cite[Section 3.2]{BahrBertholdElsman}).

Let us consider examples of the contracts written in English and expressed
in CL.
\begin{example}\label{ex:contracts:option}
The definition of an European option contract:
\blockquote[investopedia.com]{European options are contracts that give the
  owner the right, but not the obligation, to buy or sell the underlying
  security at a specific price, known as the strike price, on the option's
  expiration date}.
Let us take: the expiration date to be 90 days into the future
and set the strike at 100 USD.  We can implement the European option contract with
these parameters in CL as follows:
\begin{lstlisting}
      translate(90,
        if(obs(AAPL,0) > 100.0,
          scale(obs(AAPL,0) - 100.0, transfer(you, me, USD)),
          zero))
\end{lstlisting}
\end{example}

\begin{example}
  Three month FX swap for which the payment schedule has been settled:
\begin{lstlisting}
   scale(1.000.000,
         both(all[translate(22, transfer(me, you, EUR)),
                  translate(52, transfer(me, you, EUR)),
                  translate(83, transfer(me, you, EUR))],
              scale(7.21,
                    all[translate(22, transfer(you, me, DKK)),
                        translate(52, transfer(you, me, DKK)),
                        translate(83, transfer(you, me, DKK))])))
\end{lstlisting}

\noindent
In the example, we have written
$\texttt{all[}c_1,\cdots,c_n\texttt{]}$ as an abbreviation for the
contract $\texttt{both(}c_1,\texttt{both(}\cdots,c_n\texttt{)}\texttt{)}$.
We use the $\texttt{all}$ shortcut with the \icode{translate} combinator to
implement a schedule of payments.
\end{example}

Using CL, considered in the present work, we can
abstract some parameters of the contact in Example \ref{ex:contracts:option} to template variables (T for expiration
date, and S for strike)\footnote{In our implementation we focus on contract
  templates allowing for template expressions to represent temporal parameters,
  like maturity. Other parameters, e.g. strike, could be expressed as
  constant observable values.}:
\begin{lstlisting}
      translate(T,
        if(obs(AAPL,0) > S,
          scale(obs(AAPL,0) - S, transfer(you, me, USD)),
          zero))
\end{lstlisting}

Such a parameterisation plays well with a way how users could
interact with a contract management system. Contract templates could be exposed to
users as \emph{instruments} that can be instantiated with concrete values from
users' input.

\begin{figure}[t]
  \fbox{$\Gamma\vdash e : \tau$}
  \begin{gather*}
      \crule{x : \tau \in \Gamma}{\Gamma\vdash x : \tau}
      \quad
      \crule{}{\Gamma\vdash r : \type{Real}}
      \quad
      \crule{}{\Gamma\vdash b : \type{Bool}}
      \quad
      \crule{l \in \type{Label}_\tau}{\Gamma \vdash
        \obs(l,t) : \tau}
      \\
      \crule{
        \begin{aligned}
          \Gamma &\vdash e_i : \tau_i \qquad
          &\vdash op : \tau_1 \times \cdots \times \tau_n \to \tau
        \end{aligned}
}{\Gamma \vdash
        op(e_1,\ldots,e_n) : \tau}
      \\
      \crule{
        \begin{aligned}
          \Gamma,x:\tau &\vdash e_1 : \tau \qquad \Gamma &\vdash e_2 : \tau
        \end{aligned}}{\Gamma \vdash
        \acc(\lambda x .\ e_1, d, e_2) : \tau}
    \end{gather*}
  \\
  \fbox{$\Gamma \vdash c : \type{Contr}$}
  \begin{gather*}
    \crule{}{\Gamma \vdash \zero : \type{Contr}}
    \quad
    \crule{}{\Gamma
      \vdash \transfer{p}{q}{a} : \type{Contr}}
    \\[1em]
    \crule{\Gamma \vdash c : \type{Contr}}{\Gamma \vdash
      \transl d c : \type{Contr}}
    \quad
    \crule{\Gamma \vdash c_i : \type{Contr}}{\Gamma \vdash
      \both{c_1} {c_2} : \type{Contr}}
      \\[1em]
      \crule{\Gamma \vdash e : \type{Real}~~~\Gamma \vdash c :
        \type{Contr}}{\Gamma \vdash \scale e c : \type{Contr}}
      \quad
      \crule{\Gamma \vdash e : \tau~~~\Gamma, x: \tau \vdash c :
        \type{Contr}}{\Gamma \vdash \letc x e c : \type{Contr}}
    \\[1em]
    \crule{\Gamma \vdash e : \type{Bool}~~~\Gamma \vdash c_i :
      \type{Contr}}{\Gamma \vdash
      \ifwithin{e}{d}{c_1}{c_2} : \type{Contr}}
  \end{gather*}
\caption{Typing rules for contracts and expressions of CL.}
\label{fig:simpleType}
\end{figure}

Next, we discuss how adding template expressions to the contract language
affects its semantics. We extend the denotational semantics from
\cite{BahrBertholdElsman} to accommodate the idea of template expressions. The
semantics for the expression sublanguage stays unchanged, since these
expressions do not contain template expressions. That is, the semantics for an
expression $e \in \type{Exp}$ in Figure \ref{fig:contracts:syntax} is given by
the partial function $\eSem{e}{} : \gSem\Gamma \times \type{Env} \pto
\tysem\tau$.  On the other hand, we modify the semantic function for contacts
by adding a template environment as an argument:
\begin{align*}
  \cSem{c}{} &: \gSem{\Gamma} \times \type{Env} \times \type{TEnv} \pto \type{Trace}\\
  \type{Trace} &= \mathbb{N} \to \type{Trans}\\
  \type{Trans} &= \type{Party}\times\type{Party}\times\type{Asset} \to \mathbb{R}
\end{align*}
As the original contract semantics, it depends on the external environment
$\type{Env}:\mathbb{N} \times \type{Label} -> \mathbb{R} \cup \mathbb{B}$ and variable
assignments that map each free variable of type $\tau$ to a value in
$\tysem\tau$.
Where
\begin{equation}\label{eq:contracts:type-interp}
  \begin{aligned}
    \tysem{\type{Real}} = \reals\\
    \tysem{\type{Bool}} = \bools
  \end{aligned}
\end{equation}
Given a typing environment $\Gamma$, the set of \emph{variable
  assignments} in $\Gamma$, written $\gSem\Gamma$, is the set of all partial
mappings $\gamma$ from variable names to $\reals\cup\bools$ such that
$\gamma(x)\in\tysem\tau$ iff $x : \tau \in \Gamma$. The typing rules also remain
the same for expressions and for contracts. These rules are
presented in Figure \ref{fig:simpleType}, and the typing of expression operators
is given in Figure \ref{fig:typeOp}.

\begin{figure}
  \fbox{$\vdash op : \tau_1 \times \cdots \times \tau_n \to \tau $}
\begin{align*}
    &\vdash \oplus : \type{Real} \times \type{Real} \to \type{Real}
  &&\text{for } \oplus \in \{ +, -, \cdot, /, \mathit{max},\mathit{min}\}\\
    &\vdash \oplus : \type{Real} \times \type{Real} \to \type{Bool}
  &&\text{for } \oplus \in \{  \le, <, =, \ge, > \}
    \\
    &\vdash \oplus : \type{Bool} \times \type{Bool} \to \type{Bool} \hspace{8pt}
  &&\text{for } \oplus \in \{  \land, \lor \}
    \\
    &\vdash \neg : \type{Bool} \to \type{Bool}
    \\
    &\vdash \mathbf{if} : \type{Bool} \times \tau \times \tau \to \tau \quad
  &&\text{for } \tau \in \{\type{Real},\type{Bool}\}
  \end{align*}
  \caption{Typing of expression operators.}
  \label{fig:typeOp}
\end{figure}

By the result on the semantics of contracts \cite[Proposition
  3]{BahrBertholdElsman}, for well-typed expressions and well-typed closed
contracts semantic functions $\eSem{-}{}$ and $\cSem{-}{}$ are total. The
semantics for expressions and contracts is given in Figure \ref{fig:Esem}.

\begin{figure}
  \fbox{$\tSem{t}{} : \type{TEnv} \to \nats$}
  \[ \tSem{n}\delta = n \qquad \qquad \tSem{v}\delta = \delta(v) \]
  \fbox{$\eSem{e}{} : \gSem\Gamma \times \type{Env} \to \tysem\tau$}
\begin{align*}
    \eSem{r}{\gamma,\rho} &= r;\quad\!\eSem{b}{\gamma,\rho} = b;\quad\!\eSem{x}{\gamma,\rho} = \gamma(x)\\
    \eSem{\obs(l,t)}{\gamma,\rho} &= \rho(l,t)\\
    \eSem{op(e_1,\ldots,e_n)}{\gamma,\rho} &=
                                             \opsem{op}(\eSem{e_1}{\gamma,\rho},
                                             \ldots,
                                             \eSem{e_n}{\gamma,\rho})\\
  \eSem{\acc(\lambda x.\ e_1,d,e_2)}{\gamma,\rho} &=
  \begin{cases}
    \eSem{e_2} {\gamma,\rho}&\text{if } d = 0\\
    \eSem{e_1} {\gamma[x\mapsto v],\rho} & \text{if } d > 0
  \end{cases}
  \\
  \text{where } \qquad &v =
  \eSem{\acc(e_1,d-1,e_2)}{\gamma,\advEnv{\rho}{-1}}
  \end{align*}
\\[1em]
\fbox{$\cSem{c}{} : \gSem{\Gamma} \times \type{Env} \times \type{TEnv} \to \mathbb{N} \to \type{Trans}$}
\begin{align*}
  \type{Trans} &= \type{Party}\times\type{Party}\times\type{Asset} \to \mathbb{R}\\
 \cSem{\zero}{\gamma,\rho,\delta} &=
\lambda n . \lambda t . 0
\\
\cSem{\scale e c}{\gamma,\rho,\delta} &=
\lambda n . \lambda (p,q,a) . \eSem{e}{\gamma,\rho,\delta} \cdot \cSem{c}{\gamma,\rho,\delta} (n) (p,q,a)
\quad \text{where}\\
&(- \cdot -) : \reals -> \type{Trace} -> \type{Trace}~\\
&\text{denotes trace multiplication defined pointwise.}\\
\cSem{\both{c_1}{c_2}}{\gamma,\rho,\delta} &= \lambda n . \lambda t
                                   . \cSem{c_1}{\gamma,\rho,\delta}(n)(t) +
                                   \cSem{c_2}{\gamma,\rho,\delta}(n)(t)
\quad \text{where}\\
&(- + -) : \type{Trace} -> \type{Trace} -> \type{Trace}~\\
&\text{denotes trace addition defined pointwise.}\\
\cSem{\transl t c}{\gamma,\rho,\delta} &= \mathit{delay(\tSem{T}{\delta},\cSem c
  {\gamma,\rho,\delta})}, \quad \text{where}\\
\mathit{delay}(d,f) &=\lambda n .
\begin{cases}
f (n-d)&\text{if } n \ge d \\
\lambda x . 0 &\text{otherwise}
\end{cases}
  \end{align*}

  \begin{align*}
&\cSem{\transfer{p}{q}{a}}{\gamma,\rho,\delta} =
\begin{cases}
  \lambda n . \lambda t . 0 &\text{if } p = q\\
  \mathit{unit}_{a,p,q} &\text{otherwise,}\quad \text {where}
\end{cases}
\\
&\mathit{unit}_{a,p,q}(n)(p',q',b) =
\begin{cases}
1 &\text{if } b=a, p = p', q = q', n = 0\\
-1 &\text{if } b=a, p = q', q = p', n = 0\\
0 &\text{otherwise}
\end{cases}
\\
&\cSem{\letc x e c}{\gamma,\rho,\delta} = \cSem{c}{\gamma[x \mapsto
                                  v],\rho},\text{ where } v = \eSem{e}{\gamma,\rho,\delta}
\\
&\cSem{\ifwithin e t {c_1}{c_2}}{\gamma,\rho,\delta}
 = \mathit{iter}(\tSem{t}{\delta},\rho),\text{ where}\\
&\mathit{iter}(i,\rho')=\\
&\qquad\begin{cases}
\cSem{c_1}{\gamma,\rho'} &\text{if }
\eSem{e}{\gamma,\rho'} = \mathit{true} \\
\cSem{c_2}{\gamma,\rho'} &\text{if } \eSem{e}{\gamma,\rho'}
=\mathit{false} \land i = 0 \\
\mathit{delay}(1, \mathit{iter}(i-1,\advEnv{\rho'}{1})) &\text{if }
\eSem{e}{\gamma,\rho'}=\mathit{false} \land i > 0
\end{cases}
\end{align*}
  \caption{Denotational semantics of expressions and contracts of CL.}
\label{fig:Esem}
\end{figure}

We define an \emph{instantiation function} that takes a contract and a template environment
containing values for template variables, and produces another contract that does not
contain template variables by replacing all occurrences of template variables with corresponding
values from the template environment.
\begin{defn}[Instantiation function]\label{eq:contratcs:inst-func}
\begin{align*}
  \instdec &: \type{Cont}r \times \type{TEnv} -> \type{Contr}\\
  \inst {\zero} {\delta} &= \zero\\
  \inst {\letA~e~\letB~c} {\delta} &= \inst c \delta\\
  \inst {\transfer{p_1}{p_2}a} \delta &= {\transfer{p_1}{p_2}{a}}\\
  \inst {\scale e c} \delta &= \inst c \delta\\
  \inst {\transl t c} \delta &= \transl{\tSem{t}{\delta}}{\inst{c}{\delta}}\\
  \inst {\both{c_1}{c_2}} \delta &= \both{\inst{c_1}{\delta}}{\inst{c_2}{\delta}}\\
  \inst {\ifwithin e t {c_1} {c_2}} \delta &= \ifwithin{e}{\tSem{t}{\delta}}
        {\inst{c_1}\delta} {\inst{c_2}\delta}
\end{align*}
\end{defn}

We define an inductive predicate that holds only for contract expressions
without template variables (Figure \ref{fig:contracts:template-closed}). We call such
contracts \emph{template-closed}.
\begin{figure}
  \fbox{$\TC c$}
  \begin{mathpar}
  \inferrule{ }{\TC \zero}
  \and
  \inferrule{\TC c}{\TC {\letA~e~\letB~c}}
  \and
  \inferrule{ }{\TC{\transfer{p_1}{p_2}{a}}}
  \and
  \inferrule{\TC c}{\TC {\scale e c}}
  \and
  \inferrule{n~\text{is a numeral} \\ \TC c}{{\transl n c}}
  \and
  \inferrule{\TC{c_1} \\ \TC{c_2}}{\TC {\both{c_1}{c_2}}}
  \and
  \inferrule{n~\text{is a numeral} \\ \TC{c_1} \\ \TC{c_2}}
            {\TC {\ifwithin e n {c_1} {c_2}}}
\end{mathpar}
\caption{Template-closed contracts.}\label{fig:contracts:template-closed}
\end{figure}

\begin{lemma}
  It is straightforward to establish the following fact:
  for any contract $c$ and template environment $\delta$, application
  of the instantiation function gives a template-closed contract:
  \[\TC{\inst c \delta}\]
\end{lemma}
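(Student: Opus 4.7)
The plan is to prove the statement by straightforward structural induction on the contract $c$, using the inductive definition of the instantiation function $\instdec$ and the rules defining the predicate $\TC{-}$.

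For the base cases $c = \zero$ and $c = \transfer{p_1}{p_2}{a}$, the instantiation function leaves the contract unchanged, so the corresponding axioms of $\TC{-}$ apply directly. For $c = \scale e {c'}$ and $c = \letA~e~\letB~c'$, the induction hypothesis gives $\TC{\inst{c'}{\delta}}$, and the respective rules of $\TC{-}$ yield the desired conclusion after unfolding $\inst{-}{-}$. The case $c = \both{c_1}{c_2}$ is similar, using the induction hypotheses for both $c_1$ and $c_2$.

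The only cases requiring a bit of care are $c = \transl t {c'}$ and $c = \ifwithin e t {c_1} {c_2}$, because the corresponding rules of $\TC{-}$ carry the side condition that the displacement is a numeral. Here the key observation is that $\inst{-}{-}$ replaces the template expression $t$ with its semantic value $\tSem{t}{\delta}$, which by the definition of $\tSem{-}{-}$ (Figure \ref{fig:Esem}) is a natural number $n \in \nats$, that is, a numeral. Combined with the induction hypotheses on the subcontracts, this allows us to apply the corresponding rules of $\TC{-}$.

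I do not anticipate any real obstacle; the proof is purely mechanical, which is why the statement is advertised as straightforward. The only subtle point is simply recognising that the interpretation $\tSem{t}{\delta}$ of a template expression is a concrete natural number and thus satisfies the numeral side condition in the rules for $\transl{-}{-}$ and $\ifwithin{-}{-}{-}{-}$.
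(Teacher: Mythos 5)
Your proposal is correct and matches the paper's proof, which is simply a structural induction on $c$; your additional observation that $\tSem{t}{\delta}$ is a concrete natural number and hence satisfies the numeral side conditions in the rules for $\transl{-}{-}$ and $\ifwithin{-}{-}{-}{-}$ is exactly the (small) point that makes the induction go through.
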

\begin{proof}
  By induction on the structure of $c$.
\end{proof}

\begin{lemma}[Instantiation soundness]
  For any contract $c$, template environments $\delta$ and $\delta'$, external environment
  $\rho$, and any value environment $\gamma$, the contract $c$ and
  $\inst c \delta$ are semantically equivalent.
  That is, $\cSem{c}{\gamma,\rho,\delta} = \cSem{\inst{c}{\delta}}{\gamma,\rho,\delta'}$.
\end{lemma}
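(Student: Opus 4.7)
The plan is to prove the lemma by structural induction on $c$, after first isolating an auxiliary fact about template-closed contracts. The essential insight is that for any numeral $n$, $\tSem{n}{\delta} = n$ regardless of $\delta$, so once every displacement expression has been replaced by its numerical value, the resulting contract's denotation becomes independent of the template environment.

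First I would establish an auxiliary lemma: if $\TC{c'}$, then $\cSem{c'}{\gamma,\rho,\delta} = \cSem{c'}{\gamma,\rho,\delta'}$ for all $\gamma,\rho,\delta,\delta'$. This is a straightforward induction on the derivation of $\TC{c'}$. The only clauses of the contract semantics (Figure \ref{fig:Esem}) that consult the template environment are those for $\transl{t}{c}$ and $\ifwithin{e}{t}{c_1}{c_2}$, via $\tSem{t}{\delta}$; in a template-closed contract $t$ is a numeral and so $\tSem{n}{\delta}=n=\tSem{n}{\delta'}$, and the remaining dependencies on $\delta$ are eliminated by the induction hypotheses applied to subcontracts.

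Next I would prove the strengthened equality $\cSem{c}{\gamma,\rho,\delta} = \cSem{\inst{c}{\delta}}{\gamma,\rho,\delta}$ by structural induction on $c$. The base cases $\zero$ and $\transfer{p_1}{p_2}{a}$ are immediate because neither side inspects the template environment, and $\scale e c$ and $\both{c_1}{c_2}$ are direct applications of the induction hypothesis. For $\transl{t}{c}$, both sides unfold to $\mathit{delay}(-,-)$: on the left the displacement is $\tSem{t}{\delta}$ and the trace is $\cSem{c}{\gamma,\rho,\delta}$, while on the right they are $\tSem{\tSem{t}{\delta}}{\delta}$ and $\cSem{\inst{c}{\delta}}{\gamma,\rho,\delta}$, which agree by $\tSem{n}{\delta}=n$ and the induction hypothesis for $c$. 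The $\ifwithin$ case is analogous, iterating the same observation through the recursive definition of $\mathit{iter}$. The main lemma then follows by chaining $\cSem{c}{\gamma,\rho,\delta} = \cSem{\inst{c}{\delta}}{\gamma,\rho,\delta} = \cSem{\inst{c}{\delta}}{\gamma,\rho,\delta'}$, using the strengthened induction for the first equality and the auxiliary lemma applied to the template-closed $\inst{c}{\delta}$ for the second.

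The main obstacle is the let-binding case, where the clause $\inst{\letc{x}{e}{c}}{\delta} = \inst{c}{\delta}$ in Definition \ref{eq:contratcs:inst-func} appears to discard both the binder and the bound expression, whereas the semantics of $\letc{x}{e}{c}$ extends $\gamma$ with $x \mapsto \eSem{e}{\gamma,\rho,\delta}$ before evaluating $c$. A direct semantic match therefore requires either reading the clause as preserving the let around the recursive call (in which case the case becomes routine by the induction hypothesis) or supplying a side argument that the discarded binding is semantically inert for $c$; this is the only point where the proof cannot proceed by pure unfolding and needs a reading of the intended definition.
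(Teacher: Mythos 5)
Your proof is correct, but it is organised differently from the paper's. The paper proves the statement as given in a \emph{single} structural induction on $c$, keeping the arbitrary $\delta'$ in the induction statement throughout (the induction hypothesis for a subcontract is then already quantified over all $\delta'$, so no second pass is needed), with an inner induction on $n = \tSem{t}{\delta}$ for the $\ifwithin{e}{t}{c_1}{c_2}$ case; the remark that ``after instantiation, the template environment does not affect the result'' is folded into that one induction rather than isolated as a lemma. You instead factor the claim into (a) an invariance lemma for template-closed contracts and (b) the diagonal instance $\cSem{c}{\gamma,\rho,\delta} = \cSem{\inst{c}{\delta}}{\gamma,\rho,\delta}$, and then chain them. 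Both routes work; yours buys a reusable invariance lemma (which pairs naturally with the preceding lemma $\TC{\inst{c}{\delta}}$ and is independently useful wherever template-closed contracts appear, e.g.\ around the reduction semantics), at the cost of two inductions where one suffices. Your handling of $\ifwithin{}{}{}{}$ by ``iterating through $\mathit{iter}$'' is the same inner induction on $\tSem{t}{\delta}$ that the paper makes explicit; you should state it as such, since $\mathit{iter}$ recurses on that value and a plain appeal to the outer structural induction does not cover it.

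On the obstacle you raise for $\letc{x}{e}{c}$: you are right that the clause $\inst{\letA~e~\letB~c}{\delta} = \inst{c}{\delta}$ cannot be read literally, and the intended reading is the one that preserves the binder (the $\TC{}$ rules and the reduction semantics both keep $\letA$ as a constructor, and $\eSem{e}{}$ is independent of $\delta$, so the case is then routine). Note, however, that the printed clause $\inst{\scale{e}{c}}{\delta} = \inst{c}{\delta}$ has exactly the same defect --- taken literally it discards the scaling factor and the lemma would be false for, say, $\scale{2}{\transfer{p_1}{p_2}{a}}$ --- yet you list $\scale{e}{c}$ among the ``direct'' cases without comment. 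Both clauses should be read as $\inst{\scale{e}{c}}{\delta} = \scale{e}{\inst{c}{\delta}}$ and $\inst{\letc{x}{e}{c}}{\delta} = \letc{x}{e}{\inst{c}{\delta}}$; with that correction your argument goes through in every case.
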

\begin{proof}
  By induction on the structure of $c$. The case of $\ifwithin e t {c_1} {c_2}$ requires
  inner induction on $n = \tSem{t}{\delta}$.

  Notice also, that the semantic function on the right hand side takes
  arbitrary template environments $\delta'$, since after instantiation, the template
  environment does not affect the result.
\end{proof}

The reduction semantics of the contract language presented in \cite{BahrBertholdElsman}
remains the same, although, we make additional assumption that the contract expressions is
closed wrt. template variables. We provide the reduction semantics in Figure \ref{fig:redSem}
for completeness of the presentation.

The functions $\smartTransl n c$, $\smartBoth{c_1}{c_2}$, and $\smartScale e c$
represent corresponding smart constructors, which perform some simplifications
before constructing a corresponding contract. Expressions built using smart
constructors are semantically equivalent to the corresponding expressions that
use ordinary constructors. The $\mathsf{sp_E}$ function denotes contract
specialisation (see \cite[Section 4.1]{BahrBertholdElsman}).

\begin{figure}[!h]
  \fbox{$c \cRed {\gamma,\rho}{T}{c'}$}
  \begin{gather*}
    \crule{c \cRed {\gamma,\rho} T {c'}}{\transl 0 c \cRed {\gamma,\rho}
      {T} {c'}}
    \qquad
    \crule{}{\zero \cRed{\gamma,\rho}{T_0}{\zero}}
    \qquad
    \crule{}{\transfer{p}{q}{a} \cRed {\gamma,\rho}
      {T_{p,q,a}} {\zero}}
    \\[.5em]
    \crule{d > 0}{\transl d c \cRed \rho {T_0}
      {\smartTransl{d-1} c}}
    \quad
    \crule{c \cRed {\gamma,\rho} T {c'}\quad r =
      \mathsf{sp_E}(e,\gamma,\rho) \quad r \in \reals}{\scale e c\cRed{\gamma,\rho}{r
      * T}
      {\smartScale r {c'}}}
    \\[.5em]
    \crule{c_i \cRed{\gamma,\rho} {T_i} {c'_i}}{\both{c_1}{c_2}
      \cRed{\gamma,\rho} {T_1 + T_2} {\smartBoth{c'_1 }{c'_2}}}
    \qquad
    \crule{c \cRed{\gamma,\rho}{T_0}{c'}\quad e' =
      \mathsf{sp_E}(e,\gamma,\rho)}{\scale e c\cRed{\gamma,\rho}{T_0}
      {\smartScale{\promote{-1} {e'}}{c'}}}
    \\[.5em]
    \crule{
      \begin{gathered}
        \mathsf{sp_E}(e,\gamma, \rho) = e'\\
        c \cRed{\gamma',\rho}T  c'
      \end{gathered}
      \qquad \gamma' =
      \begin{cases}
        \gamma[x\mapsto e'] & \text{if } e' \in \reals \cup \bools\\
        \gamma&\text{otherwise}
      \end{cases}
}{\letc x e c
      \cRed{\gamma,\rho} T \smartLetc x {\promote{-1}{e'}} {c}}
    \\[.5em]
    \crule{\mathsf{sp_E}(e,\gamma, \rho) =
      \mathit{false}\quad c_2\cRed{\gamma,\rho}T{c'}}{\ifwithin e 0
      {c_1}{c_2}\cRed{\gamma,\rho} T c'}
    \\[.5em]
    \crule{\mathsf{sp_E}(e,\gamma, \rho) =
      \mathit{true}\quad c_2\cRed{\gamma,\rho}T{c'}}{\ifwithin e d
      {c_1}{c_2}\cRed{\gamma,\rho} T c'}
    \\[.5em]
    \crule{\mathsf{sp_E}(e,\gamma, \rho) =
      \mathit{false}\quad d>0}{
      \begin{aligned}
        &\ifwithin e d {c_1} {c_2}
      \end{aligned}\;
      \cRed{\gamma,\rho}{T_0}
      \;
      \begin{aligned}
        &\ifwithin e {d-1} {c_1} {c_2}
      \end{aligned}}
  \end{gather*}
  \begin{align*}
    \text{where}\qquad
    T_0&= \lambda t. 0 \qquad   r*T=\lambda t. r\cdot T(t)\\
    T_1+T_2 &= \lambda t. T_1(t) + T_2(t)\\
    T_{p,q,a} &= \lambda (p',q',a').
    \begin{cases}
      1 &\text{if } (p',q',a') = (p,q,a)\\
      -1 &\text{if } (p',q',a') = (q,p,a)\\
      0 &\text{otherwise}
    \end{cases}
  \end{align*}
  \caption{Contract reduction semantics assuming $\TC c$.}
  \label{fig:redSem}
\end{figure}

\begin{example}
  Let us consider a simple example of the contract reduction.  We take the
  following contract containing two transfers: one transfer is scheduled on the
  current day (no translation into the future), and another transfer is
  scheduled on the following day.
  \begin{lstlisting}
    $c$ $\defeq$ both(transfer(you,me),
                translate(1,transfer(you,me)))
  \end{lstlisting}
  We get the following derivation tree for a one-step reduction of $c$:
  \[
  \inferrule{\inferrule{ }{\transfer{you}{me}{USD}} \cRed{}{T_{you,me}} \zero\\\\
    \\\\
    \inferrule{1 > 0} {\transl{1}{\transfer{you}{me}{USD}}} \cRed{}{T_0}
    \transfer{you}{me}{USD}}
              {c \cRed{}{T_{you,me}} \transfer{you}{me}{USD}}
  \]
  There is an implicit simplification in the result of reduction due to the usage of
  smart constructors:
  \[\smartBoth{\zero}{\smartTransl{0}{\transfer{you}{me}{USD}}} = \transfer{you}{me}{USD}\]
\end{example}

\subsection{Traces as a Vector Space}\label{subsec:contracts:vector-space}
In Section 2.4 of the paper on CL \cite{BahrBertholdElsman} it
was mentioned that the set $\type{Trans}$ of transfers between parties forms a
vector space.  In this section we will make this precise and go a bit further,
discussing the set of traces $\type{Trace}$ and the $\id{delay}$ operation on
traces.

First, we recall the definition of a vector space.
\begin{defn}[Vector Space]
  A \emph{vector space} over a field $\mathbb{F}$ is a set $V$ equipped with
  two operations:
  \begin{itemize}
  \item vector addition $- + - : V \times V -> V$
  \item scalar multiplication $- \cdot - : \mathbb{F} \times V -> V$
  \end{itemize}
  These operations satisfy the following axioms.
  \begin{itemize}
  \item associativity of vector addition:
    $\forall u,v,w \in V, ~ (u + v) + w = u + (v + w)$;
  \item commutativity of vector addition: $\forall u,v \in V. u + v = v + u$;
  \item identity of vector addition: there exists a \emph{zero vector} $\mathbf{0}$,
    s.t $\forall v \in V. ~ v + \mathbf{0} = v$;
  \item inverse of vector addition : for every $v \in V$ there exists an \emph{additive inverse}
    $-v$, s.t. $v + (-v) = \mathbf{0}$;
  \item compatibility of scalar multiplication with field multiplication:
    $\forall a,b \in \mathbb{F},~v \in V. ~(ab) \cdot v = a \cdot (b \cdot v)$
  \item identity of scalar multiplication: $\forall v \in V.~1 \cdot u = u$
  \item distributivity of scalar multiplication over vector addition:
    $\forall a \in \mathbb{F}, v,u \in V. ~a \cdot (v + u) = a \cdot v + a \cdot u$;
  \item distributivity of scalar multiplication over field addition:
    $\forall a,b \in \mathbb{F},~v \in V.~ (a + b) \cdot v = a \cdot v + b \cdot v$
  \end{itemize}
\end{defn}

Transfers are defined as functions to real numbers:
\[\type{Trans} = \type{Party}\times\type{Party}\times\type{Asset} \to \mathbb{R}\]
It is a well-known fact that functions to a field form a vectors space with operations
given pointwise.

We shall spell out explicitly how transfers form a vector space. We use $T$ to denote
elements of $\type{Trans}$. Operations on elements of $\type{Trans}$ are the following:
\begin{itemize}
\item $T_1 + T_2 = \lambda p.~T_1(p) + T_2(p)$,
  where $p : \type{Party}\times\type{Party}\times\type{Asset}$
\item $r \cdot T = \lambda p.~r \cdot T(p)$
\end{itemize}
Let us now check that these two operations satisfy the axioms of a vector space.
\begin{itemize}
\item associativity and commutativity of transfer addition follow from the
  properties of addition on real numbers;
\item the zero vector is just a constantly zero transfer $\mathbf{0} = \lambda p.~0$, and its property
   follows from addition with zero in
  $\reals$: $(T + \mathbf{0})(p) = T(p) + 0 = T(p)$;
\item the inverse is also just a negation from $\reals$: $(-T) = \lambda p. (-T(p))$;
\item identity of scalar multiplication follows from multiplication of real numbers;
\item both distributivity laws follow from the fact that $\reals$ is a field.
\end{itemize}
As one can see, we have never used any specific properties of $\type{Trans}$ to
define the operations, or in our reasoning about vector space axioms.  The fact
that $\type{Trans}$ is a vector space is indeed an instance of a more general
result as we could have used any function to $\reals$ and show that it
satisfies vector space axioms.

Now we consider the type of the semantics function for contracts. We define a \emph{trace}
to be a function from the set of natural numbers $\nats$ to $\type{Trans}$
\[\type{Trace} = \mathbb{N} \to \type{Trans} \]
The semantics of well-types contracts is defined in terms of traces
\[ \cSem{c}{} : \gSem{\Gamma} \times \type{Env} \times \type{TEnv} \to \type{Trace} \]

Traces are functions to $\type{Trans}$ and we have shown that $\type{Trans}$ is
a vector space. We can show that $\type{Trace}$ with operations defined pointwise
also a vector space. The argument is essentially the same as for transfers, but with all
the properties of real numbers replaced by the properties of $\type{Trans}$.

\begin{remark}
  In addition to the usual vector space structure, transfers satisfy an
  additional anti-symmetry axiom.  That is, for any parties $p_1$, $p_2$, and
  an asset $a$
  \[ T(p_1,p_2,a) = -T(p_2,p_1,a) \]
  This property extends pointwise to traces as well. This means that CL is
  interpreted into a subspace of the vector space $\type{Trace}$ given by all
  traces with anti-symmetry property.
\end{remark}

Contract combinators, like $\<scale>$ and $\<both>$ are interpreted as
operations of a vector space $\type{Trace}$: scalar multiplication and vector
addition respectively. In addition to these operations, the $\<translate>$ combinator of
the contract language could be interpreted in terms of vectors spaces. Let us
first recall a definition of a linear map.
\begin{thm}[Linear map]
  For two vector spaces $V$ and $W$ over the same field $\mathbb{F}$ we call
  the function $f : V -> W$ a linear map, if it preserves vector addition and
  vector multiplication. That is, it satisfies the following conditions (for
  any $v \in V$, $w \in W$ and $a \in \mathbb{F}$)
  \begin{align*}
    f (v + w) &= f(v) + f(w)\\
    f (a \cdot v) &= a \cdot f(v)
  \end{align*}
\end{thm}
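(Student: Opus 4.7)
My plan, on inspection of the final statement, is \emph{not} to propose a proof: the statement, despite being wrapped in a \texttt{thm} environment, is actually a definition. It introduces the terminology ``linear map'' for a function $f : V \to W$ between vector spaces over a common field $\mathbb{F}$ that preserves addition and scalar multiplication. There is no universally quantified claim, no existence assertion, and no equality to derive from prior hypotheses --- only a naming convention together with the two defining equations. The environment label is almost certainly a typographical slip; the content belongs in a \texttt{defn} block, alongside the earlier definition of a vector space in Section \ref{subsec:contracts:vector-space}.

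Consequently, the correct ``proof plan'' is the empty one, and I will resist the temptation to manufacture a proof obligation that the text does not actually state. The only non-trivial checks hiding in the definition are well-formedness conditions: that the sum $v+w$ on the left sits in $V$ (so that $f(v+w)$ is defined) and that the sum on the right sits in $W$ (so $f(v)+f(w)$ makes sense), and similarly for scalar multiplication. These are automatic from the stipulated signatures $+ : V \times V \to V$ and $\cdot : \mathbb{F} \times W \to W$ and require no argument. There is a minor typo in the quantifier line --- it universally quantifies over $w \in W$, whereas the equations manifestly require $w \in V$ --- but this is an editorial matter, not a mathematical one.

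The natural statement to prove \emph{next} is the claim that particular operations arising in the semantics of CL are linear maps in the sense just defined: that $\id{delay}(d,-) : \type{Trace} \to \type{Trace}$ is a linear endomorphism, so that the interpretation of $\<translate>$ is linear; that for fixed $r \in \reals$ the scaling $r \cdot (-)$ underlying $\<scale>$ is linear; and that pointwise addition underlying $\<both>$ is bilinear. For any of those, the verification would reduce to unfolding the pointwise definitions and invoking the field axioms of $\reals$, much as in the preceding subsection. But those are separate theorems; the excerpt as worded ends with a pure definition, and no proof is called for here.
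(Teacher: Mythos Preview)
Your analysis is correct: the statement is a definition mislabeled with a \texttt{thm} environment, and the paper provides no proof for it either. You also correctly spot the typo ($w \in W$ should be $w \in V$) and correctly anticipate that the subsequent lemma proves $\id{delay}$ is a linear map in this sense.
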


Let as consider the delay operation on contracts (see \ref{fig:Esem}):
\begin{equation}\label{eq:contracts:delay}
  \begin{aligned}
    \delay - - &: \nats \times \type{Trace} -> \type{Trace}\\
    \mathit{delay}(d,f) &=\lambda n . \begin{cases}
      f (n-d)&\text{if } n \ge d \\
      \lambda x . 0 &\text{otherwise}
    \end{cases}
  \end{aligned}
\end{equation}
For some fixed $n \in \nats$ we have
$\delay n - : \type{Trace} -> \type{Trace}$.
We know that $\type{Trace}$ is a vector space, so we can ask if
the $\mathit{delay}$ function is a linear map.
\begin{lemma}
  The $\mathit{delay}$ function defined by equation \ref{eq:contracts:delay} is
  a linear map.
\end{lemma}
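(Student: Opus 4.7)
The plan is to unfold the definition of $\mathit{delay}$ and verify the two linear map conditions pointwise, using the fact that the vector space operations on $\type{Trace}$ are defined pointwise in terms of the corresponding operations on $\type{Trans}$, which in turn are defined pointwise in $\reals$.

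Fix $d \in \nats$ and consider the map $\mathit{delay}(d,-) : \type{Trace} \to \type{Trace}$. To show preservation of vector addition, I would take arbitrary $f, g \in \type{Trace}$ and establish that $\mathit{delay}(d, f + g)$ and $\mathit{delay}(d, f) + \mathit{delay}(d, g)$ agree when applied to any $n \in \nats$. This proceeds by a case split on whether $n \ge d$. In the case $n \ge d$, both sides reduce to $(f + g)(n - d)$, which by the pointwise definition of trace addition equals $f(n-d) + g(n-d)$. In the case $n < d$, the left hand side is the zero transfer $\lambda x.\ 0$, and the right hand side is $\lambda x.\ 0 + \lambda x.\ 0$, which equals the zero transfer because $\mathbf{0}$ is the additive identity in the vector space $\type{Trans}$.

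For preservation of scalar multiplication, I would take arbitrary $a \in \reals$ and $f \in \type{Trace}$ and show in the same pointwise manner that $\mathit{delay}(d, a \cdot f)(n) = (a \cdot \mathit{delay}(d, f))(n)$ for all $n$. If $n \ge d$, both sides reduce to $a \cdot f(n-d)$ by unfolding the pointwise scalar multiplication on traces. If $n < d$, the left hand side is $\lambda x.\ 0$, and the right hand side is $a \cdot \lambda x.\ 0 = \lambda x.\ a \cdot 0 = \lambda x.\ 0$, using the fact that scaling the zero vector yields the zero vector (which itself follows from the vector space axioms established for $\type{Trans}$).

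There is no real obstacle; the argument is essentially a mechanical case analysis. The only minor point worth emphasising is that in the $n < d$ branch one must invoke the zero-transfer identities $\mathbf{0} + \mathbf{0} = \mathbf{0}$ and $a \cdot \mathbf{0} = \mathbf{0}$ in $\type{Trans}$, which follow from the vector space structure discussed immediately above. In a formalisation, one would additionally appeal to functional extensionality (or the relevant setoid equivalence on traces) to conclude the pointwise equalities yield equalities of traces.
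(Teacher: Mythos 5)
Your proof is correct and follows essentially the same route as the paper's: a pointwise case split on whether the time index is at least the delay, unfolding the pointwise trace operations in the first branch and invoking the zero-transfer identities $\mathbf{0}+\mathbf{0}=\mathbf{0}$ and $a\cdot\mathbf{0}=\mathbf{0}$ in the second, with functional extensionality to conclude. No gaps.
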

\begin{proof}
  First, we show that $\mathit{delay}$ preserves vector addition.
  Fix some $n \in \nats$, we write $\mathit{delay}_n$ for $\delay n -$, and
  $\overline{T}$ for elements on $\type{Trace}$.
  By functional extensionality, for any $t \in \nats$, we have to show
  \[ \mathit{delay}_n(\overline{T}_1 + \overline{T}_2)(t) =
  \mathit{delay}_n(\overline{T}_1)(t) + \mathit{delay}_n(\overline{T}_2)(t) \]
  We proceed by cases of $t \ge n$.
  \begin{itemize}
    \item $t \ge n$:
    \begin{align*}
      \mathit{delay}_n(\overline{T}_1 + \overline{T}_2)(t)
      & = \text{by definition of $\mathit{delay}$ for $t \ge n$} \\
      &= (\overline{T}_1 + \overline{T}_2)(t-n) \\
      &= \text{by definition of pointwise addition of traces}\\
      &= \overline{T}_1(t-n) + \overline{T}_2(t-n)\\
      &= \text{by definition of $\mathit{delay}$ for $\overline{T}_1$ and $\overline{T}_2$}\\
      &= \mathit{delay}_n(\overline{T}_1)(t) + \mathit{delay}_n(\overline{T}_2)(t)
    \end{align*}
  \item otherwise:
    \[ \mathbf{0} = \mathbf{0} + \mathbf{0} \]
    where $\mathbf{0} = \lambda x.0$ is the constantly zero transfer.
    This holds by the identity of vector addition axiom.
  \end{itemize}

  Now we show that $\mathit{delay}$ preserves scalar multiplication. Fix some $n
  \in \nats$. By functional extensionality, for any $t \in \nats$ we have to
  show
  \[ \mathit{delay}_n(r \cdot \overline{T})(t) = r \cdot \mathit{delay}_n(\overline{T})(t) \]
  We proceed by cases of $t \ge n$.
  \begin{itemize}
    \item $t \ge n$:
    \begin{align*}
      \mathit{delay}_n(r \cdot \overline{T})(t)
      & = \text{by definition of $\mathit{delay}$ for $t \ge n$} \\
      &= (r \cdot \overline{T})(t-n) \\
      &= \text{by definition of pointwise scalar multiplication of traces}\\
      &= r \cdot \overline{T}(t-n)\\
      &= \text{by definition of $\mathit{delay}$ for $\overline{T}$}\\
      &= r \cdot \mathit{delay}_n(\overline{T})(t)
    \end{align*}
  \item otherwise:
    \[ \mathbf{0} = r \cdot \mathbf{0} \]
    where $\mathbf{0} = \lambda x.0$.  This is one of the properties of
    vectors, which can be derived from the basic vector space axioms.
  \end{itemize}
\end{proof}

Knowing that $\type{Trace}$ is a vector space gives us the important set of
properties, which are used in the proofs involving the contract semantics. We will
see some examples in the proof of compilation soundness in Section
\ref{sec:contracts:compile}. Moreover, a many contract equivalences are direct
reflections of vector space axioms.

\section{The Payoff Intermediate Language}\label{sec:contracts:payoffs}
\subsection{Motivation}
The contract language allows for capturing different aspects of financial
contracts.  We consider a particular use case for the contract language, where
one wants to calculate an estimated price of a contract according to some
stochastic model by performing simulations. Simulations is often implemented
using Monte Carlo techniques, for instance, by evaluating a contract price at
current time for randomly generated possible market scenarios and discounting
the outcome according to some model. A software component that implements such a procedure
is called a \emph{pricing engine} and aims to be very efficient in performing
large amount of calculations by exploiting the parallelism
\cite{Andreetta:2016:FPF:2952301.2898354}. For this use case, one has to
 take the following aspects into account:
\begin{itemize}
\item Contracts should be represented as simple functions that take prices of
  assets involved in the contract (randomly generated by a pricing engine) and
  return one value corresponding to the overall outcome of the contract.
\item The resulting value of the contract should be discounted according to a
  given discount function.
\end{itemize}
One way of achieving this would be to implement an interpreter for the contract
language as part of a pricing engine. Although this approach is quite general,
interpreting a contract in the process of pricing will cause significant
performance overhead.  Moreover, it will be harder to reason about correctness
of the interpreter, since it could require non-trivial encoding in languages
targeting GPGPU devices. For that reason we take another
approach: translating a contract from CL to an intermediate
representation and, eventually, to a function in the pricing engine
implementation language.

The main motivation behind the payoff language is to bridge the gap between
CL and programming languages usually used to implement pricing
engines. The payoff language should be relatively easy to compile to various
target languages such as Haskell, Futhark \cite{henriksen2014size}, or
OpenCL. We would like to consider a language containing fewer domain-specific
features and being closer to a subset of some general purpose language, making a
mapping from the payoff language to a target language
straightforward. Moreover, we would like to parameterise payoff expressions
with template expressions, like in our extended CL. In addition to
template expressions we want payoff expressions to capture the dynamic nature of
contracts: reduction with the passage of time.  This feature is usually not
present in most of payoff languages, but it is important for efficient
interaction with a pricing engine. Since our target languages include
high-performance languages for GPGPU computing, and payoff functions are
usually relatively small pieces of code, for efficient execution one often
needs to inline these functions. If our payoff code was not parametric, it would
require recompiling of big portions of the pricing engine code base when contracts
evolve in time. Sometimes it is also necessary to estimate the sensitivity of
a contract to the passage of time. In this case one wants the time parameter to
be a part of the payoff expression.

\subsection{Syntax and Semantics}\label{subsec:contracts:payoffs-syn-sem}
The payoff intermediate language is an expression language
($\id{il}\in\type{ILExpr}$) with binary and
unary operations, extended with conditionals and generalised
conditionals $\<loopif>$, behaving similarly to $\<ifWithin>$.
Template expressions ($t\in\type{TExprZ}$) in this language are extensions of the
template expressions of the contract language with integer literals
and addition. We will often refer to this language as \emph{payoff expressions}.
\[
\begin{split}
  \id{il} ~::=~ & \<now> ~|~\<model>(l,t) ~|~ \<if>(\id{il}_1,\id{il}_2,\id{il}_3) ~|~\\
                & \<loopif>(\id{il},\id{il},\id{il},t)~|~\<payoff>(t,p,p) ~|~ \\
                & \id{unop}(\id{il}) ~|~ \id{binop}(\id{il}_1,\id{il}_2) ~|~ t^e\\
  \id{unop} ~::=~  & \<neg> ~|~ \<not>\\
  \id{binop} ~::=~ &\<add>~|~\<mult>~|~\<sub>~|~\<lt>~|~\<and> ~|~\<or> ~|~ \<ltn> ~|~ \dots \\
  t^e ~::=~ & n ~|~ i ~|~ v ~|~ \<tplus>(t^e_1,t^e_2)
\end{split}
\]

The semantics of payoff expressions (Figure \ref{fig:contracts:payoffs-sem})
depends on environments $\rho \in \type{Env}$ and $\delta \in \type{TEnv}$
similarly to the semantics of the contract language. Payoff expressions can
evaluate to a value of type $\nats$, $\reals$, or $\bools$. We add $\nats$ to
the semantic domain, because we need to interpret the $\<now>$ construct, which
represents the ``current time'' parameter and template expressions $t^e$.
The semantics also depends on a \emph{discount function}
$d : \mathbb{N} -> \mathbb{R}$.  The $t_0 \in \mathbb{N}$ parameter is used to add
relative time shifts introduced by the semantics of $\<loopif>$; $t$ is a current
time, which will be important later, when we introduce a mechanism to cut
payoffs before a certain point in time.

The semantics for unary and binary operations is a straightforward mapping to corresponding
arithmetic and logical operations, provided that the arguments have
appropriate types. For example, $\sem{\<add>}(v_1, v_2) = v_1 + v_2$, if $v_1, v_2 \in \reals$.

\begin{figure}
\[
\begin{split}
\begin{aligned}
  \boxed{\ilSem{il}{} :
  \type{Env} \times \type{TEnv} \times \mathbb{N} \times \mathbb{N} \times
  (\mathbb{N} -> \mathbb{R}) \times \type{Party} \times \type{Party}
  \pto \nats \cup \reals \cup \bools}
\end{aligned} \\
\def\arraystretch{2}
\begin{array}{rl}
  \ilSem{t^e}{\rho,\delta,t_0,t,d,p_1,p_2} &= \tSem{t^e}{\delta} + t_0 \\
  \ilSem{unop(il)}{\rho,\delta,t_0,t,d,p_1,p_2} &=
  \sem{unop} (\ilSem{il}{\rho,\delta,t_0,t,d,p_1,p_2})\\
  \ilSem{binop(il_0,il_1)}{\rho,\delta,t_0,t,d,p_1,p_2} &=
  \sem{binop} (\ilSem{il_0}{\rho,\delta,t_0,t,d,p_1,p_2},~
  \ilSem{il_1}{\rho,\delta,t_0,t,d,p_1,p_2})\\
  \ilSem{\<model>(l,t^e)}{\rho,\delta,t_0,t,d,p_1,p_2} &=
  \rho(l,\tSem{t^e}{\delta} + t_0)\\
  \ilSem{\<now>}{\rho,\delta,t_0,t,d,p_1,p_2} &= t\\
  \ilSem{if(il_0,il_1,il_2)}{\rho,\delta,t_0,t,d,p_1,p_2} &=
      \begin{cases}
        \ilSem{il_1}{\rho,\delta,t_0,t,d,p_1,p_2} &
        \text{if } \ilSem{il_0}{\rho,\delta,t_0,t,d,p_1,p_2} = true \\
        \ilSem{il_2,}{\rho,\delta,t_0,t,d,p_1,p_2} &
        \text{if } \ilSem{il_0}{\rho,\delta,t_0,t,d,p_1,p_2} = false \\
      \end{cases}\\[1em]
  \ilSem{\<payoff>(t^e,p_1',p_2')}{\rho,\delta,t_0,t,d,p_1,p_2} &=
      \begin{cases}
        d (\tSem{t^e}{\delta}) &\text{if }  p_1' = p_1, p_2' = p_2\\
        - d (\tSem{t^e}{\delta}) &\text{if } p_1' = p_2, p_2' = p_1\\
        0 &\text{otherwise}
      \end{cases}
\end{array}
\end{split}
\]
\[
\begin{split}
  \ilSem{\<loopif>(il_0,il_1,il_2, t^e)}{\rho,\delta,t_0,t,d,p_1,p_2} =
      iter(\tSem{t^e}{\delta},t_0), where\\[1em]
  iter~n~t_0 =
  \begin{cases}
    \ilSem{il_1}{\rho,\delta,t_0,t,d,p_1,p_2} &
    \text{if } \ilSem{il_0}{\rho,\delta,t_0,t,d,p_1,p_2} = true \\
    \ilSem{il_2}{\rho,\delta,t_0,t,d,p_1,p_2} &
    \text{if } \ilSem{il_0}{\rho,\delta,t_0,t,d,p_1,p_2} = false \wedge n = 0 \\
    iter (n-1) (t_0+1) &
    \text{if } \ilSem{il_0}{\rho,\delta,t_0,t,d,p_1,p_2} = false \wedge i > 0
  \end{cases}
\end{split}
\]
  \caption{Semantics of payoff expressions.}
  \label{fig:contracts:payoffs-sem}
\end{figure}

The semantics for $\<loopif>$ is very similar to the semantics of
$\mathtt{ifWithin}$, although we do not ``advance'' external
environments. Instead, we increment parameter $t_0$, which is added to the time
shift when looking up a value in the semantics for $\<model>$.  The semantic
function $\ilSem{-}{}$ considers only payoffs between two parties $p_1$ and
$p_2$, which are given as the last two parameters. More precisely, it considers
payoffs from party $p_1$ to party $p_2$ as positive and as negative, if payoffs
go in the opposite direction.


Another way of defining the semantics could
be a \emph{bilateral view} on payoffs. In this case only cashflows to or from
one fixed party to any other party are considered. The semantics for the $\<payoff>$
then would be defined as follows:
\begin{gather*}
    \begin{aligned}[t]
      \ilSem{\<payoff>(t,p_1',p_2')}{\rho,\delta,t_0,t,d,p} &=
      \begin{cases}
        d (\tSem{t}{\delta}) &\text{if }  p_2' = p\\
        - d (\tSem{t}{\delta}) &\text{if } p_1' = p\\
        0 &\text{otherwise}
      \end{cases}
    \end{aligned}
\end{gather*}
\section{Compiling Contracts to Payoffs}\label{sec:contracts:compile}
The contract language (Figure \ref{fig:contracts:syntax}) consist of two levels,
namely constructors to build contracts ($c$) and constructors to build
expressions ($e$), which are used in the constructors for contracts ($\<scale>$,
$\<ifWithin>$, etc.).  We compile both levels into a single payoff language. The
compilation functions $\compileExp{-}{} : \type{Expr} \times \type{TExprZ} \pto
\type{ILExpr}$ and $\compileContr{-}{} : \type{Contr} \times \type{TExprZ} \pto
\type{ILExpr}$ are recursively defined on the syntax of expressions and
contracts, respectively, taking the starting time $t_0 \in \type{TExprZ}$ as a
parameter.
\[
\begin{aligned}[t]
      \compileExp{\<cond>(b,~e_0,~e_1])}{t_0} &=
      \<if>(\compileExp{b}{t_0}, \compileExp{e_0}{t_0}, \compileExp{e_1}{t_0})\\
      \compileExp{\<obs>(l,i)}{t_0} &= \<model>(l, \<tplus>(t_0,i))\\
      \compileContr{\<transfer>(p_1,p_2,a)}{t_0} &=  \<payoff>(t_0,p_1,p_2)\\
      \compileContr{\<scale>(e,c)}{t_0} &= \<mult>(\compileExp{e}{t_0}, \compileContr{c}{t_0})
 \end{aligned}
\]
\[
\begin{aligned}[t]
  \compileContr{\<zero>}{t_0} &= 0\\
  \compileContr{\<translate>(t,c)}{t_0} &= \compileContr{c}{\smartTplus(t_0, t)},
  \quad \text{where}\\
  &\smartTplus(t_1, t_2) =
  \begin{cases}
    t_1 + t_2 &\text{if } t_1,t_2 ~ \text{ are numerals}\\
    \<tplus>(t_1,t_2) &\text{otherwise}
  \end{cases}\\
  \compileContr{\<both>(c_0,c_1)}{t_0} &= \<add>(\compileContr{c_0}{t_0}, \compileContr{c_1}{t_0})\\
  \compileContr{\<ifWithin>(e,t,c_1,c_2)}{t_0} &=
  \<loopif>(\compileExp{e}{t_0},\compileContr{c_0}{t_0},\compileContr{c_1}{t_0}, t)
\end{aligned}
\]

The important point to note here is that all the relative time shifts in CL
are accumulated to the $t_0$ parameter. The resulting payoff expression
only contains lookups in the external environment where time is given
explicitly, and does not depend on nesting of time shifts as it was in the case
of $\transl{t}{c}$ in CL. Such a representation allows for a more
straightforward evaluation model. We also would like to emphasise that $\acc$
and $\letA$ constructs are not supported by our compilation procedure. On the
supported subset of the contract language, compilation functions
$\compileExp{e}{}$, and $\compileContr{c}{}$ are total.

Let us show an example of the contract compilation.  We consider the code of a
contract in CL extended with template expressions and demonstrate how
nested occurrences of $\mathtt{translate}$ are compiled to a payoff expression.

\begin{example}\label{ex:contracts:compile}
We consider the following contract (\icode{t0} and \icode{t1} denote template
variables): the party ``you'' transfer to the party ``me'' 100 USD in \icode{t0}
days in the future, and after \icode{t1} more days ``you'' transfers to ``me''
an amount equal to the difference between the current price of the AAPL ticker and 100
USD, provided that the price of AAPL is higher then 100 USD
(we use infix notation for arithmetic operations to make code more readable).
\begin{lstlisting}
c $=$
  translate(t0,
  both(scale(100.0, transfer(you,me)),
       translate(t1,
         if(obs(AAPL,0) > 100.0,
            scale(obs(AAPL,0) - 100.0, transfer(you, me)),
            zero)))
\end{lstlisting}

This contract compiles to the following code in the Payoff Intermediate Language:
\begin{lstlisting}
e $=$
  (100.0 * payoff(t0,you,me)) +
  if (model(AAPL,t0+t1) > 100.0,
     (model(AAPL,t0+t1) - 100.0) * payoff(t0+t1,you,me),
     0.0)
\end{lstlisting}
As one can see, all the nested occurrences of \icode{translate} construct were
accumulated from top to bottom. That is, in the \icode{if} case we calculate payoffs
and lookup for values of~``AAPL'' at time \icode{(t0+t1)}.
\end{example}

To be able to reason about soundness of the compilation process, one needs to
make a connection between the semantics of the two languages. For the expression sublanguage of
CL it is simple: we can just compare the values that original expression
and compiled expression evaluates to. In case of the contract language
(denoted by $c$ in Figure \ref{fig:contracts:syntax}) the situation is different, since the semantics
of contracts is given in terms of $\type{Trace}$, and expressions of the payoff
intermediate language evaluate to a single value. On the other hand, we know that
the compiled expression represents the sum of the contract cashflows with discount.

Before we state and sketch the proof of the soundness theorem, let us state some additional
lemmas.
\begin{lemma}[Delay scale]\label{lem:contracts:delay-scale}
  For any $s : \reals$, $t : \nats$ and a trace $\id{tr}$, the following holds:
  \[ \delay{t}{s \cdot \id{tr}} = s \cdot \delay{t}{\id{tr}} \]
\end{lemma}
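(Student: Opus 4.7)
The plan is to observe that this lemma is essentially the scalar-multiplication preservation property of $\mathit{delay}$ as a linear map, which was already established in Section~\ref{subsec:contracts:vector-space}. So the proof will proceed by unfolding definitions and doing a case split on the time index, exactly mirroring the earlier argument.

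First I would invoke functional extensionality to reduce the goal to showing, for an arbitrary $n \in \nats$, that
\[ \delay{t}{s \cdot \id{tr}}(n) = (s \cdot \delay{t}{\id{tr}})(n). \]
Then I would case-split on whether $n \ge t$.

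In the case $n \ge t$, I unfold $\mathit{delay}$ on the left-hand side to get $(s \cdot \id{tr})(n-t)$, apply the pointwise definition of scalar multiplication on traces to rewrite this as $s \cdot \id{tr}(n-t)$, and then fold $\mathit{delay}$ back on $\id{tr}$ to obtain $s \cdot \delay{t}{\id{tr}}(n)$; finally, by the pointwise definition of scalar multiplication on traces once more, this equals $(s \cdot \delay{t}{\id{tr}})(n)$, as required. In the case $n < t$, both sides reduce to the zero trace: the left-hand side is $\mathbf{0}$ directly by definition of $\mathit{delay}$, and the right-hand side is $s \cdot \mathbf{0}$, which equals $\mathbf{0}$ by the derived vector-space fact that scalar multiplication annihilates the zero vector (as noted in the earlier proof of linearity of $\mathit{delay}$).

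There is no real obstacle here; the only mild subtlety is keeping straight the three different layers of pointwise operations — the scalar multiplication on $\type{Trace}$, the underlying scalar multiplication on $\type{Trans}$, and ordinary multiplication on $\reals$ — but all three collapse cleanly by the pointwise definitions given in Section~\ref{subsec:contracts:vector-space}. In a formalisation, the second case additionally uses the lemma $s \cdot \mathbf{0} = \mathbf{0}$, which follows from distributivity of scalar multiplication over vector addition applied to $\mathbf{0} + \mathbf{0} = \mathbf{0}$.
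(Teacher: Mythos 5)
Your proposal is correct and matches the paper's treatment: the paper proves exactly this statement as the scalar-multiplication half of the linearity of $\mathit{delay}$ in Section~\ref{subsec:contracts:vector-space} (functional extensionality, case split on $t \ge n$, unfolding the pointwise definitions, and $s \cdot \mathbf{0} = \mathbf{0}$ in the second case), and Lemma~\ref{lem:contracts:delay-scale} is then just cited as a corollary of that fact. Nothing to add.
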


\begin{lemma}[Delay add]\label{lem:contracts:delay-add}
  For any $t : \nats$ and traces $\id{tr}_1$,$\id{tr}_2$, the following holds:
  \[ \delay{t}{\id{tr}_1 + \id{tr}_2} = \delay{t}{\id{tr}_1} + \delay{t}{\id{tr}_2} \]
\end{lemma}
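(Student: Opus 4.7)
The statement is exactly the vector-addition-preservation half of the linearity result proved immediately above the lemma (namely, that $\mathit{delay}_n$ is a linear map on the vector space $\type{Trace}$). So the plan is either to cite that lemma directly or, if we want a self-contained argument, to replay the same case split.

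More concretely, I would proceed as follows. First, by functional extensionality on traces, it suffices to show pointwise equality at an arbitrary $n \in \nats$, i.e.\
\[ \mathit{delay}(t, \id{tr}_1 + \id{tr}_2)(n) = \mathit{delay}(t, \id{tr}_1)(n) + \mathit{delay}(t, \id{tr}_2)(n). \]
Then I would case-split on whether $n \ge t$. In the case $n \ge t$, unfolding the definition of $\mathit{delay}$ reduces the left-hand side to $(\id{tr}_1 + \id{tr}_2)(n - t)$ and the right-hand side to $\id{tr}_1(n-t) + \id{tr}_2(n-t)$; these are equal by the pointwise definition of trace addition. In the case $n < t$, both sides reduce to the zero transfer $\mathbf{0} = \lambda x.\,0$, and the equality $\mathbf{0} = \mathbf{0} + \mathbf{0}$ is the identity-of-vector-addition axiom established in the vector space discussion of Section \ref{subsec:contracts:vector-space}.

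There is no real obstacle here, since the work has already been done in the proof that $\mathit{delay}$ is linear; the only minor difference is that the earlier statement fixed $n$ up front and wrote $\mathit{delay}_n$, whereas here $t$ plays that role. If we prefer a one-line proof, we can simply invoke the linearity lemma (first clause), instantiated at the chosen $t$, rather than redoing the case analysis.
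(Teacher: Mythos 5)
Your proposal is correct and matches the paper's treatment: the paper gives no separate proof for this lemma but notes immediately afterwards that it is exactly the addition-preservation half of the linearity of $\mathit{delay}$ established in Section \ref{subsec:contracts:vector-space}, whose proof is the same pointwise case split on $n \ge t$ that you replay. Either citing that result or redoing the two cases as you describe is precisely what the paper intends.
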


Lemmas \ref{lem:contracts:delay-scale} and \ref{lem:contracts:delay-add}
correspond to the fact that the $\id{delay}$ function is a linear map (see
Section \ref{subsec:contracts:vector-space}).

\begin{lemma}[Common factor]\label{lem:sum-common-factor}
  For any function $f : \nats -> \reals$, $s : \reals$, and $t_0, n : \nats$, we have the following obvious result:
  \[ \sum_{t=t_0}^{t_0 + n} s \cdot f(t) = s \cdot \sum_{t=t_0}^{t_0 + n} f(t)  \]
\end{lemma}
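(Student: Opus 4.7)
The plan is to prove this by a straightforward induction on $n$, reducing the claim to the distributivity axiom of multiplication over addition in $\reals$ together with the definition of the finite sum. Since the statement is universally quantified over $t_0$, it is useful to keep $t_0$ generalised during the induction so that the inductive hypothesis applies directly to the shifted sum that appears after unfolding one summand.

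In the base case $n = 0$, both sides collapse to a single summand: the left-hand side is $s \cdot f(t_0)$ and the right-hand side is $s \cdot f(t_0)$, so the equality holds by reflexivity. For the inductive step, assuming the statement holds for $n$ and all $t_0$, I would unfold the sum on each side to isolate either the first or the last term. Unfolding the first term, the left-hand side becomes
\[ s \cdot f(t_0) + \sum_{t=t_0+1}^{t_0+1+n} s \cdot f(t), \]
and by the induction hypothesis applied with starting point $t_0 + 1$, the tail equals $s \cdot \sum_{t=t_0+1}^{t_0+1+n} f(t)$. An application of left-distributivity of multiplication over addition in $\reals$ then yields $s \cdot \bigl(f(t_0) + \sum_{t=t_0+1}^{t_0+1+n} f(t)\bigr)$, which by folding the sum back equals the right-hand side.

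There is no real obstacle here; the argument is routine once one fixes a concrete definition of the finite-sum notation (say, as a recursive function on the upper index, or equivalently on the number of summands $n$). The only minor care needed is making sure the induction is performed on $n$ with $t_0$ kept free, since a fixed $t_0$ version of the induction hypothesis would not apply to the tail sum that starts at $t_0 + 1$. This mirrors the linearity arguments used in Lemmas \ref{lem:contracts:delay-scale} and \ref{lem:contracts:delay-add}, which also reduce, via functional extensionality, to pointwise uses of the field axioms of $\reals$.
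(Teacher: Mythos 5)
Your proof is correct, and it fills in exactly the routine argument the paper leaves implicit: the paper states this lemma without proof, calling it an ``obvious result'' (the details being delegated to the Coq development), and an induction on $n$ with $t_0$ kept general, closed off by distributivity in $\reals$, is the standard way to discharge it. Your observation that the induction hypothesis must be quantified over $t_0$ so that it applies to the tail sum starting at $t_0+1$ is the one genuinely non-trivial point in a formalisation, and you have identified it correctly.
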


\begin{lemma}[Split sum]\label{lem:sum-split}
  For any function $f,g : \nats -> \reals$, we have the following:
  \[ \sum_{t=t_0}^{t_0 + n} \big(f(t) + g(t)\big) = \sum_{t=t_0}^{t_0 + n} f(t) + \sum_{t=t_0}^{t_0 + n} g(t)  \]
\end{lemma}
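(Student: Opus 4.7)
The plan is to prove this by straightforward induction on $n$, with $t_0$ treated as an arbitrary but fixed parameter throughout. The statement is a classical distributivity law for finite sums over pointwise addition of real-valued functions, so no structural insight is needed beyond unfolding the definition of the sum and applying commutativity and associativity of addition on $\reals$.

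First I would fix $f, g : \nats \to \reals$ and $t_0 \in \nats$, and proceed by induction on $n$. In the base case $n = 0$, both sides reduce to a single-term sum, so the equality $f(t_0) + g(t_0) = f(t_0) + g(t_0)$ holds immediately. In the inductive step, assuming the statement for $n$, I would unfold the sum on the left at index $t_0 + (n+1)$, giving
\[
\sum_{t=t_0}^{t_0 + n + 1} \bigl(f(t) + g(t)\bigr)
 = \sum_{t=t_0}^{t_0 + n} \bigl(f(t) + g(t)\bigr) + \bigl(f(t_0 + n + 1) + g(t_0 + n + 1)\bigr),
\]
apply the induction hypothesis to the first summand, and then rearrange the four resulting terms using associativity and commutativity of addition on $\reals$ to recombine them as
\[
\Bigl(\sum_{t=t_0}^{t_0+n} f(t) + f(t_0+n+1)\Bigr) + \Bigl(\sum_{t=t_0}^{t_0+n} g(t) + g(t_0+n+1)\Bigr),
\]
which by definition equals the right-hand side.

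There is no real obstacle here; the only subtlety, if any, is a choice of convention for the sum notation (whether $\sum_{t=t_0}^{t_0+n}$ has $n+1$ terms or $n$ terms, and how it behaves in a degenerate case), but since the companion Lemma \ref{lem:sum-common-factor} uses exactly the same notation, consistency with that lemma fixes the convention and the base case adapts accordingly. In a Coq formalisation one would likely phrase the sum as a \texttt{fold} over a list of indices or as a recursive function on $n$, in which case the proof is a one-line induction followed by \texttt{ring} or explicit rewriting with the associativity/commutativity lemmas for \texttt{Rplus}.
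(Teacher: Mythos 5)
Your proof is correct, and the induction on $n$ with associativity/commutativity of addition on $\reals$ is exactly the standard argument; the paper itself states this lemma without proof (treating it, like Lemma \ref{lem:sum-common-factor}, as an obvious arithmetic fact whose details live in the Coq development), so there is nothing to diverge from. Your remark about pinning down the summation convention via consistency with the companion lemma is a sensible observation but not a substantive issue.
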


\begin{lemma}[Sum delay]\label{lem:contracts:sum-delay}
  For any trace $\id{tr}$ $t_0, t_1, t_2 : \nats$, discount function $d :
  \nats -> \reals$, and parties $p_1$, $p_2$, we have the following:
  \[ \sum_{t=t_0}^{t_0 + t_1 + t_2} d(t) \cdot \delay{t_0 + t_1}{tr}(t)(p1,p2) =
  \sum_{t=t_0+t_1}^{t_0 + t_1+ t_2} d(t) \cdot \delay{t_0 + t_1}{tr}(t)(p_1,p_2) \]
\end{lemma}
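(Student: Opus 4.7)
The plan is to exploit the defining property of $\mathit{delay}$, namely that $\mathit{delay}(d,f)(n) = \lambda x.\,0$ whenever $n < d$. Setting $d = t_0 + t_1$, every index $t$ in the range $[t_0,\, t_0 + t_1 - 1]$ satisfies $t < t_0 + t_1$, so the trace $\delay{t_0+t_1}{tr}$ evaluated at $t$ yields the constantly zero transfer, and therefore $d(t) \cdot \delay{t_0+t_1}{tr}(t)(p_1,p_2) = 0$ on that initial segment.

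Concretely, I would first split the left-hand sum at the index $t_0 + t_1$, writing
\[
\sum_{t=t_0}^{t_0 + t_1 + t_2} d(t) \cdot \delay{t_0+t_1}{tr}(t)(p_1,p_2)
= \sum_{t=t_0}^{t_0 + t_1 - 1} d(t) \cdot \delay{t_0+t_1}{tr}(t)(p_1,p_2)
+ \sum_{t=t_0 + t_1}^{t_0 + t_1 + t_2} d(t) \cdot \delay{t_0+t_1}{tr}(t)(p_1,p_2).
\]
Then I would show that the first summand is zero by unfolding the definition of $\mathit{delay}$ in the regime $t < t_0 + t_1$: each summand becomes $d(t) \cdot 0 = 0$, and a sum of zeros is zero. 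Formally this is a routine induction on $t_1$ (or, in Coq, an application of a lemma stating that a finite sum of a pointwise-zero function is zero).

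The only mildly delicate step is the split of the summation, which in a Coq formalisation requires care about off-by-one boundaries depending on whether the sum is inclusive or exclusive at the endpoints; I would handle it by a direct induction on $t_1$, with the base case $t_1 = 0$ being immediate (the two sums coincide) and the inductive step reducing the lower bound of the left-hand sum by one and using the $\mathit{delay}$ vanishing property on the extra term. I expect the main obstacle not to be mathematical but notational/bookkeeping: aligning the indexing conventions of the finite sum, the $\mathit{delay}$ definition, and the cut-off condition $t < t_0 + t_1$ so that the induction goes through cleanly.
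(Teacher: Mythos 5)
Your proof is correct and matches the paper's own (unstated) argument: the paper only records the intuition that ``summing up the delayed trace before the delay point does not affect the result,'' which is exactly the split-the-sum-and-kill-the-prefix argument you spell out using the vanishing clause of $\mathit{delay}$. No gap; the off-by-one bookkeeping you flag is indeed the only delicate point in a Coq rendering.
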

Intuitively, Lemma \ref{lem:contracts:sum-delay} says that summing up the delayed trace before the delay point
does not affect the result.

We assume a function $\id{HOR}: \type{TEnv}\times\type{Contr} -> \mathbb{N}$
that returns a conservative upper bound on the length of a contract. We often write
$\compileExp{e}0 = \id{il}$, or $\compileContr{c}0 = \id{il}$ to emphasise that the compilation
function returns some result. The compilation function satisfies the following properties:

\begin{thm}[Soundness]\label{thm:contracts:compile-sound}
  Assume parties $p_1$ and $p_2$ and discount function $d : \mathbb{N} -> \mathbb{R}$, environments
  $\rho \in \type{Env}$ and $\delta \in \type{TEnv}$.
  \begin{enumerate}[(i)]
    \item If ~$\compileExp{e}0 = \id{il}$ and
      $\eSem{e}{\rho,\delta} = v_1$ and $\ilSem{\id{il}}{\rho,\delta,0,0,d,p_1,p_2}
      = v_2$ then $v_1=v_2$.

    \item  If ~$\compileContr{c}0 = \id{il}$ and ~$\cSem{c}{\rho,\delta} = \id{tr}$,
      where $\id{tr} : \mathbb{N} -> \type{Party} \times \type{Party} -> \mathbb{R}$, and
      $\ilSem{\id{il}}{\rho,\delta,0,0,d,p_1,p_2} = v$ then
      \[\sum_{t=0}^{\synhor_\delta(c)} d(t) \times \id{tr}(t)(p_1,p_2) = v\]
  \end{enumerate}
\end{thm}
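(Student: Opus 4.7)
The plan is to prove both parts by structural induction, but in each case a strengthening is needed before the induction closes. I would first prove a generalisation of part (ii) where the compilation parameter ranges over arbitrary template expressions rather than being fixed to $0$: for every $t_0 \in \type{TExprZ}$, if $\compileContr{c}{t_0} = \id{il}$ and $\cSem{c}{\rho,\delta} = \id{tr}$, then
\[
  \sum_{t=0}^{\synhor_\delta(c)} d(t + \tSem{t_0}{\delta}) \cdot \id{tr}(t)(p_1,p_2) \;=\; \ilSem{\id{il}}{\rho,\delta,0,0,d,p_1,p_2}.
\]
Likewise I would generalise (i) to arbitrary $t_0$, which is needed in order to apply (i) inside the recursive call of (ii) for the $\scale{e}{c}$ case. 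The original statements then follow by instantiating $t_0 = 0$.

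Part (i) is a straightforward induction on $e$. Literal, variable, and operator cases are immediate after unfolding the two semantic definitions, noting that $\sem{op}$ is shared between $\eSem{-}{}$ and $\ilSem{-}{}$. The $\<cond>$ case reduces by unfolding to $\<if>$ and using the inductive hypothesis on each branch. The $\obs(l,i)$ case is dispatched by unfolding $\compileExp{\obs(l,i)}{t_0} = \<model>(l,\<tplus>(t_0,i))$ and observing that $\ilSem{\<model>(l,\<tplus>(t_0,i))}{\rho,\delta,0,0,\ldots} = \rho(l, \tSem{t_0}{\delta} + i)$, which coincides with the corresponding environment lookup.

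Part (ii) proceeds by structural induction on $c$. The $\zero$ case is trivial since $\synhor_\delta(\zero) = 0$ and the trace is constantly zero. The $\transfer{p}{q}{a}$ case is a direct calculation: the horizon is $1$, $\mathit{unit}_{a,p,q}$ is nonzero only at $n=0$, and the sum collapses to $d(\tSem{t_0}{\delta})$, which matches the definition of $\ilSem{\<payoff>(t_0,p,q)}{}$. For $\scale{e}{c}$, Lemma \ref{lem:sum-common-factor} pulls the scalar $\eSem{e}{\rho,\delta}$ out of the sum, part (i) identifies it with $\ilSem{\compileExp{e}{t_0}}{}$, and the inductive hypothesis on $c$ completes the case. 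For $\both{c_1}{c_2}$, Lemma \ref{lem:sum-split} separates the two traces (using the fact that $\cSem{c_i}{}(t) = 0$ beyond $\synhor_\delta(c_i)$ to harmonise the summation ranges with the max horizon), and each piece is handled by the inductive hypothesis. The $\transl{t}{c}$ case is where Lemma \ref{lem:contracts:sum-delay} is essential: the compiler propagates the shift into $\smartTplus(t_0, t)$, so the inductive hypothesis at this shifted parameter combined with the sum–delay lemma, which truncates the zero prefix of the delayed trace, yields the desired equality.

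The main obstacle is the $\ifwithin{e}{t}{c_1}{c_2}$ case, which I would handle by an inner induction on $n = \tSem{t}{\delta}$ mirroring the recursion in both $\mathit{iter}$ and the semantics of $\<loopif>$. At each step, the contract semantics applies $\mathit{delay}(1,\cdot)$ to shift the trace and $\advEnv{\rho'}{1}$ to shift the environment, while the payoff semantics instead increments the internal $t_0$ argument of $\ilSem{-}{}$. Showing these two shifts agree requires an auxiliary compatibility lemma stating that, for compiled payoff expressions, $\ilSem{\id{il}}{\rho,\delta,t_0+1,t,d,p_1,p_2} = \ilSem{\id{il}}{\advEnv{\rho}{1},\delta,t_0,t,d,p_1,p_2}$, which follows from the uniform treatment of model lookups at time $\tSem{t^e}{\delta}+t_0$ in the payoff semantics. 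Once this compatibility is in hand, the two $\<loopif>$ branches are dispatched by part (i) applied to the guard together with the outer inductive hypothesis on $c_1$ and $c_2$, and the original statement is recovered by setting $t_0 = 0$.
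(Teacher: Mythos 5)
Your proposal is correct and follows essentially the same route as the paper's proof: the same strengthening of part (ii) to an arbitrary compilation parameter $t_0$ (and of the $\<loopif>$ counter), the same use of Lemmas \ref{lem:sum-common-factor}, \ref{lem:sum-split} and \ref{lem:contracts:sum-delay} in the $\<scale>$, $\<both>$ and $\<translate>$ cases, and the same nested induction on $\tSem{t}{\delta}$ with a case split on the guard for $\<ifWithin>$. The only differences are presentational: you re-index the sum by shifting the discount function rather than delaying the trace (so you can sidestep Lemmas \ref{lem:contracts:delay-scale} and \ref{lem:contracts:delay-add}), and you make explicit the compatibility between incrementing the $\<loopif>$ counter and advancing the external environment, which the paper leaves implicit in its footnote about generalising the initial value of the $\<loopif>$ semantics.
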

\begin{proof}
  We will outline the proof here to show which properties we use in particular cases.
  This proof is completely formalised in Coq and all the details can be found in the source code.

  The proof of part (\textit{i}) proceeds by induction on the structure of
  $e$ and mostly straightforward.

  To prove part (\textit{ii}) we first generalise the statement of the
  theorem for an arbitrary template expression $t_0 : \type{TExpZ}$ in place of
  $0$ as the initial value for the contact compilation function.\footnote{We
    need to generalise the initial value for $\<loopif>$ semantics from $0$ to
    some $n$ as well, i.e.  $\ilSem{\id{il}}{\rho,\delta,n,0,d,p_1,p_2} = v$. This can
    be done in the same way as for $t_0$, and we omit this generalisation in
    the proof presented here for readability} This is required because the compilation
  function aggregates all the nested time shift in the contract, and for the
  case of $\transl t c$ the induction hypothesis should be more general. The
  same approach is used to prove properties of tail-recursive functions
  (e.g. \icode{fold_left}). Once the initial for the compilation function
  becomes $t_0$, we have to ``compensate'' this by delaying the trace of the
  contract. One way to do it is to use $\<translate>$ in the theorem statement.
  That is, assumptions become $\compileContr{c}{t_0} = \id{il}$, and
  $\cSem{\transl{\tSem{t_0}{\delta}}{c}}{\rho,\delta} = \id{tr}$, and the
  conclusion becomes
  \[\sum_{t=t_0}^{t_0+\synhor_\delta(c)} d(t) \times \id{tr}(t)(p_1,p_2) = v\]
  The proof proceeds by induction on the structure of $c$. We have the
  following cases to consider.
  \begin{case}[$\zero$]
    We use the fact that the empty trace gives the zero transfer at any point of time.
  \end{case}
  \begin{case}[$\transfer{p_1}{p_2}{a}$]
    By case analysis on decidable equality of parties.
  \end{case}
  \begin{case}[$\transl{t}{c}$]
    We prove this case by application of the induction hypothesis with Lemma
    \ref{lem:contracts:sum-delay}.
  \end{case}
  \begin{case}[$\scale e c$]
    In this case the sum on the left-hand side (let us call it \emph{sigma})
    is equal to a product of two values $v_1 \cdot v_2$,
    where $v_1$ comes from the expression $e$ and $v_2$ from the contract
    $c$. The overall idea is to transform the sum into the product as well, so
    we can split the goal into two independent goals. We achieve that by using
    Lemmas \ref{lem:contracts:delay-scale} ($\id{delay}$ comes from the
    generalisation of the theorem statement for arbitrary $t_0$) and
    \ref{lem:sum-common-factor}.  After that we can prove the two goals using
    soundness of expression compilation (part (\textit{i}) of this theorem) for $v_1$,
    and the induction hypothesis for $v_2$.
  \end{case}
  \begin{case}[$\both{c_1}{c_2}$]
    In this case the sigma on the left-hand side is equal to the sum of two
    values $v_1 \cdot v_2$, where $v_1$ comes from the contract $c_1$ and $v_2$
    from the contract $c_2$. The overall idea is to transform the sigma into
    the sum of two sigmas, so that we can split the goal into two independent
    goals. Again, we achieve that by using Lemmas \ref{lem:contracts:delay-add}
    ($\id{delay}$ comes from the generalisation of the theorem statement for
    arbitrary $t_0$) and \ref{lem:sum-split}. After that, we can use the induction
    hypotheses on $c_1$ and $c_2$ to prove the two goals, but we have to do
    some extra work doing case analysis on $\synhor_\delta({\both{c_1}{c_2}})$,
    since it is defined as the maximum of horizons of the two contracts.
  \end{case}
    \begin{case}[$\ifwithin{e}{t}{c_1}{c_2}$]
      The proof proceeds by nested induction on $n = \tSem{t}{\delta}$, for
      $n \in \nats$.  In the base case and in the inductive step case we perform a case
      analysis on the result of expression evaluation
      $b = \eSem{e}{\rho,\delta}$, for $b \in \bools$. Moreover, in each
      subcase of the case analysis on $b$ we perform a case analysis for
      $\synhor_\delta({\ifwithin{e}{t}{c_1}{c_2}})$ for the same reason as in
      the case for $\both{c_1}{c_2}$.
  \end{case}
\end{proof}

\begin{remark}\label{rem:contracts:templ-expr-compile}
  The first version of a soundness proof was developed for the original
  contract language without template expressions. The proof was somewhat easier,
  since the aggregation of nested time shifts introduced by $\transl n c$
  constructs during compilation was implemented as addition of natural numbers,
  corresponding to time shifts. In the presence of template expressions, the
  compilation function builds a syntactic expression using the $\<tplus>$
  constructor. There are some places in proofs where it was crucial to use
  associativity of addition to prove the goal, but this does not work for
  template expressions. For example, $\<tplus>(\<tplus>(t_1,t_2),t_3)$ is not
  equal to $\<tplus>(t_1,\<tplus>(t_2,t_3))$, because these expressions
  represent different syntactic trees, although semantically
  equivalent. Instead of restating proofs in terms of this semantic equivalence
  (significantly complicating the proofs), we used the following approach.  The
  compilation function uses the \emph{smart constructor} $\smartTplus$ instead
  of just plain construction of the template expression. This allowed us to
  recover the property we needed to complete the soundness proof without
  altering too much of its structure.
\end{remark}

The soundness theorem (Theorem \ref{thm:contracts:compile-sound}) makes an
assumption that the compiled expression evaluates to some value. We do not
develop a type system for our payoff language to ensure this property. Instead,
we show that it is sufficient for a contract to be well-typed to ensure that
the compiled expression always evaluates to some value.

\begin{thm}[Total semantics for compiled contracts]\label{thm:contracts:well-typed-compiled}
  Assume parties $p_1$ and $p_2$ and discount function
  $d : \mathbb{N} -> \mathbb{R}$, well-typed external environment $\rho \in \type{Env}$, template
  environment $\delta \in \type{TEnv}$, and typing context $\Gamma$.  We have the
  following two results:
  \begin{enumerate}[(i)]
  \item for any
    $e \in \type{Exp}$, $t_0 \in \type{TExprZ}$ $t_0' \in \nats$, if
    $\Gamma \vdash e : \tau$
    $\compileExp{e}{t_0} = il$, then
    \[\exists v,~\ilSem{\id{il}}{\rho,\delta,t_0',0,d,p_1,p_2} = v
    \text{, ~and~} v \in \tysem{\tau}\]
  \item for any
    $c \in \type{Contr}$, $t_0$ $t_0'$, if $\Gamma \vdash c$ and $\compileContr{c}{t_0} = il$, then
    \[\exists v,~\ilSem{\id{il}}{\rho,\delta,t_0',0,d,p_1,p_2} = v  \text{, ~and~} v \in \reals\]
  \end{enumerate}
\end{thm}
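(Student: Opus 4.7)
The plan is to prove both parts together by structural induction on the expression $e$ and contract $c$, with part (ii) invoking part (i) at the points where expression subterms appear inside a contract (namely $\scale e c$ and $\ifwithin e t {c_1} {c_2}$). The typing derivations $\Gamma \vdash e : \tau$ and $\Gamma \vdash c : \type{Contr}$ give the invariants that make each case go through; well-typedness of $\rho$ (i.e.\ $\rho(l,n) \in \tysem\tau$ whenever $l \in \type{Label}_\tau$) supplies the analogous fact for the $\obs$ case. Note that $\acc$ and $\letc{x}{e}{c}$ are outside the domain of $\compileExp{-}{}$ and $\compileContr{-}{}$, so these cases are vacuous.

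For part (i), the cases are routine. Literals compile to template expressions and IL literals of the obvious kind; $\obs(l,i)$ compiles to $\<model>(l,\<tplus>(t_0,i))$, whose semantics is $\rho(l,\tSem{i}{\delta} + t_0')$, living in $\tysem\tau$ by well-typedness of $\rho$; each $op(e_1,\ldots,e_n)$ compiles to the corresponding IL operation and, by IH on each subexpression together with the typing of $op$ from Figure \ref{fig:typeOp}, produces a value in $\tysem\tau$; and $\<cond>(b,e_0,e_1)$ compiles to $\<if>$, where by IH the guard evaluates to a boolean and each branch to an element of $\tysem\tau$, so both branches of the semantic case split yield the required value.

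For part (ii), we unfold each constructor. $\zero$ compiles to $0 \in \reals$; $\transfer{p_1}{p_2}{a}$ compiles to $\<payoff>(t_0,p_1,p_2)$, whose semantics is always one of $d(\tSem{t_0}{\delta})$, $-d(\tSem{t_0}{\delta})$, or $0$, all real; $\scale e c$ compiles to a $\<mult>$ whose two arguments evaluate to reals by part (i) applied to $e$ (with $\Gamma \vdash e : \type{Real}$) and by the IH applied to $c$, whence their product is real; $\transl{t}{c}$ reduces to the IH applied with $t_0$ replaced by $\smartTplus(t_0,t)$, and totality is preserved since the claim is universally quantified over the starting time; $\both{c_1}{c_2}$ compiles to $\<add>$ of two expressions, each real by IH, so the sum is real.

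The main obstacle is the $\ifwithin{e}{t}{c_1}{c_2}$ case, which compiles to $\<loopif>$ whose semantics is defined by the auxiliary recursion $\id{iter}(n,t_0')$. Here one performs an inner induction on $n = \tSem{t}{\delta}$: in the base case, by part (i) the guard evaluates to a boolean and the semantics picks either $\ilSem{\compileContr{c_1}{t_0}}{\cdots}$ or $\ilSem{\compileContr{c_2}{t_0}}{\cdots}$, both real by the outer IH; in the step case, either one branch is taken and we conclude as before, or $\id{iter}$ recurses with parameters $(n-1, t_0' + 1)$, which falls within the scope of the inner IH precisely because the outer statement is already universally quantified over $t_0'$. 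The subtle point is to state the inner induction hypothesis general enough in $t_0'$ before starting, mirroring the same generalisation used in the soundness proof (Theorem \ref{thm:contracts:compile-sound}).
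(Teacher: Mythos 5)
Your proof is correct and follows essentially the same route as the paper's: part (i) by induction using well-typedness of $\rho$ for $\obs$ and the operator typing for $op$, part (ii) by induction invoking part (i) at $\scale{e}{c}$ and $\ifwithin{e}{t}{c_1}{c_2}$, with a nested induction on $n = \tSem{t}{\delta}$ for the $\<loopif>$ case relying on the statement being universally quantified over $t_0'$. The only cosmetic difference is that you phrase the outer induction as structural induction on the syntax rather than on the typing derivation, which coincides here since the type system is syntax-directed and you use the typing premises at each step.
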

\begin{proof}
  The proof for the statement (\textit{i}) proceeds by induction on the typing
  derivation for expressions (see Figure \ref{fig:simpleType}). The case for
  operations uses induction hypothesis, which gives a value. The part $ v \in \tysem{\tau}$
  in the conclusion serves as a logical relation allowing us to get typing information
  required for the particular operation. The case for $\obs(l,t)$ uses
  well-typedness of the external environments.

  The proof for statement (\textit{ii}) proceeds by induction on the typing
  derivation for contracts (see Figure \ref{fig:simpleType}), and uses the
  previously proved property of expressions (\textit{i}) for $\scale e c$ and
  $\ifwithin{e}{t}{c_1}{c_2}$.  The case \texttt{ifWithin} also requires a
  nested induction on $n = \tSem{t}\delta$, for $n \in \nats$, and case
  analysis on the result of expression the evaluation $b =
  \eSem{e}{\rho,\delta}$, for $b \in \bools$.
\end{proof}

Notice that Theorem \ref{thm:contracts:well-typed-compiled} holds for any
$t_0 \in \type{TExprZ}$ and $t_0' \in \nats$. These parameters do not affect
totality of the semantics and can be arbitrary, but it is crucial to add
appropriate delays, corresponding to arbitrary $t_0$ and $t_0'$ for the compilation
soundness property (see the proof of part (\textit{ii}) of Theorem
\ref{thm:contracts:compile-sound}).

Theorems \ref{thm:contracts:compile-sound} and
\ref{thm:contracts:well-typed-compiled} ensure that our compilation procedure
produces a payoff expression that evaluates to a value reflecting the aggregated
price of a contract after discounting.

\subsection{Avoiding recompilation}
To avoid recompilation of a contract when time moves forward, we define a
function $\cutPayoff{}$. This function is defined recursively on the syntax of
intermediate language expressions.
\begin{align*}
  \cutPayoff{\<now>} &= \<now>\\
  \cutPayoff{\<model>(l,t)} &= \<model>(l,t)\\
  \cutPayoff{\<if>(\id{il}_1,\id{il}_2,\id{il}_3)} &=
  \<if>(\cutPayoff{\id{il}_1},\cutPayoff{\id{il}_2},\cutPayoff{\id{il}_3})\\
  \cutPayoff{\<payoff>(t,p_1,p_2)} &= \<if>(t < \<now>,0,\<payoff>(t,p_1,p_2))\\
  \cutPayoff{\id{unop}(\id{il})} &= \id{unop}(\cutPayoff{\id{il}}) \\
  \cutPayoff{\id{binop}(\id{il}_1,\id{il}_2)} &=
  \id{binop}(\cutPayoff{\id{il}_1},\cutPayoff{\id{il}_2})
\end{align*}
The most important case is the case for $\<payoff>$.
The function wraps $\<payoff>$ with a condition guarding whether
this payoff affects the resulting value. For the remaining cases, the function recurses
on subexpressions and returns otherwise unmodified expressions.

\begin{example}
  Let us consider Example \ref{ex:contracts:compile} again and apply the $\cutPayoff$
  function to the expression \icode{e}:
  \begin{lstlisting}
cutPayoff(e) $=$
  (100.0 * disc(t0) * if(t0 < now, 0, payoff(you,me)) +
  if (model(AAPL,t0+t1) > 100.0,
     (model(AAPL,t0+t1) - 100.0) * disc(t0+t1) *
       if(t1+t0 < now, 0, payoff(you,me)),
      0.0)
  \end{lstlisting}

  Each \icode{payoff} in the payoff expression is now guarded by the condition,
  comparing the time of the particular payoff with \icode{now}. Notice that the
  templates variables \icode{t0} and \icode{t1} are mapped to concrete values in
  the template environment.
\end{example}

To be able to state a soundness property for the $\cutPayoff$ function we again
need to find a way to connect it to the semantics of CL. Since
$\cutPayoff$ deals with the dynamic behavior of the contract with respect to
time, it seems natural to formulate the soundness property in this case in
terms of contract reduction (Figure \ref{fig:redSem}). The semantics of the
payoff language takes the ``current time'' $t$ as a parameter. We should be
able to connect the $t$ parameter to the step of contract reduction.

\begin{remark}
  In the next lemmas we will implicitly assume parties $p_1$ and $p_2$, discount function
  $d : \mathbb{N} -> \mathbb{R}$, an external environment $\rho \in \type{Env}$, and a template
  environment $\delta \in \type{TEnv}$.
\end{remark}

\begin{defn}\label{def:contracts:equiv-at}
  We say that two payoff expressions $il_1$ and $il_2$ are equivalent at $(t_0,t)$ for
  if
  \[ \forall \rho, \delta, d, p_1, p_2. ~~\ilSem{il_1}{\rho,\delta,t_0,t,d,p_1,p_2}
  = \ilSem{il_2}{\rho,\delta,t_0,t,d,p_1,p_2} \]
  We write $il_1 \simeq_{(t_0,t)} il_2$ for this equivalence.
\end{defn}

Definition \ref{def:contracts:equiv-at} defines an equivalence parameterised by
two parameters: $t_0 \in \nats$ (a ``counter'' of $\<loopif>$ iterations), and
$t$ (time, up to which we want to ignore the payoffs).

Let us first start with a simple property showing that if we take ``current
time'' to be zero, i.e. $t=0$ in the semantic function, then the application of
$\cutPayoff$ should not have any effect.
\begin{lemma}
  Any $\id{il} : \type{ILExpr}$ is equivalent at $(t_0,0)$ to $\cutPayoff{\id{il}}$
  for any $n \in \nats$
  \[  \cutPayoff{\id{il}} \simeq_{(n,0)} \id{il} \]
\end{lemma}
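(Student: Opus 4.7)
The plan is to proceed by structural induction on $\id{il}$, fixing the "current time" parameter at $0$ throughout and quantifying universally over the counter $n \in \nats$ in the induction hypothesis so that the changing $t_0$-parameter in the $\<loopif>$ case is accommodated. The atomic cases $\<now>$ and $\<model>(l,t^e)$ are immediate: $\cutPayoff$ acts as the identity on them, so the two sides of the equivalence are literally the same expression. The template-expression leaves $t^e$ are likewise unchanged and handled trivially. For the compound cases $\<if>$, $\<loopif>$ (which I take to recurse on subexpressions, by analogy with $\<if>$), and the unary and binary operation nodes, $\cutPayoff$ simply pushes through to the children, so the semantics of $\cutPayoff{\id{il}}$ unfolds to an expression in which each child-denotation can be rewritten using the induction hypothesis (instantiated at whichever $n'$ the semantic function feeds to that position), yielding the denotation of $\id{il}$.

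The sole interesting case is $\<payoff>(t,p_1,p_2)$, where $\cutPayoff$ introduces a new guard and the equivalence becomes $\<if>(t < \<now>,\,0,\,\<payoff>(t,p_1,p_2)) \simeq_{(n,0)} \<payoff>(t,p_1,p_2)$. Unfolding $\ilSem{\cdot}{\rho,\delta,n,0,d,p_1,p_2}$, the right branch of the guard is syntactically the target, so it suffices to show that the guard condition evaluates to $\mathit{false}$. Here $\ilSem{\<now>}{\rho,\delta,n,0,d,p_1,p_2} = 0$ by definition, while $\ilSem{t}{\rho,\delta,n,0,d,p_1,p_2} = \tSem{t}{\delta} + n$; since the relevant comparison $<$ is the ordering on $\nats$ (for which no element is strictly less than $0$), the condition reduces to $\mathit{false}$ and the $\<if>$ selects the $\<payoff>$ branch, matching $\ilSem{\<payoff>(t,p_1,p_2)}{\rho,\delta,n,0,d,p_1,p_2}$.

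The main obstacle I anticipate is bookkeeping around the ordering used in the guard: one must be sure that the semantics of $<$ applied to $\tSem{t}{\delta} + n$ and $0$ really returns $\mathit{false}$ for every $n \in \nats$, which rests on the invariant that the left-hand side is non-negative (in the implementation this is presumably the natural-number comparison $\<ltn>$). Once that fact is in place, the $\<payoff>$ case collapses by simple case analysis, and all other cases are routine rewriting using the induction hypothesis and the equational presentation of $\ilSem{\cdot}{}$.
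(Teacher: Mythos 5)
Your proof is correct and is precisely the argument the paper intends (the lemma is presented as a ``simple property'' and given without proof): structural induction on $\id{il}$ with the counter universally quantified, every case trivial except $\<payoff>$, where the inserted guard evaluates to $(\tSem{t}{\delta}+n) < 0$, which is false over $\nats$, so the else-branch --- syntactically the original $\<payoff>$ node --- is selected. The one point worth making explicit is that the $\<loopif>$ case (whose $\cutPayoff$ clause the paper omits but which, as you assume, recurses on its subexpressions) requires a nested induction on the iteration count $\tSem{t^e}{\delta}$ so that the structural hypotheses can be applied at the successively incremented values of $t_0$ --- exactly what your universal quantification over $n$ licenses.
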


The next lemmas show other properties when the value of a payoff expression
stays the same after application of $\cutPayoff$.

We have the following obvious property for the expression sublanguage of CL.
\begin{lemma}\label{lem:contracts:cutpayoff-compile-exp}
  For any contract expression $e \in \type{Exp}$, payoff expression $\id{il}$,
  and $t_0 \in \type{ILTExprZ}$,
  if $\compileExp{e}{t_0} = il$ then
  \[ \cutPayoff{il} = il \]
\end{lemma}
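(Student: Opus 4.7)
The plan is to proceed by structural induction on the expression $e \in \type{Exp}$. The central observation is that the compilation function $\compileExp{-}{-}$ for the expression sublanguage of CL never produces a $\<payoff>$ constructor: inspecting the defining clauses, it only yields $\<model>$, $\<if>$, unary/binary operator applications, and literals. Since $\cutPayoff$ is defined to be the identity on all constructors except $\<payoff>$ (on which it inserts a guarding conditional with $\<now>$), proving that $\cutPayoff{il} = il$ reduces to showing that the shape of $il$ avoids the one nontrivial case.

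For the base cases, if $e$ is a literal $r$ or $b$ then $\compileExp{e}{t_0}$ is the corresponding literal in the payoff language, which is untouched by $\cutPayoff$. If $e = \<obs>(l,i)$, compilation produces $\<model>(l,\<tplus>(t_0,i))$, and the $\<model>$ clause of $\cutPayoff$ leaves this expression unchanged.

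For the inductive cases, consider $op(e_1,\ldots,e_n)$: compilation recurses on each $e_i$ to obtain subexpressions $il_i$ and assembles them into $\id{unop}(il_1)$ or $\id{binop}(il_1,il_2)$, etc. The clauses of $\cutPayoff$ for $\id{unop}$ and $\id{binop}$ recurse on their arguments, and by the induction hypothesis each $\cutPayoff{il_i} = il_i$, so the constructor is rebuilt unchanged. The $\<cond>$ case is analogous using the $\<if>$ clause of $\cutPayoff$ together with the induction hypotheses for the three subexpressions. The $\acc$ and $\letA$ constructs lie outside the domain of $\compileExp{-}{-}$ (as noted just after its definition), so there is nothing to prove in those cases.

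I do not anticipate any real obstacle: the result is essentially a syntactic observation, and the induction is immediate once one notes that $\<payoff>$ is introduced only by the contract-level compilation function $\compileContr{-}{-}$ (specifically in the $\transfer{p_1}{p_2}{a}$ clause) and never by $\compileExp{-}{-}$. A fully formal Coq proof would likely discharge every case by rewriting with the induction hypotheses and reflexivity.
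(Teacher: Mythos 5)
Your proof is correct and follows exactly the reasoning the paper gives for this lemma: compiled contract expressions contain no $\<payoff>$ constructs (that constructor is introduced only by $\compileContr{-}{}$ in the $\transfer{p_1}{p_2}{a}$ case), so $\cutPayoff$ acts as the identity, established by a routine structural induction on $e$. No issues.
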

Notice that for compiled contract expressions $\type{Exp}$ we have a stronger
property, stated using just equality of terms, and not the equivalence. The
reason for this is that there are no $\<payoff>$ constructs in compiled contract
expressions and application of $\cutPayoff$ does not affect the payoff expression.

\begin{lemma}\label{lem:contracts:cutpayoff-compile-init}
  For any contract $c$, payoff expression $\id{il}$,
  $t_0 \in \type{ILTExprZ}$, $n \in \nats$, current time $t_{now}$,
  if $\compileContr{c}{t_0} = il$, and $t_{now} \leq \tSem{t_0}{\delta}$,
  then
  \[ \cutPayoff{\id{il}} \simeq_{(n,t_{now})} \id{il} \]
\end{lemma}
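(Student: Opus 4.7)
The plan is to proceed by structural induction on $c$, with the invariant that every $\<payoff>$ subterm in $\compileContr{c}{t_0}$ carries a time argument whose value under $\delta$ is at least $\tSem{t_0}{\delta}$. Combined with the hypothesis $t_{now} \le \tSem{t_0}{\delta}$ and the fact that the loopif counter (the parameter taking the role of $t_0$ in $\ilSem{-}{}$, which is the first component of the equivalence index) is a natural number $n \ge 0$, the semantic value of any such payoff's time argument is at least $\tSem{t_0}{\delta} + n \ge t_{now}$. Hence the guard $t < \<now>$ introduced by $\cutPayoff{\<payoff>(t,p_1,p_2)}$ evaluates to $\id{false}$, so the two branches of the wrapping $\<if>$ agree with $\<payoff>(t,p_1,p_2)$ pointwise.

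For the base cases, $\<zero>$ compiles to the literal $0$ which contains no $\<payoff>$ and is fixed by $\cutPayoff$; for $\<transfer>(p_1,p_2,a)$ the compiled term is $\<payoff>(t_0,p_1,p_2)$ and the argument above applies directly. For $\<scale>(e,c)$ and $\<both>(c_1,c_2)$ I would recurse through $\cutPayoff$'s commutation with $\id{binop}$, applying Lemma \ref{lem:contracts:cutpayoff-compile-exp} on the expression side and the induction hypothesis on the contract side, both at the same $(n,t_{now})$. The case $\<translate>(t,c)$ is resolved by applying the induction hypothesis at the shifted starting time $t_0' \defeq \smartTplus(t_0,t)$: because $\tSem{\smartTplus(t_0,t)}{\delta} = \tSem{t_0}{\delta} + \tSem{t}{\delta} \ge \tSem{t_0}{\delta} \ge t_{now}$, the side condition of the lemma is preserved.

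The case I expect to require the most care is $\<ifWithin>(e,t,c_1,c_2)$, which compiles into $\<loopif>(\compileExp{e}{t_0},\compileContr{c_1}{t_0},\compileContr{c_2}{t_0},t)$. Assuming $\cutPayoff$ distributes over $\<loopif>$ componentwise (the definition in the excerpt does not spell out this clause, and I will rely on the natural extension), the condition subterm is handled by Lemma \ref{lem:contracts:cutpayoff-compile-exp} and the two branches by the induction hypothesis. The subtle point is that, by the semantics of $\<loopif>$, the branches get re-evaluated at successively incremented counter values $n,n+1,\ldots,n+\tSem{t}{\delta}$, so the induction hypothesis must be applied with a different first component of the equivalence index at each step. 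Fortunately, the lemma is stated uniformly in $n \in \nats$, so each such application is legitimate and the iteration preserves the equivalence. Unfolding $iter$ and doing a short inner induction on $\tSem{t}{\delta}$ then closes this case.
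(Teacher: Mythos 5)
Your proposal is correct, and since the paper states this lemma without a written proof (deferring entirely to the Coq development), it follows exactly the route the surrounding text suggests: structural induction on $c$ with $t_0$ generalised, the key invariant that every $\<payoff>$ in $\compileContr{c}{t_0}$ carries a time argument of value at least $\tSem{t_0}{\delta}\ge t_{now}$ so that the guard inserted by $\cutPayoff$ is false, Lemma \ref{lem:contracts:cutpayoff-compile-exp} on the expression side, and an inner induction on $\tSem{t}{\delta}$ for $\<loopif>$ using the uniformity of the statement in $n$. You also correctly spotted and repaired the paper's omission of the $\<loopif>$ clause in the definition of $\cutPayoff$; the componentwise extension you assume is the intended one.
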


\begin{lemma}\label{lem:contracts:cutpayoff-loopif}
 For any contract $c$, payoff expression $\id{il}$, $n \in \nats$, current time $t_{now}$,
 if $t_{now} \leq n$,
 then \[ \cutPayoff{\id{il}} \simeq_{(n,t_{now})} \id{il} \]
\end{lemma}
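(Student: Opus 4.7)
The plan is to prove the equivalence by structural induction on the payoff expression $\id{il}$, keeping the quantifiers over $n$ and $t_{\id{now}}$ (with the hypothesis $t_{\id{now}} \leq n$) open so that the induction hypothesis can be reinstantiated at larger offsets. The contract $c$ mentioned in the statement is a vacuous quantifier, as it appears neither in the hypotheses nor in the conclusion; I will simply ignore it. Since the $\<loopif>$ and template-expression clauses are missing from the displayed definition of $\cutPayoff{-}$, I will assume the obvious congruence extensions, namely $\cutPayoff{t^e} = t^e$ and $\cutPayoff{\<loopif>(\id{il}_0,\id{il}_1,\id{il}_2,t^e)} = \<loopif>(\cutPayoff{\id{il}_0},\cutPayoff{\id{il}_1},\cutPayoff{\id{il}_2},t^e)$.

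The base cases $\<now>$, $\<model>(l,t^e)$, and $t^e$ are immediate, since $\cutPayoff{-}$ acts as the identity. The congruence cases $\id{unop}(\id{il})$, $\id{binop}(\id{il}_1,\id{il}_2)$, and $\<if>(\id{il}_1,\id{il}_2,\id{il}_3)$ follow directly from the induction hypotheses on the subexpressions, since the semantics of each of these constructs is a function of the denotations of its children. The crucial base case is $\<payoff>(t^e,p_1',p_2')$, for which $\cutPayoff{-}$ produces $\<if>(t^e < \<now>,0,\<payoff>(t^e,p_1',p_2'))$. Unfolding the semantics, the guard evaluates to $(\tSem{t^e}{\delta} + n) < t_{\id{now}}$ by the interpretation of template expressions and of $\<now>$; since $t_{\id{now}} \leq n \leq \tSem{t^e}{\delta} + n$, the guard is false, so the conditional reduces to its else branch, and both sides of the equivalence agree on any $\rho,\delta,d,p_1,p_2$.

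The main obstacle is the $\<loopif>(\id{il}_0,\id{il}_1,\id{il}_2,t^e)$ case, where the semantics is a helper recursion $\id{iter}(k,t_0)$ that decrements the counter $k$ while incrementing the offset $t_0$. To line up the two sides, I perform an inner induction on $k$, threading the current offset $n + j$ after $j$ recursive calls. The invariant $t_{\id{now}} \leq n + j$ is trivially preserved along the iteration since $j \geq 0$, so at each step I can invoke the outer induction hypothesis on $\id{il}_0$, $\id{il}_1$, $\id{il}_2$ at offset $n + j$. Performing the case split on the value of $\ilSem{\id{il}_0}{\rho,\delta,n+j,t_{\id{now}},d,p_1,p_2}$ — true, false with $k = 0$, false with $k > 0$ — then produces three subgoals that each either reduce by the outer induction hypothesis or discharge via the inner one on $k-1$ with offset $n + j + 1$. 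This is where the generality of $n$ in the statement pays off, and completes the induction.
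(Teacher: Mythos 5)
Your proof is correct. The paper states this lemma without a written proof, deferring (as with the other results of Section \ref{sec:contracts:compile}) to the Coq development, so there is nothing on paper to compare against; your argument --- structural induction on $\id{il}$ with $n$ and $t_{now}$ universally quantified, an inner induction on the loop counter for $\mathtt{loopif}$ threading the offset $n+j$ so that the invariant $t_{now}\leq n+j$ is preserved, and the observation $t_{now}\leq n\leq \tSem{t^e}{\delta}+n$ falsifying the guard in the $\mathtt{payoff}$ case --- is precisely the generalise-over-$t_0$ pattern the paper advertises in the proof of Theorem \ref{thm:contracts:compile-sound} and in the list of proof complications in Section \ref{sec:contracts:coq}, and is surely what the Coq proof does. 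Your side remarks are also right: the quantifier over $c$ is vacuous, and the displayed definition of $\cutPayoff{-}$ omits the $\mathtt{loopif}$ and template-expression clauses, which must be read as the identity and congruence extensions you supply. The one hypothesis you use silently is $\tSem{t^e}{\delta}\geq 0$ in the $\mathtt{payoff}$ case; it holds because the time argument of $\mathtt{payoff}$ has the natural-valued type \texttt{ILTExpr} rather than the integer-valued \texttt{ILTExprZ} used in $\mathtt{model}$, and it is worth stating, since the analogous inequality would fail for a payoff time that could denote a negative integer.
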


Since the contract reduction relation uses smart constructors, we would like to
show how they interact with functions involved in the definition of soundness.
\begin{lemma}\label{lem:contracts:smartscale-hor}
  For any contract $c$, contract expression $e$, and template environment $\delta$,
  if $e$ is not a zero literal, then the following property holds:
  \[ \synhor_\delta(c) = \synhor_\delta(\smartScale{e}{c}) \]
\end{lemma}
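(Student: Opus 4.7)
The plan is to prove this lemma by unfolding the definition of the smart constructor $\smartScale{e}{c}$ and doing case analysis on its simplification behavior. Recall that a typical implementation of $\smartScale{e}{c}$ performs the following simplifications: (a) if $e$ is a zero literal, it returns $\zero$; (b) if $c$ is $\zero$, it returns $\zero$; and (c) otherwise it returns the plain constructor $\scale{e}{c}$. The assumption that $e$ is not a zero literal rules out case (a), so there are only two cases to consider.

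In case (b), when $c = \zero$ (or more generally, whenever the simplification rewrites to $\zero$), we have $\smartScale{e}{c} = \zero$, so $\synhor_\delta(\smartScale{e}{c}) = 0$ by definition of $\synhor_\delta$. On the other hand, for the original contract $c$ in this case, $\synhor_\delta(c) = \synhor_\delta(\zero) = 0$ as well. Here we rely on the fact that the smart constructor only produces $\zero$ when the input contract has horizon $0$, which is an easy auxiliary lemma if the simplification covers additional cases beyond $c = \zero$. In case (c), the smart constructor returns $\scale{e}{c}$, and by the defining clause $\synhor_\delta(\scale e c) = \synhor_\delta(c)$ the desired equality is immediate.

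The main obstacle, if any, is verifying that every branch of the smart constructor preserves the horizon. If the implementation performs additional simplifications (for example, flattening nested $\mathtt{scale}$s or pushing scale under $\mathtt{both}$), each such rewrite must be checked to preserve $\synhor_\delta$, potentially requiring a small supporting lemma that the simplification rules of $\smartScale$ preserve horizon whenever the scaling expression is non-zero. Once that invariant is established for each branch, the proof collapses to a straightforward case split with no induction required.
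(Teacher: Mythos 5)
Your proof is correct and is essentially the intended argument: the paper states this lemma without an explicit proof (it is discharged in the Coq development), and the definition of $\smartScale{e}{c}$ used there has exactly the three branches you identify, so once the non-zero-literal hypothesis rules out the collapse to $\zero$ via the expression, the remaining branches are $c=\zero\mapsto\zero$ (both horizons $0$) and the plain $\scale{e}{c}$ (horizon preserved by the defining clause of $\synhor_\delta$). Your closing caveat about additional simplifications is prudent but does not arise here, since $\smartScale$ performs no rewrites beyond those two.
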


\begin{lemma}\label{lem:contracts:smartboth-hor}
  For any contracts $c_1$, $c_2$ and a template environment $\delta$,
  then the following property holds:
  \[ \synhor_\delta(\both{c_1}{c_2}) = \synhor_\delta(\smartBoth{c_1}{c_2}) \]
\end{lemma}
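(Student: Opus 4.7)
The plan is to unfold the definition of the smart constructor $\smartBoth$ and argue by case analysis on the shape of the arguments. Recall that $\smartBoth$ is designed to perform only those simplifications that are semantically justified by the vector space structure of traces, in particular the identity law for addition: intuitively, $\both{\zero}{c}$ and $\both{c}{\zero}$ both denote the same trace as $c$, so the smart constructor collapses these to $c$; in all other cases it simply produces the ordinary $\both{c_1}{c_2}$.

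So I would proceed by case analysis on whether $c_1 = \zero$, whether $c_2 = \zero$, or neither. In the default case, $\smartBoth{c_1}{c_2} = \both{c_1}{c_2}$ syntactically, and the equality of horizons is immediate. In the case $c_1 = \zero$, we have $\smartBoth{\zero}{c_2} = c_2$, and we must show
\[ \synhor_\delta(\both{\zero}{c_2}) = \synhor_\delta(c_2). \]
By the defining equation of $\synhor$ in Figure \ref{fig:synhor}, the left-hand side unfolds to $\max(\synhor_\delta(\zero), \synhor_\delta(c_2)) = \max(0, \synhor_\delta(c_2)) = \synhor_\delta(c_2)$. The symmetric case $c_2 = \zero$ is analogous.

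The only potential subtlety is that $\smartBoth$ might also perform simplifications beyond the two $\zero$ identities (for instance, collapsing nested $\both$'s), so I would check the precise definition used in the formalisation and extend the case analysis accordingly; however, each such simplification is justified by the associativity/commutativity/identity laws of the vector space $\type{Trace}$, and at the level of horizons these correspond to associativity, commutativity, and the $\max(0,-) = \id{id}$ identity, all of which are immediate. I do not expect any step to be a real obstacle: the whole proof is a straightforward unfolding once the definition of $\smartBoth$ is in hand, and it is essentially a direct reflection of the fact (noted in Section~\ref{subsec:contracts:vector-space}) that the smart constructor only implements vector-space equalities.
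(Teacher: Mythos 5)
Your proof is correct; the paper in fact states this lemma without proof (deferring to the Coq formalisation), and the case analysis you describe --- $\smartBoth{}$ returns the other argument when one side is $\zero$, and $\max(0,\synhor_\delta(c)) = \synhor_\delta(c)$, with syntactic identity in the default case --- is exactly the intended argument. Your caveat about checking the precise definition is well placed but harmless here: unlike $\smartScale{}{}$, which can collapse a contract to $\zero$ and thereby change the horizon (as the paper notes in the proof of Theorem~\ref{thm:contracts:soundness-red}), $\smartBoth{}$ only discards $\zero$ arguments, whose horizon is $0$.
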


Now, we can state a theorem relating the semantics of the payoff intermediate
language with the contract reduction semantics.

\begin{thm}[Contract compilation soudness wrt. contract reduction]\label{thm:contracts:soundness-red}
  We assume parties $p_1$, $p_2$, discount function $d : \nats -> \reals$.
  For any well-typed and template-closed contact $c$, i.e. we assume $\Gamma |-
  c$, and $\TC{c}$, an external environment $\rho' \in \type{Env}$ extending a
  partial external environment $\rho \in \type{Env_p}$, if $c$ steps to some
  $c'$ by the reduction relation $c ~\cRed{\rho}{T}~c'$, for some transfer
  $T$, such that $\cSem{c'}{\rho'/1} = \id{trace}$, and $\compileContr{c}0=il$,
  then
  \[
  \sum_{t'=0}^{\synhor_\delta(c')} d(t'+1) \times \id{trace}(t') =
  \ilSem{\cutPayoff{P}}{\rho',\delta,0,1,d,p_1,p_2}
  \]
  Where $\rho'/1$ denotes the external environment $\rho$ advanced by one time
  step:
  \[ \rho'/1 = \lambda (l,i).~\rho'(l,i + 1), \quad l \in \type{Label}, i \in \ints \]
\end{thm}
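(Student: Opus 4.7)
The plan is to proceed by induction on the derivation of the reduction $c \cRed{\gamma,\rho}{T}{c'}$. Since $c$ is template-closed by assumption ($\TC c$), the template environment $\delta$ plays no role in the compilation of $c$ (all temporal parameters in $c$ are numerals), so we may treat $\delta$ as fixed. The key conceptual observation driving the proof is that $\cutPayoff$ evaluated with current time $\<now> = 1$ eliminates exactly those $\<payoff>$ subexpressions referencing time $0$, and these are precisely the cashflows collected into the transfer $T$ during the reduction step. Hence the remaining contribution of $\cutPayoff{P}$ matches the discounted payoffs of $c'$ under the environment advanced by one step, with the index shift $d(t'+1)$ on the left accounting for the change of time origin.

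Before the case analysis, I would establish (or reuse) a shift property linking $\cSem{c}{\rho'}(t+1)$ to $\cSem{c'}{\rho'/1}(t)$ for the relevant $c'$, and combine it with Theorem \ref{thm:contracts:compile-sound}(ii) to obtain a ``cut soundness'' lemma: for any template-closed well-typed $c$ with $\compileContr{c}{0} = P$, the value $\ilSem{\cutPayoff{P}}{\rho',\delta,0,1,d,p_1,p_2}$ equals $\sum_{t=1}^{\synhor_\delta(c)} d(t) \times \cSem{c}{\rho'}(t)(p_1,p_2)$. This reduces the theorem to showing that the shifted trace $t' \mapsto \cSem{c'}{\rho'/1}(t')$ agrees with $t \mapsto \cSem{c}{\rho'}(t+1)$ up to horizon, which is a direct consequence of the reduction semantics in Figure \ref{fig:redSem} and the definition of $\rho'/1$.

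Then, for each reduction rule, the argument is essentially mechanical. The cases $\zero$ and $\transfer{p}{q}{a}$ are immediate: the guard $t_0 < \<now>$ in $\cutPayoff{\<payoff>(0,p,q)}$ fires for $\<now> = 1$, yielding $0$. The case $\transl 0 c$ follows by induction, since compilation is transparent on a zero shift. For $\transl d c$ with $d > 0$, compilation accumulates the shift so that the resulting $\<payoff>$ constructs reference times $\geq d \geq 1$, whence Lemma \ref{lem:contracts:cutpayoff-compile-init} shows $\cutPayoff$ is semantically invisible and the result follows from Theorem \ref{thm:contracts:compile-sound}(ii) together with the shift property. The $\scale$ cases combine the induction hypothesis, part (i) of Theorem \ref{thm:contracts:compile-sound}, Lemma \ref{lem:contracts:cutpayoff-compile-exp}, and Lemma \ref{lem:contracts:smartscale-hor}. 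The $\both$ case uses Lemma \ref{lem:sum-split}, Lemma \ref{lem:contracts:smartboth-hor}, and the two induction hypotheses.

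The hardest part will be the three $\ifwithin$ cases, where the compilation produces a $\<loopif>$ and the reduction semantics either selects a branch or decrements the iteration bound. Here I would need an auxiliary induction on the iteration count $n = \tSem{t}\delta$ to relate the semantics of $\cutPayoff$ applied to $\<loopif>(\dots, t)$ at parameters $(0,1)$ to the semantics of $\<loopif>(\dots, t-1)$ at $(1,1)$ (corresponding to the reduction $\ifwithin e d {c_1}{c_2} \Rar \ifwithin e {d-1}{c_1}{c_2}$ under a false condition), using Lemma \ref{lem:contracts:cutpayoff-loopif} to absorb the guard when $t_{now} \leq n$. Additional care is required because the compilation uses the smart constructor $\smartTplus$ (see Remark \ref{rem:contracts:templ-expr-compile}), so I would need to confirm that the associativity-sensitive manipulations go through under the semantic equivalence rather than syntactic equality of template expressions.
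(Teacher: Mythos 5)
Your proposal follows essentially the same route as the paper's proof: induction on the reduction derivation, with the same case structure and the same supporting lemmas (Theorem \ref{thm:contracts:compile-sound}, Lemmas \ref{lem:contracts:cutpayoff-compile-exp}, \ref{lem:contracts:cutpayoff-compile-init}, \ref{lem:contracts:cutpayoff-loopif}, \ref{lem:contracts:smartscale-hor}, \ref{lem:contracts:smartboth-hor}), including the case split on the outcome of expression specialisation in the $\scale$ case and the nested induction on the iteration count for $\ifwithin$. Your packaging of the $\transl{}{}$ and $\ifwithin{}{}{}{}$ arguments into an explicit ``cut soundness'' lemma is a mild reorganisation of what the paper does inline (combining Theorem \ref{thm:contracts:compile-sound}(\textit{ii}) with the $\cutPayoff$ equivalence lemmas and the reduction soundness of the underlying contract language), not a different proof.
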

\begin{proof}
  The proof proceeds by induction on the derivation for the contract reduction.
  We group cases according to the shape of the contract $c$ and show only the
  general structure of the proof, highlighting which lemmas we use in
  particular cases.
  \begin{case}[$\zero$]
    Both values are zeros.
  \end{case}
  \begin{case}[$\transfer{p_1}{p_2}a$]
    By case analysis on decidable equality of parties.
    In the case parties are not equal, we observe that condition $0 < t_{now}$ is true, since
    $t_{now} = 1$ and that gives zeros on both sides of the equation.
  \end{case}
  \begin{case}[$\transl{t}{c}$]
    We have two subcases: $t = 0$ and $t = n+1$ for some $n \in \nats$.
    \begin{itemize}
      \item $t = 0$. By induction hypothesis.

      \item $t = n+1$. We observe that contract $c$ is translated $n+1$
        steps into the future.  Ignoring payoffs before time $t_{now}=1$ will not
        affect the result of the evaluation of the corresponding payoff expression,
        since all the potential payoffs can happen only after $n+1$ steps.

        That is, we use soundness of contract compilation (Theorem
        \ref{thm:contracts:compile-sound}, (\textit{ii})) with Lemma
        \ref{lem:contracts:cutpayoff-compile-init} and soundness of contract reduction
        (see \cite[Theorem 11]{BahrBertholdElsman}).
    \end{itemize}
  \end{case}
  \begin{case}[$\scale e c$]
    The smart constructor $\underline{\mathsf{scale}}$ does not preserve the
    symbolic horizon in general, i.e.
    $\synhor_{\delta}(\scale e c) \neq
    \synhor_{\delta}(\smartScale e c)$. This is due to the ``shortcut''
    behavior: if expression specialisation gives the zero literal, then the
    contract collapses to the empty one, giving zero horizon. For that reason we perform case
    analysis on the outcome of the expression specialisation.
    \begin{itemize}
      \item $\mathsf{sp_E}(e, \rho) = 0$. We use soundness of expression
        specialisation (see \cite[Theorem 10]{BahrBertholdElsman}), Theorem
        \ref{thm:contracts:compile-sound}(\textit{i}), and Lemma
        \ref{lem:contracts:cutpayoff-compile-exp} to proof this subcase.
      \item $\mathsf{sp_E}(e, \rho) \neq 0$. We use the same idea as in the
        case of $\<scale>$ in the proof of the compilation soundness theorem to
        split the goal into two cases. We prove the first goal using Theorem
        \ref{thm:contracts:compile-sound}(\textit{i}). The second goal we can prove
        by induction hypothesis and Lemma \ref{lem:contracts:smartscale-hor}, since we know that
        $\mathsf{sp_E}(e, \rho) \neq 0$.
    \end{itemize}
  \end{case}
  \begin{case}[$\both{c_1}{c_2}$]
    We use the same idea to split the goal into two as in the proof of
    Theorem \ref{thm:contracts:compile-sound}, and then use induction hypotheses with
    Lemma \ref{lem:contracts:smartboth-hor}.
  \end{case}
  \begin{case}[$\ifwithin{e}{t}{c_1}{c_2}$]
    This case consists of three subcases.
    \begin{itemize}
    \item $e$ evaluates to $true$.
      We prove this subcase by the induction hypothesis, using
      Theorem \ref{thm:contracts:compile-sound}(\textit{i})
      to show that the corresponding payoff expression also evaluates to $true$.
    \item $e$ evaluates to $false$ and $t = 0$. The proof is similar to the previous subcase.
    \item $e$ evaluates to $false$ and $t = n+1$ for some $n \in \nats$.  We
      observe that the starting value for $\<ifWithin>$ is $n+1$ and we know
      that $e$ evaluates to $false$. This means that all the potential payoffs
      can happen after at least one step, and evaluating the corresponding
      payoff expression at time $t_{now} = 1$ will not affect the result. We
      complete the proof by using the compilation soundness (Theorem
      \ref{thm:contracts:compile-sound}(\textit{ii})) and Lemma
      \ref{lem:contracts:cutpayoff-loopif}.
    \end{itemize}
  \end{case}
\end{proof}

From the contract pricing perspective, the partial external environment $\rho$
contains \emph{historical data} (e.g., historical stock quotes) and the extended
environment $\rho'$ is a union of two environments $\rho$ and $\rho''$, where
$\rho''$ contains \emph{simulated data}, produced by means of simulation in
the pricing engine (e.g., using Monte Carlo techniques).

One also might be interested in the following property. The following two ways
of using our compilation procedure give identical results:
\begin{itemize}
\item first reduce, compile, then evaluate;
\item first compile, apply $\cutPayoff$, and then evaluate, specifying the appropriate
  value for the ``current time'' parameter.
\end{itemize}

Let us introduce some notation first.  We fix the well-typed external environment
$\rho$, the partial environment $\rho'$, which is historically complete
($\rho'(l,i)$ is defined for all labels $l$ and $i \leq 0$), and a discount
function $d : \nats -> \reals$. Next, we assume that contracts are well-typed,
and closed both with respect to variables bound by $\<let>$ and template
variables, the compilation function is applied to supported constructs only, and
that the reduction function, corresponding to the reduction relation is total on
$\rho'$ (see \cite[Theorem 11]{BahrBertholdElsman}). This gives us the
following total functions:
  \begin{align*}
    red_{\rho'} : \type{Contr} -> \type{Contr}\\
    \compileContr{-}0 : \type{Contr} -> \type{ILExpr}\\
  \end{align*}

  These function correspond to the contract reduction function and the contract compilation function.
  We also define an evaluation function for compiled payoff expressions as a shortcut
  for the payoff expression semantics.
  \begin{align*}
    \id{evalAt}_{-} &: \nats -> \type{ILExpr} \times \type{Env} \times \type{Disc} -> \reals + \bools\\
    \id{evalAt}_t (e,\rho,d) &= \ilSem{e}{\rho,\emptyset,0,t,d,p_1,p_2}
  \end{align*}

for some parties $p_1$ and $p_2$. We know by Theorem
\ref{thm:contracts:well-typed-compiled} that $\id{evalAt}$ is total on payoff
expressions produced by the compilation function from well-typed contracts.

We summarise the property by depicting it as a commuting diagram (we give the
theorem here without a proof, but emphasise that we have formalised this
theorem in our Coq development).
\begin{thm}\label{thm:contracts:commutes}
  Given notation and assumptions above, the following diagram commutes:
\begin{center}
\begin{tikzcd}
  \type{Contr} \arrow[rr, "\id{red}_{\rho'}"] \arrow[d, "\mathsf{cutPayoff}~\circ~\compileContr{-}0"]
    &  & \type{Contr} \arrow[d, "\compileContr{-}0"] \\
    \type{ILExpr} \arrow[rd,"\id{evalAt}_1\tuple{-,\rho,d}"]
    & & \type{ILExpr} \arrow[ld,"\id{evalAt}_0\tuple{-,\rho/1,d/1}"] \\
    & \reals &
\end{tikzcd}
\end{center}
Where we write $\rho/1$ and $d/1$ for shifted one step external environment and
discount function, respectively.
\end{thm}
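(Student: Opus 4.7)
The plan is to collapse both sides of the diagram to the same discounted sum over the trace of the reduced contract, using the two soundness theorems already established. Write $c' := \id{red}_{\rho'}(c)$, $\id{il} := \compileContr{c}{0}$, and $\id{il}' := \compileContr{c'}{0}$, so that the two composites in question are $v_L := \id{evalAt}_1(\cutPayoff{\id{il}}, \rho, d)$ (the left-then-bottom path) and $v_R := \id{evalAt}_0(\id{il}', \rho/1, d/1)$ (the top-then-right path). The goal is to show $v_L = v_R$.

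First I would apply Theorem~\ref{thm:contracts:soundness-red} to the one-step reduction $c \cRed{\rho'}{T} c'$. The standing assumptions on the fixed environments (well-typedness, template-closedness, historical completeness of the partial environment, totality of reduction on $\rho'$) discharge all of its hypotheses, so it yields
\[
v_L \;=\; \sum_{t'=0}^{\synhor_\delta(c')} d(t'+1)\cdot\id{trace}(t')(p_1,p_2),
\qquad \id{trace} := \cSem{c'}{\rho/1,\delta}.
\]
Next I would apply Theorem~\ref{thm:contracts:compile-sound}(\textit{ii}) to the reduced contract $c'$, instantiating its external environment at $\rho/1$ and its discount function at $d/1$. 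Totality of the evaluation on the right is supplied by Theorem~\ref{thm:contracts:well-typed-compiled}(\textit{ii}), and soundness delivers
\[
v_R \;=\; \sum_{t=0}^{\synhor_\delta(c')} (d/1)(t)\cdot\id{trace}(t)(p_1,p_2).
\]
Since $(d/1)(t) = d(t+1)$ by the definition of shift, the two summations agree term by term, whence $v_L = v_R$, which is exactly commutativity of the diagram.

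The routine obligations are to check that $c'$ inherits well-typedness and template-closedness from $c$ under one step of reduction (so that both theorems indeed apply to $c'$ on the right-hand composite), and to keep straight the swapped naming of $\rho$ versus $\rho'$ between the hypotheses of Theorem~\ref{thm:contracts:soundness-red} and the present statement. The main obstacle is this bookkeeping around environments and preservation of well-formedness rather than any genuine proof content; once the environments on the two paths are aligned, the argument reduces to a single identification of sums through $(d/1)(t) = d(t+1)$.
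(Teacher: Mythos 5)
Your proposal is correct and follows exactly the route the paper intends: the paper states Theorem~\ref{thm:contracts:commutes} without a written proof (deferring to the Coq development), and chaining Theorem~\ref{thm:contracts:soundness-red} for the left-then-bottom composite with Theorem~\ref{thm:contracts:compile-sound}(\textit{ii}) applied to the reduced contract under $\rho/1$ and $d/1$, then identifying the two sums via $(d/1)(t)=d(t+1)$, is precisely that argument. Your flagged bookkeeping items (the swapped roles of $\rho$ and $\rho'$ between the two theorem statements, and preservation of well-typedness and template-closedness under one reduction step) are indeed the only remaining obligations.
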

The above diagram gives rise to the following equation:
\[
\id{evalAt}_1\tuple{-,\rho,d} \circ \mathsf{cutPayoff}~\circ~\compileContr{-}0 =
\id{evalAt}_0\tuple{-,\rho/1,d/1} \circ \compileContr{-}0 \circ \id{red}_{\rho'}
\]
 This property shows, that we can use our implementation in both the settings:
 if a contract is compiled upfront with $\cutPayoff$, and if a reduced contract
 is compiled to the payoff for each time. The former use case allows more
 flexibility for users. For example, one can develop a system where users
 define contracts directly in terms of CL working in a specialised
 IDE. The latter case gives performance improvement allowing for the use a set of
 predefined financial instruments (or contract templates).  Adding a new
 instrument is possible, but requires recompilation.

 We also point out that the path (in a diagram given in Theorem \ref{thm:contracts:commutes})
 $\id{evalAt}_1\tuple{-,\rho,d} \circ \mathsf{cutPayoff}~\circ~\compileContr{-}0$,
 requires an external environment containing all historical data from the
 beginning of the contract and up to $t$.  While for the other path it is often
 possible to use only simulated data for pricing.

Avoiding recompilation using contract templates can significantly improve
performance especially on GPGPU devices. On the other hand, additional
conditional expressions are inserted into the code,
which results in a number of additional checks at
runtime. Experiments conducted with ``hand-compiled'' OpenCL code, which was
semantically equivalent to the payoff language code, showed that for the simple
contracts, such as European options, additional conditions, introduced by
$\cutPayoff{}$ do not significantly decrease performance.  The estimated
overhead was around $2.5$ percent, while compilation time is in the order of a
magnitude bigger than the total execution time.

\section{Formalisation in Coq}\label{sec:contracts:coq}
\lstset{language=Coq}
Our formalisation in Coq extends the previous work
\cite{BahrBertholdElsman} by introducing the concept of template expressions
and by developing a certified compilation technique for translating contracts
to payoff expressions. The modified denotational semantics has been presented
in Section \ref{subsec:contracts:contracts-syn-sem}. This modification required
us to propagate changes to all the proofs affected by the change of syntax and
semantics.  We start this section with a description of the original
formalisation, and then continue with modifications and additions made by the
author of this work.

The formalisation described in \cite{BahrBertholdElsman} uses an extrinsic
encoding of CL. That means that syntax is represented using Coq's
inductive data types, and a typing relation on these \emph{raw} terms are
given separately. For example, the type of the expression sublanguage is
defined as follows.
\begin{lstlisting}[language=Coq]
Inductive Exp : Set :=
  OpE (op : Op) (args : list Exp)
| Obs (l : ObsLabel) (i : Z)
| VarE (v : Var)
| Acc (f : Exp) (d : nat) (e : Exp).
\end{lstlisting}

One of the design choices in the definition of \icode{Exp} is to make the
constructor of operations \icode{OpE} take ``code'' for an operation
and the list of arguments. Such an implementation makes adding new operations
somewhat easier. Although, we would like to point out that this definition
is a \emph{nested} inductive definition (see \cite[Section 3.8]{cpdt}). In such
cases Coq cannot automatically derive strong enough induction principle,
and it should be defined manually. In the case of \icode{Exp} it is not hard to
see, that one needs to add a generalised induction hypothesis in case of
\icode{OpE}, saying that some predicate holds for all elements in the arguments
list.

Although the extrinsic encoding requires more work in terms of proving, it has a
big advantage for code extraction, since simple inductive data types are easier
to use in the Haskell wrapper for CL.

One of the consequences of this encoding is that semantic functions for
contracts $\type{Contr}$ and expressions $\type{Exp}$ are partial, since they
are defined on raw terms which might not be well-typed.  This partiality is
implemented with the \icode{Option} type, which is equivalent to Haskell's
\icode{Maybe}. To structure the usage of these partial functions, authors
define the \icode{Option} monad and use monadic binding
\begin{lstlisting}
  bind : forall A B : Type, option A -> (A -> option B) -> option B
\end{lstlisting}
to compose calls of partial functions together. The functions
\begin{lstlisting}
liftM: forall A B : Type,
  (A -> B) -> option A -> option B

liftM2 : forall A B C : Type,
  (A -> B -> C) -> option A -> option B -> option C

liftM3 : forall A B C D : Type,
 (A -> B -> C -> D) -> option A -> option B ->
                         option C -> option D
\end{lstlisting}
allow for a total function of one, two, or three arguments to be lifted to the
\icode{Option} type. The implementation includes poofs of some properties of
\icode{bind} and the lifting functions. These properties include cases for which an
expression evaluates to some value.
\begin{lstlisting}
bind_some : forall (A B : Type) (x : option A)
                  (v : B) (f : A -> option B),
    x >>= f = Some v -> exists x' : A, x = Some x' /\ f x' = Some v
\end{lstlisting}

The similar lemmas could be proved for other functions related to the
\icode{Option} type. To simplify the work with the \icode{Option} monad, the
implementation defines tactics in the \texttt{Ltac} language (part of Coq's
infrastructure). The tactics \icode{option_inv} and \icode{option_inv_auto} use
properties of operations like \icode{bind} and \icode{liftM} to invert
hypotheses like \icode{e = Some v}, where $e$ contains aforementioned
functions. The implementation uses some tactics from
\cite{pierce}. Particularly, the \icode{tryfalse} tactic is widely used.  It
tries to resolve the current goal by looking for contradictions in assumptions,
which conveniently removes impossible cases.

The original formalisation of the contract language was modified by introducing the type of
template expressions
\begin{lstlisting}
Parameter TVar : Set.
Inductive TExpr : Set := Tvar (t : TVar) | Tnum (n : nat).
\end{lstlisting}

We keep the type of variables abstract and do not impose any restrictions on
it. Although, one could add decidability of equality for \icode{TVar}, if
required, but we do not compare template variables in our formalisation.
We modify the definition of the type of contracts \icode{Contr} such that
constructors of expressions related to temporal aspects now accept \icode{TExpr}
instead of \icode{nat} (\icode{If} corresponds to $\<ifWithin>$):
\begin{lstlisting}
  Translate : TExpr -> Contr -> Contr
  If : Exp -> TExpr -> Contr -> Contr -> Contr.
\end{lstlisting}
and leave other constructors unmodified.

We define a template environment as a function type \icode{TEnv := TVar -> nat}
similarly to the definition of the external environment. Such a definition
allows for easier modification of existing code base in comparison with partial
mappings. According to the definitions in Section
\ref{subsec:contracts:contracts-syn-sem} we modify the semantic function for
contracts, and the symbolic horizon function to take an additional parameter of
type \icode{TEnv}. Propagation of these changes was not very problematic and
almost mechanical. Although, the first attempt to parameterise the reduction
relation with a template environment led to some problems, and we decided to
define the reduction relation only for template-closed contracts.  In most
cases it is sufficient to instantiate a contract, containing template variables
using the instantiation function (\ref{eq:contratcs:inst-func}), and then
reduce it to a new contract. Although instantiation requires a template
environment, containing all the mapping for template variables mentioned in the
contract, we do not consider this a big limitation.

The definition of the payoff intermediate language (following Section
\ref{subsec:contracts:payoffs-syn-sem}) also uses an extrinsic encoding to
represent raw terms as an inductive data type. We define one type for the payoff
language expressions \icode{ILExpr}, since there is no such separation as in
CL on contracts and expressions. The definition of template expressions used in the
definition of \icode{ILExpr} is an extended version of the definition of template expressions
\icode{TExpr} used in the contract language definition.
\begin{lstlisting}
Inductive ILTExpr : Set :=
  ILTplus (e1 : ILTExpr) (e2 : ILTExpr)
| ILTexpr (e : TExpr).

Inductive ILTExprZ : Set :=
  ILTplusZ (e1 : ILTExprZ) (e2 : ILTExprZ)
| ILTexprZ (e : ILTExpr)
| ILTnumZ (z : Z).
\end{lstlisting}

Notice that we use two different types of template expressions \icode{ILTExpr}
and \icode{ILTExprZ}. The former extends the definition of \icode{TExpr} with
the addition operation, and the latter extends it further with integer literals
and with the corresponding addition operation (recall that template expressions
used in CL can be either natural number literals or variables).  The
reason why we have to extend \icode{TExpr} with addition is that we want to
accumulate time shifts introduced by \icode{Translate} in one expression using
(syntactic) addition. In the expression sublanguage of CL, observables
can refer to the past by negative time indices. For that reason we introduce
the \icode{ILTExprZ} type.

The full definition of syntax for the payoff intermediate language in our Coq
formalisation looks as follows:
\begin{lstlisting}
Inductive ILExpr : Set :=
| ILIf : ILExpr -> ILExpr -> ILExpr -> ILExpr
| ILFloat : R -> ILExpr
| ILNat : nat -> ILExpr
| ILBool : bool -> ILExpr
| ILtexpr : ILTExpr -> ILExpr
| ILNow  :  ILExpr
| ILModel : ObsLabel -> ILTExprZ -> ILExpr
| ILUnExpr : ILUnOp -> ILExpr -> ILExpr
| ILBinExpr : ILBinOp -> ILExpr -> ILExpr -> ILExpr
| ILLoopIf : ILExpr -> ILExpr -> ILExpr -> TExpr -> ILExpr
| ILPayoff  : ILTExpr -> Party -> Party -> ILExpr.
\end{lstlisting}
Notice that we use template expressions, which could represent negative
numbers (\icode{ILTExprZ}) in the constructor \icode{ILModel}. This
constructor corresponds to observable values in the contract language
and allows for negative time indices corresponding to historical data.

We could have generalised our formalisation to deal with different types of
template variables and add a simple type system on top of the template
expression language, but we decided to keep our implementation simple, since
the main goal was to demonstrate that it is possible to extend the original
contract language to contract templates with temporal variables.

All the theorems and lemmas from Section \ref{sec:contracts:compile} are
completely formalised in our Coq development. We use a limited amount of proof
automation in the soundness proofs. We use proof automation
mainly in the proofs related to compilation of
contract expression sublanguage, since compilation is straightforward and
proofs are relatively easily to automate. Moreover, without the proof automation
one would have to consider a large number of very similar cases leading to code
duplication. In addition to \icode{option_inv_auto} mentioned above, we use a
tactic that helps to get rid of cases when expressions (a source expression in
\icode{Exp} and a target expression in \icode{ILEpxr}) evaluate to values of
different types (denoted by the corresponding constructor).
\begin{lstlisting}
Ltac destruct_vals := repeat (match goal with
                        | [x : Val |- _] => destruct x; tryfalse
                        | [x : ILVal |- _] => destruct x; tryfalse
                        end).
\end{lstlisting}
Where the \icode{Val} and \icode{IVal} types corresponds to values of the contract
expression sublanguage and the payoff expression language respectively.

Another tactic that significantly reduces the complexity of the proofs is the
\icode{omega} tactic from Coq's standard library. This tactic implements a
decision procedure for expressions in Presburger arithmetic. That is, goals can
be equations or inequations of integers, or natural numbers with addition and
multiplication by a constant. The tactic uses assumptions from the current
context to solve the goal automatically.

The principle we use in the organisation of the proofs is to use proof automation
to solve most trivial and tedious goals and to be more explicit about the proof
structure in cases requiring more sophisticated reasoning.

There are a few aspects that introduce complications to the development of proofs
of the compilation properties.
\begin{itemize}
  \item Accumulation of relative time shifts during compilation. Because of
    this we have to generalise our lemmas to any initial time \icode{t0}. The
    same holds for the semantics of \icode{loopif}, since there is an
    additional parameter in the semantics to implement iterative behavior.
  \item Presence of template expressions. The complications we faced with in
    this case are described in Remark
    \ref{rem:contracts:templ-expr-compile}. We resolves these complications with
    smart constructors, but it still adds some overhead.
  \item Conversion between types of numbers. We use integers and natural
    numbers (\icode{nat} and \icode{Z} type from the standard library). In some
    places, including the semantics of template expressions, we use a conversion
    from natural numbers to integers. This conversion makes automation with the
    \icode{omega} tactic more complicated, because it requires first to use
    properties of conversion, which is harder to automate. With the
    accumulation aspect, conversions add even more overheard.
  \item We use contract horizon in the statement of soundness theorems, which
    often leads to additional case analysis in proofs.
\end{itemize}

\subsection{Code Extraction}\label{subsec:contracts-code-extraction}
The Coq proof assistant allows for extracting Coq functions into programs in
some functional languages \cite{coqextract2008}. The implementation described
in \cite{BahrBertholdElsman} supports code extraction of the contract
type checker and contract manipulation functions into the Haskell programming
language. We extend the code extraction part of the implementation with
features related to contract templates and contract compilation. Particularly,
we extract Haskell implementations of the following functions:
\begin{itemize}
\item \icode{inst_contr} function that instantiates a given contract according to given
  template environment;
\item \icode{fromExp} function for compiling the contract expression sublanguage;
\item \icode{fromContr} function for compiling contract language constructs;
\item \icode{cutPayoff} function for parameterising a payoff expression with
  the ``current time''.
\item \icode{ILsem} semantic function for payoff expressions, which can be
  used as an interpreter.
\end{itemize}

We update the Haskell front end, exposing the contract language in convenient to
use form, with combinators for contract templates. We keep the original versions
of extended constructs, such as \icode{translate} and \icode{within} without
changes and add \icode{translateT} and \icode{withinT} combinators supporting
template variables.

Our implementation contains an extended collection of contract examples, examples
of contract compilation, and evaluation of resulting payoff expressions.

\subsection{Code Generation}\label{subsec:contarcts:haksell}
To exemplify how the payoff language can be used to produce a payoff function
in a subset of some general purpose language, we have implemented a code
generation procedure to the Haskell programming language. That is, we have
implemented the following chain of transformations:
\[ \text{CL} -> \text{Payoff Intermediate Language} -> \text{Haskell} \]
We make use of the code extraction mechanism described in Section
\ref{subsec:contracts-code-extraction} to obtain a certified compilation function,
which we use to translate expressions in CL to expressions in the payoff language.

The code generation procedure is (almost) a one-to-one mapping of the payoff
language constructs to Haskell expressions. One primitive, which we could
not map directly to Haskell build-in functions was the \icode{loopif}
construct. We have solved this issue by implementing \icode{loopif} as a
higher-order function. The implementation essentially follows the definition of
the semantics of \icode{loopif} in Coq.

\begin{lstlisting}[language=Haskell]
loopif :: Int -> Int -> (Int -> Bool) -> (Int -> a) -> (Int -> a) -> a
loopif n t0 b e1 e2 = let b' = b t0 in
                      $~$case b' of
                        True -> e1 t0
                        False -> case n of
                                   0 -> e2 t0
                                   _ -> loopif (n$-$1) (t0+1) b e1 e2

\end{lstlisting}

The resulting payoff function has the following signature:
\begin{lstlisting}[language=Haskell]
  payoff :: Map.Map ([Char], Int) Double -> Map.Map [Char] Int
            -> Int -> Party -> Party -> Double
\end{lstlisting}
That is, the function takes an external environment, a template environment,
current time, and two parties. The \icode{payoff} function calls the
\icode{payoffInternal} function, which takes an additional parameter -- an
initial value for the \icode{loopif} function needed for technical reasons.

\begin{example}~\\
  We apply the code generation procedure to the expression \icode{e} from Example
  \ref{ex:contracts:compile}. The result of code generation is given below.
  \begin{lstlisting}[language=Haskell]
module Examples.PayoffFunction where
import qualified Data.Map as Map
import BaseTypes
import Examples.BasePayoff
payoffInternal ext tenv t0 t_now p1 p2 =
  (100.0 * (if  (X== p1 && Y== p2) then 1
            else if (X== p2 && Y== p1) then $-$1 else 0)) +
  (if ((100.0 < (ext Map.! ("AAPL",(0 + (tenv Map.! "t1") +
                                   (tenv Map.! "t0") + 0+ t0)))))
   then ((((ext Map.! ("AAPL",(0 + (tenv Map.! "t1") +
                               (tenv Map.! "t0") + 0+ t0))) * 100.0) *
           (if  (X== p1 && Y== p2) then 1 else
              if  (X== p2 && Y== p1) then $-$1 else 0))) else 0.0)
payoff ext tenv t_now p1 p2=payoffInternal ext tenv 0 t_now p1 p2
  \end{lstlisting}
  As one can see, the external environment and the template environment are
  represented using Haskell's \icode{Data.Map}, and \icode{Map.!} is an infix
  notation for the lookup function. The Haskell code above makes use of the
  \icode{loopif} function with zero as the first argument. It is possible to
  replace it with the regular \icode{if} by a simple optimisation. One could also add
  more optimisations to our Coq implementation along with proofs of soundness.

  A module declaring the \icode{payoff} function can be used as an ordinary
  Haskell module as a part of the development requiring the payoff
  functions. For example, it could be used in the context of the FinPar
  benchmark \cite{Andreetta:2016:FPF:2952301.2898354}, which contains a Haskell
  implementation of pricing among other routines.  Moreover, the $\cutPayoff{}$
  function can be used to obtain a parameterised version of a payoff function
  in Haskell, allowing us to reproduce the contract reduction behavior.
\end{example}

\section{Conclusion}

This work extends the certified contract management system of
\cite{BahrBertholdElsman} with template expressions, which allows for drastic
performance improvements and reusability in terms of the concept of
instruments (i.e. contract templates).  Along with changes to the contract
language, we have developed a formalisation of the payoff intermediate language and
a certified compilation procedure in Coq. Our approach uses an extrinsic encoding,
since we are aiming at using code extraction for obtaining a correct implementation
of the compiler function that translates expressions in CL to payoff
expressions. We have also developed a technique allowing for capturing contract
development over time.

A number of important properties, including soundness of the translation from
CL to the payoff language have been proven in Coq. We have
exemplified how the payoff intermediate language can be used to generate code
in a target language by mapping payoff expressions to a subset of Haskell.

There are number of possibilities for future work:
\begin{itemize}
\item Generalise Theorems \ref{thm:contracts:soundness-red} and
  \ref{thm:contracts:commutes} to n-step contract reduction. We have defined
  some steps of the proofs in this generalised setting, but details still need
  to be worked out.
\item The representation of traces as functions $\nats -> \type{Trans}$ is
  equivalent to infinite streams of transfers. It would be interesting to
  explore this idea of using streams further, since observable values also can
  be naturally represented as streams.
\item Improve the design of the payoff intermediate language to support all
  CL constructs.  Also, adding $\<loopif>$ seems to be somewhat
  ad-hoc. It could be possible to a have more general language construct for iteration
  and compile $\<ifWithin>$ to a combination of iteration and conditions.
\item Implement a translation from the payoff intermediate language to the
  Futhark programming language for data-parallel GPGPU computations
  \cite{Henriksen:Futhark}. Futhark seems to be a natural choice as a
  target language, since there is an implementation of the pricing engine
  in Futhark, and we believe that mapping from payoff expressions to a
  subset of Futhark should be similar to our experience with Haskell.
\item Develop a formalised infrastructure to work with external environment
  representations. That is, instead of finite maps as in our Haskell code
  generator (Section \ref{subsec:contarcts:haksell}), one could use arrays to
  represent external environments. In this case one has to implement some
  reindexing scheme, since a naive translation of external environments could
  result in sparse arrays. Particularly, this will be important for targeting array
  languages like Futhark.
\end{itemize}

\chapter{Formalising Modules}\label{chpt:modules}
In this chapter we present a number of techniques that allow for formal
reasoning with nested and mutually inductive structures built up from finite
maps and sets (also called semantic objects), and at the same time allow for
working with binding structures over sets of variables. The techniques, which
build on the theory of nominal sets combined with the ability to work with
multiple isomorphic representations of finite maps, make it possible to give a
formal treatment, in Coq, of a higher-order module language for Futhark, an
optimising compiler targeting data-parallel architectures, such as GPGPUs
\cite{Henriksen:Futhark}. We want to emphasise that the main focus of this
chapter is on a formal development in the Coq proof assistant: encoding of
semantic objects, formal treatment of issues related to variable binding, and
proof techniques applied.

The rest of this chapter is structured as follows. In Section
\ref{sec:modules-motivation} we present the motivation and background for
development of a module system for the Futhark language.  In Section
\ref{sec:modules-spec} we introduce a formal system for the module language
specification. in Section \ref{subsec:modules-stlc}, we provide a well known
result demonstrating normalisation of the simply typed lambda-calculus (STLC)
using a logical relation argument.  The purpose of this exposition is to
motivate how we can late use a similar technique to prove the normalisation of
static interpretation of the module language.  We give basic definitions from
the theory of nominal sets and discuss motivations and applications of nominal
techniques to the module language formalisation in Section
\ref{sec:modules:var-binding-nominal}. In the same section we develop an
example in a simplified setting to demonstrate how nominal techniques apply to
our formalisation. We discuss our Coq development in Section
\ref{sec:modules-coq}. The aim of this section is to highlight specifics of the
development, and show applications of reasoning techniques developed to solve
issues related to representation of the structures given in the previous section.
Our development presented in Section \ref{sec:modules-coq}
is the first treatment of the static interpretation of modules in the style of
\cite{elsman99} in the Coq proof assistant. \note{Our Coq formalisation covers
  definitions required to state and prove an important property of the static
  interpretation: static interpretation for well-typed modules
  terminates. While another important property - the static interpretation
  procedure preserves the typing of target language expressions - is not
  covered by our implementation and left as future work.}

\section{Motivation}\label{sec:modules-motivation}
Modules in the style of Standard ML and OCaml provide a powerful
abstraction mechanism allowing for writing generic highly parameterised code. A
common issue with an abstraction mechanism is that it can introduce a runtime
overhead. For some application domains it is important to have static
guarantees that module abstractions introduce no overhead. This can be done by
statically interpreting the module system expressions at compile time.  This
technique is similar to the way C++ templates are eliminated at compile time
with the difference that using modules give more static guarantees by means of a
type system. The presented work extends the previous work to higher-order
modules~\cite{elsman99}.

As an application of the abstraction mechanism provided by higher-order modules
we consider a module language implemented on top of the monomorphic,
first-order functional data-parallel language Futhark, which features a number
of polymorphic second-order array combinators (SOACs) with parallel semantics,
such as \kw{map}, \kw{reduce}, \kw{scan}, and \kw{filter}, but has no support
for user-defined polymorphic higher-order functions. The module language allows
for defining certain kind of polymorphic functions in Futhark with the
guarantee that, at compile time, module level language constructs will be
compiled away.  That is, the module language gives rise to highly reusable
components, which, for instance, form the grounds of a Basis Library for
Futhark.

\begin{example}
  For the purpose of demonstrating static interpretation in action,
  consider the (contrived) example Futhark program.
  \begin{lstlisting}[basicstyle=\small,mathescape=true]
  module type MT = {
    module F: (X:{ val b:int } $\rar$ { val f:int$\rar$int })
  }
  module H = funct (M:MT) $\Rar$ M.F { val b = 8 }
  module Main =
    H ( { module F =
              funct (X:{ val b:int }) $\Rar$ { fun f(x:int) = X.b+x }
          })
  fun main (a:int) : int = Main.f a
  \end{lstlisting}
  The program declares a module type \texttt{MT} and a higher-order module
  \texttt{H}, which is applied to a module containing a parameterised module
  \texttt{F}. The result of the module application is a module containing a
  function \texttt{f} of type $\texttt{int}\rar \texttt{int}$. The contained
  function is called in the \texttt{main} function with the input to the
  program. Static interpretation partially evaluates the program to achieve the
  following result.

  \begin{lstlisting}[basicstyle=\small][language=Haskell]
    val b = 8
    fun f (x:int) = b + x
    fun main (a:int) = f a
  \end{lstlisting}

  The code snipped presents monomorphic target code, which can be composed,
  analysed, and compiled without any module language considerations. This
  feature provides the target language implementor with the essential
  meta-level abstraction property that the module language features are
  orthogonal to the domain of the source language.
\end{example}

\section{Normalisation in the Call-by-Value Simply-Typed Lambda Calculus}
\label{sec:modules:stlc-norm}
In this section we present a well-known result that simply-typed
lambda calculus (STLC) is normalising. We use a big-step
semantics with explicit closures, and assume a call-by-value evaluation
strategy for STLC. We use STLC as analogy for the static
interpretation of a module language, which we will present formally in
subsequent sections. For that reason, the argument usually used to
prove normalisation of STLC in some adapted form is also applicable to
the module language. We use Tait's method of logical relations
\cite{tait1967} in the proof we present in this section.

\label{subsec:modules-stlc}
We assume countably infinite set of variables, ranged over by $x$, and $i$ ranges
over integers.
\begin{align}
  \tau \in Ty::&= int ~|~ \tau_1 -> \tau_2
\end{align}

\begin{align}\label{eq:modules:lambda-syntax}
  e \in Lam  ::&= i ~|~ x ~|~ \lambda x.e ~|~ e_1 e_2
\end{align}

A context $\Gamma$ maps variables to types.
\begin{mathpar}
  \inferrule{ }{\Gamma \vdash i : int}~\deflabel{\textsc{ty-int}}\label{rule:ty-int}

  \and

  \inferrule{\Gamma(x) = \tau}{\Gamma \vdash x : \tau}
  ~\deflabel{\textsc{ty-var}}\label{rule:ty-var}

  \and

  \inferrule{\Gamma, x:\tau_1 \vdash e : \tau_2}
            {\Gamma \vdash \lambda x.e : \tau_1 -> \tau_2}
            ~\deflabel{\textsc{ty-lam}}\label{rule:ty-lam}

  \and

  \inferrule{\Gamma \vdash e_1 : \tau_1 -> \tau_2 \\ \Gamma \vdash e_2 : \tau_2}
            {\Gamma \vdash e_1 e_2 :\tau_2}
            ~\deflabel{\textsc{ty-app}}\label{rule:ty-app}

\end{mathpar}

We define a big-step call-be-value operational semantics with explicit
closures. Evaluation contexts $E$ map variables to values. Values can be either
closures, or integer literals.
\begin{align*}
  v \in Val  ::&= \Cl~E~x~e ~|~i
\end{align*}

\begin{mathpar}
  \infer{ }{ E |- i => i}
  ~\deflabel{\textsc{ev-int}}\label{rule:ev-int}
  \and
  \infer{ }{ E |- \lambda x.e ==> \Cl~E~x~e }
    ~\deflabel{\textsc{ev-lam}}\label{rule:ev-lam}
  \and
  \infer{E(x) = v}{E \vdash x ==> v}
  ~\deflabel{\textsc{ev-lam}}\label{rule:ev-var}
  \and
  \infer{
    E |- e_1 ==> \Cl~E_0~x~e_0 \\
    E |- e_2 ==> v_0 \\
    E_0[x \mapsto v_0] |- e_0 ==> v}
        {E |- e_1 e_2 ==> v}
        ~\deflabel{\textsc{ev-app}}\label{rule:ev-app}
\end{mathpar}

We define a logical relation, which we will use in the proof of normalisation
in Figure \ref{fig:modules:stlc-lr}. Similarly, we define a relation on typing
and evaluation contexts:
\begin{equation}
    \infer{dom(E) = dom(\Gamma) \\ \forall x.~E(x) = v \land \Gamma(x) = \tau =>  v |= \tau}
          {E |= \Gamma}
\end{equation}

\begin{figure}
\begin{mathpar}
  \infer{i \in \ints}{i |= int }
  ~\deflabel{\textsc{lr-int}}\label{rule:lr-int}
  \and
  \infer{v = \Cl~E~x~e \and
    (\forall v_1.~v_1 |= \tau_1 => \exists v_2.~E[x \mapsto v_1] |- e_0 => v_2 \land v_2 |= \tau_2)}
        {v |= \tau_1 -> \tau_2}
        ~\deflabel{\textsc{lr-arr}}\label{rule:lr-arr}
\end{mathpar}
\caption{Logical relation.}\label{fig:modules:stlc-lr}
\end{figure}

\begin{lemma}\label{lem:modules:stls-ctx}
  For any typing context $\Gamma$, evaluation context $E$, variable $x$, value $v$ and type $\tau$,
  if $E |= \Gamma$ and $v |= \tau$, then $E[x \mapsto v] |= \Gamma,x : \tau $.
\end{lemma}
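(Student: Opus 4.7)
The plan is to unfold the definition of the relation $E \models \Gamma$ and verify the two clauses for the extended pair $E[x \mapsto v] \models \Gamma, x : \tau$ directly. Concretely, the two obligations are: (a) the domains agree, i.e.\ $dom(E[x \mapsto v]) = dom(\Gamma, x : \tau)$, and (b) for every variable $y$ in that common domain, the value looked up in the extended evaluation context is related by the logical relation to the type looked up in the extended typing context.

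For (a), I would use that $dom(E[x \mapsto v]) = dom(E) \cup \{x\}$ and $dom(\Gamma, x : \tau) = dom(\Gamma) \cup \{x\}$, combined with the hypothesis $dom(E) = dom(\Gamma)$ which is part of $E \models \Gamma$. For (b), I would proceed by case analysis on whether $y = x$. In the case $y = x$, both lookups select the freshly inserted entries, yielding $v$ and $\tau$, so the desired $v \models \tau$ is exactly the second hypothesis of the lemma. In the case $y \neq x$, both lookups fall through to $E$ and $\Gamma$ respectively, and the conclusion follows from the ``for all $x$'' clause of the assumed $E \models \Gamma$.

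There is no real obstacle here: the statement is essentially just showing that the relation on contexts is closed under simultaneous extension, which is a pointwise property. The only subtle point to be careful about in a formal Coq treatment would be how the update $E[x \mapsto v]$ interacts with $x$ already being in $dom(E)$; in that situation $dom(E[x \mapsto v]) = dom(E)$ rather than $dom(E) \cup \{x\}$, but $dom(\Gamma, x : \tau) = dom(\Gamma)$ as well under the same convention, so the argument still goes through uniformly, and the value/type agreement on $y = x$ still selects the fresh bindings by the usual shadowing semantics.
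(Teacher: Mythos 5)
Your proof is correct and is exactly the standard pointwise argument one would expect; the paper itself states this lemma without any proof, treating it as immediate from the definition of $E \models \Gamma$. Your case split on $y = x$ versus $y \neq x$, together with the remark about shadowing when $x$ is already in $\Dom(E)$, covers everything needed.
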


\begin{thm}[Normalisation\footnote{We have developed a standalone formalisation
      of this theorem in Coq.\\ See \url{https://annenkov.github.io/stlcnorm/Stlc.stlc.html}}]
  \label{thm:modules:stlc-norm}
  For any typing context $\Gamma$, evaluation context $E$, term $e$ and type
  $\tau$, if $\overset{\mathcal{T}}{\Gamma |- e : \tau}$ and
  $\overset{\mathcal{R}}{\Gamma |= E}$ then there exists a value $v$, such that
  \[ E |- e ==> v \quad \text{and} \quad v |= \tau\]
  \note{$\mathcal{T}$ and $\mathcal{R}$ denote a typing derivation and a logical relation derivation respectively.}
\end{thm}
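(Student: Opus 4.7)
The plan is to prove the theorem by induction on the typing derivation $\mathcal{T}$ of $\Gamma \vdash e : \tau$, using the logical relation from Figure \ref{fig:modules:stlc-lr} and Lemma \ref{lem:modules:stls-ctx} at the binding case. This is the classical Tait-style argument, where the logical relation is designed precisely so that (i) $v \models \tau$ is a strong enough conclusion to survive application in the $\lambda$-case, and (ii) it is preserved under the constructors of the language.

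For the base cases, I would handle \textsc{ty-int} directly: $E \vdash i \Rightarrow i$ by \textsc{ev-int}, and $i \models int$ by \textsc{lr-int}. For \textsc{ty-var}, the assumption $E \models \Gamma$ together with $\Gamma(x) = \tau$ gives a value $v$ with $E(x) = v$ and $v \models \tau$; then \textsc{ev-var} provides the evaluation. For \textsc{ty-app}, apply the induction hypothesis to the two subderivations to obtain $v_1 \models \tau_1 \to \tau_2$ with $E \vdash e_1 \Rightarrow v_1$, and $v_2 \models \tau_1$ with $E \vdash e_2 \Rightarrow v_2$. By \textsc{lr-arr} the value $v_1$ must be of the form $\Cl~E_0~y~e_0$, and the universally quantified content of \textsc{lr-arr} applied to $v_2$ yields some $v$ with $E_0[y \mapsto v_2] \vdash e_0 \Rightarrow v$ and $v \models \tau_2$. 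Combining these three derivations via \textsc{ev-app} finishes the case.

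The main obstacle, as usual in such proofs, is the $\lambda$-case \textsc{ty-lam}. The evaluation step itself is immediate: $E \vdash \lambda x.e \Rightarrow \Cl~E~x~e$ by \textsc{ev-lam}. What requires care is showing that $\Cl~E~x~e \models \tau_1 \to \tau_2$. Unfolding \textsc{lr-arr}, given an arbitrary $v_1 \models \tau_1$ I must exhibit some $v_2$ with $E[x \mapsto v_1] \vdash e \Rightarrow v_2$ and $v_2 \models \tau_2$. This is precisely where Lemma \ref{lem:modules:stls-ctx} is used: from $E \models \Gamma$ and $v_1 \models \tau_1$ it yields $E[x \mapsto v_1] \models \Gamma, x:\tau_1$, which together with the sub-derivation $\Gamma, x:\tau_1 \vdash e : \tau_2$ makes the induction hypothesis directly applicable and produces the required $v_2$.

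One subtlety worth flagging is that the induction must really be on the typing derivation (or, equivalently, on the structure of $e$ together with inversion of the typing rules), not on the evaluation derivation, since we do not yet know that an evaluation exists — producing it is part of the conclusion. In a Coq formalisation, the $\lambda$-case also forces the logical relation at arrow types to be defined so that the universally quantified $v_1$ ranges over values already known to satisfy the relation, which keeps the definition well-founded on types rather than on terms; this is already the shape of \textsc{lr-arr} above, so no further adjustment is needed beyond careful inversion of the \textsc{ty-lam} and \textsc{lr-arr} rules.
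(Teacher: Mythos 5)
Your proposal is correct and follows essentially the same route as the paper's proof: induction on the typing derivation, with Lemma \ref{lem:modules:stls-ctx} discharging the extended-context premise in the \textsc{ty-lam} case and the universally quantified content of \textsc{lr-arr} supplying the body evaluation in the \textsc{ty-app} case. The only difference is that you also spell out the \textsc{ty-int} and \textsc{ty-var} base cases, which the paper omits as routine.
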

\begin{proof}
  The proof proceeds by induction on the typing derivation $\mathcal{T}$. We
  consider cases for lambda abstraction and application.
  \begin{case}[\nameref{rule:ty-lam}]
    $\mathcal{T}$ = $\inferrule{\overset{\mathcal{T}_1}{\Gamma, x:\tau_1 \vdash
        e : \tau_2}} {\Gamma \vdash \lambda x.e : \tau_1 -> \tau_2}$

    We take $v = \Cl ~E~x~e$. We get $E |- e ==> \Cl ~E~x~e$ from the
    evaluation rule for lambda-abstraction \nameref{rule:ev-lam}.  We have to
    show $\Cl ~E~x~e |= \tau_1 -> \tau_2$. By the rule of logical relation for
    the arrow type, suffices to show that assuming $v_1$,
    s.t. $\overset{\mathcal{R}_1}{v_1 |= \tau_1}$ we have $\exists v_2.~E[x
      \mapsto v_1] |- e_0 => v_2 \land v_2 |= \tau_2$.

   \noindent From Lemma \ref{lem:modules:stls-ctx}, with $\mathcal{R}$ from assumptions
    and $\mathcal{R}_1$, we get
    $\overset{\mathcal{R}_1}{E[x \mapsto v_1] |= \Gamma,x:\tau_1}$.

    \noindent We complete the proof of this case by using the induction hypothesis on $\mathcal{T}_1$
    with $\mathcal{R}_1$.
  \end{case}
  \begin{case}[\nameref{rule:ty-app}]
    $\mathcal{T}$ =
    $\inferrule{\overset{\mathcal{T}_1}{\Gamma \vdash e_1 : \tau_1 -> \tau_2} \\
      \overset{\mathcal{T}_2}{\Gamma \vdash e_2 : \tau_2}}
    {\Gamma \vdash e_1 e_2 :\tau_2}$

    \noindent By using the induction hypothesis on $\mathcal{T}_1$ with $\mathcal{R}$, we get:
    there exists some $v_1$, s.t.
    \[ \overset{\mathcal{E}_1}{E |- e_1 ==> v_1}~\text{and}~
    \overset{\mathcal{R}_1}{v_1 |= \tau_1 -> \tau_2} \]
    By using the induction hypothesis on $\mathcal{T}_2$ with $\mathcal{R}$, we get:
    there exists some $v_2$, s.t.
    \[ \overset{\mathcal{E}_2}{E |- e_2 ==> v_2}~\text{and}~
    \overset{\mathcal{R}_2}{v_2 |= \tau_1} \]

    \noindent From $\mathcal{R}_1$, we get $x$, $e_0$, $E_0$, s.t.
    \[ v= \Cl~E_0~x~e_0 \]
      and
      \begin{equation}
        \forall v'.~v' |= \tau_1 => \exists v''.~E[x \mapsto v'] |- e_0 =>
        v'' \land v'' |= \tau_2)\label{eq:modules-stlc-1}
      \end{equation}
      From (\ref{eq:modules-stlc-1}) with $v_2$ and $\mathcal{R}_2$, we get: there
      exists some $v_3$, s.t.
      \[\overset{\mathcal{E}_3}{E[x \mapsto v_2] |- e_0 =>  v_3} ~\text{and}~
      \overset{\mathcal{R}_3}{v_3 |= \tau_2} \]

      \noindent Now, take $v=v_3$ and use the rule for application
      (\nameref{rule:ev-app}) to construct the required derivation.
      \[ \infer{
        \overset{\mathcal{E}_1}{E |- e_1 ==> \Cl~E_0~x~e_0} \\
        \overset{\mathcal{E}_2}{E |- e_2 ==> v_2} \\
        E_0[x \mapsto v_2] |- e_0 ==> v_3}
         {E |- e_1 e_2 ==> v_3} \]
         From from $\mathcal{R}_3$, we can conclude $v_3 |= \tau_2$ as required.
  \end{case}
\end{proof}

\note{Theorem \ref{thm:modules:stlc-norm} can be used to show that a well-typed closed term always evaluates to some value.
For example, we have the following corollary.}
\begin{corollary}
  \note{Every closed term of type $int$ evaluates to an integer literal. That is, if $\{\}|- t : int$ then
    there exists $i : int$, s.t. $\{\} |- e ==> i$. Here, $\{\}$ is the empty context}
\end{corollary}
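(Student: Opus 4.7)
The plan is a direct application of the normalisation theorem (Theorem \ref{thm:modules:stlc-norm}) instantiated with the empty typing context and the empty evaluation context. First I would establish the auxiliary fact $\{\} \models \{\}$, which unfolds to: the domains agree (both empty), and the universally quantified condition on variables holds vacuously since there are no variables in the domain. This is a one-line observation, but worth stating explicitly because the normalisation theorem requires a context relation premise.

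Next, I would invoke Theorem \ref{thm:modules:stlc-norm} with the given derivation $\{\} \vdash t : int$ and the premise $\{\} \models \{\}$ just established. This yields a value $v$ such that $\{\} \vdash t \Longrightarrow v$ and $v \models int$.

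The final step is to extract the shape of $v$ from the logical relation at type $int$. Inspecting rule \nameref{rule:lr-int}, the only way to derive $v \models int$ is for $v$ to be of the form $i$ with $i \in \ints$ (values in the syntax are either integer literals or closures, and the closure form cannot satisfy $- \models int$ by the shape of the rule). So we set $i \defeq v$ and obtain $\{\} \vdash t \Longrightarrow i$ as required.

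I do not anticipate any real obstacle: the proof is essentially a one-step corollary, with the only subtlety being the vacuous verification that the empty contexts are related and the case analysis confirming that a value in the logical relation at $int$ must be an integer literal rather than a closure. Both steps are immediate from the definitions given in the excerpt.
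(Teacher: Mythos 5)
Your proposal is correct and takes essentially the same route as the paper: instantiate Theorem \ref{thm:modules:stlc-norm} with the empty typing and evaluation contexts, note that $\{\} \models \{\}$ holds trivially, and read off from rule \nameref{rule:lr-int} that the resulting value must be an integer literal. The paper's own proof is just a terser version of the same argument (it leaves the final inspection of the logical relation implicit).
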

\begin{proof}
  \note{In Theorem \ref{thm:modules:stlc-norm}, take $\Gamma$ and $E$ to be the empty context $\{\}$.
  Then the claim follows immediately, since $\{\} |= \{\}$ is trivially satisfied.}
\end{proof}

\section{Formal Specification}\label{sec:modules-spec}
The module language can be considered parameterised over a core language,
which, for the purpose of the presentation, is a simple functional language.
We assume countably infinite sets of \emph{type identifiers} ($\id{tid}$),
\emph{value identifiers} ($\id{vid}$), and \emph{module identifiers}
($\id{mid}$). For each of the above identifier sets $X$, we define the
associated set of \emph{long identifiers} \textrm{Long$X$}, inductively with
$X$ as the base set and \id{mid}.\id{long}$x$ as the inductive case with
\id{long}$x$ $\in$ Long$X$ and \id{mid} being a module identifier. For the
module language, we also assume a denumerably infinite set of \emph{module type
  identifiers} ($\id{mtid}$). Long identifiers, such as $x.y.z$, allow users to
use traditional dot-notation for accessing components deep within modules and
the separation of identifier classes makes it clear in what syntactic category
an identifier belongs.

The simple core language is defined by notions of \emph{type
  expressions} ($\id{ty}$), \emph{core language expressions}
($\id{exp}$), and \emph{core language declarations} ($\id{dec}$):

\[
\begin{array}{lcl}
  \id{ty} & ::= & \id{longtid} \SEP \id{ty}_1 \rar \id{ty}_2\\
  \id{exp} & ::= & \id{longvid} \SEP \lambda \id{vid} \rar \id{exp} \SEP \id{exp}_1~\id{exp}_2 \SEP \id{exp}~\kw{:}~\id{ty} \\
  \id{dec} & ::= & \kw{val} ~ \id{vid}~ \kw{=}~ \id{exp}
\end{array}
\]

\begin{figure}
\begin{minipage}{0.49\columnwidth}
\[
\begin{array}{lcl}
  \id{mty}  & ::= & \cbra{~\id{spec}~} \\
            & | & \id{mtid} \\
            & | & \id{mid}\,\kw{:}\,\id{mty}_1 \rar \id{mty}_2  \\
            & | & \id{mty}~\kw{with}~\id{longtid}~\kw{=}~\id{ty} \\
  \id{spec} & ::= & \kw{val}~\id{vid}~\kw{:}~\id{ty} \\
            & | & \kw{type}~\id{tid} \\
            & | & \kw{module}~\id{mid}~\kw{:}~\id{mty} \\
            & | & \kw{include}~\id{mty} \\
            & | & \id{spec}_1~\id{spec}_2 \SEP \eps
\end{array}
\]
\end{minipage}
\begin{minipage}{0.49\columnwidth}
\[
\begin{array}{lcl}
  \id{mexp} & ::= & \cbra{~\id{mdec}~} \\
            & | & \id{mid} \SEP \id{mexp}\,\kw{.}\,\id{mid} \\
            & | & \kw{funct}~\id{mid}\,\kw{:}\,\id{mty}~\RAR~\id{mexp} \\
            & | & \id{longmid}\,\para{\id{mexp}} \\
  \id{mdec} & ::= & \id{dec} \\
            & | & \kw{type}~ \id{tid}~\kw{=}~ \id{ty} \\
            & | & \kw{module}~\id{mid}~ \kw{=} ~ \id{mexp} \\
            & | & \kw{module type}~\id{mtid}~\kw{=}~\id{mty} \\
            & | & \kw{open}~\id{mexp} \\
            & | & \id{mdec}_1~\id{mdec}_2 \SEP \eps
\end{array}
\]
\end{minipage}
\caption{Grammar for the module language excluding derived forms.}
\label{grammar.fig}
\end{figure}

The core language can be understood entirely in isolation
from the module language except that long identifiers may be used to
access values and types in modules.

The grammar for the module language is given in
Figure~\ref{grammar.fig}.
The module language is separated into a language for specifying
module types ($\id{mty}$) and a language for declaring modules
($\id{mdec}$). The language for module types is a two-level language
with sub-languages for specifying module components and for expressing
module types. Similarly, the language for declaring (i.e., defining)
modules is a two-level language for declaring module components and
for expressing module manipulations.
At the very toplevel, a program is simply a module declaration,
possibly consisting of a sequence of module declarations where later
declarations may depend on earlier declarations.

As will become apparent from the typing rules, in declarations of the
form $\id{mdec}_1~\id{mdec}_2$, identifiers declared by $\id{mdec}_1$
are considered bound in $\id{mdec}_2$ (similar considerations hold for
composing specifications and programs).

\subsection{Semantic objects}\label{subsec:modules:sem-obj}

For the static semantics, we assume a countably infinite set $\TSet$ of
\emph{type variables} ($\id{t}$).
A \emph{semantic type} (or simply a type), ranged over by $\tau$, takes
the form:
\[
\begin{array}{lcl}
  \tau & ::= & t \SEP \tau_1 \rar \tau_2
\end{array}
\]

\noindent
Types relate straightforwardly to syntactic types with
the difference that syntactic types contain type identifiers and
semantic types contain type variables. This difference is essential
in that it enables the support for type parameterisation and type
abstraction.

At the core level, a \emph{value environment} ($\VE$) maps value
identifiers ($\id{vid}$) to types and a \emph{type environment}
($\TE$) maps type identifiers ($\id{tid}$) to types.

\begin{figure}\[
  \begin{array}{rcl}
    E = (\TE,\VE,\ME,G) & \in & \Env = \TEnv \times \VEnv \times \MEnv \times \MTEnv \\
    \ME & \in & \MEnv = \Mid \finmap \Mod \\
    M & \in & \Mod = \Env \cup \FunSig \\
    F = \forall T.(E,\Sigma) & \in & \FunSig = \Fin(\TSet) \times \Env \times \MTy \\
    \Sigma = \exists T.M & \in & \MTy = \Fin(\TSet) \times \Mod \\
    G & \in & \MTEnv = \MTid \finmap \MTy
  \end{array}
  \] \caption{Module language semantic objects. Parameterised module types ($F$) and module types ($\Sigma$) are parameterised over finite sets of type variables (written $\Fin(\TSet)$), ranged over by $T$.}
  \label{fig:modules:semobjects}
\end{figure}

The module language semantic objects are shown in
Figure~\ref{fig:modules:semobjects}. The semantic objects constitute a number of
mutually dependent inductive definitions.
An \emph{environment} ($E$) is a quadruple $(\TE,\VE,\ME,G)$ of a type
environment $\TE$, a variable environment $\VE$, a \emph{module environment}
($\ME$), which maps module identifiers to modules, and a \emph{module type
  environment} ($G$), which maps module type identifiers to module types. A
\emph{module} is either an environment $E$, representing a non-parameterised
module, or a \emph{parameterised module type} $F$, which is an object $\forall
T.(E,\Sigma)$, for which the type variables in $T$ are considered bound.
A \emph{module type} ($\Sigma$) is a pair, written $\exists T.M$, of a
set of type variables $T$ and a module $M$. In a module type
$\exists T.M$, type variables in $T$ are considered bound and we
consider module types identical up-to renaming of bound variables and
removal of type variables that do not appear in $M$. When $T$ is
empty, we often write $M$ instead of $\exists \emptyset . M$.
We consider module function types $\forall T.(E,\Sigma)$ identical up-to
renaming of bound type variables and removal of type variables in $T$ that do
not occur free in $(E,\Sigma)$.

When $X$ is some tuple and when $x$ is some identifier, we shall often
write $X(x)$ for the result of looking up $x$ in the appropriate
projected finite map in $X$. Moreover, when long$x$ is some long
identifier, we write $X(\textrm{long}x)$ to denote the lookup in $X$,
possibly inductively through module environments.

\begin{defn}\label{def:modules:env-modification}
When $X$ and $Y$ are finite maps, the \emph{modification} of $X$ by
$Y$, written $X+Y$, is the map with $\Dom(X+Y) = \Dom~X \cup \Dom~Y$
and values
\[
(X + Y)(x) = \left \{ \begin{array}{ll} Y(x) & \textrm{if}~x \in \Dom~Y \\ X(x) & \textrm{otherwise} \end{array} \right .
\]
\end{defn}
The notion of modification is extended pointwise to tuples, as are operations
such as $\Dom$, $\cap$, and $\cup$.

\begin{defn}\label{def:modules:env-ext}
A finite map $X$ \emph{extends} another finite map $X'$, written $X \sqsupseteq
X'$, if $\Dom~X \supseteq \Dom~X'$ and $X(x) = X'(x)$ for all $x \in \Dom~X'$.
\end{defn}

Given a particular kind of environment, such as a module environment $\ME$, we
shall often be implicit about its injection $(\{\},\{\},\ME,\{\})$ into
environments of type $\Env$. Moreover, given an identifier, such as \id{tid},
its class specifies exactly that, given some type $\tau$,
$\{\id{tid}\mapsto\tau\}$ denotes a type environment of type $\TE$, which again,
by the above convention, can be injected implicitly into an environment of type
$\Env$.

As an example, if \texttt{t} is a type identifier, \texttt{a} and \texttt{b} are
value identifiers, and \texttt{A} is a module identifier, we can write
$\{\texttt{t}\mapsto t\} + \{\texttt{A}\mapsto \{\texttt{a}\mapsto t\}\}$ for
specifying the environment $E = (\{\texttt{t}\mapsto t\},\{\},
\{\texttt{A}\mapsto E'\},\{\})$, where $E' = (\{\},\{\texttt{a}\mapsto t\},
\{\},\{\})$ and where $E \sqsupseteq \{\texttt{t}\mapsto t\}$. Moreover, looking
up the long identifier \texttt{A.a} in $E$, written $E(\texttt{A.a})$, yields
$t$.

\subsection{Elaboration}
The elaboration rules for the core language (Figure \ref{fig:modules:elab-core})
illustrate the interaction between the module language and the core language
through the concept of long identifiers.

\begin{figure}

  \titlesembox{Type Expressions}{E \vd \id{ty} : \tau}

  \twopart{
    \fraccn{E(\id{longtid}) = \tau}
           {E \vd \id{longtid} : \tau}
  }{
    \fraccn{E \vd \id{ty}_i : \tau_i  \SP i = [1,2]}
           {E \vd \id{ty}_1 \rar \id{ty}_2 : \tau_1 \rar \tau_2}
  }

  \titlesembox{Core language expressions}{E \vd \id{exp} : \tau}

  \twopart{
    \fraccn{E(\id{longvid}) = \tau}
           {E \vd \id{longvid} : \tau}
  }{
    \fraccn{E \vd \id{exp} : \tau \LSP E \vd \id{ty} : \tau}
           {E \vd \id{exp}~\kw{:}~\id{ty} : \tau}
  }

  \twopart{
    \fraccn{E + \{\id{vid}\mapsto\tau\} \vd \id{exp} : \tau'}
           {E \vd \lambda \id{vid} \rar \id{exp} : \tau \rar \tau'}
  }{
    \fraccn{E \vd \id{exp}_1 : \tau \rar \tau' \LSP E \vd \id{exp}_2 : \tau}
           {E \vd \id{exp}_1~\id{exp}_2 : \tau'}
  }
  \caption{Elaboration rules for the core language.}
  \label{fig:modules:elab-core}
\end{figure}

\begin{figure}
  \titlesembox{Module Types}{E \vd \id{mty} : \Sigma}

  \twopart{
    \fraccn{E(\id{mtid}) = \Sigma}
           {E \vd \id{mtid} : \Sigma}
  }{
    \fraccn{\label{elab.with.rule}E \vd \id{ty} : \tau \LSP E'(\id{longtid}) = t   \LSP   t \in T \\
      E \vd \id{mty} : \exists T.E' \LSP \Sigma = \exists(T \setminus \{t\}).(E'[\tau/t])}
           {E \vd \id{mty} ~\kw{with}~\id{longtid}~\kw{=}~\id{ty} : \Sigma}
  }

  \twopart{
    \fraccn{E \vd \id{spec} : \Sigma}
           {E \vd \cbra{~\id{spec}~} : \Sigma}
  }{
    \fraccn{\label{elab.mty.funct.rule}
      E \vd \id{mty}_1 : \exists T.E' \LSP T \cap (\textrm{tvs}(E) \cup T') = \emptyset \\
      E + \{\id{mid}\mapsto E'\} \vd \id{mty}_2 : \exists T'.M}
           {E \vd \id{mid}~\kw{:}~\id{mty}_1 \rar~\id{mty}_2 : \forall T.(E',\exists T'.M)}
  }

  \titlesembox{Module Specifications}{E \vd \id{spec} : \exists T.E'}

  \twopart{
    \fraccn{}
           {E \vd \kw{type}~\id{tid} : \exists \{\id{t}\} .\{\id{tid} \mapsto \id{t}\}}
  }{
    \fraccn{E \vd \id{ty} : \tau}
          {E \vd \kw{val}~\id{vid}~\kw{:}~\id{ty} : \{\id{vid} \mapsto \tau\}}
  }

  \twopart{
    \fraccn{\label{elab.spec.module.rule}
          E \vd \id{mty} : \exists T.M}
          {E \vd \kw{module}~\id{mid}~\kw{:}~\id{mty} : \exists T .\{\id{mid} \mapsto M\}}
  }{
    \fraccn{E \vd \id{mty} : \exists T.E'}
          {E \vd \kw{include}~\id{mty} : \exists T.E'}
  }

  \twopart{
    \fraccn{\label{elab.spec.seq.rule}
            E \vd \id{spec}_1 : \exists T_1.E_1 \LSP E + E_1 \vd \id{spec}_2 : \exists T_2.E_2 \\
            T_1 \cap (\textrm{tvs}(E) \cup T_2) = \emptyset \LSP \Dom~E_1 \cap \Dom~E_2 = \emptyset}
          {E \vd \id{spec}_1 ~\id{spec}_2 : \exists(T_1 \cup T_2).(E_1 + E_2)}
  }{
    \fraccn{}
           {E \vd \eps : \{\}}
  }

  \caption{Elaboration rules for module types and module
    specifications. This sub-language does not directly depend
    on the rules for module expressions and module
    declaration.}
  \label{fig:modules:elab-mty}
\end{figure}

Elaboration of module types and specifications is defined as a mutual
inductive relation allowing inferences among sentences of the forms $E
\vdash \id{mty}:\Sigma$ and $E \vdash \id{spec}:\exists T.E'$. The rules are
presented in Figure~\ref{fig:modules:elab-mty}.
There is a subtle difference between module type expressions
($\id{mty}$) and specifications ($\id{spec}$). Whereas module type
expressions may elaborate to parameterised module types,
specifications only elaborate to non-parameterised module types, which
may, however, contain parameterised modules inside. Thus,
in the specification rule for including module types, we require that the included
module is a non-parameterised module type.

An essential aspects of the semantic technique is that of requiring, for
instance, that the sets $T_1$ and $T_2$ in Rule \Ref{elab.spec.seq.rule} are
disjoint. This property can always be satisfied by $\alpha$-renaming, which is
applied often (and in the Coq implementation, explicitly) when proving
properties of the language (see Example \ref{ex:modules:ftor-app-alpha-equiv}
and Remark \ref{rem:modules:alpha-equiv} in Section \ref{subsec:modules:nominal-in-coq}).

\begin{figure}
\titlesembox{Module Expressions}{E \vd \id{mexp} : \Sigma}

\twopart{
\fraccn{E \vd \id{mdec} : \Sigma}{E \vd \cbra{~\id{mdec}~} : \Sigma}
}{
\fraccn{E \vd \id{mexp} : \exists T.E' \\ E'(\id{mid}) = E''}
       {E \vd \id{mexp}\,\kw{.}\,\id{mid} : \exists T.E''}
}

\twopart{
\fraccn{\label{elab.mexp.funct.rule}
  E \vd \id{mty} : \exists T.E' \\ E + \{\id{mid} \mapsto E'\} \vd \id{mexp} : \Sigma \\
  T \cap \textrm{tvs}(E) = \emptyset \LSP F = \forall T. (E',\Sigma)}
      {E \vd \kw{funct}~\id{mid}~\kw{:}~\id{mty}~\RAR~\id{mexp} : \exists \emptyset.F}
}{
\fraccn{E(\id{mid}) = E'}{E \vd \id{mid} : E'}
}

\onepart
{
  \fraccn{\label{elab.mexp.app.rule}
    E \vd \id{mexp} : \exists T.E' \LSP T \cap T' = \emptyset \\
    E(\id{longmid}) \geq (E'',\exists T'.E''') \\ E' \succ E'' \LSP (T\cup T') \cap \textrm{tvs}(E) = \emptyset}
         {E \vd \id{longmid}\,\para{~\id{mexp}~} : \exists (T \cup T').E'''}
}

\titlesembox{Module Declarations}{E \vd \id{mdec} : \exists T.E'}

\twopart{
  \fraccn{\label{elab.mdec.dec.rule}\id{mdec} = \id{dec} \LSP E \vd \id{dec} : E'}{E \vd \id{mdec} : E'}
}{
  \fraccn{E \vd \id{ty} :\tau}
      {E \vd \kw{type}~\id{tid}~\kw{=}~\id{ty} : \{\id{tid} \mapsto \tau\}}
}

\onepart{
  \fraccn{\label{elab.mdec.module.rule}
         E \vd \id{mexp} : \exists T.M}
         {E \vd \kw{module}~\id{mid}~\kw{=}~\id{mexp} : \exists T.\{\id{mid} \mapsto M\}}
}

\onepart{
  \fraccn{E \vd \id{mexp} : \Sigma}
         {E \vd \kw{open}~\id{mexp} : \Sigma}
}

\twopart{
  \fraccn{\label{elab.mexp.moduletype.rule}
         E \vd \id{mty} : \Sigma}
         {E \vd \kw{module}~\kw{type}~\id{mtid}~\kw{=}~\id{mty} : \exists \emptyset.\{\id{mtid} \mapsto \Sigma\}}
}{
  \fraccn{}
         {E \vd \eps : \{\}}
}

\onepart{\label{rule:modules:elab-mdec-seq}
  \fraccn{T_1 \cap (\textrm{tvs}(E) \cup T_2) = \emptyset \\
          E \vd \id{mdec}_1 : \exists T_1.E_1 \LSP E + E_1 \vd \id{mdec}_2 : \exists T_2.E_2}
         {E \vd \id{mdec}_1 ~\id{mdec}_2 : \exists(T_1 \cup T_2).(E_1 + E_2)}
}

\caption{Elaboration rules for module language expressions and declarations.}
\label{fig:module:elab-mdec}
\end{figure}

The elaboration rules for module language expressions and declarations
are given in Figure~\ref{fig:module:elab-mdec} and allow inferences among
sentences of the forms $E \vdash \id{mdec} : \Sigma$ and $E \vdash \id{mexp}
: \exists T.E$. The rules make use of the previously introduced rules
for module type expressions and core language declarations and
types. Similarly to the elaboration difference between module type
expressions and specifications, module expressions may elaborate to
general module types, of the form $\exists T.M$, whereas module
declarations elaborate to non-parameterised module types of the form
$\exists T.E$.

The by far most complicated rule is Rule~\Ref{elab.mexp.app.rule},
the rule for application of a parameterised module. The rule looks up a
parameterised module type $\forall T_0.(E_0,\Sigma_0)$ for the long module
identifier in the environment and seeks to match the parameter module type
$\exists T_0.E_0$ against a cut-down version (according to the enrichment
relation) of the module type resulting from elaborating the argument module
expression. The result of elaborating the application is the result module type
perhaps with additional abstract type variables stemming from elaborating the
argument module expression. The need for also quantifying over the type set $T$ in
the result module type comes from the desire to prove a property that if $E
\vdash \id{mexp} : \exists T.E'$ then ${\textrm{tvs}}(E') \subseteq
       {\textrm{tvs}}(E) \cup T$.

\subsection{Enrichment}
Next, we introduce a notion of \emph{enrichment}. Intuitively, the enrichment
relation for semantic objects is a generalised version of environment extension
(Definition \ref{def:modules:env-ext}). We write $E' \succ E$ for $E'$ enriches
$E$ meaning that $E'$ contains the same elements as or more elements then $E$.
The formal specification of the enrichment relation is given in Figure
\ref{fig:modules:enrich}.

Notice that enrichment for parameterised modules is contravariant in parameter
environments. Notice also the special treatment of module type
environments. Because a module type cannot specify bindings of module types, we
can safely require that when $E' \succ E$, the module type environment in $E$ is
empty.
\begin{figure}
  \begin{equation*}
    \begin{gathered}
    \inferrule{E' = (\TE',\VE',\ME',G') \\ E = (\TE,\VE,\ME,\{\}) \\\\
               \VE' \sqsupseteq \VE \\ \TE' \sqsupseteq \TE  \\ \ME' \succ \ME)}
              {E' \succ E}\\    \inferrule{\Dom~\ME' \supseteq \Dom~\ME \\
                \forall \id{mid} \in \Dom~\ME.~ \ME'(\id{mid}) \succ \ME(\id{mid})}
              {\ME' \succ \ME} \\
   \inferrule{M' = E' \\ M = E \\ E' \succ E \\ }
            {M'\succ M}\\
  \inferrule{M' = \forall T'.(E',\Sigma') \\ M = \forall T.(E,\Sigma) \\\\
             T'=T \\ E \succ E' \\ \Sigma' \succ \Sigma}
            {M' \succ M} \\
  \inferrule{M' \succ M \\ \Sigma' = \exists T.M' \\ \Sigma = \exists T.M}
            {\Sigma' \succ \Sigma}
    \end{gathered}
  \end{equation*}
  \caption{The enrichment relation.}
  \label{fig:modules:enrich}
\end{figure}

\subsection{Target Language}

We assume a denumerably infinite set $\LSet$ of \emph{labels}, ranged over by
$l$. Target expressions are basically identical to core
level expressions with the modification that value identifiers are
replaced with labels. For the simple core language that we are
considering, \emph{target expressions} ($\id{ex}$) and \emph{target
  code} ($c$) take the form:
\[
\begin{array}{lcl}
  \id{ex} & ::= & l \SEP \lambda l \rar \id{ex} \SEP \id{ex}_1~\id{ex}_2 \\
  c & ::= & \kw{val}~l~\kw{=}~\id{ex} \SEP c_1~\kw{;}~c_2 \SEP \eps
\end{array}
\]

The type system for the target language is simple (for the purpose of
this work) and allows inferences among sentences of the forms $\Gamma
\vd \id{ex} : \tau$ and $\Gamma \vd c : \Gamma'$, which are read: ``In
the context $\Gamma$, the expression $\id{ex}$ has type $\tau$'' and
``in the context $\Gamma$, the target code $c$ declares the context
$\Gamma'$. Contexts $\Gamma$ map labels to types. The type system for
the target language is presented in Figure~\ref{target.fig}.

\begin{figure}
  \twopart{
    \fraccn{\Gamma \vd c_1 : \Gamma_1 \LSP \Gamma + \Gamma_1 \vd c_2 : \Gamma_2}
           {\Gamma \vd c_1 ~\kw{;}~c_2 : \Gamma_1 + \Gamma_2}
  } {
    \fraccn{}{\Gamma \vd \eps : \{\}}
  }

  \onepart
  {
    \fraccn{\Gamma \vd \id{ex} : \tau }{\Gamma \vd \kw{val}~l~\kw{=}~\id{ex} : \{l \mapsto \tau\}}
  }
  \caption{Type rules for the target language. For the purpose of
    the presentation, the target language is simple and mimics closely
    the source language with the difference that long identifiers are
    replaced with labels for referring to previously defined value
    declarations.}
  \label{target.fig}
\end{figure}

\subsection{Interpretation Objects}

In the following, we shall use the term \emph{name} to refer to either
a type variable $t$ or a label $l$. We write $\NSet$ to refer to the
disjoint union of $\TSet$ and $\LSet$. Moreover, we use $N$ to range
over finite subsets of $\NSet$.

An \emph{interpretation value environment} ($\VEc$) maps
value identifiers to a label and an associated type.
An \emph{interpretation environment} ($\Ec$) is a quadruple
$(\TE,\VEc,\MEc,G)$ of a type environment, an interpretation value
environment, an interpretation module environment, and a module type
environment. An \emph{interpretation module environment} ($\MEc$) maps
module identifiers to module interpretations. A \emph{module
  interpretation} ($\Mc$) is either an interpretation environment
$\Ec$ or a functor closure $\Phi$. A \emph{functor closure} ($\Phi$)
is a triple $(\Ec,F,\lambda\id{mid}\Rar\id{mexp})$ of an
interpretation environment, a parameterised module type, and a
representation of a parameterised module expression. Finally, an
\emph{interpretation target object} ($\exists N.(\Ec,c)$) is a triple
of a name set, an interpretation environment, and a target code
object.

\subsection{Interpretation Erasure}

For establishing a link between interpretation objects and elaboration
objects, we introduce the concept of interpretation erasure.  Given an
interpretation object $O$, we define the \emph{interpretation erasure}
of $O$, written $\overline{O}$, as follows:
\[
\begin{array}{rcl}
  \overline{(\TE,\VEc,\MEc,G)} & = & (\TE,\overline{\VEc},\overline{\MEc},G) \\
  \overline{(\Ec,F,\lambda\id{mid}\Rar\id{mexp})} & = & F \\
  \overline{\{\id{vid}_i\mapsto l_i:\tau_i\}^n} & = & \{\id{vid}_i\mapsto\tau_i\}^n
\end{array}
\]
\[
\begin{array}{rcl}
  \overline{\{\id{mid}_i\mapsto \Mc_i\}^n} & = & \{\id{mid}_i\mapsto\overline{\Mc_i}\}^n \\
  \overline{\exists N.(\Ec,c)} & = & \exists (\TSet \cap N).\overline{\Ec}
\end{array}
\]

\subsection{Core Language Compilation}

Core language expressions and declarations are compiled into target
language expressions and declarations, respectively. The rules
specifying the compilation allow inferences among sentences of the
forms (1) $\Ec \vdash \id{exp} \Rar \id{ex},\tau$ and (2) $\Ec \vdash
\id{dec} \Rar \exists N.(\Ec',c)$. The rules are given in
Figure~\ref{core.comp.fig}.

\begin{figure}
\titlesembox{Compiling Expressions}{\Ec \vd \id{exp} \Rar \id{ex},\tau}

\twopart{
  \fraccn{\Ec(\id{longvid}) = (l,\tau)}
         {\Ec \vd \id{longvid} \Rar l,\tau}
}{
  \fraccn{\Ec + \{\id{vid} \mapsto (l,\tau)\} \vd \id{exp} \Rar \id{ex}, \tau'}
         {\Ec \vd \lambda \id{vid} \rar \id{exp} \Rar \lambda l \rar \id{ex}, \tau \rar \tau'}
}

\onepart{
  \fraccn{\Ec \vd \id{exp}_1 \Rar \id{ex}_1, \tau\rar\tau' \LSP \Ec \vd \id{exp}_2 \Rar \id{ex}_2, \tau}
         {\Ec \vd \id{exp}_1 \id{exp}_2 \Rar \id{ex}_1~\id{ex}_2, \tau'}
}

\onepart{
  \fraccn{\Ec \vd \id{exp} \Rar \id{ex}, \tau \LSP \overline{\Ec} \vd \id{ty} : \tau}
         {\Ec \vd \id{exp} ~\kw{:}~ \id{ty} \Rar \id{ex}, \tau}
}

\titlesembox{Compiling Declarations}{\Ec \vd \id{dec} \Rar \exists N.(\Ec',c)}

\onepart{\label{rule:modules:comp-decl}
  \fraccn{\Ec \vd \id{exp} \Rar \id{ex},\tau \LSP l \not \in \textrm{names}(\Ec)}
         {\Ec \vd \kw{val}~\id{vid}~\kw{=}~\id{exp} \Rar
           \exists \{l\}.(\{\id{vid}\mapsto(l,\tau)\},\kw{val}~l~\kw{=}~\id{ex})}
}
\caption{Core language compilation.}
\label{core.comp.fig}
\end{figure}

The rules track type information and it is
straightforward to establish the following property of the
compilation:

\begin{lemma}
  If $\Ec \vd \id{dec} \Rar \exists N .(\Ec',c)$ then $\overline{\Ec}
  \vd \id{dec} : \overline{\exists N.(\Ec',c)}$.
\end{lemma}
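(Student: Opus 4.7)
The plan is to proceed by inversion on the compilation judgment, which, because the core language has only one declaration form $\kw{val}~\id{vid}~\kw{=}~\id{exp}$, is completely determined by the single rule for compiling declarations in Figure~\ref{core.comp.fig}. Inversion yields a label $l \notin \textrm{names}(\Ec)$, a type $\tau$, and a target expression $\id{ex}$ with $\Ec \vd \id{exp} \Rar \id{ex},\tau$, so that the compiled object has the specific shape $\exists\{l\}.(\{\id{vid}\mapsto(l,\tau)\},\kw{val}~l~\kw{=}~\id{ex})$. Computing its erasure is then immediate: since $l\in\LSet$, we have $\TSet\cap\{l\}=\emptyset$, and the erasure collapses to the value environment $\{\id{vid}\mapsto\tau\}$ (using the convention $\exists\emptyset.E' = E'$). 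Thus the goal reduces to showing $\overline{\Ec}\vd\kw{val}~\id{vid}~\kw{=}~\id{exp}: \{\id{vid}\mapsto\tau\}$, which follows directly by the corresponding elaboration rule for core declarations, provided we can supply $\overline{\Ec}\vd\id{exp}:\tau$.

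The real work is therefore in establishing the expression-level companion lemma:
\[
\text{if } \Ec \vd \id{exp} \Rar \id{ex},\tau \text{ then } \overline{\Ec}\vd\id{exp}:\tau.
\]
I would prove this by induction on the derivation of $\Ec\vd\id{exp}\Rar\id{ex},\tau$, following the four expression compilation rules. The variable case uses that $\Ec(\id{longvid})=(l,\tau)$ entails $\overline{\Ec}(\id{longvid})=\tau$ by definition of erasure on (interpretation) value environments, extended pointwise through module environments for long identifiers; the abstraction case extends the environment with $\{\id{vid}\mapsto(l,\tau)\}$, whose erasure is $\{\id{vid}\mapsto\tau\}$, so the induction hypothesis in the extended environment yields exactly the premise of the elaboration rule for $\lambda$; application is by direct appeal to both inductive hypotheses; and the ascription case already contains the premise $\overline{\Ec}\vd\id{ty}:\tau$ in its compilation rule, so we simply combine it with the induction hypothesis.

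The main obstacle I expect is bookkeeping around the erasure function rather than any deep argument: one must verify that erasure commutes with the environment operations used in compilation (notably with lookup of long identifiers through nested module environments, and with the modification operation $+$ of Definition~\ref{def:modules:env-modification} used when extending with bindings for bound variables). These are straightforward but require unfolding the mutually inductive definitions of $\overline{\Ec}$, $\overline{\VEc}$, and $\overline{\MEc}$. Once these commutation facts are in place, the induction goes through mechanically, and the declaration-level lemma is then an immediate corollary of the single-case inversion together with the expression lemma.
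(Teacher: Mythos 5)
Your proof is correct and fills in exactly the argument the paper leaves implicit: the paper states this lemma without proof, remarking only that ``the rules track type information,'' and your decomposition---inversion on the single declaration-compilation rule, computation of the erasure (using $\TSet \cap \LSet = \emptyset$), and the companion lemma for expressions proved by induction on the compilation derivation with erasure commuting with lookup and environment modification---is precisely the intended straightforward argument. No gaps.
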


\subsection{Environment Filtering}

Corresponding to the notion of enrichment for elaboration, we
introduce a notion of filtering for the purpose of static
interpretation, which filters interpretation environments to contain
components as specified by an elaboration environment. Filtering is
essential to the interpretation rule for applications of parameterised
modules and is defined mutual inductively based on the structure of
elaboration environments and elaboration module environments.

More formally, the \emph{filtering} of an interpretation environment $\Ec$ to an
elaboration environment $E$ results in another interpretation environment $\Ec'$
with only elements from $\Ec$ that are also present in $E$. The filtering
relation is defined by a number of inference rules that allow inferences among
sentences of the forms (1) $\vd \Ec :: E \Rar \Ec'$, (2) $\vd \VEc :: \VE \Rar
\VEc'$, (3) $\vd \MEc :: \ME \Rar \MEc'$, and (4) $\vd \Mc :: M \Rar \Mc'$. The
inference rules for filtering are presented in Figure~\ref{filtering.fig}.  It
is a straightforward to establish the connection between filtering and enrichment.
\begin{lemma}[Filtering to enrichment]\label{lem:modules:filt-enriches}
  It is a straightforward exercise to demonstrate that if
  $\vd \Ec :: E \Rar \Ec'$ then it holds that $\overline{\Ec} \succ E$.
\end{lemma}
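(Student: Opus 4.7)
The plan is to prove the four filtering judgments and the enrichment claim simultaneously by mutual induction on the derivation of $\vd \Ec :: E \Rar \Ec'$, with corresponding auxiliary statements: if $\vd \VEc :: \VE \Rar \VEc'$ then $\overline{\VEc} \sqsupseteq \VE$; if $\vd \MEc :: \ME \Rar \MEc'$ then $\overline{\MEc} \succ \ME$; and if $\vd \Mc :: M \Rar \Mc'$ then $\overline{\Mc} \succ M$. Strengthening the statement in this way is essential because the definition of enrichment is itself mutually recursive across these components, and because the filtering rules proceed by descending into exactly these constituents.

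First I would dispatch the case of environments $\Ec = (\TE,\VEc,\MEc,G)$ against $E = (\TE_0,\VE,\ME,\{\})$: the filtering rule projects out matching $\TE$, $\VEc$, and $\MEc$ components (and forces the module type environment of $E$ to be empty, matching the constraint on $\succ$ from Figure \ref{fig:modules:enrich}); the required inclusions $\TE \sqsupseteq \TE_0$ and $\VE \sqsupseteq \VE_0$ come from the value/type filtering premises, and $\overline{\MEc} \succ \ME$ follows from the inductive hypothesis on the module environment sub-derivation. The erasure commutes with the component-wise structure, namely $\overline{(\TE,\VEc,\MEc,G)} = (\TE,\overline{\VEc},\overline{\MEc},G)$, so assembling the pieces gives $\overline{\Ec} \succ E$. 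The $\VEc$ case is immediate from the fact that erasure drops only the label, preserving the $\id{vid}\mapsto\tau$ mapping, and the $\MEc$ case follows from the pointwise definition of $\succ$ on module environments together with the inductive hypothesis on each $\id{mid}$.

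The interesting case is the module case $\vd \Mc :: M \Rar \Mc'$. When $M = E$ is an environment, we appeal to the environment case inductively. When $M = \forall T.(E_0,\Sigma_0)$ is parameterised, the filtering rule must match $\Mc = \Phi = (\Ec_0,F,\lambda\id{mid}\Rar\id{mexp})$ against $F$; by the erasure definition $\overline{\Phi} = F$, so the required enrichment $\overline{\Phi} \succ M$ reduces to showing that the two functor signatures coincide up to $\alpha$-equivalence of their bound type variables, which should be an immediate consequence of how the filtering rule for functor closures is stated. The main obstacle will be precisely this contravariant point: enrichment on parameterised modules is contravariant in the parameter environment, so the filtering rule on functor closures must demand equality (or enrichment in the reverse direction) on parameter signatures, not merely inclusion; this must be checked carefully against the actual premises of the rule and, if necessary, the side condition about disjointness of $T$ from free type variables handled via $\alpha$-conversion.

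Finally I would observe that since filtering is not ``creating'' components but only selecting from $\Ec$, no new components appear in the target of $\succ$, and all domain/value obligations transfer directly from the premises. The lemma is labelled ``straightforward'' precisely because the filtering relation has been designed to mirror the clauses of $\succ$; the content of the proof is a bookkeeping exercise that unpacks the erasure operator and applies the induction hypotheses at each node, with the functor-closure case being the only point that requires attention to the contravariance of $\succ$.
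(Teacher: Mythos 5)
Your proposal is correct and is exactly the argument the paper is gesturing at: the paper offers no proof at all (the lemma is stated as ``a straightforward exercise''), and the mutual induction over the four filtering judgments, pushing the erasure through component-wise and reading off the enrichment clauses from the rule premises, is the intended bookkeeping. One small correction: in the functor-closure case the erasure $\overline{(\Ec,F',\lambda\id{mid}\Rar\id{mexp})}$ returns the \emph{stored} signature $F'$, and the filtering rule carries $F' \succ F$ as an explicit premise, so the contravariance worry you raise dissolves immediately --- there is no residual obligation about signatures coinciding up to $\alpha$-equivalence.
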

\begin{figure}

  \titlesembox{Environments}{\vd \Ec :: E \Rar \Ec'}

  \onepart{
    \fraccn{\vd \VEc :: \VE \Rar \VEc' \\
      \vd \MEc::\ME \Rar \MEc' \LSP \TE \succ \TE'}
           {\vd (\TE,\VEc,\MEc,\{\}) :: (\TE',\VE,\ME,\{\}) \Rar (\TE',\VEc',\MEc',\{\})}
  }

  \titlesembox{Value Environments}{\vd \VEc :: \VE \Rar \VEc'}

  \onepart{
    \fraccn{m \geq n}
           {\vd \{\id{vid}_i\mapsto l_i:\tau_i\}^m :: \{\id{vid}_i\mapsto\tau_i\}^n \Rar \{\id{vid}_i\mapsto l_i:\tau_i\}^n}
  }

  \titlesembox{Module Environments}{\vd \MEc :: \ME \Rar \MEc'}

  \onepart{
    \fraccn{m \geq n \LSP \vd \Mc_i :: M_i \Rar \Mc_i' \SP i=1..n}
          {\vd \{\id{mid}_i\mapsto \Mc_i\}^m :: \{\id{mid}_i\mapsto M_i\}^n \Rar \{\id{mid}_i\mapsto \Mc_i'\}^n}
  }

  \titlesembox{Module Interpretations}{\vd \Mc :: M \Rar \Mc'}

  \[
    \fraccn{\Mc = \Ec \LSP \vd \Ec :: E \Rar \Ec'}{\vd \Mc :: E \Rar \Ec'}
  \]
  \[
    \fraccn{\Phi = (\Ec,F',\lambda \id{mid} \Rar \id{mexp}) \LSP F' \succ F}
           {\vd \Phi :: F \Rar (\Ec,F,\lambda \id{mid} \Rar \id{mexp})}
  \]
  \caption{Filtering relation specifying how an interpretation environment can be constrained by an elaboration environment to form a restricted interpretation environment.}
  \label{filtering.fig}
\end{figure}

\subsection{Static Interpretation Rules}

Static interpretation of the module language is defined by a number
of mutually inductive inference rules allowing inferences among
sentences of the forms (1) $\Ec \vd \id{mexp} \Rar \Psi$ and (2) $\Ec
\vd \id{mdec} \Rar \exists N.(\Ec',c)$, which state that in an
interpretation environment $\Ec$, static interpretation of a module
expression \id{mexp} results in an interpretation target object
$\Psi$, and static interpretation of a module declaration \id{mdec}
results in an interpretation target object $\exists N.(\Ec',c)$. The
rules for static interpretation are presented in
Figure~\ref{interp.fig}.

\begin{figure}

  \titlesembox{Module Expressions}{\Ec \vd \id{mexp} \Rar \Psi}

  \twopart{
    \fraccn{\Ec \vd \id{mdec} \Rar \exists N.(\Ec',c)}
           {\Ec \vd \cbra{~\id{mdec}~} \Rar \exists N.(\Ec',c)}\label{rule:modules-sint-mdec}
  }{
    \fraccn{\Ec(\id{mid}) = \Ec'}
           {\Ec \vd \id{mid} \Rar \exists \emptyset .(\Ec',\eps)}\label{rule:modules-sint-mid}
  }

  \onepart{
    \fraccn{\Ec'(\id{mid}) = \Ec'' \\ \Ec \vd \id{mexp} \Rar \exists N.(\Ec',c)}
           {\Ec \vd \id{mexp}\,\kw{.}\,\id{mid} \Rar \exists N .(\Ec'',c)}
           \label{rule:modules-sint-proj}
  }

  \onepart{
    \fraccn{\overline{\Ec} \vd \id{mty} : \exists T.E \LSP T \cap \textrm{names}(\Ec) = \emptyset \\
      \overline{\Ec} + \{\id{mid} \mapsto E\} \vd \id{mexp} : \Sigma \LSP F = \forall T. (E,\Sigma) \\
      \Phi = (\Ec,F,\lambda \id{mid} \Rar\id{mexp})}
           {\Ec \vd \kw{funct}~\id{mid}~\kw{:}~\id{mty}~\RAR~\id{mexp} \Rar \exists \emptyset .\Phi}
           \label{rule:modules-sint-funct}
  }

  \onepart{
    \fraccn{\Ec \vd \id{mexp} \Rar \exists N.(\Ec',c) \LSP (N\cup N') \cap \textrm{names}(\Ec) = \emptyset \LSP N \cap N' = \emptyset \\
      \Ec(\id{longmid}) = (\Ec_0,F,\lambda \id{mid}\Rar\id{mexp}') \LSP
      F \geq (E',\exists T'.E'') \LSP T' \subseteq N' \\
      \vd \Ec' :: E' \Rar \Ec'' \LSP
      \Ec_0 + \{\id{mid}\mapsto\Ec''\}\vd \id{mexp}' \Rar \exists N'.(\Ec''',c')}
           {\Ec \vd \id{longmid}\para{~\id{mexp}~} \Rar \exists (N \cup N').(\Ec''',c~\kw{;}~c')}
           \label{rule:modules-sint-funct-app}
  }

  \titlesembox{Module declarations}{\Ec \vd \id{mdec} \Rar \exists N.(\Ec',c)}

  \onepart{
    \fraccn{\id{mdec} = \id{dec} \LSP \Ec \vd \id{dec} \Rar \exists N.(\Ec',c)}
           {\Ec \vd \id{mdec} \Rar \exists N .(\Ec',c)}
  }\label{rule:modules-sint-mdec}

  \onepart{
    \fraccn{\overline{\Ec}\vd \id{ty} :\tau}
           {\Ec \vd \kw{type}~\id{tid}~\kw{=}~\id{ty} \Rar \exists \emptyset . (\{\id{tid} \mapsto \tau\},\eps)}\label{rule:modules-sint-type}
  }

  \onepart{
    \fraccn{\Ec \vd \id{mexp} \Rar \exists N.(\Phi,c)}
          {\Ec \vd \kw{module}~\id{mid} ~\kw{=}~\id{mexp} \Rar \exists N.(\{\id{mid} \mapsto \Phi\},c)}
  }\label{rule:modules-sint-mod}

  \onepart{
    \fraccn{\overline{\Ec} \vd \id{mty} : \Sigma \LSP \Ec' = \{\id{mtid} \mapsto \Sigma\}}
           {\Ec \vd \kw{module}~\kw{type}~\id{mtid}~\kw{=}~\id{mty} \Rar \exists \emptyset .(\Ec',\eps)}\label{rule:modules-sint-mod-type}
  }

  \onepart{
    \fraccn{\Ec \vd \id{mexp} \Rar \Psi}
           {\Ec \vd \kw{open}~\id{mexp} \Rar \Psi}
  }\label{rule:modules-sint-open}

  \onepart{
    \fraccn{\Ec \vd \id{mdec}_1 \Rar \exists N_1.(\Ec_1,c_1) \\
      \Ec + \Ec_1 \vd \id{mdec}_2 \Rar \exists N_2.(\Ec_2,c_2) \\
      N_1 \cap (\textrm{names}(\Ec) \cup N_2) = \emptyset}
           {\Ec \vd \id{mdec}_1 ~\id{mdec}_2 \Rar \exists(N_1 \cup N_2).(\Ec_1 + \Ec_2, c_1~\kw{;}~c_2)}
  }\label{rule:modules-sint-seq}

  \onepart{
    \fraccn{}
           {\Ec \vd \eps \Rar \exists \emptyset.(\{\},\eps)}
  }\label{rule:modules-sint-empty}

  \caption{Static interpretation rules for module expressions and module declarations.}
  \label{interp.fig}
\end{figure}

\subsection{Static Interpretation Normalisation}\label{subset:modules:static-int-norm}
The proof technique we use to prove the static interpretation normalisation
property is similar to the one we showed in Section
\ref{sec:modules:stlc-norm}.  That is, we first define an appropriate logical
relation (Figure \ref{fig:modules:consistency}), which we call the
\emph{consistency relation}, and based on this relation, we can state a
property establishing that static interpretation is possible and
terminates for all elaborating programs. We consider only the termination
property of static interpretation, since in the current Coq formalisation we have
implemented it in this form. In general, one would like to add the type
soundness property, which states that target programs are appropriately typed. We leave
this property for future extensions of our formalisation.

\begin{figure}
  \begin{equation*}
    \begin{gathered}
    \inferrule{\Ec = (\TE,\VEc,\MEc,G) \\
            \VE \models \VEc \\ \ME \models \MEc}
              {(\TE,\VE,\ME,G) \models \Ec}\\
    \inferrule{\Dom \ME = \Dom \MEc \\
      \forall \id{mid} \in \Dom \ME, \ME(\id{mid}) \models \MEc(\id{mid})}
      {\ME \models \MEc}\\
    \inferrule{
            \Dom \VE = \Dom \VEc \\
      \forall x \in \Dom \VE, \tau = \VE(x) \wedge (l,\tau) = \VEc(x)}
              {\VE \models \VEc}\\
    \inferrule{M = E \\ \Mc = \Ec \\ E \models \Ec}
           {M \models \Mc}\\
    \inferrule{
      M = \forall T.(E,\exists T'.M') \\
      \Mc = (\Ec,\forall T.(E,\exists T'.M'),\lambda{mid} =>\id{mexp}) \\
      (\forall \Ec',
      E \models \Ec' ~~\Longrightarrow~~
        \exists N', \Mc', c.~~
        (\Ec + \{\id{mid}\mapsto\Ec'\}) \vdash \id{mexp} \Rightarrow \exists N'.(\Mc',c)
        \wedge M' \models \Mc'}
           {M \models_S \Mc}
    \end{gathered}
  \end{equation*}
  \caption{Type consistency logical relation.}
  \label{fig:modules:consistency}
\end{figure}

Before we state the theorem and sketch its proof, let us formulate auxiliary
lemmas, related to properties of relations and operations involved in the
definition of the type consistency relation.

\begin{lemma}[Lookup consistency]\label{lem:modules:consistent-lookup}
  If looking up a module for some $mid$ in any semantic object $E$ gives a
  module $M$, and $E$ is related to some $\Ec$ by the consistency relation (Figure
  \ref{fig:modules:consistency}), then looking up for the same $mid$ in the
  interpretation environment $\Ec$ must give some module interpretation $\Mc$,
  such that $M$ is consistent with $\Mc$.

  That is, if $E \models \Ec$, and $E(mid)$, then
  $\exists \Mc.~\Ec(mid) = \Mc \wedge M \models \Mc$.
\end{lemma}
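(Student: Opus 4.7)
The plan is to prove this by direct unfolding of the consistency relation defined in Figure~\ref{fig:modules:consistency}, propagating the hypothesis through the structure of the environments until we can read off the required module interpretation $\Mc$ from $\MEc$. The statement is essentially a projection lemma: the consistency of the ambient environments forces pointwise consistency of the module components, and lookup on the semantic side witnesses exactly which pointwise instance we need.

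First I would decompose $E$ as $(\TE,\VE,\ME,G)$ and $\Ec$ as $(\TE',\VEc,\MEc,G')$. By inversion on $E \models \Ec$, using the top rule of Figure~\ref{fig:modules:consistency}, I obtain $\TE = \TE'$, $G = G'$, and crucially $\ME \models \MEc$. Next, since $E(\id{mid}) = M$ is by convention lookup through the module environment projection $\ME$, this gives $\id{mid} \in \Dom(\ME)$ and $\ME(\id{mid}) = M$. Now I invert $\ME \models \MEc$ using the second rule of Figure~\ref{fig:modules:consistency}: this yields $\Dom(\ME) = \Dom(\MEc)$, hence $\id{mid} \in \Dom(\MEc)$, so we may define $\Mc \defeq \MEc(\id{mid})$, which in turn gives $\Ec(\id{mid}) = \Mc$ by the same injection convention. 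The second conjunct of the inversion is the universal statement $\forall \id{mid} \in \Dom(\ME).\ \ME(\id{mid}) \models \MEc(\id{mid})$, which instantiated at our $\id{mid}$ gives exactly $M \models \Mc$.

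The main obstacle, such as it is, lies in reconciling the statement with the two-case shape of module consistency: the relation on modules in Figure~\ref{fig:modules:consistency} splits into an environment case ($M \models \Mc$ with $M = E$, $\Mc = \Ec$) and a parameterised-module case (notationally $M \models_S \Mc$). Because we do not know a priori whether the module stored at $\id{mid}$ is an environment or a functor closure, the conclusion ``$M \models \Mc$'' should be read as the overall module consistency relation uniformly. The inversion on $\ME \models \MEc$ delivers precisely whichever of the two sub-relations applies to the particular $\ME(\id{mid})$ and $\MEc(\id{mid})$, so nothing further is required.

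If the lemma is later invoked with a long identifier $\id{longmid}$ rather than a single $\id{mid}$, a straightforward strengthening by induction on the structure of $\id{longmid}$ suffices: the base case is the argument above, and the inductive step uses the fact that the intermediate lookup $E(\id{mid}_0)$ must produce an environment (otherwise the outer lookup would be undefined), so we can apply the induction hypothesis inside that sub-environment using the consistency witness delivered for that component.
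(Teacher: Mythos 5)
Your proof is correct and matches the argument the paper leaves implicit (the lemma is stated without a written proof and is discharged directly in the Coq development): one inverts $E \models \Ec$ to obtain $\ME \models \MEc$, then inverts that to get domain equality and the pointwise consistency condition instantiated at $\id{mid}$, taking $\Mc \defeq \MEc(\id{mid})$. Your closing remark about strengthening to long identifiers by induction on their structure is also apt, since the lemma is in fact invoked at a $\id{longmid}$ in the functor-application case of Theorem \ref{norm.prop}.
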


\begin{lemma}[Uniformness]\label{lem:modules:uniform-modules}
  This lemma can be seen as some an inversion principle for the consistency
  relation: if we know something about the shape of a module, we know what
  the shape of a corresponding module interpretation is.

  That is, if $M$ is a module, and $M$ is consistent with some module
  interpretation $\Mc$, then:

  \begin{enumerate}[(i)]
  \item if $M$ is a non-parameterised module, then $\Mc$ is also a
    non-parameterised module interpretation for some interpretation environment
    $\Ec$. That is, if $M = E$, and $M \models \Mc$, then $\exists \Ec. \Mc =
    \Ec$.
  \item if $M$ is a functor, then $\Mc$ is a functor closure, which is
    consistent with the module $M$. That is,
    if $M = \forall T.(E,\exists T'.M')$ and $M \models \Mc$, then
    $\Mc = (\Ec,\forall T.(E,\Sigma),\lambda mid => mexp)$, and
    the corresponding consistency condition holds (we write $\Longrightarrow$ for implication ):
    \begin{align*}
    \forall \Ec',
      E \models \Ec' ~~&\Longrightarrow~~\\
        \exists N', \Mc', c.&~~
        ((\Ec + \{\id{mid}\mapsto\Ec'\}) \vdash \id{mexp} \Rightarrow \exists N'.(\Mc',c))
        \wedge M' \models \Mc'
    \end{align*}
  \end{enumerate}
\end{lemma}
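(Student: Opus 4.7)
The plan is to prove both parts by inversion on the consistency derivation $M \models \Mc$. The set of module interpretations $\Mc$ is, by definition, the disjoint sum of interpretation environments $\Ec$ and functor closures $\Phi$, and similarly modules $M$ are either environments $E$ or parameterised module types $F = \forall T.(E, \exists T'.M')$. The consistency relation (Figure \ref{fig:modules:consistency}) has exactly one rule per combination that respects this shape: the rule with premise $E \models \Ec$ relates an environment module to an interpretation environment, and the rule with premise matching $\forall T.(E,\exists T'.M')$ on the module side relates a functor to a functor closure $(\Ec, F', \lambda \id{mid} \Rightarrow \id{mexp})$. No rule crosses these categories, so inversion immediately rules out the opposite shape of $\Mc$ in each case.

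For part (\textit{i}), I would assume $M = E$ and invert $M \models \Mc$. The only applicable rule is the one whose conclusion has $M = E$ on the left, and by construction its conclusion has $\Mc = \Ec$ on the right. Extracting $\Ec$ as a witness closes the existential. For part (\textit{ii}), I would assume $M = \forall T.(E, \exists T'.M')$ and again invert $M \models \Mc$. The only applicable rule yields that $\Mc$ must be of the form $(\Ec, \forall T.(E, \exists T'.M'), \lambda \id{mid} \Rightarrow \id{mexp})$ for some $\Ec$ and $\id{mexp}$, and its premise gives exactly the quantified implication required by the conclusion of the lemma. The witnesses $\Ec$ and $\id{mexp}$ come directly from the inverted constructor.

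The only real subtlety concerns the last rule of Figure \ref{fig:modules:consistency}, which is written with $\models_S$ on the functor side and matches the functor shape $\forall T.(E,\exists T'.M')$ on the module side. I would treat this as the intended clause defining $\models$ for functor modules (so that the lemma's hypothesis $M \models \Mc$ indeed covers the functor case); under this reading, inversion is literally extraction from the single applicable constructor, and the statement of the consistency condition in part (\textit{ii}) is verbatim the premise of that constructor.

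The main obstacle I anticipate is not the mathematical argument, which is a direct inversion, but making inversion go through cleanly in Coq in the presence of the mutually inductive structure of semantic objects and the $\alpha$-equivalence handling discussed later in the chapter. In particular, one must ensure that inversion on the consistency relation does not get blocked by intensional mismatches between isomorphic representations of finite maps inside $M$ and $\Mc$. If such mismatches arise, I would prove matching inversion principles tailored to each constructor (as suggested by the earlier remarks on the need for custom induction principles for nested inductive definitions) and then derive parts (\textit{i}) and (\textit{ii}) from those. Part (\textit{ii}) additionally requires that the condition inside the consistency rule be stated with respect to $\alpha$-equivalent module expressions, which is handled uniformly by the nominal infrastructure.
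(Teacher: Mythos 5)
Your argument is correct in substance, and it matches what the paper intends: the paper states this lemma without any proof at all, describing it precisely as ``an inversion principle for the consistency relation'', so extraction from the unique clause matching the shape of $M$ is exactly the intended argument. The one place where your framing diverges from the actual development is that you treat $\models$ as an inductively defined relation and propose derivation inversion, whereas in the paper's Coq formalisation the consistency relation is deliberately a \icode{Fixpoint} defined by pattern matching on the structure of $M$ (and, in the non-parameterised case, on $\Mc$) --- an inductive definition is impossible here because the functor clause places the relation to the left of an implication, violating strict positivity, as the paper discusses in Section \ref{subsec:modules:coq-norm-static-int}. Consequently the ``inversion'' is just unfolding: for $M = E$ the definition returns \icode{False} when $\Mc$ is a functor closure, which rules out the mismatched shape outright, and for $M = \forall T.(E,\exists T'.M')$ the body of the definition literally \emph{is} the existential statement of part (\textit{ii}). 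This also defuses the obstacles you anticipate: no custom inversion principles are needed, the isomorphic finite-map representations never enter the picture because the case analysis is on the top-level constructors of \icode{Mod} and \icode{IMod}, and no $\alpha$-equivalence is involved in the consistency condition (Remark \ref{rem:modules:alpha-equiv} explicitly notes that $\alpha$-conversion issues are ignored in the normalisation proof). Your reading of the stray $\models_S$ in Figure \ref{fig:modules:consistency} as a typo for $\models$ is the right one.
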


\begin{lemma}[Type consistency to erasure]\label{lem:modules:consistent-erasure}
  If some semantic object $E$ is consistent with some interpretation environment
  $\Ec$, then erasure of $\Ec$ gives us exactly $E$.

  That is, if $E \models \Ec$ then $\overline{\Ec} = E$.
\end{lemma}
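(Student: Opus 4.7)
The plan is to proceed by mutual induction, following the mutual inductive structure of the consistency relation itself, which precisely mirrors the mutual structure of the semantic objects (environments, value environments, module environments, and modules). So I will prove simultaneously the four statements: (a) if $E \models \Ec$ then $\overline{\Ec} = E$; (b) if $\VE \models \VEc$ then $\overline{\VEc} = \VE$; (c) if $\ME \models \MEc$ then $\overline{\MEc} = \ME$; and (d) if $M \models \Mc$ then $\overline{\Mc} = M$.

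First I would unfold both definitions side by side. For the outer environment case, the rule for $(\TE,\VE,\ME,G) \models (\TE,\VEc,\MEc,G)$ forces the type environment $\TE$ and the module type environment $G$ to coincide on both sides, and the definition $\overline{(\TE,\VEc,\MEc,G)} = (\TE,\overline{\VEc},\overline{\MEc},G)$ leaves those components unchanged; equality of the two quadruples then reduces to the two subgoals (b) and (c), which are provided by the induction hypothesis.

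For (b), the consistency rule requires $\Dom~\VE = \Dom~\VEc$ and that for every $x$ in the common domain, $\VE(x) = \tau$ whenever $\VEc(x) = (l,\tau)$. The erasure $\overline{\VEc}$ is defined pointwise to drop the label component, so $\overline{\VEc}(x) = \tau = \VE(x)$; equality of finite maps then follows from the domain equality and this pointwise coincidence. For (c), the rule demands $\Dom~\ME = \Dom~\MEc$ and pointwise consistency of modules, so the induction hypothesis (d) at each $\id{mid}$ combined with the pointwise definition of erasure yields the desired equality.

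The most interesting step is (d), proved by case analysis on the shape of $M$. If $M = E$, the only applicable consistency rule gives $\Mc = \Ec$ with $E \models \Ec$, and the induction hypothesis (a) gives $\overline{\Ec} = E$ as required. If $M = \forall T.(E,\exists T'.M')$, the only applicable rule forces $\Mc$ to be a closure of the form $(\Ec, \forall T.(E,\exists T'.M'), \lambda \id{mid} \Rightarrow \id{mexp})$ whose second component is literally $M$; then $\overline{\Mc} = \forall T.(E,\exists T'.M') = M$ follows immediately from the definition $\overline{(\Ec,F,\lambda\id{mid}\Rar\id{mexp})} = F$, without needing the quantified-closure condition at all. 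I expect the main obstacle to be administrative rather than mathematical: the mutual induction must be set up so that the four statements are proved simultaneously in Coq, since the underlying inductive definitions are nested through finite maps and a naive structural induction on a single relation will not have a strong enough hypothesis at the module environment case. Once the mutual recursion scheme is in place, each case is a direct unfolding.
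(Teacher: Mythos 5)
Your proof is correct, and it follows the only natural route: mutual induction over the structure of the semantic objects (equivalently, the mutual consistency relation), with the type-environment and module-type-environment components handled by the fact that the consistency rule forces them to be literally equal, the value- and module-environment cases discharged by domain equality plus pointwise coincidence together with environment extensionality, and the functor case immediate because the consistency rule stores $M$ itself as the second component of the closure, which is exactly what erasure projects out. The paper states this lemma without proof (it is discharged in the Coq development), but your argument — including the observation that the quantified closure condition is not needed and that the real work is setting up a strong enough mutual/nested induction principle — matches what that formalisation has to do.
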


\begin{lemma}[Consistency and environment extension (cf. Lemma \ref{lem:modules:stls-ctx})]
  \label{lem:modules:consistent-extend}
  If some semantic object $E$ is consistent with some interpretation environment
  $\Ec$, and some module $M$ is consistent with some module interpretation $\Mc$,
  then for some module identifier $mid$, the extension of $E$ with a mapping $mid \mapsto M$
  is consistent with $\Ec$ extended with the mapping $mid \mapsto \Mc$.

  That is, if $E \models \Ec$ and $M \models \Mc$ then
  $(E + \{mid \mapsto M\}) \models (\Ec +  \{mid \mapsto \Mc\})$
\end{lemma}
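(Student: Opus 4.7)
The plan is to unfold the consistency relation on both sides and reduce the goal to showing that the extended module environments remain consistent. By inversion on $E \models \Ec$, we know that $E$ and $\Ec$ share the same type environment $\TE$ and module type environment $G$, and that $\VE \models \VEc$ and $\ME \models \MEc$, where $\VE$, $\ME$ come from $E$, and $\VEc$, $\MEc$ come from $\Ec$. Since the modification $E + \{mid \mapsto M\}$ (by Definition \ref{def:modules:env-modification}, extended pointwise to tuples) affects only the module environment component, the statement reduces to: assuming $\ME \models \MEc$ and $M \models \Mc$, prove that $\ME + \{mid \mapsto M\} \models \MEc + \{mid \mapsto \Mc\}$.

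Next, I would apply the inference rule for $\ME \models \MEc$, which demands that the domains agree and that values at each module identifier are pointwise consistent. Equality of domains is immediate, since modification adds the same identifier $mid$ to both sides and, by the original consistency, $\Dom~\ME = \Dom~\MEc$; hence $\Dom(\ME + \{mid \mapsto M\}) = \Dom~\ME \cup \{mid\} = \Dom~\MEc \cup \{mid\} = \Dom(\MEc + \{mid \mapsto \Mc\})$. For the pointwise consistency, I would take an arbitrary $mid' \in \Dom(\ME + \{mid \mapsto M\})$ and perform case analysis on whether $mid' = mid$:
\begin{itemize}
\item If $mid' = mid$, then by the definition of modification both lookups yield the fresh bindings, giving $M$ on the left and $\Mc$ on the right; the required consistency is exactly the hypothesis $M \models \Mc$.
\item If $mid' \neq mid$, then the lookups fall through to $\ME(mid')$ and $\MEc(mid')$, and consistency at $mid'$ follows directly from the hypothesis $\ME \models \MEc$.
\end{itemize}

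With the extended module environment consistency established, reassembling the tuples gives $(E + \{mid \mapsto M\}) \models (\Ec + \{mid \mapsto \Mc\})$, since the $\TE$ and $G$ components are unchanged and the $\VE \models \VEc$ premise carries over unmodified. I expect no significant obstacle here: the proof is a routine unfolding of definitions together with a case split on whether the new binding shadows an existing one. If there is any subtlety, it is ensuring that the convention of injecting a single-component map $\{mid \mapsto M\}$ into a full environment tuple is handled consistently on both sides of $\models$, but this is bookkeeping rather than a mathematical difficulty, and it mirrors how the environment modification operator is lifted pointwise in Section \ref{subsec:modules:sem-obj}.
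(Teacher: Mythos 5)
Your proof is correct and is exactly the routine argument one expects here; the paper itself states this lemma without proof (the details live in the Coq development), and your unfolding of the consistency rules, the reduction to the module-environment component via the pointwise injection of $\{mid \mapsto M\}$, and the case split on whether the looked-up identifier equals $mid$ is precisely the intended reasoning. No gaps.
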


\begin{lemma}[Consistency of environment modification]\label{lem:modules:consistent-env-plus}
  For any semantic objects $E_1$, $E_2$, and interpretation environments $\Ec_1$, $\Ec_2$,
  if $E_1 \models \Ec_1$ and $E_2 \models \Ec_2$, then $(E_1 + E_2) \models (\Ec_1 + \Ec_2)$.
\end{lemma}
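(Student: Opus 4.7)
The plan is to unfold the definition of consistency componentwise and handle each of the four environment projections separately, since both the modification operator $+$ and the consistency relation act componentwise on the quadruple $(\TE,\VE,\ME,G)$. Concretely, I would first invert the hypotheses $E_1\models\Ec_1$ and $E_2\models\Ec_2$ to decompose $E_i = (\TE_i,\VE_i,\ME_i,G_i)$ and $\Ec_i = (\TE_i,\VEc_i,\MEc_i,G_i)$, noting that the defining rule of $\models$ forces the type environment and module type environment components on the two sides to coincide literally. Hence $\TE_1+\TE_2$ and $G_1+G_2$ match on the nose on both sides of the target judgment, and the remaining work is to establish $\VE_1+\VE_2 \models \VEc_1+\VEc_2$ and $\ME_1+\ME_2 \models \MEc_1+\MEc_2$.

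For the value-environment component, I would verify the two conditions in the defining rule of $\VE\models\VEc$. The domain condition follows by a short chain of equalities: $\Dom(\VE_1+\VE_2) = \Dom\,\VE_1 \cup \Dom\,\VE_2 = \Dom\,\VEc_1 \cup \Dom\,\VEc_2 = \Dom(\VEc_1+\VEc_2)$, using the domain equalities obtained from $\VE_i\models\VEc_i$. For the pointwise condition, fix $x$ in the combined domain and split on whether $x\in\Dom\,\VE_2$ (equivalently, by the domain equality, $x\in\Dom\,\VEc_2$). In that case the definition of $+$ gives $(\VE_1+\VE_2)(x)=\VE_2(x)$ and $(\VEc_1+\VEc_2)(x)=\VEc_2(x)$, and the consistency condition at $x$ follows from $\VE_2\models\VEc_2$; otherwise $x\in\Dom\,\VE_1\setminus\Dom\,\VE_2$, both maps project to the $\VE_1$/$\VEc_1$ values, and we use $\VE_1\models\VEc_1$.

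The module-environment component is handled by precisely the same case split, but now the pointwise conclusion $\ME(\id{mid})\models\MEc(\id{mid})$ is the consistency of individual modules rather than a statement about labels and types. Crucially, this is not a nested induction: we are not combining two modules, but merely selecting one of $\ME_1(\id{mid})$ or $\ME_2(\id{mid})$ and pairing it with the correspondingly selected module interpretation, so the consistency at that point is delivered directly by inverting $\ME_i\models\MEc_i$. Assembling the four components then reconstructs $(E_1+E_2)\models(\Ec_1+\Ec_2)$ via the top-level rule of the consistency relation.

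The only mildly delicate point, and the one I would be most careful about, is the bookkeeping around which side of $+$ a given identifier comes from on the two environments simultaneously; this is why it is important to extract the equalities $\Dom\,\VE_i=\Dom\,\VEc_i$ and $\Dom\,\ME_i=\Dom\,\MEc_i$ at the very start, so that a single case distinction on, say, membership in $\Dom\,\ME_2$ determines the selection on both sides in lockstep. I do not anticipate any obstacle from the mutually inductive flavour of the consistency relation, since the lemma does not recurse inside modules; functor closures are carried through the case analysis as-is, relying on the module-level consistency witnessed by the inversion of $\ME_i\models\MEc_i$.
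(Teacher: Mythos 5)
Your proof is correct and is exactly the componentwise argument the paper intends: the paper states this lemma without proof, and the standard reasoning is precisely your decomposition into the four projections, using the literal coincidence of the $\TE$ and $G$ components forced by the top rule of $\models$, and a single membership case split (synchronised via the domain equalities $\Dom\,\VE_i=\Dom\,\VEc_i$ and $\Dom\,\ME_i=\Dom\,\MEc_i$) for the value- and module-environment components. Your observation that no recursion into modules is needed — each identifier's module and module interpretation are selected from the same side of $+$, so the hypotheses apply directly — is also the right point to flag.
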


\begin{lemma}[Termination of the declarations compilation]\label{lem:modules:term-dec-comp}
  For any environments $E$ and $E'$, interpretation environment $\Ec$, and declaration
  $\id{dec}$, if $E \vdash \id{dec} : E'$, and $E \models \Ec$,
  then there exist $N$, $\Ec'$, $c$, s.t. $\Ec \vdash \id{dec} => \exists N.(\Ec',c)$ and
  $E' \models \Ec'$.
\end{lemma}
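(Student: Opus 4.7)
The plan is to reduce this to a compilation termination lemma for core expressions, then handle the single declaration case by invoking the corresponding compilation rule. Since the grammar for $\id{dec}$ has only one production, $\kw{val}~\id{vid}~\kw{=}~\id{exp}$, the case analysis is trivial, and the real content is in compilation of expressions.

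First I would prove the following auxiliary lemma by induction on the typing derivation $E \vdash \id{exp} : \tau$: if $E \models \Ec$ then there exists $\id{ex}$ such that $\Ec \vdash \id{exp} \Rar \id{ex}, \tau$. The variable case uses a longified version of the consistency requirement on $\VE$ and $\ME$ components (with Lemma \ref{lem:modules:consistent-lookup} recursively handling long identifiers through module environments), giving us the same $\tau$ together with a label $l$ so that $\Ec(\id{longvid}) = (l,\tau)$. The lambda case needs a value-level analog of Lemma \ref{lem:modules:consistent-extend}: if $E \models \Ec$ then $E + \{\id{vid}\mapsto\tau\} \models \Ec + \{\id{vid}\mapsto(l,\tau)\}$ for any label $l$; this is direct from the definition of $\models$ on value environments. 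The application case is two uses of the IH. The type-annotation case requires reproving $\overline{\Ec} \vd \id{ty} : \tau$ from $E \vd \id{ty} : \tau$, which is immediate by Lemma \ref{lem:modules:consistent-erasure} since $\overline{\Ec} = E$.

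With the expression lemma in hand, the main lemma is finished as follows. We have $\id{dec} = \kw{val}~\id{vid}~\kw{=}~\id{exp}$ and from the premise $E \vdash \id{dec} : E'$ we obtain $\tau$ with $E \vdash \id{exp} : \tau$ and $E' = \{\id{vid}\mapsto\tau\}$. Apply the expression lemma to get $\id{ex}$ with $\Ec \vdash \id{exp} \Rar \id{ex}, \tau$. Choose $l \notin \text{names}(\Ec)$; this is always possible because $\text{names}(\Ec)$ is finite while $\LSet$ is denumerably infinite. Then the declaration compilation rule yields $\Ec \vdash \id{dec} \Rar \exists\{l\}.(\{\id{vid}\mapsto(l,\tau)\},\kw{val}~l~\kw{=}~\id{ex})$, and taking $N = \{l\}$, $\Ec' = \{\id{vid}\mapsto(l,\tau)\}$, and $c = (\kw{val}~l~\kw{=}~\id{ex})$ witnesses the existential. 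The consistency $E' \models \Ec'$ is a trivial instance of the $\VE \models \VEc$ clause, since both value environments have domain $\{\id{vid}\}$ and $\VEc'(\id{vid}) = (l,\tau)$ matches $\VE'(\id{vid}) = \tau$.

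The main obstacle I expect is the variable case of the auxiliary expression lemma: the simple Lemma \ref{lem:modules:consistent-lookup} talks about module lookup, but $\id{longvid}$ traverses module environments down to a value environment, so one actually needs either (a) a strengthened version of consistency that preserves the domain/typing correspondence through long identifier paths, or (b) a separate induction on the structure of the long identifier, peeling off module-component lookups via Lemmas \ref{lem:modules:consistent-lookup} and \ref{lem:modules:uniform-modules} (to know the resulting $\Mc$ is an interpretation environment $\Ec''$ rather than a functor closure, since the typing rule demands a value binding at the end). Apart from this bookkeeping, everything else is mechanical, and the proof fits comfortably within the logical-relation style already illustrated for STLC in Section \ref{sec:modules:stlc-norm}.
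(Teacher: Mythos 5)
Your proof is correct and follows exactly the route the paper intends: the paper states this lemma without an explicit proof, and your reduction to an auxiliary induction on the typing derivation for $\id{exp}$ (using the $\VE \models \VEc$ and $\ME \models \MEc$ clauses of the consistency relation for long-identifier lookup, Lemma \ref{lem:modules:consistent-erasure} for the annotation case, and a consistency-extension step for $\lambda$) is the standard logical-relation argument already illustrated for STLC in Section \ref{sec:modules:stlc-norm}. Your observation that a fresh label $l \notin \textrm{names}(\Ec)$ must be chosen to discharge the side condition of Rule \Ref{rule:modules:comp-decl} is precisely the point the paper flags in Remark \ref{rem:modules:alpha-equiv}, and your identification of the long-identifier bookkeeping via Lemmas \ref{lem:modules:consistent-lookup} and \ref{lem:modules:uniform-modules} as the only non-mechanical step is accurate.
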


\begin{lemma}[Consistency of filtering]\label{lem:modules:consistent-enrich-to-filtering}
  For any environments $E'$ and $E$, interpretation environment $\Ec'$, if
  $E' \succ E$, and $E' \models \Ec'$ then there exist an interpretation
  environment $\Ec$, s.t. $\Ec' :: E => \Ec$ and $E \models \Ec$.
\end{lemma}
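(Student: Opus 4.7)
The plan is to prove this lemma by structural induction on $E$, strengthened to a simultaneously inductive statement over all four semantic-object strata of Figure~\ref{fig:modules:semobjects} so as to match the mutual structure of the filtering rules in Figure~\ref{filtering.fig} and of the consistency relation in Figure~\ref{fig:modules:consistency}. Concretely, besides the environment claim I would prove in parallel: (a) for $\VE' \sqsupseteq \VE$ and $\VE \models \VEc'$, production of a $\VEc$ with $\vd \VEc' :: \VE \Rar \VEc$ and $\VE \models \VEc$; (b) the analogue for $\ME' \succ \ME$ with $\ME' \models \MEc'$; and (c) for $M' \succ M$ and $M' \models \Mc'$, a $\Mc$ with $\vd \Mc' :: M \Rar \Mc$ and $M \models \Mc$.

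I would first dispatch the easy strata. For value environments, Lemma~\ref{lem:modules:consistent-erasure} together with $\VE' \models \VEc'$ pins down $\Dom~\VEc' = \Dom~\VE'$, and $\VE' \sqsupseteq \VE$ then gives $\Dom~\VEc' \supseteq \Dom~\VE$, so restricting $\VEc'$ to $\Dom~\VE$ satisfies both the filtering rule and the consistency clause. Module environments reduce pointwise to the module case, using Lemma~\ref{lem:modules:uniform-modules} (uniformness) to line up corresponding domains. In the module case I would split via Lemma~\ref{lem:modules:uniform-modules}: if $M = E_0$ then enrichment forces $M' = E_0'$ and consistency forces $\Mc' = \Ec_0'$, so a single appeal to the IH on environments closes the goal; if $M = \forall T.(E_0,\exists T'.M_0)$ then by enrichment of functors $M' = \forall T.(E_0',\exists T'.M_0')$ with the contravariant $E_0 \succ E_0'$ and $M_0' \succ M_0$, and by consistency $\Mc' = (\Ec_0, M', \lambda\id{mid}\Rar\id{mexp})$. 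The filtering rule for functor closures then produces $\Mc = (\Ec_0, M, \lambda\id{mid}\Rar\id{mexp})$ directly, its side condition $M' \succ M$ being exactly the enrichment hypothesis.

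For the main environment statement I would then assemble $\Ec = (\TE, \VEc, \MEc, \{\})$ from the sub-cases, with the equality $\TE_{\Ec'} = \TE'$ supplied by Lemma~\ref{lem:modules:consistent-erasure} and the required $\TE_{\Ec'} \succ \TE$ obtained by composing it with $\TE' \sqsupseteq \TE$ from $E' \succ E$. The filtering rule for environments and the consistency clauses then fall out immediately from the sub-proofs.

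The hard part will be discharging the consistency obligation $M \models \Mc$ in the functor subcase, because the consistency clause quantifies over \emph{arbitrary} $\Ec_a$ with $E_0 \models \Ec_a$ and demands a reduction of $\id{mexp}$ under $\Ec_0 + \{\id{mid} \mapsto \Ec_a\}$, whereas the analogous reduction inherited from $M' \models \Mc'$ is only guaranteed for $\Ec_b$ satisfying $E_0' \models \Ec_b$. The natural route is to apply the IH contravariantly---from $E_0 \succ E_0'$ and $E_0 \models \Ec_a$ it yields a filtered $\Ec_b$ with $\vd \Ec_a :: E_0' \Rar \Ec_b$ and $E_0' \models \Ec_b$---but the resulting reduction then lives under $\Ec_0 + \{\id{mid}\mapsto\Ec_b\}$ rather than $\Ec_0 + \{\id{mid}\mapsto\Ec_a\}$. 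Closing this gap requires an auxiliary stability lemma asserting that static interpretation of $\id{mexp}$ is insensitive to replacing the interpretation of a bound module identifier by one that merely filters down to it, and a further application of the module IH to the result module so as to turn the $M_0' \models \Mc''$ produced by the inner reduction into the $M_0 \models \Mc''_{\text{new}}$ demanded by the outer consistency clause. I expect essentially all of the remaining technical work of the proof to sit in establishing that stability lemma and in threading the contravariance cleanly through the functor's parameter and result sides.
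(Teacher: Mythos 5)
Your overall strategy coincides with the paper's: the paper gives no detailed proof of this lemma, only the remark that one should \emph{construct} $\Ec$ by a filtering function defined on the mutual structure of semantic objects and then check that it meets the specification of Figure~\ref{filtering.fig} together with consistency; your simultaneous induction over the four strata is exactly that construction made explicit. Your treatment of the value-environment, module-environment, and non-parameterised-module strata, and the assembly of $(\TE,\VEc,\MEc,\{\})$ with the $\TE$ side condition, all line up with the filtering and consistency rules as stated.

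The functor-closure case, however, is a genuine gap --- and, to your credit, you have diagnosed it accurately rather than papered over it. The filtering rule sends $(\Ec_0,F',\lambda\id{mid}\Rar\id{mexp})$ to $(\Ec_0,F,\lambda\id{mid}\Rar\id{mexp})$, keeping the closure environment and body fixed, so the consistency obligation for the output quantifies over $\Ec_a$ with $E_0 \models \Ec_a$ while the inherited hypothesis only covers $\Ec_b$ with $E_0' \models \Ec_b$; since $\models$ forces domain \emph{equality}, an $\Ec_a$ consistent with the richer parameter $E_0$ is never itself consistent with $E_0'$, so the hypothesis cannot be applied directly. Closing the case as you propose requires (i) a weakening property of static interpretation --- that a reduction of $\id{mexp}$ under $\Ec_0 + \{\id{mid}\mapsto\Ec_b\}$ can be transferred to $\Ec_0 + \{\id{mid}\mapsto\Ec_a\}$ when $\vd \Ec_a :: E_0' \Rar \Ec_b$ --- and (ii) a further filtering of the resulting module interpretation from $M_0'$ down to $M_0$. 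Neither is stated or proved anywhere in the paper, and (i) is not obviously true without an additional hypothesis that $\id{mexp}$ elaborates against the parameter signature $E_0'$, so that it only ever consults components surviving the filtering. Note also that the contravariant IH application is on $E_0 \succ E_0'$, whose smaller side $E_0'$ lives inside $M'$ rather than $M$, so your induction must be measured on the larger argument (or on the enrichment derivation itself) for that appeal to be legitimate. The paper avoids confronting any of this by working, in its Coq development, in a simplified setting and by never spelling out the functor case; so the obstruction you found is a real omission in the source material, but your proof is not complete until the stability lemma is actually established.
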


The idea of the proof of Lemma \ref{lem:modules:consistent-enrich-to-filtering}
is to \emph{construct} the environment $\Ec$. The definition of the filtering
relation does not define a particular algorithm for constructing a filtered
environment, but instead serves as a specification. One can define a function
that actually implements filtering and show that this function satisfies the
specification give by the definition in Figure \ref{filtering.fig}. Essentially,
we take this approach in our Coq development.

\begin{thm}(Static Interpretation Normalization)
  \label {norm.prop}~\\
  If $\overset{\mathcal D_{mexp}}{E \vdash \id{mexp} : \exists T.M}$ and
  $\overset{\mathcal C}{E \models \Ec}$
  then there exists $N$, $\Mc$, $c$ such that
  \[\Ec \vdash \id{mdec} \Rar \exists N.(\Mc,c), ~\text{and}~ M \models \Mc\]
  and (mutually)\\
  if $\overset{\mathcal D_{mdec}}{E \vdash \id{mdec} : \exists T.E'}$ and
  $\overset{\mathcal C}{E \models \Ec}$
  then there exists $N$, $\Ec'$, $c$ such that
  \[\Ec \vdash \id{mdec} \Rar \exists N.(\Ec',c), ~\text{and}~ E' \models \Ec'\]
\end{thm}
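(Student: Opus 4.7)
The plan is to proceed by mutual induction on the elaboration derivations $\mathcal D_{mexp}$ and $\mathcal D_{mdec}$, following the template established in the proof of Theorem \ref{thm:modules:stlc-norm}. The logical relation $\models$ of Figure \ref{fig:modules:consistency} plays the role played by the $\vDash$-relation in the STLC proof, and the auxiliary lemmas stated above are the analogues of Lemma \ref{lem:modules:stls-ctx}, specialised to the various syntactic constructs of the module language.

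For the module declaration part, most cases are routine. The \kw{dec} case is immediate from Lemma \ref{lem:modules:term-dec-comp}. The \kw{type} and \kw{module type} cases use Lemma \ref{lem:modules:consistent-erasure} to rewrite $\overline{\Ec} = E$ so that the elaboration-level judgement for types and module types can be reused at the interpretation level. The \kw{module} and \kw{open} cases follow from the induction hypothesis for \id{mexp} directly. The sequential composition $\id{mdec}_1~\id{mdec}_2$ combines both induction hypotheses with Lemmas \ref{lem:modules:consistent-env-plus} and \ref{lem:modules:consistent-extend} to propagate consistency of the accumulated environments, while the freshness side-condition $N_1 \cap (\textrm{names}(\Ec)\cup N_2) = \emptyset$ is arranged by $\alpha$-renaming. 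For the module expression part, the \id{mid} case follows from Lemma \ref{lem:modules:consistent-lookup} combined with uniformness (Lemma \ref{lem:modules:uniform-modules}(i)); the projection case composes the induction hypothesis with lookup consistency; and the \kw{funct} case is dispatched by packaging the elaboration information together with $\Ec$ into a functor closure $\Phi = (\Ec, F, \lambda\id{mid}\Rar\id{mexp})$, then discharging the consistency obligation $M \models_S \Phi$ by quantifying over an arbitrary $\Ec'$ with $E \models \Ec'$ and appealing, via Lemma \ref{lem:modules:consistent-extend}, to the induction hypothesis on the body of the functor.

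The main obstacle is the functor application case, corresponding to Rule \Ref{elab.mexp.app.rule}. Here I would proceed as follows. First, apply the induction hypothesis to the argument module expression to obtain an interpretation target object $\exists N.(\Ec',c)$ with $E' \models \Ec'$. Next, use Lemma \ref{lem:modules:consistent-lookup} on $E(\id{longmid})$ together with the uniformness Lemma \ref{lem:modules:uniform-modules}(ii) to extract a functor closure $(\Ec_0, F', \lambda\id{mid}\Rar\id{mexp}')$ whose packaged consistency property is precisely what is required to interpret the functor body. Using the enrichment premise $E' \succ E''$ from the elaboration rule together with Lemma \ref{lem:modules:consistent-enrich-to-filtering}, construct an interpretation environment $\Ec''$ with $\vd \Ec' :: E'' \Rar \Ec''$ and $E'' \models \Ec''$. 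Instantiating the consistency condition packaged inside the functor closure with this $\Ec''$ then delivers the interpretation $\exists N'.(\Mc',c')$ of the body and the conclusion $M \models \Mc$, from which the target object $\exists (N\cup N').(\Mc',c~\kw{;}~c')$ is assembled. Throughout, the freshness conditions on $N$, $N'$, and $T$ are maintained by $\alpha$-equivalence of semantic objects; in the Coq mechanisation discussed in Section \ref{sec:modules-coq}, these manipulations are carried out explicitly by the nominal-sets infrastructure.
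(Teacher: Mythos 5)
Your proposal follows essentially the same route as the paper's proof: mutual induction on the elaboration derivations in the style of the STLC normalisation argument, with the lookup-consistency, uniformness (inversion), erasure, and environment-extension lemmas discharging the routine cases, the functor case closed by packaging the induction hypothesis into the consistency condition of the closure, and the application case resolved by filtering the argument's interpretation environment via Lemma \ref{lem:modules:consistent-enrich-to-filtering} and then instantiating that packaged condition; the freshness side-conditions are likewise handled by $\alpha$-renaming, exactly as in the paper (which in its Coq mechanisation further simplifies by taking the bound variable sets empty, per Remark \ref{rem:modules:alpha-equiv}). No gaps.
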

\begin{proof}
  The proof sketch presented here is following our development in the Coq
  proof assistant. We omit some details, since they worked out
  fully in the formalisation, and we aim here to show the overall structure of the proof
  and refer to auxiliary lemmas, which are crucial to use in particular cases.
  We also emphasise the simplifying assumptions we made for our development.

  The proof proceeds by mutual induction over derivations $\mathcal D_{mdec}$ and
  $\mathcal D_{mexp}$.
  \begin{case}[$\inferrule{\overset{\mathcal D_{mdec}}{E \vdash \id{mdec} : \Sigma}}{E \vdash \cbra{~\id{mdec}~} : \Sigma}$]~\\

    By induction hypothesis on $\mathcal D_{mdec}$  with $\mathcal C$,
    we get $N$, $\Ec$ $c$, and we take $\Mc = \Ec$ and apply Rule \Ref{rule:modules-sint-mdec}.
  \end{case}
  \begin{case}[$\inferrule{E(\id{mid}) = E'}{E \vdash \id{mid} : E'}$]~\\
    By lemma \ref{lem:modules:consistent-lookup}, we get $\Mc'$, s.t.  $M
    \models \Mc'$. Take $N=\emptyset$, $\Mc = \Mc'$, and $c = \eps$. We get the
    derivation for the interpretation by applying Rule \Ref{rule:modules-sint-mid}, and we
    already have $M \models \Mc'$ from having applied Lemma
    \ref{lem:modules:consistent-lookup} earlier.
  \end{case}

  \begin{case}[$\inferrule{
        \overset{\mathcal D_{mexp}}{E \vdash \id{mexp} : \exists T.E'} \\ E'(\id{mid}) = E''}
       {E \vdash \id{mexp}\,\kw{.}\,\id{mid} : \exists T.E''}$]~\\
    By induction hypothesis on $\mathcal D_{mexp}$ with $\mathcal C$, we get $N$, $\Mc$, $c$, s.t.
    $\Ec \vdash \id{mexp} \Rar \exists N.(\Mc,c)$, and $M \models \Mc$, where
    $M$ is a non-parameterised module $E'$. By Lemma
    \ref{lem:modules:uniform-modules}, we get $E' \models \Ec'$ for some $\Ec'$.

    Now, by applying the lookup consistency lemma (Lemma \ref{lem:modules:consistent-lookup}),
    we get all the required pieces to construct a derivation of static interpretation
    using Rule \Ref{rule:modules-sint-proj}.~

    We already have $M \models \Mc'$ from the induction hypothesis.
  \end{case}
  \begin{case}[
      \inferrule
          {E \vdash \id{mty} : \exists \emptyset.E' \\
            \overset{\mathcal D_{mexp}}{E + \{\id{mid} \mapsto E'\} \vdash \id{mexp} : \Sigma} \\
            F = \forall \emptyset. (E',\Sigma)}
      {E \vdash \kw{funct}~\id{mid}~\kw{:}~\id{mty}~\RAR~\id{mexp} : \exists \emptyset.F}]~\\
    Since we are in the simplified setting here, we instantiate universally and
    existentially quantified variables with empty sets in the rule.

    Take $N=\emptyset$, $\Mc = (\Ec, F, \lambda => \id{mexp})$ (a functor
    closure), $c = \eps$. From assumptions we have $\overset{\mathcal C}{E \models \Ec}$.
    From $\mathcal C$ with the consistency to erasure lemma
    (Lemma \ref{lem:modules:consistent-erasure}), we
    get $\overline{\Ec} = E$. Now, we apply Rule \Ref{rule:modules-sint-funct}.~

    Proving $M \models \Mc$ in this case requires some work. We get a required
    premise for the corresponding rule of the consistency relation from the
    induction hypothesis with the consistency and the environment extension
    lemma (Lemma \ref{lem:modules:consistent-extend})
    (cf. Theorem \ref{thm:modules:stlc-norm}, Case \nameref{rule:ty-lam}).
  \end{case}
  \begin{case}[
      \inferrule{
          \overset{\mathcal D_{mexp}}{E_0 \vdash \id{mexp} : \exists \emptyset.E} \\
             E_0(\id{longmid}) \geq (E',\exists \emptyset.E'') \\ E \succ E'}
          {E_0 \vdash \id{longmid}\,\para{~\id{mexp}~} : \exists \emptyset.E''}]~\\
    This is the most complicated case of our proof. Again, we consider a simplified setting
    where sets of variables in binding positions are empty.
    Moreover, instead of instantiation $E_0(\id{longmid}) \geq (E',\exists \emptyset.E'')$
    we use simple equality $E_0(\id{longmid}) = (E',\exists \emptyset.E'')$.

    First, we use the lookup consistency lemma (Lemma \ref{lem:modules:consistent-lookup}) to get
    $\Mc_0$, s.t. $\Ec(longid)=\Mc_0$ and $M \models \Mc_0$.

    From the uniformness lemma (Lemma \ref{lem:modules:uniform-modules}(\textit{ii})) we
    know that $\Mc_0$ is a functor closure $(\Ec_1,M,\lambda mid =>
    \id{mexp}_1)$ for some $\Ec_1$ and $\id{mexp}_1$.  From this lemma we also get
    the consistency condition; we will denote it as $\mathcal{H}_c$ and use it
    later.

    Next, from the induction hypothesis on $\mathcal{D}_{mexp}$ with
    $\mathcal{C}$, we get $N_1$, $\Mc_1$, $c_1$, s.t.
    $\Ec \vdash mexp => \exists N_1.(\Mc_1, c_1)$ and $E \models \Mc_1$.
    From the uniformness lemma \ref{lem:modules:uniform-modules}(\emph{i}) we know that
    $\Mc_1$ is also some non-parameterised module interpretation $\Ec'$, and so
    $\overset{\mathcal{H}_e}{E \models \Ec'}$.

    From the consistency of filtering lemma (Lemma
    \ref{lem:modules:consistent-enrich-to-filtering}), we get $\Ec''$, s.t.  $\Ec'
    :: E' => \Ec''$ and $\overset{\mathcal{C}'}{E' \models \Ec''}$.  We can use
    the consistency condition $\mathcal{H}_c$ with $\mathcal{C}'$ to get $N_2$,
    $\Mc_2$, $c_2$, s.t.  $(\Ec' + \{mid \mapsto \Ec''\}) \vdash mexp_1 =>
    \exists N_2.(\Mc_2,c_2)$ and $\overset{\mathcal{C}''}{E'' \models \Mc_2}$.

    Now, we take $N = N_1 \cup N_2$, $\Mc = \Mc_2$, $c = c_1 ; c_2$ in our goal.
    We construct the required derivation by applying Rule \Ref{rule:modules-sint-funct-app}.~

    The consistency part follows from $\mathcal{C}''$.
    (cf. Theorem \ref{thm:modules:stlc-norm}, Case \nameref{rule:ty-app}.
    Although in the present proof we do not have two induction hypotheses and
    instead of the first induction hypothesis we look up for the $longid$).
  \end{case}
  \begin{case}[
      \inferrule{\id{mdec} = \id{dec} \\ E \vdash \id{dec} : E'}
                {E \vdash \id{mdec} : E'}]~\\

    By the consistency to erasure lemma (Lemma \ref{lem:modules:consistent-erasure})
    with $\mathcal{C}$, we get $\overline{\Ec} = E$. By the termination of the
    declarations compilation lemma \ref{lem:modules:term-dec-comp}, we get $N$,$\Ec'$,$c$, s.t.
    \begin{equation}\label{eq:modules:interp:comp-1}
      \Ec \vdash \id{dec} => \exists N.(\Ec',c)
    \end{equation}
    \begin{equation}\label{eq:modules:interp:consist-1}
      E \models \Ec'
    \end{equation}
    We construct the required derivation by applying Rule \Ref{rule:modules-sint-mdec} with
    \reff{eq:modules:interp:comp-1}, and we have $E \models \Ec'$ from
    \reff{eq:modules:interp:consist-1}.
  \end{case}

  \begin{case}[\inferrule{E \vdash \id{ty} :\tau}
      {E \vdash \kw{type}~\id{tid}~\kw{=}~\id{ty} : \{\id{tid} \mapsto \tau\}}]~\\
    By the consistency to erasure lemma \ref{lem:modules:consistent-erasure}
    with $\mathcal{C}$, we get $\overline{\Ec} = E$. Take $N = \emptyset$,
    $\Ec' = \{tid \mapsto \tau\}$, $c=\eps$.

    \noindent We construct the required derivation by applying Rule
    \Ref{rule:modules-sint-type}.  We get $\{tid \mapsto \tau\} \models \{tid
    \mapsto \tau\}$, since the environments are equal.
  \end{case}

  \begin{case}[\inferrule{
         E \vdash \id{mexp} : \exists T.M}
         {E \vdash \kw{module}~\id{mid}~\kw{=}~\id{mexp} : \exists T.\{\id{mid} \mapsto M\}}]~\\
    By induction hypothesis on $\mathcal D_{mexp}$ with $\mathcal C$, we get $N$,
    $\Mc$, $c$, s.t.  $\Ec \vdash \id{mexp} \Rar \exists N.(\Mc,c)$ and $M
    \models \Mc$.  Take $\Ec' = \{mid \mapsto \Mc\}$. We construct the required
    derivation by applying Rule \Ref{rule:modules-sint-mod}.~

    We get $\{mid \mapsto M \} \models \{tid \mapsto \Mc\}$ by the consistency
    and environment extension lemma (Lemma \ref{lem:modules:consistent-extend}) with
    $E = \Ec = \{\}$, and the fact that $\{\} \models \{\}$.
  \end{case}
  \begin{case}[
      \inferrule{\overset{\mathcal D_{mexp}}{E \vdash \id{mexp} : \Sigma}}
                {E \vdash \kw{open}~\id{mexp} : \Sigma}]~\\
    By induction hypothesis on $\mathcal D_{mexp}$ with $\mathcal C$, we get
    required pieces to construct the derivation using Rule \Ref{rule:modules-sint-open}.
  \end{case}
  \begin{case}[\inferrule{
         E \vdash \id{mty} : \Sigma}
         {E \vdash \kw{module}~\kw{type}~\id{mtid}~\kw{=}~\id{mty} : \exists \emptyset.\{\id{mtid} \mapsto \Sigma\}}]~\\
    Take $N = \emptyset$, $\Ec' = \{mtid \mapsto \Sigma\}$, $c = \eps$. By the
    consistency to erasure lemma (Lemma \ref{lem:modules:consistent-erasure}) with
    $\mathcal{C}$, we get $\overline{\Ec} = E$.
    We construct the required derivation by applying Rule \Ref{rule:modules-sint-mod-type}.~

    We get $\{mtid \mapsto \Sigma\} \models \{mtid \mapsto \Sigma\}$, since the
    environments are equal.
  \end{case}
  \begin{case}[\inferrule{T_1 \cap (\textrm{tvs}(E) \cup T_2) = \emptyset \\\\
        \overset{\mathcal D_{mdec_1}}{E \vdash \id{mdec}_1 : \exists T_1.E_1} \\
        \overset{\mathcal D_{mdec_2}}{E + E_1 \vdash \id{mdec}_2 : \exists T_2.E_2}}
         {E \vdash \id{mdec}_1 ~\id{mdec}_2 : \exists(T_1 \cup T_2).(E_1 + E_2)}]~\\
    \noindent In this case we do not make simplifying assumptions and leave
    sets $T_1$ and $T_2$ non-empty along with corresponding sets in the static
    interpretation rule to show where we need to apply the bound variable
    convention (see Remark \ref{rem:modules:alpha-equiv}).

    By induction hypotheses on $\mathcal D_{mdec_1}$ and $\mathcal D_{mdec_2}$
    with $\mathcal C$, we get
    \begin{align*}
      N_1, \Ec_1, c_1, \text{~s.t.~} \Ec \vdash mdec_1 => \exists N_1.(\Ec_1, c_1)
      \text{~and~} E_1 \models \Ec_1 ,\\
      N_2,\Ec_2, c_2, \text{~s.t.~} \Ec \vdash mdec_2 => \exists N_2.(\Ec_2, c_2)
      \text{~and~} E_2 \models \Ec_2.
    \end{align*}

    \noindent We can always $\alpha$-rename $\exists N_1.(\Ec_1, c_1)$ and
    $\exists N_2.(\Ec_2, c_2)$ in such a way that $N_1$ and $N_2$ will
    satisfy the disjointness condition in Rule \Ref{rule:modules-sint-seq}.~

    \noindent Take $N = N_1 \cup N_2$, $\Ec' = \Ec_1 + \Ec_2$, $c = c_1;c_2$.
    \noindent We construct the required derivation by applying Rule \Ref{rule:modules-sint-seq}.~

    \noindent We get $(E_1 + E_2) \models (\Ec_1 + \Ec_2)$ from the consistency of
    environment modification lemma (Lemma \ref{lem:modules:consistent-env-plus}).
  \end{case}
  \begin{case}[\inferrule{\empty}
      {E \vdash \eps : \{\}}]~\\
    Take $N = \emptyset$, $\Ec' = \{\}$, $c=\eps$. We construct the required
    derivation by applying Rule \Ref{rule:modules-sint-empty}.~

    We get $\{\} \models \{\}$ trivially.
  \end{case}
\end{proof}

\section{Variable Binding and Nominal Techniques}\label{sec:modules:var-binding-nominal}
In this section we are going to give a general introduction to nominal sets
\cite{Gabbay2002,Pitts:2013:NSN:2512979}, and we give motivations why we have decided to
use nominal techniques in our formalisation. We outline definitions of relevant
concepts such as atoms, permutation of atoms, freshness relation, equivariance, and
so on.

Such aspects as freshness of variables and $\alpha$-renaming are often left
implicit in pen-and-paper formalisations, but in proof assistants one has to
pay attention to all the details related to these aspects. Moreover, it is a
well-known fact, that an implementation of variable binding conventions in proof
assistants often requires significant efforts. There are multiple ways to
approach variable binding and related issues in proof assistants, such as de
Bruijn indices~\cite{DEBRUIJN1972381}, locally named and locally nameless
representation \cite{McKinna1993,Gordon1994,Chargueraud2012}, parametric
higher-order abstract syntax \cite{Chlipala:PHOAS}, and so on. One of the goals of
our formalisation of the module system in Coq is to keep it close to the
presentation given in Section \ref{sec:modules-spec}. Since this representation uses
names, changing to another way of treating bound variables (like de Bruijn
indices) will require us to deviate from this representation. Moreover, since
our setting is different from well-studied systems such as the lambda-calculus
(mutual inductive definitions, sets of variables in binding positions), we
would like to use a first-order representation as a more flexible approach
applicable for various structures with variable binding. The approach based on
nominal sets seems to be a good fit in our situation.

Broadly speaking, the theory of nominal sets is a theory of names involved in
different data structures, covering such aspects as variable binding, scope,
and freshness of variables. Nominal sets offer a solid foundation for
expressing \emph{independence} of data structures on the particular choices of bound
variables \cite{PittsAM:nomt}. The theory of nominal sets based on the idea of
permutations of variables and notion of finite support.\footnote{Similarly to
  the development in Agda \cite{Choudhury:nominal-agda} we do not consider the
  notion of \emph{unique} smallest finite support. This allows us to develop
  required notions constructively. For the detailed discussion about nominal sets
  in constructive set theory see \cite{Swan2017arXiv}.} The theory gives a uniform approach to deal
with bound variables and allows for generalisation of binders to various
structures \cite{Clouston2013} We are interested in the particular
generalisation where one can bind a set of variables at once. Moreover, since
nominal sets offer a uniform approach it can be applied to various
structures even if they are quite different from well-studied systems such as the
lambda calculus. Our goal is to define a nominal set semantic objects (see
Section \ref{subsec:modules:sem-obj}) and use nominal reasoning techniques in
our formalisation.

First, we give basic definitions from the theory of nominal sets and show how
these notions can be implemented in the proof assistant Coq (see Section
\ref{subsec:modules:nominal-in-coq}).

\begin{defn}[Atoms]\label{def:modules:atoms}
  Let $\Atom$ be a set of \emph{atoms}, if it satisfies two axioms:
  \begin{itemize}
    \item elements of $\Atom$ has decidable equality, i.e.
      $\forall a,b \in \Atom. (a = b) \lor (a \neq b)$;
    \item $\Atom$ is countably infinite, i.e. for any finite set of atoms $F$,
      $\exists b \in \Atom. b \notin F$.
  \end{itemize}
\end{defn}

\begin{defn}[Permutation]\label{def:modules:perm}
  We write $\Perm$ for the set of all finitely supported permutations $\pi : \Atom -> \Atom$.
  That is, $\pi$ is a bijection on $\Atom$ with the finite support property: there exists
  a finite set of atoms $F$, s.t. $\forall a \notin F. \pi a = a$. We call $F$ the \emph{support}
  of the permutation $\pi$.
\end{defn}
Intuitively, Definition \ref{def:modules:perm} says that $\pi$ is a bijection which
permutes only elements of some finite subset of $\Atom$ and leaves other elements
outside of $F$ untouched.

The set $\Perm$ has a group structure with function composition as group
operation, the identity function (treated as a permutation) $\id{id}$ as a
neutral element, and the inverse function (since permutations are bijections)
as a group inverse.

\begin{defn}[Transposition]\label{def:modules:transposition}
  There is an elementary permutation which we call a transposition:
  \[ \swap{a}{b} c \defeq
  \begin{cases}
    a, \text{if } b = c\\
    b, \text{if } a = c\\
    c, \text{otherwise}
  \end{cases} \]
\end{defn}

The transposition $\swap a b$, being a permutation, has support set $\{a,b\}$.
Any permutation can be non-uniquely factored through a sequence of
transpositions.  Transpositions are involutions, that is, being applied twice
they ``cancel out'' each other:
$\swap a b \circ \swap a b = id$. Transpositions can be generalised from single
atoms to n-tuples of atoms.

\begin{defn}[Generalised transposition]\label{def:modules:gen-transposition}
  Let $\overrightarrow{a} \defeq (a_1,a_2,\dots,a_n)$, $\overrightarrow{b} \defeq (b_1, b_2,\dots,b_n)$, then we define
  a generalised transposition of $\overrightarrow{a}$ and $\overrightarrow{b}$ as a composition of
  simple transpositions:
  \[ \swap{\overrightarrow{a}}{\overrightarrow{b}} \defeq \swap{a_1}{b_1} \circ \swap{a_2}{b_2}\circ \dots
     \circ \swap{a_n}{b_n}\]
\end{defn}

\begin{defn}[Action]
  We define the \emph{action} of a $\Perm$ on a set $X$ as a binary operation
  $\action - -: \Perm -> X -> X$ with the following properties:
  \begin{itemize}
  \item for any $x\in X$, $\action{\id{id}} x = x$
  \item for any $x\in X$, $\pi_1,\pi_2 \in \Perm$, $\action{\pi_1}{(\action{\pi_2}}x) = \action{(\pi_1 \circ \pi_2)}x$
  \end{itemize}
\end{defn}

An action of $\pi$ on some $x$ basically allows one to ``apply'' a permutation to
occurrences of atoms ``inside'' $x$.

\begin{defn}[Support]
  Consider a finite subset $F$ of atoms $\Atom$, and a subgroup of permutations
  $\SubPerm{F}$ satisfying $\forall a \in F. \pi a = a$. We say that $F$ \emph{supports}
  an element $x \in X$ if for any permutation $\pi$ in $\SubPerm{F}$ we have
  $\action \pi x = x$. We write $\supp x$ for the support of $x$.
\end{defn}

We can also give an alternative characterisation of support using transpositions:
if for any two elements \emph{outside} of the a finite subset $F$ of $\Atom$,
it holds that if $\action{\swap{a}{b}} x = x$ for an element $x \in X$, then $F$ supports
x.

The characterisation in terms of transposition gives an intuitive understanding
of support as the finite set of atoms that may occur ``inside'' $x$, since if
we pick two elements outside of the support of $x$, then the action of the
transposition of these elements will have no effect on $x$.\footnote{Since we
  are not defining a \emph{smallest} support, $\supp x$ may contain more
  elements than the set of all atoms occurring in $x$}

\begin{defn}[Nominal set]\label{def:modules:nominal-set}
  A nominal set $\mathbf{X}$ is a set $X$ together with the action $\action - - : \Atom -> X -> X$,
  such that each element of $x \in X$ there \note{exists a finite subset $S$ of atoms $\Atom$ supporting $x$.}
\end{defn}

\begin{defn}[Freshness]\label{def:modules:freshness}
  We say that an atom $a \in \Atom$ is fresh for an element of a nominal set $x \in X$
  \note{if there exists some finite support $S$ of $x$ such that $a \notin S$}.
\end{defn}
Definition \ref{def:modules:freshness} is the usual notion of freshness for a
single atom. We can also define a generalised notion of freshness following \cite{Clouston2013},
which will be useful in the context of our formalisation.
\begin{defn}[Generalised freshness]\label{def:modules:gen-freshness}
  We say that the set of atoms of $x \in X$ is fresh for the set of atoms of $y \in Y$
  (we also can say that $y$ is fresh for $x$) if \note{for some finite support $S_1$ of $x$ and some finite support $S_2$ of $y$
  the following holds: $S_1 \cap S_2 = \emptyset$. That is, $S_1$ and $S_2$ are disjoint.}
\end{defn}

We will use the following notation for the freshness relation: $x \# y$, which one
can read as ``$x$ is fresh for $y$''. We will use the same notation for both freshness
and generalised freshness relations, and say explicitly, which notation is used
in case of ambiguity.

\begin{remark}
  \note{In our Coq development, each implementation of a nominal set module type
    contains a $supp$ function that accepts an element of the nominal set and returns its finite
    support.  Since we are not requiring for a nominal set to have the smallest
    support, we cannot prove some general properties as we would be able to do with
    the smallest support requirement. For example, we cannot show that any $supp$
    function is equivariant, because this is not true in general. Instead, for
    all the definitions of nominal sets used in our formalisation we choose the
    definitions of $supp$ in such a way that they satisfy the
    equivariance property. For every such definition of $supp$ we provide a
    separate proof of equivariance. The same applies to the freshness relation
    (since it depends on $supp$):
    for each nominal set in our formalisation we have to show that it is
    equivariant.}
\end{remark}

\begin{example}\label{ex:modules:lambda-nom}
  The set of lambda terms $\id{Lam}$ (see \reff{eq:modules:lambda-syntax} in
  Section \ref{sec:modules:stlc-norm}, assuming that variables $v \in \Atom$) is
  a nominal set $\mathbf{Lam}$ with the action:
  \begin{align*}
    \action \pi i &\defeq i\\
    \action \pi v &\defeq \pi v\\
    \action \pi {(\lambda x.e)} &\defeq \lambda (\pi x. \action{\pi}{e})\\
    \action \pi {(e_1 e_2)} & \defeq (\action \pi {e_1}) (\action \pi {e_2})
  \end{align*}
  The action of $\pi$ is applied uniformly to all occurrences of variables:
  binding, bound and free.
  Support is defined as follows:
  \begin{align*}
    \supp i &\defeq \emptyset\\
    \supp v &\defeq \{v\}\\
    \supp {(\lambda x.e)} &\defeq \{x\} \cup \supp e)\\
    \supp (e_1 e_2) & \defeq (\supp {e_1}) \cup (\supp {e_2})
  \end{align*}
  \begin{align*}
  \end{align*}
  That is, the support of a lambda term is the set of \emph{all} variables occurring
  in the term.
\end{example}

\begin{example}
  The set of atoms $\Atom$ is a nominal set $\mathbf{A}$. In this case the
  action on $a : \Atom$ is just an application of a permutation:
  \[ \action \pi a \defeq \pi a \]
  The support of $a : \Atom$ is the singleton set $\{a\}$.
\end{example}

\begin{example}\label{ex:modules:finset-nom}
  The set of finite sets of atoms $\Finset$ is a nominal set
  $\mathbf{Fin}_\Atom$. In this case the action of a permutation $\pi$
  on $X \in \Finset$ can be defined as an application of $\pi$ to each
  element of the set:
  \[ \action \pi X  \defeq \{\pi a ~|~ a \in X\} \]
  The support of $X \in \Finset$ is the finite set $X$ itself.
\end{example}

As a running example, which is relevant in our setting of a module system
formalisation, let us consider a simplified version of semantic objects (Figure
\ref{fig:modules:simpl-sem-obj}). We consider only module specifications with
a ``flat'' environment structure, avoiding mutual definitions for simplicity.
\begin{figure}
  \begin{align*}
    t & \in \Atom\\
    \tau \in \mathcal{T}  ::&= t ~|~ \tau_1 -> \tau_2\\
    E \in \Env &= \id{tid} \finmap \tau\\
    \Sigma = \exists T.E \in \MTy &= \Finset(\TSet) \times \Env\\
  \end{align*}
  \titlesembox{Type Expressions}{E \vdash \id{ty} : \tau}
  \[\id{ty}  ::= tid ~|~ \type{Arr}~\id{ty}_1~\id{ty}_2\]
  \begin{mathpar}
    \inferrule{E(\id{tid}) = \tau}
              {E \vdash \id{tid} : \tau}
              \quad\deflabel{\textsc{ty-tid}}\label{rule:modules:ty-tid}

              \and

    \inferrule{E \vdash \id{ty}_i : \tau_i \\ i = [1,2]}
              {E \vdash \type{Arr}~\id{ty}_1~\id{ty}_2 : \tau_1 -> \tau_2}
              \quad\deflabel{\textsc{ty-arr}}\label{rule:modules:ty-arr}
  \end{mathpar}
  \titlesembox{Module Specifications}{E \vdash \id{spec} : \exists T.E'}
  \[\id{spec}::=\kw{type}~tid~|~\kw{type}~tid = \id{ty}~|~\id{spec}_1;\id{spec}_2\]
  \begin{mathpar}
    \inferrule{\empty}
              {E \vdash \kw{type}~\id{tid} : \exists \{\id{t}\} .\{\id{tid} \mapsto \id{t}\}}
              \quad\deflabel{\textsc{spec-type}}\label{rule:modules:mini-spec-type}
              \and
    \inferrule{E \vdash \id{ty} : \tau}
              {E \vdash \kw{type}~\id{tid} = \tau:
                \exists \{\} .\{\id{tid} \mapsto \tau\}}
              \quad\deflabel{\textsc{spec-type-assgn}}\label{rule:modules:mini-spec-type-assgn}
              \and
    \inferrule{E \vdash \id{spec}_1 : \exists T_1.E_1 \\
               E + E_1 \vdash \id{spec}_2 : \exists T_2.E_2 \\
               T_1 \#(E,T_2) \\ \Dom~E_1 \cap \Dom~E_2 = \emptyset}
              {E \vdash \id{spec}_1 ;\id{spec}_2 : \exists(T_1 \cup T_2).(E_1 + E_2)}
              \quad\deflabel{\textsc{spec-seq}}\label{rule:modules:mini-spec-seq}
  \end{mathpar}
  \caption{Simplified semantic objects and elaboration rules for type
    expressions and module specifications.}\label{fig:modules:simpl-sem-obj}
\end{figure}

\begin{example}
  Types $\mathcal{T}$ form a nominal set $\mathbf{T}$ with the action given by
  \begin{align*}
    \action\pi{(\tau_1 -> \tau_2) & \defeq (\action \pi {\tau_1}) -> (\action \pi {\tau_2})}\\
    \action\pi t & \defeq \pi t
  \end{align*}
  And support given by
  \begin{align*}
    \supp{(\tau_1 -> \tau_2)} & \defeq (\supp{\tau_1}) \cup (\supp{\tau_2})\\
    \supp t & \defeq \{t\}
  \end{align*}
\end{example}

\begin{example}\label{ex:modules:env-nom}
  Environments (or finite maps) $\Env$, mapping type identifiers to types
  $\tau$ form a nominal set $\mathbf{Env}$ with the action given by
  \begin{align*}
    \action\pi e & \defeq \{(k \mapsto \aaction{\mathcal{T}}{\pi}{\tau})~ |~ (k \mapsto \tau) \in e\}\\
  \end{align*}
  We get a new finite map by applying the permutation action to
  every element in the codomain. Subscript $\mathcal{T}$ on the action of permutation
  $\pi$ means that we apply the action on nominal set $\mathbf{T}$.

  The support is defined as follows:
  \begin{align*}
    \supp e & \defeq \bigcup(\{\supp{\tau}~ |~ (k \mapsto \tau) \in e\})
  \end{align*}
  That is, the support of $e$ is a union of sets given by supports of all elements in the codomain.
\end{example}

From the examples above we can define a nominal set for the type of module signatures $\Sigma$,
which are pairs of a finite set of existentially quantified variables $T$ and an environment
$E$ where these variables may occur.
\begin{example}
  We define a nominal set $\mathbf{MTy}$ of module signatures $\MTy$ equipped with
  the action defined in terms of actions on nominal sets $\mathbf{Fin}_\Atom$
  and $\mathbf{Env}$
  \[
    \action\pi (\exists T.E) \defeq \exists (\aaction \Finset \pi T).(\aaction \Env \pi E)\\
  \]
  The same applies to the definition of support:
  \[
   \supp (\exists T.E) \defeq (\asupp \Finset T) \cup (\asupp \Env E)\\
  \]
\end{example}

From now on we will omit explicit annotations on the action of a permutation
and on the support, since it can be determined from the type of argument.

An important notion in the theory of nominal sets is the notion of
\emph{equivariance}.
\begin{defn}[Equivariant functions]\label{def:modules:equivar-fn}
  For two nominal sets $\mathbf{X}$ and $\mathbf{Y}$, a function between the carrier sets
  $f : X -> Y$ is called equivariant if it has the following property:
  \begin{align*}
    \forall x \in X, \pi \in \Perm.~ f ~(\action \pi x) = \action \pi {(f~x)}
  \end{align*}
  That is, $f$ preserves the action of a permutation $\pi$.
\end{defn}

\begin{defn}[The category of nominal sets]
  Nominal sets form a category $\Nom$ where objects are nominal sets and
  morphisms are equivariant functions.
\end{defn}

The $\Nom$ category is Cartesian closed. Particularly, our example of $\mathbf{MTy}$
being a nominal set boils down to the fact that $\Nom$ has finite products.

The same way as we defined the notion of equivariant functions above
(definition \ref{def:modules:equivar-fn}), one can define the notion of
equivariant relation.
\begin{defn}[Equivariant relations]
  For two nominal sets $\mathbf{X}$ and $\mathbf{Y}$ a relation
  $\mathcal{R} \subseteq X \times Y$ is equivariant if
  \begin{align*}
    \forall x \in X, y \in Y, \pi \in \Perm.~
    x \mathcal{R} y => (\action \pi {x}) \mathcal{R} (\action \pi {y})
  \end{align*}
\end{defn}

\begin{lemma}
  The following operations are equivariant:
  \begin{itemize}
  \item the union of two finite sets $X$ and $Y$:
    \[\action \pi {X \cup Y} = (\action{\pi}{X}) \cup  (\action{\pi}{Y})\]
  \item the intersection of two finite sets $X$ and $Y$:
    \[\action \pi {X \cap Y} = (\action{\pi}{X}) \cap  (\action{\pi}{Y})\]
  \item the modification of environment $E_1$ by environment $E_2$ (see
    Definition \ref{def:modules:env-modification}):
    \[\action \pi {(E_1 + E_2)} = (\action{\pi}{E_1}) +  (\action{\pi}{E_2})\]
  \end{itemize}
\end{lemma}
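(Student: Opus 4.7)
The plan is to establish each of the three equalities by extensional reasoning on the underlying carriers (membership equality for finite sets, pointwise equality for finite maps), unfolding the definitions of the relevant actions and using only elementary properties of the bijection $\pi$. In each case the proof is a short calculation once the action is expanded, and no induction or heavy lemmas are required.

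For the union case, I would fix an arbitrary $b \in \Atom$ and chase membership through the definition $\action{\pi}{X} = \{\pi a \mid a \in X\}$ given in Example \ref{ex:modules:finset-nom}. The chain of equivalences $b \in \action{\pi}{(X \cup Y)}$ iff $\exists a \in X \cup Y.\ \pi a = b$ iff $(\exists a \in X.\ \pi a = b) \lor (\exists a \in Y.\ \pi a = b)$ iff $b \in (\action{\pi}{X}) \cup (\action{\pi}{Y})$ closes the goal. No property of $\pi$ beyond its being a function is needed here, since existentials distribute over disjunctions unconditionally.

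For intersection the forward direction is symmetric to union. The backward direction is the only step that is not routine: given $b \in (\action{\pi}{X}) \cap (\action{\pi}{Y})$, we extract witnesses $a_1 \in X$ and $a_2 \in Y$ with $\pi a_1 = \pi a_2 = b$, and because $\pi$ is a bijection (so injective, cf.\ Definition \ref{def:modules:perm}) we conclude $a_1 = a_2$; this common element lies in $X \cap Y$ and its image is $b$. Injectivity of $\pi$ is the one nontrivial fact used in the entire lemma, and this is where I expect the main (small) obstacle.

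For environment modification, I would unfold Definition \ref{def:modules:env-modification} pointwise. Because the action on $\Env$ from Example \ref{ex:modules:env-nom} permutes only codomain values, domains are preserved, i.e.\ $\Dom(\action{\pi}{E}) = \Dom E$; this is needed so the case analysis in $+$ lines up on both sides. Then for any key $x$: if $x \in \Dom E_2$ both sides evaluate to $\action{\pi}{(E_2(x))}$, otherwise if $x \in \Dom E_1$ both evaluate to $\action{\pi}{(E_1(x))}$, and both sides are undefined outside $\Dom E_1 \cup \Dom E_2$. The real work in Coq here is not conceptual but bookkeeping: one must replay the case analysis on domain membership and invoke equivariance of finite-map lookup, which in turn leans on the isomorphic-representation infrastructure developed elsewhere in this chapter for handling finite maps uniformly.
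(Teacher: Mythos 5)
Your proof is correct; the paper states this lemma without proof, and your argument is exactly the routine extensional calculation it implicitly relies on. You also correctly isolate the one non-trivial ingredient, namely injectivity of $\pi$ in the backward direction of the intersection case, and the observation that the action on environments preserves domains, which is what makes the case analysis for $+$ line up.
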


\begin{lemma}\label{lem:modules:gen-fresh-equivar}
  \note{The generalised freshness relation for nominal sets $\mathbf{Fin}_\Atom$ and
    $\mathbf{Env}$, with finite supports for the elements given by the $supp$ function
    defined in Examples \ref{ex:modules:finset-nom} and \ref{ex:modules:env-nom}, is equivariant. That is,
  for elements $x$ and $y$ of these nominal sets we have the following:}
  \[ x \# y => (\action{\pi}{x}) \# (\action{\pi}{y}) \]
\end{lemma}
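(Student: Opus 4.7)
The plan is to reduce equivariance of the generalised freshness relation to equivariance of the underlying $\supp$ functions together with equivariance of intersection (which was established in the immediately preceding lemma). Concretely, I want to argue that for any $x,y$ living in $\mathbf{Fin}_\Atom$ or $\mathbf{Env}$, the chosen $\supp$ satisfies $\supp(\action \pi x) = \action \pi {\supp x}$; once this is in hand, freshness-preservation is a short calculation.

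First I would establish equivariance of $\supp$ for $\mathbf{Fin}_\Atom$. Since the chosen support of a finite set $X$ is $X$ itself and the action is pointwise application of $\pi$, this is definitional:
\[
\supp(\action \pi X) = \action \pi X = \action \pi {\supp X}.
\]
Next I would prove equivariance of $\supp$ for $\mathbf{Env}$. The action is $\action \pi e = \{(k \mapsto \action \pi \tau)\mid (k \mapsto \tau)\in e\}$ and $\supp e = \bigcup\{\supp\tau \mid (k\mapsto\tau)\in e\}$, so
\[
\supp(\action \pi e) = \bigcup\{\supp(\action \pi \tau)\mid (k\mapsto \tau)\in e\}.
\]
Assuming equivariance of $\supp$ on the nominal set $\mathbf{T}$ of types (which is a routine induction on $\tau$, following the clauses $\supp t=\{t\}$ and $\supp(\tau_1\to\tau_2)=\supp\tau_1\cup\supp\tau_2$), the right-hand side equals $\bigcup\{\action \pi {\supp\tau}\mid (k\mapsto \tau)\in e\}$, which, since the action commutes with finite unions, equals $\action \pi {\supp e}$.

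With equivariance of $\supp$ for both nominal sets in hand, the main statement follows immediately. Unfolding Definition~\ref{def:modules:gen-freshness}, $x \# y$ gives finite supports $S_1\supseteq\supp x$, $S_2\supseteq\supp y$ with $S_1\cap S_2=\emptyset$; applying $\pi$ and using equivariance of intersection from the preceding lemma yields
\[
(\action \pi {S_1})\cap(\action \pi {S_2}) = \action \pi {(S_1\cap S_2)} = \action \pi \emptyset = \emptyset,
\]
and by equivariance of $\supp$ these are supports of $\action \pi x$ and $\action \pi y$ respectively, so $(\action \pi x)\#(\action \pi y)$.

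The only nontrivial step is the equivariance of $\supp$ on $\mathbf{Env}$; the main obstacle here is purely bureaucratic, namely checking that the comprehension-style definition of the action on finite maps really does commute with the indexed union used to define $\supp$. In a proof assistant this amounts to reasoning about the chosen concrete representation of finite maps (e.g.\ showing that mapping the values of a finite map and then taking the union of supports agrees with first computing supports and then mapping through $\pi$), which is exactly the kind of bookkeeping the later Coq development is designed to absorb via the isomorphic-representation technique mentioned in the chapter's introduction.
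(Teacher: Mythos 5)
Your proof is correct and follows exactly the strategy the paper intends: the paper gives no explicit proof of this lemma, but the remark preceding it states that equivariance of the freshness relation is obtained by proving equivariance of the chosen $\supp$ function separately for each nominal set, which is precisely your reduction. Your final calculation via equivariance of intersection (relying on $\pi$ being a bijection) and $\action{\pi}{\emptyset}=\emptyset$ is sound, so nothing is missing.
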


\begin{remark}
  We write $X = Y$, where $X$ and $Y$ are two finite sets, for extensional equality of sets,
  i.e. $x \in X <=> y \in Y$. The same holds for environments.
\end{remark}

Let us consider proofs of equivariance of elaboration relations
(see Figure \ref{fig:modules:simpl-sem-obj}. We present detailed proofs of two lemmas
below.

\begin{lemma}\label{lem:modules:nominal-minielab-equivar}
  The elaboration relation for type expressions $E \vdash \id{ty} : \tau$
  (see Figure \ref{fig:modules:simpl-sem-obj}) is equivariant.
  That is, for any $\pi$, $E$, $\id{ty}$ and $\tau$
  \[ \overset{\mathcal{T}}{E \vdash \id{ty} : \tau} =>
  (\action{\pi}{E}) \vdash \id{ty} : (\action{\pi}{\tau}) \]
\end{lemma}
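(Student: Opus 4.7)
The plan is to proceed by straightforward induction on the elaboration derivation $\mathcal{T}\colon E \vdash \id{ty} : \tau$, exploiting the fact that type identifiers $\id{tid}$ are syntactic labels (not atoms), so the syntactic type expression $\id{ty}$ is unaffected by the permutation action; only $E$ and $\tau$ are moved by $\pi$.

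For the base case \nameref{rule:modules:ty-tid}, the premise is $E(\id{tid}) = \tau$. I would first establish a small auxiliary fact about the nominal structure of $\mathbf{Env}$ from Example \ref{ex:modules:env-nom}: finite-map lookup at a fixed type identifier key is equivariant, i.e.\ whenever $E(\id{tid}) = \tau$ we have $(\action{\pi}{E})(\id{tid}) = \action{\pi}{\tau}$. This is immediate from the definition of the action on $\mathbf{Env}$, which applies $\pi$ pointwise to the codomain while leaving keys (which are syntactic identifiers, not atoms) untouched. With this in hand, reapplying \nameref{rule:modules:ty-tid} to $\action{\pi}{E}$ and $\action{\pi}{\tau}$ closes the case.

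For the inductive case \nameref{rule:modules:ty-arr}, the derivation has the form $E \vdash \id{ty}_i : \tau_i$ for $i = 1,2$ concluding $E \vdash \type{Arr}~\id{ty}_1~\id{ty}_2 : \tau_1 \to \tau_2$. Applying the induction hypothesis to each subderivation yields $(\action{\pi}{E}) \vdash \id{ty}_i : (\action{\pi}{\tau_i})$. Using the definitional equation $\action{\pi}{(\tau_1 \to \tau_2)} = (\action{\pi}{\tau_1}) \to (\action{\pi}{\tau_2})$ for the action on $\mathbf{T}$, I can reapply \nameref{rule:modules:ty-arr} to conclude $(\action{\pi}{E}) \vdash \type{Arr}~\id{ty}_1~\id{ty}_2 : \action{\pi}{(\tau_1 \to \tau_2)}$, as required.

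There is no real obstacle here, since the elaboration rules for type expressions are themselves structural and do not interact with binders or with the quantified type-variable set $T$ of module signatures. The only care-point is the lookup equivariance lemma; in the Coq development this must be proved explicitly against the chosen concrete representation of finite maps, but modulo that bookkeeping the induction is purely mechanical. This lemma serves as a template for the analogous (and more involved) equivariance results for the elaboration of specifications, where the binding structure of $\exists T.E'$ and the freshness side-conditions in \nameref{rule:modules:mini-spec-seq} will force genuine use of Lemma \ref{lem:modules:gen-fresh-equivar}.
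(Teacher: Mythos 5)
Your proposal is correct and matches the paper's own proof: both proceed by induction on the elaboration derivation, using equivariance of lookup (immediate from the pointwise definition of the action on $\mathbf{Env}$) for the \nameref{rule:modules:ty-tid} case and the induction hypotheses together with the definitional action on arrow types for the \nameref{rule:modules:ty-arr} case. The extra remarks about keys not being atoms and about the lookup lemma needing an explicit proof in Coq are accurate elaborations of the same argument.
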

\begin{proof}
  By induction on derivation $\mathcal{T}$.
  \begin{case}
    $\mathcal{T} =$
    $\inferrule{E(\id{tid}) = \tau}
               {E \vdash \id{tid} : \tau}\quad\nameref{rule:modules:ty-tid}$\\
    From the assumption $E(k) = v$,
    and the definition of the action on environments, we get
    $(\action{\pi}{E})(tid) = \action{\pi}{\tau}$.

    Now, we can construct the required derivation
    \[\inferrule{(\action{\pi}{E})(tid) = \action{\pi}{\tau}}
                {(\action{\pi}{E}) \vdash \id{tid} : \action{\pi}{\tau}}
                \quad\deflabel{\textsc{ty-tid}}\]
  \end{case}
  \begin{case}
    $\mathcal{T} =$
    $\inferrule{\overset{\mathcal{T}_i}{E \vdash \id{ty}_i : \tau_i} \\ i = [1,2]}
    {E \vdash \type{Arr}~\id{ty}_1~\id{ty}_2 : \tau_1 -> \tau_2}\quad\nameref{rule:modules:ty-arr}$

    By induction hypotheses, we get two derivations:

    $\mathcal{T}_1' =$ $(\action{\pi}{E}) \vdash \id{tid} : \action{\pi}{\tau_1}$

    $\mathcal{T}_2' =$ $(\action{\pi}{E}) \vdash \id{tid} : \action{\pi}{\tau_2}$

    We construct the required derivation from $\mathcal{T}_1'$ and $\mathcal{T}_2'$ using the rule
    \nameref{rule:modules:ty-arr}.
  \end{case}
\end{proof}

\begin{lemma}
  The elaboration relation for module specifications ${E \vdash \id{spec} : \exists T.E'}$
  (see Figure \ref{fig:modules:simpl-sem-obj}) is equivariant.
  That is, for any $\pi$, $E$, $E'$, $\id{spec}$ and $T$
  \[ \overset{\mathcal{T}}{E \vdash \id{spec} : \exists T.E'} =>
    (\action{\pi}{E}) \vdash \id{spec} : \exists (\action{\pi}{T}).(\action{\pi}{E}) \]
\end{lemma}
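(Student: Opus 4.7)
The plan is to proceed by induction on the derivation $\mathcal{T}$, considering each of the three rules in Figure \ref{fig:modules:simpl-sem-obj}. In each case I must produce a derivation of $(\action{\pi}{E}) \vdash \id{spec} : \exists (\action{\pi}{T}).(\action{\pi}{E'})$ by applying the same rule, after checking that the premises and side conditions transport along the permutation action.

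For the base case \nameref{rule:modules:mini-spec-type}, nothing needs to be proved about premises; I simply reapply the rule, yielding $\exists\{\action{\pi}{t}\}.\{\id{tid}\mapsto\action{\pi}{t}\}$, which matches the action on the existential $\exists\{t\}.\{\id{tid}\mapsto t\}$ after unfolding the definitions of the action on $\mathbf{Fin}_\Atom$ and $\mathbf{Env}$ (observing that keys $\id{tid}$ are not atoms and therefore untouched). For \nameref{rule:modules:mini-spec-type-assgn}, I first apply Lemma \ref{lem:modules:nominal-minielab-equivar} to the premise to obtain $(\action{\pi}{E}) \vdash \id{ty} : \action{\pi}{\tau}$, and then reapply the rule; again, $\action{\pi}{\emptyset} = \emptyset$ and the singleton map transports correctly.

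The main obstacle is the sequence case \nameref{rule:modules:mini-spec-seq}. Given premises $E \vdash \id{spec}_1 : \exists T_1.E_1$ and $E+E_1 \vdash \id{spec}_2 : \exists T_2.E_2$ together with side conditions $T_1 \# (E,T_2)$ and $\Dom~E_1 \cap \Dom~E_2 = \emptyset$, I apply the induction hypotheses to get $(\action{\pi}{E}) \vdash \id{spec}_1 : \exists (\action{\pi}{T_1}).(\action{\pi}{E_1})$ and $(\action{\pi}{(E+E_1)}) \vdash \id{spec}_2 : \exists (\action{\pi}{T_2}).(\action{\pi}{E_2})$. I then rewrite the second derivation using equivariance of environment modification, $\action{\pi}{(E+E_1)} = (\action{\pi}{E}) + (\action{\pi}{E_1})$. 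To discharge the freshness side condition, I invoke Lemma \ref{lem:modules:gen-fresh-equivar} to transport $T_1 \# (E,T_2)$ to $(\action{\pi}{T_1}) \# ((\action{\pi}{E}),(\action{\pi}{T_2}))$; for the domain disjointness, I use that $\Dom$ is unaffected by the permutation action on $\mathbf{Env}$ (since the action only permutes atoms appearing in the codomain), so the original disjointness carries over verbatim. Finally, I conclude by the rule, using equivariance of the union of finite sets and of environment modification to recognise the resulting $\exists (\action{\pi}{T_1} \cup \action{\pi}{T_2}).((\action{\pi}{E_1}) + (\action{\pi}{E_2}))$ as $\action{\pi}{(\exists(T_1 \cup T_2).(E_1 + E_2))}$.

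The subtle point worth flagging is the interaction between the action on $\mathbf{Fin}_\Atom$ and the set-theoretic operations: the equalities $\action{\pi}{(X \cup Y)} = (\action{\pi}{X}) \cup (\action{\pi}{Y})$ and $\action{\pi}{(X \cap Y)} = (\action{\pi}{X}) \cap (\action{\pi}{Y})$ hold only extensionally, so in a Coq formalisation I would need to rewrite modulo the chosen setoid equality on finite sets and environments. Apart from this bookkeeping, the proof is a straightforward equivariance transport, with Lemma \ref{lem:modules:gen-fresh-equivar} doing the substantive work for the binder case.
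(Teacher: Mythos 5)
Your proof is correct and follows essentially the same route as the paper's: induction on the derivation, reapplication of each rule after transporting the premises, Lemma \ref{lem:modules:nominal-minielab-equivar} for the type-assignment case, and Lemma \ref{lem:modules:gen-fresh-equivar} together with equivariance of union and environment modification (and the observation that the permutation action leaves domains untouched) for the sequence case. Your remark about extensional equality of finite sets matches the convention the paper adopts explicitly, so nothing further is needed.
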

\begin{proof}
  By induction on the derivation $\mathcal E$.
  \begin{case}
    $\mathcal{E}=$
    $\inferrule{\empty}{E \vdash \kw{type}~\id{tid} : \exists \{\id{t}\} .\{\id{tid} \mapsto \id{t}\}}\quad\nameref{rule:modules:mini-spec-type}$

    We use the fact that $\action \pi {\{x\}} = \{\action{\pi}{x}\}$.
    The same holds for the singleton environment
    $\action \pi {\{k \mapsto v\}} = \{k \mapsto \action{\pi}{x}\}$.
    We construct the required derivation using these facts and the rule
    \nameref{rule:modules:mini-spec-type}.
  \end{case}
  \begin{case}
    $\mathcal{E}=$
    $\inferrule{\overset{\mathcal T}{E \vdash \id{ty} : \tau}}
              {E \vdash \kw{type}~\id{tid} = \tau:
                \exists \{\} .\{\id{tid} \mapsto \tau\}}\quad\nameref{rule:modules:mini-spec-type-assgn}.$

    We know that $\action{\pi}{\{\}} = \{\}$, and
    $\action \pi {\{k \mapsto v\}} = \{k \mapsto \action{\pi}{x}\}$. From
    equivariance of the elaboration relation for type expressions (Lemma
    \ref{lem:modules:nominal-minielab-equivar}) with $\mathcal{T}$, we get
    $\mathcal{T'}=$ $( \action \pi E ) \vdash \id{ty} : ( \action \pi \tau )$.

    Now, we can construct the required derivation using the rule
    \nameref{rule:modules:mini-spec-type-assgn}.
  \end{case}
  \begin{case}
    $\mathcal{E}=$
    $\inferrule{\overset{\mathcal{E}_1}{E \vdash \id{spec}_1 : \exists T_1.E_1} \\
               \overset{\mathcal{E}_2}{E + E_1 \vdash \id{spec}_2 : \exists T_2.E_2} \\\\
               T_1 \#(E,T_2) \\ \Dom~E_1 \cap \Dom~E_2 = \emptyset}
    {E \vdash \id{spec}_1 ;\id{spec}_2 : \exists(T_1 \cup T_2).(E_1 + E_2)}\quad\nameref{rule:modules:mini-spec-seq}$

    We have to construct a derivation for
    \[( \action \pi E ) \vdash \id{spec}_1;\id{spec}_2 :
    \action{\pi}{(\exists(T_1 \cup T_2).(E_1 + E_2))}\]

    Using the definition of the action of $\pi$ on module signatures $\MTy$, and the facts that
    the union operation on sets and the modification operation on environments are equivariant,
    we transform our goal to the following form:
    \[( \action \pi E ) \vdash \id{spec}_1;\id{spec}_2 :
    (\exists(\action{\pi}{T_1} \cup \action{\pi}{T_2}).(\action{\pi}{E_1} + \action{\pi}{E_2}))\]
    \note{Next, by equivariance of the freshness relation for nominal sets $\mathbf{Fin}_\Atom$ and
    $\mathbf{Env}$ (Lemma \ref{lem:modules:gen-fresh-equivar})}, we get
    \begin{equation}\label{eq:modules:spec-elab-equivar-1}
      (\action\pi{T_1}) \# (\action\pi{E},\action\pi{T_2})
    \end{equation}
    we also get
    \begin{equation}\label{eq:modules:spec-elab-equivar-2}
      \Dom~(\action\pi{E_1}) \cap \Dom~~(\action\pi{E_2}) = \emptyset
    \end{equation}
    since the permutation action on environments does not affect the domain.
    By induction hypothesis on $\mathcal{E}_1$, we get
    \[\mathcal{E}_1' = (\action\pi E) \vdash \id{spec}_1 :
    \exists (\action\pi{T_1}).(\action\pi{E_1})\]
    By induction hypothesis on $\mathcal{E}_2$, we get
    \[\mathcal{E}_2' = (\action\pi{E + E_1}) \vdash \id{spec}_2 :
    (\action\pi{T_2}).(\action\pi{E_2})\]
    Using the equivariance of the modification operation on environments, we get
    \[\mathcal{E}_2'' = (\action\pi{E} + \action\pi{E_1}) \vdash \id{spec}_2 :
    (\action\pi{T_2}).(\action\pi{E_2})\]
    We get the required derivation using Rule \nameref{rule:modules:mini-spec-seq}
    with $\mathcal{E}_1'$, $\mathcal{E}_2''$, (\ref{eq:modules:spec-elab-equivar-1}),
    and (\ref{eq:modules:spec-elab-equivar-2}).~

    We observe, that the proof of equivariance boils down to showing that operations
    and relations involved in the relation definition are themselves equivariant (see \cite[Section 7.3]{Pitts:2013:NSN:2512979}).
  \end{case}
\end{proof}

The notion of $\alpha$-equivalence is often used to say that two terms are equal ``up to''
renaming of bound variables. Using  $\alpha$-equivalence one can express independence
of particular choices for names of bound variables.
Using nominal techniques, we can give a definition of  $\alpha$-equivalence just in
terms of freshness and permutation of variables.

\begin{example}
  First, let us consider a traditional setting of the lambda calculus. The following
  inductively defined relation specifies $\alpha$-equivalence of lambda terms \cite{Gabbay2002}:
  \begin{mathpar}
    \inferrule{\empty}
              {a =_\alpha a}

   \and

    \inferrule{t_1 =_\alpha t_1' \\
               t_2 =_\alpha t_2'}
              {t_1 t_2 =_\alpha t_1' t_2'}
   \and

   \inferrule{\action{\swap{a_1}{b}}{t_1} =_\alpha \action{\swap{a_2}{b}}{t_2} \\
               b \# (a_1,a_2,t_1,t_2)}
              {\lambda a_1 . t_1 =_\alpha \lambda a_2 . t_2}
  \end{mathpar}
  It has been shown in \cite{Gabbay2002} that the definition of
  $\alpha$-equivalence in terms of permutations corresponds to the usual notion
  of $\alpha$-equivalence on lambda terms. Moreover, the support of
  $t \in \id{Lam/_{=_\alpha}}$ (where $Lam/_{=_\alpha}$ is a quotient set) is
  exactly the set of free variables of $t$.
\end{example}

For our running example of simplified semantic objects given in Figure
\ref{fig:modules:simpl-sem-obj}, we could consider a characterisation
of $\alpha$-equivalence in terms of generalised transpositions
(Definition \ref{def:modules:gen-transposition}). In this case, we would
have to fix some (arbitrary) order for the sets of variables in binding positions.
We write $\overrightarrow{T}$ for the set of variables $T$ ordered according to some
fixed order. We can define an $\alpha$-equivalence relation on $\Env$ as follows:
\begin{mathpar}
  \inferrule{\action{\swap{\overrightarrow{T_1}}{\overrightarrow T}}{E_1}\\
             \action{\swap{\overrightarrow{T_2}}{\overrightarrow T}}{E_2}\\
             T \# (T_1, T_2, E_1, E_2)\\\\
             \card{T_1} = \card{T_2} = \card T
  }
  {\exists T_1. E_1 =_\alpha \exists T_2.E_2}
\end{mathpar}
In this definition we assume that cardinalities of the finite sets of variables
$T_1$, $T_2$, and $T$ are equal.

Unfortunately, since we have to fix some arbitrary order on sets of variables,
some properties of generalised transpositions does not play well with the idea
of sets of variables to be unordered. For example, if we have sets $T = T_1 \cup T_2$
and $T' = T_1' \cup T_2'$, where $T_1$ is disjoint from $T_2$, and $T_1'$ is
disjoint from $T_2'$ this property fails to hold:
\[ \swap{\overrightarrow{(T_1 \cup T_2)}}{\overrightarrow{(T_1' \cup T_2')}}
\neq \swap{\overrightarrow{T_1}}{\overrightarrow{T_1'}} \circ
\swap{\overrightarrow{T_2}}{\overrightarrow{T_2'}} \]

Instead of defining $\alpha$-equivalence in terms of generalised transpositions,
we give a more general definition that imposes constraints on the permutation applied,
instead of using transpositions explicitly.

\begin{align}\label{eq:modules:MTy-alpha-equiv-perm}
  \inferrule{\action \pi {(\exists T_1.E_1) = \exists T_2.E_2}\\
             \forall a. a \in (\supp E_1 - T_1) => \pi a = a}
  {\exists T_1. E_1 =_\alpha \exists T_2.E_2}
\end{align}
The constraint on the permutation $\pi$ says that the permutation is an
identity on free variables of $E_1$, that is, it affects only bound variables
in $E_1$.  Since
$\action \pi {(\exists T_1.E_1) = \exists T_2.E_2} = \exists(\action{\pi}{T_1}).(\action{\pi}{E_1})$,

Rule \reff{eq:modules:MTy-alpha-equiv-perm} says that two module signatures are
$\alpha$-equivalent for any permutation $\pi$ such that it affects only bound
variables of $E_1$ and makes components of the signatures equal. This
definition does not depend on the particular way of constructing a permutation
$\pi$. Particularly, in certain situations, when we need such a permutation, we
can use a generalised transposition to construct it.

All the examples related to the simplified setting of module specifications
given in Figure \ref{fig:modules:simpl-sem-obj} generalises to the full setting
of mutually dependent definitions of semantic objects given in Section
\ref{subsec:modules:sem-obj}.  We will provide some details of how nominal
techniques apply to the full setting when discussing our formalisation in the
Coq proof assistant (see section \ref{subsec:modules:nominal-in-coq}).

In the conclusion of this section, we would like to mention the following. The
use of nominal techniques could provide us with the mechanism for convenient
reasoning about structures with binders using an induction principle that
incorporates the idea of $\alpha$-conversion. As it has been shown
(see \cite{PittsAM:alpsri} and \cite{Urban2007}), for equivariant relations one can derive a stronger induction
principle with an additional finitely supported \emph{induction context}
$C$. Instantiating the induction context appropriately, one can obtain the required
freshness conditions for cases where it is required.

Even for certain formulations of the simply-typed lambda calculus, to be able to
prove the weakening property in a proof assistant, one has to use
$\alpha$-conversion explicitly. That is, if we defined the typing rule for the
case of lambda abstraction in the following form
\[
\inferrule{\Gamma, x:\tau_1 \vdash e : \tau_2 \\ x \notin \id{dom}(\Gamma)}
            {\Gamma \vdash \lambda x.e : \tau_1 -> \tau_2}
\]
then in the proof of the weakening lemma we would have to show that
$x \notin \id{dom}(\Gamma')$ for $\Gamma \subseteq \Gamma'$, knowing that
$x \notin \id{dom}(\Gamma)$. This is usually done informally: by variable
convention we always can pick such $x$, but in the formalisation one has to
make precise why such a renaming is possible. Since the typing relation is
equivariant (and relations used in side conditions as well), one can use a
strengthened induction principle. The induction principle, adapting the
definition from \cite{Urban2007} looks as follows. For any predicate
$P~C~\Gamma~e~\tau$, where $C$ is an additional induction context, we have:
\begin{align*}
  \forall C ~\Gamma~n,&~ P~C~ \Gamma~ n~ Int \\[0.5em]
  \forall C ~\Gamma~ x~ \tau, &~\Gamma(x) = \tau => P ~C ~\Gamma ~x~\tau \\[0.5em]
  \forall C ~ \Gamma~ x~ e ~\tau_1~ \tau_2,&~x \# \Gamma => x \# C =>\\
  &~\Gamma,x:\tau_1 |- e : \tau_2 => (\forall C, P~C~(\Gamma,x:\tau_1)~e~\tau_1) =>\\
  ~&P~C~\Gamma~ (\lambda x. e) (\tau_1 -> \tau_2) \\[0.5em]
  \forall C~\Gamma~ e_1~ e_2~\tau_1~ \tau_2,
  &~\Gamma |- e_1 : \tau_1 -> \tau_2 =>(\forall C, P~C~\Gamma~ e_1 ~(\tau_1 -> \tau_2)) =>\\
  &~\Gamma |- e_2 : \tau_1 => (\forall C, P ~C~ \Gamma~ e_2~ \tau_1) =>\\
  &~ P~C~\Gamma~(e_1~e_2)~\tau_2\\
    \hline\\[-1em]
    \forall ~C~\Gamma~ e~\tau,&~ \Gamma |- e : \tau => P~C~ \Gamma~ e~ \tau
\end{align*}
Notice an additional condition $x \# C$ in the case of lambda
abstraction. Using this induction principle, the case for $\lambda x.e$ could
be proved by instantiating the context $C$ with $\Gamma'$.

\section{Formalisation in Coq}\label{sec:modules-coq}
We have formalised, in the Coq proof assistant, essential parts of the
definitions given in Section \ref{sec:modules-spec} along with the proof
of static interpretation normalisation. We have taken an extrinsic approach
\cite{BentonHKM12}, as opposed to an intrinsic one, to the representation of
the core language, the module language, and the target language, which keeps
our implementation close to the approach presented in the paper.  The extrinsic
encoding has an advantage of being more suitable for code extraction to obtain
a certified implementation.

\note{We use two extra axioms in our Coq formalisation: the axiom of
functional extensionality and the axiom of proof irrelevance. Both
axioms are consistent with the theoretical foundation of the
Coq proof assistant - Calculus of Inductive Constructions (CIC). See
Remark \ref{rem:modules:proof-irrel} for the details about the proof
irrelevance axiom.}

That is, we have implemented the abstract syntax as simple inductive data types
and given separate inductive definitions for relations such as elaboration,
typing, and so on. The semantic objects from Section
\ref{subsec:modules:sem-obj} have been implemented as mutually defined
inductive types using Coq's \texttt{with} clause. The same approach is used in
definitions of relations on semantic objects.

For proof automation, we make use of the \kt{crush} tactic from \cite{cpdt} and
some tactics from \cite{pierce}. We have striven to keep the structure of
proofs explicit, using automation to resolve only the most tedious and
repetitive cases. When proving goals involving dependent types (like vectors)
we use tactics \icode{dependent destruction} and \icode{dependent induction}
from the \icode{Program.Equality} module.

In the following, we are going to discuss issues related to some details of the
implementation in Coq. Particularly, we will discuss the definitions of
environments (finite maps) with properties that simplify proof development,
some issues related to the conservative strict positivity checks in Coq for the
definition of semantic objects, and induction principles used in proofs. We
describe the nominal sets implementation and provide details of definitions of
particular nominal sets relevant for our development. We also present an
implementation of the logical relation and highlight some details related to
the proof of static interpretation normalisation (Theorem \ref{norm.prop}).

We will use Coq's syntax for most of the definitions in the reminder of this chapter,
including lemmas and theorems, which directly correspond to our
implementation. In the proof sketches that follow, we are aiming at
showing the overall ideas and intuitions, which will serve as a guide to our Coq
implementation.

\subsection{Semantic Objects Representation}
As described in Section \ref{sec:modules-spec}, semantic objects are
represented using finite maps (which we also call \emph{environments}) and sets
(see Figure \ref{fig:modules:semobjects}). Indeed, the implementation makes
use of Coq's standard library implementations of such objects. For
environments we define a module type, which specifies operations on
environments used in our implementation following our naming convention. We
define a module, corresponding to this module type by mapping operations from
our custom module type to the operations from the standard
library. Specifically, we use the \texttt{FMapList} implementations of the
\texttt{FMap} interface. In the \texttt{FMapList} implementation, the
underlying data structure is a list of pairs \icode{list (Key * A)}, where
\icode{Key} is the type of keys (a domain) and \icode{A} is the type of values (a
codomain).  This list is equipped with the property of being ordered according
to a strict order of keys (we assume that the type \icode{Key} has a strict
order).  The list and the property are packed together using Coq's records,
which can be seen as a generalisation of $\Sigma$-types.  We use the definition
from the \texttt{FMapList} module from the Coq standard library:
\begin{defn}\label{def:slist}
\begin{lstlisting}
Record slist (A : Type) :=
  {this :> list A; sorted : sort (ltk A) this}.
\end{lstlisting}
\end{defn}
Here, the \icode{this} field denotes an underlying list, and the \icode{sorted}
field represents the property of the list being ordered with respect to
a strict order on keys:
\begin{lstlisting}
Definition ltk : forall A : Type, Key * A -> Key * A -> Prop :=
fun (A : Type) (p p' : Key * A) => Key.lt (fst p) (fst p')
\end{lstlisting}
Coq automatically generates a constructor for the defined record:
\begin{lstlisting}
Build_slist: forall (A : Type) (xs : list A),
             sort (ltk A) this -> slist A
\end{lstlisting}
In our implementation, we instantiate the \texttt{FMapList} module from the standard
library by the name \icode{En}. We use \icode{En.t} to refer to the environment
type constructor, which corresponds internally to \icodet{slist} from Definition
\ref{def:slist}.

The property given by the \icode{sorted} field of the \icode{slist} record gives
us the canonical representation of environments. The canonical representation
allows us to prove an important property, which makes proofs involving
environments easier to write: if two environments contain the same mappings,
they are propositionaly equal in Coq. We call this property \emph{environments
  extensionality}:
\begin{defn}[Environments extensionality]
  For any type \icodet{A : Type} and for any two environments \icodet{E E' : En.t A}, we have
  \begin{lstlisting}
    (forall k : Key, look k E = look k E') -> E = E'
  \end{lstlisting}
  Here the equals sign ``$=$'' refers to the Coq propositional equality, and
  \begin{lstlisting}
    look : Key -> En.t A -> option A
  \end{lstlisting}
  is a lookup function.
\end{defn}

Having this property for environments, one can use all of Coq's rewriting
machinery instead of using a setoid equality.

Sometimes, using the concrete representation of some abstract notion can be
helpful to prove properties by computation. For example, let us consider the
following lemma (we use the notation $++$ for the environment modification operation):
\begin{lstlisting}
  Lemma plus_empty_r: forall {A} (e : En.t A),
    e $++$ empty = e.
  Proof.
    reflexivity.
  Qed.
\end{lstlisting}
The proof of this lemma makes use of the concrete representation of
environments as lists with a well-formedness condition. In fact, this lemma is
just a definitional equality, which is why we can prove it by
\icode{reflexivity} directly. The same trick does not work for the slightly
different lemma:
\begin{lstlisting}
Lemma plus_empty_l: forall {A} (e : En.t A),
    empty $++$ e = e.
\end{lstlisting}
To prove this lemma using the concrete representation would require us to do
induction and use properties of the \icode{fold_left} function on lists, since this
is how environment modification (addition) is defined. Instead, we can use
environment extensionality to prove this lemma by using the specification of the
environment modification operation.

In general, abstracting from the concrete representation has an advantage that
we can change from one representation to another, but in this case we lose a
computational behavior. As we will see in Section
\ref{subsec:modules:coq-norm-static-int} it is not always possible to use an
abstract representation.

\begin{remark}\label{rem:modules:proof-irrel}
In order to prove the environments extensionality property, we have to add the axiom
of \emph{proof irrelevance}:
\begin{lstlisting}
  proof_irrelevance : forall (P:Prop) (p1 p2:P), p1 = p2.
\end{lstlisting}
  One can read this axiom as ``any two inhabitants (proofs) of the same
  proposition are equal''. Although, we could have taken another approach, and
  define the property of list to be ordered as a boolean-valued predicate,
  instead of $\type{Prop}$-valued. Then the property of a list \icode{xs} being
  ordered could look like this: \icode{sorted : isSorted xs ltk = true}.  In
  this case we could have used the fact, that for types with decidable equality
  (type $\type{bool}$, in our case), uniqueness of identity proofs is provable
  (the Hedberg's theorem \cite{Hedberg98}). We use this approach to prove the
  extensionality property for the \icode{MSetList} (more modern) implementation of
  finite sets from the standard library (the well-formedness predicate is
  boolean-valued in this implementation).  The implementation of
  \icode{FMapList} still states sortedness condition in \icode{Prop}. To enable
  easier reuse of the standard library implementation, we have chosen to assume
  proof irrelevance to prove extensionality of environments.
\end{remark}

Having environments as described above, we can now try to define semantic objects using
Coq's mechanism of mutually inductive definitions. First of all, we define type
environments in the obvious way (value environments \icode{VEnv} are defined similarly):
\begin{lstlisting}
Definition TEnv := AEnv Ty.
\end{lstlisting}

We would like to give a definition of semantic objects in the following way:
\begin{lstlisting}
Inductive Env :=
     | EnvCtr : TEnv -> VEnv -> MEnv -> MTEnv -> Env
with MEnv :=
     | MEnvCtr : AEnv Mod -> MEnv
with MTEnv :=
     | MTEnvCtr : AEnv MTy -> MTEnv
with Mod :=
     | NonParamMod : Env -> Mod
     | Ftor : TSet -> Env -> MTy -> Mod
with MTy :=
     | MSigma : TSet -> Mod -> MTy.
\end{lstlisting}

Unfortunately, this definition does not work because of the conservative strict
positivity check implementation in Coq. Specifically in definitions of \icode{MEnv}
and \icode{MTEnv} we use the \icode{VEnv} type constructor. Coq does not
accept such a definition as strictly positive.

\begin{remark}
  One simple example that violates strict positivity of inductive definitions
  is a definition of lambda terms that uses Coq's function space
  to represent lambda-abstraction:
  \begin{lstlisting}
  Inductive term : Set :=
  | App : term -> term -> term
  | Abs : (term -> term) -> term.
  \end{lstlisting}
  This definition violates strict positivity, since the constructor \icode{Abs}
  takes a function from \icode{term} to \icode{term}. That is, the inductive
  type being defined occurs in the negative position, to the left of an arrow.
  If Coq allowed such definitions, its underlying theory would become unsound,
  since one could than write non-terminating definitions (see examples in
  \cite[Section 3.6]{cpdt}).  For that reason Coq performs a check for
  strict positivity, and this check is an overapproximation. That is, there are
  some definitions, which are strictly positive, but Coq is unable to
  recognize that. In the case of semantic objects, this is exactly the case, as
  we will see later.
\end{remark}

If we drop the condition in the definition of \icode{AEnv} that the list is sorted,
then we can write a definition of \icode{Env}. That is, we could have an
association list for \icode{MEnv} and \icode{MTEnv} instead of proper
environments, and than we would have to add a well-formedness condition to all the
theorems related to semantic objects.  In \cite{Rossberg:f-ing-modules} the authors
have chosen the approach with the separate well-formedness condition in the Coq
formalisation. In our case such a design decision would lead to
significant complications due to the presence of mutual inductive
definitions. From a practical point of view, dependent types are useful
exactly for propagating certain invariants associated with the
data structure \cite{Leroy:deptypes}. For that reason, we would like to keep the
well-formedness condition packed together with the underlying association list.

Instead, we introduce an isomorphic \emph{pair-of-vectors} representation of environments,
where the first vector is an ordered vector of keys and the second vector is
a vector of values. By vector we mean the following inductive family of types indexed by
natural numbers, as it is defined in the \icode{Vector} module from the Coq standard library:
\begin{defn}\label{def:vec}~\\
\begin{lstlisting}
  Inductive Vec A : nat -> Type :=
  |nil : Vec A 0
  |cons : forall (h:A) (n:nat), Vec A n -> Vec A (S n).
\end{lstlisting}
\end{defn}
The important feature of vectors is that they carry information about their
lengths in the type ``by construction''. An alternative way to pack a list with
its length is by using subset types:
\begin{lstlisting}
  Definition VecAlt A n := {xs : list A | length xs = n }
\end{lstlisting}

Such a definition would lead to the same problem as before: the definition of semantic objects would
not pass the strict positivity check.

Separating keys and values in different vectors allows us to
define semantic objects in a way acceptable for Coq's strict positivity checker.
The idea of using an isomorphic representation is similar to Wadler's
notion of \emph{views} \cite{Wadler:1987:VWP:41625.41653}. We can see the two
representations of environments as two views associated with the abstract
``type'' of environments, which corresponds to a module specifying operations on
and properties of environments.

More precisely, we define our new representation of environments using Coq's
records and subset types:
\begin{lstlisting}
Definition skeys n := {vs : Vec Key n | vsort Key.lt vs }.

Record VecEnv (A : Type) :=
  mkVecEnv { v_size : nat;
              keys   : skeys v_size;
              vals   : Vec A v_size }.
\end{lstlisting}
This definition again uses dependent types to maintain two invariants, specifying
(\textit{i})  that the vector of keys is sorted, and (\textit{ii})
that the vector of keys and the vector of values have the same length. Notice that using the
subset types for the definition of \icode{skeys} will not cause problems for
the strict positivity check, because the type of keys is independent of the mutual
inductive structure of semantic objects. The reason why we need the lengths of
two vectors to be the same is the following. We are aiming to define our
alternative representation of environments in such a way that we have enough
information to show the isomorphism between the two representations without
adding any side conditions. In particular, if we did not include the condition
on lengths of vectors and used just lists, we would have to add an extra side
condition when defining a conversion function from the pair-of-vector
representation to the one from the standard library. That would again lead to
the same problem as with adding a well-formedness side condition to all the
theorems related to semantic objects.

Before we start showing the isomorphism between the two representations, let
us give some useful definitions related to propositional equality, since we use
will have to reason about equality of dependent types.  The Homotopy Type
Theory book \cite{hott-book} establishes the terminology and provides
definition of basic notions, which we are going to use when we talk about
equalities.
\begin{defn}[Transport]
  We call the following function \emph{transport}:
  \begin{lstlisting}
    transport : forall {A : Type} {a b : A} {P : A -> Type},
                a = b -> P a -> P b
  \end{lstlisting}
\end{defn}
Informally, one can read this definition as
``if \icode{a} is equal to \icode{b} and \icode{P a} holds, then \icode{P b} also holds''.
Following \cite{hott-book}, we are going to call \icode{p : a = b} a
\emph{path} between \icode{a} and \icode{b}.
\begin{defn}[Path concatenation]\label{def:path-concat}
  Paths may be concatenated, which corresponds to transitivity of equality:
  \begin{lstlisting}
    path_concat : forall {A : Type} {x y z : A},
                  x = y -> y = z -> x = z
  \end{lstlisting}
\end{defn}
\begin{defn}[Action on paths]\label{def:ap}
  The application of \icode{f} to a path (or action on paths) :
  \begin{lstlisting}
    ap : forall {A B : Type} {a a' : A}
                (f : A -> B) (p : a = a'),
         f a = f a'.
  \end{lstlisting}
\end{defn}

\noindent Now, we give some useful properties of transport.
\begin{lemma}[Lemma 2.3.9 in \cite{hott-book}]\label{lemma:transp-concat}
  For any \icodet{A : Type}, type family \icodet{B : A -> Type}, \icodet{x y z : A},
  \icodet{u : B x}, and paths \icodet{p : x = y} and \icodet{q : y = z} we have
  \normalfont
  \begin{lstlisting}
    transport q (transport p u) = transport (path_concat p q) u.
  \end{lstlisting}
\end{lemma}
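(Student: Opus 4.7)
The plan is to prove this by path induction (the \emph{J} rule for the identity type), which is the canonical technique for establishing equalities between transports along composed paths. Since both sides of the equation depend uniformly on the paths $p$ and $q$, path induction should reduce the statement to a trivial definitional equality.

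First, I would do path induction on $p : x = y$. This replaces $y$ by $x$ and $p$ by $\mathrm{refl}_x$, so the goal becomes
\[
  \texttt{transport}\ q\ (\texttt{transport}\ \mathrm{refl}_x\ u) = \texttt{transport}\ (\texttt{path\_concat}\ \mathrm{refl}_x\ q)\ u.
\]
By the computation rule for \texttt{transport} on \texttt{refl}, the inner transport on the left collapses: $\texttt{transport}\ \mathrm{refl}_x\ u \equiv u$. For the right-hand side, I need to know that $\texttt{path\_concat}\ \mathrm{refl}_x\ q$ reduces to $q$; this is either a definitional equality or a simple lemma depending on how path concatenation was defined in Coq (Definition \ref{def:path-concat}). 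If it holds definitionally (e.g.\ \texttt{path\_concat} was defined by pattern-matching the first argument), the goal simplifies directly; otherwise I would do a second path induction on $q$, after which both sides reduce to $u$ and the goal is discharged by \texttt{reflexivity}.

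In Coq, the proof would proceed by \texttt{intros} followed by \texttt{destruct p} (or \texttt{induction p}), then \texttt{destruct q}, then \texttt{reflexivity}. No axioms beyond the built-in eliminator for equality are needed, and in particular the proof irrelevance and functional extensionality axioms assumed elsewhere in the development play no role here.

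The main obstacle, though minor, is ensuring that the definition of \texttt{path\_concat} in the Coq source matches the one assumed by the computation. If \texttt{path\_concat} is defined by induction on its first argument, then $\texttt{path\_concat}\ \mathrm{refl}\ q \equiv q$ holds definitionally and a single \texttt{destruct p; reflexivity} suffices; if it is defined by induction on its second argument, one instead destructs \texttt{q} first, or destructs both. Either way the proof is short and direct; the content of the lemma is essentially just the compatibility of transport with the definitional unfolding of path concatenation on identity paths.
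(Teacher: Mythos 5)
Your proof is correct and is exactly the standard argument: the paper gives no explicit proof of this lemma, simply citing Lemma 2.3.9 of the HoTT book, whose proof is precisely the double path induction followed by reflexivity that you describe. Your remark about which argument \texttt{path\_concat} computes on is a reasonable practical caveat, but destructing both \texttt{p} and \texttt{q} sidesteps it entirely, as you note.
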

Lemma \ref{lemma:transp-concat} says, that transporting something twice, first
along the path \kt{p} and then along the path \kt{q} is equal to transporting once, but
along the concatenated path \icode{path_concat p q}. The next lemma allows us to move
transport in and out of the application.

\begin{lemma}[Lemma 2.3.11 in \cite{hott-book}]\label{lemma:move_transp}
  For any \icodet{A : Type}, type families \icodet{F,G : A -> Type}, \icodet{a,a' : A},
  \icodet{u : F x}, dependent function \\
  \icodet{f : forall (a : A), F a -> G a}, and
  path \icodet{p : a = a'} we have
  \normalfont
  \begin{lstlisting}
    f (transport p u) = transport p (f u).
  \end{lstlisting}
\end{lemma}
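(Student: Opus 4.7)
The plan is to proceed by path induction on $p : a = a'$, which in Coq amounts to a single \icodet{destruct p} (equivalently \icodet{induction p}). After this step, the remaining obligation is the case where $a' \jdeq a$ and $p \jdeq \refl{a}$, so both transports that appear in the equation reduce definitionally to the identity on $F\,a$ and $G\,a$ respectively.

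In this reduced state, the left-hand side $f\,(\transf{p}\,u)$ becomes $f\,u$ at index $a$, and the right-hand side $\transf{p}\,(f\,u)$ also becomes $f\,u$ at index $a$. Both sides are therefore judgementally equal, and the goal closes by \icodet{reflexivity}. The entire Coq script is expected to be the one-liner \icodet{destruct p; reflexivity}, mirroring exactly the structure of the preceding Lemma \ref{lemma:transp-concat}.

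I expect no real obstacle. The statement is the standard dependent naturality of transport along a path, and it does not require either of the two axioms (functional extensionality, proof irrelevance) used elsewhere in the formalisation. The only minor concern is that Coq's unifier must infer the implicit first argument of $f$ on each side of the equation; but the type of $\transf{p}\,u$ pins the index on the left, while the expected goal type $G\,a'$ pins the index on the right, so no manual annotation should be necessary. If for some reason Coq's generalisation tactic refuses to abstract over a dependency before \icodet{destruct p}, a preliminary \icodet{revert u} (and possibly \icodet{revert a'}) would suffice to put the goal into the form required by the dependent elimination principle for the identity type.
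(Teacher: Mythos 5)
Your proof is correct and is exactly the standard argument: the paper gives no explicit proof for this lemma, but its proof of the neighbouring Lemma \ref{transp-cons} uses the identical pattern (case analysis on the path, then both sides compute on \icodet{eq_refl}), so your \icodet{destruct p; reflexivity} matches the paper's approach. No gaps.
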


\begin{lemma}\label{lemma:concat-inv}
  For any \icodet{A : Type}, \icodet{x y : A}, and \icodet{p : x = y} the
  following equations hold:
  \begin{enumerate}[(i)]
  \item \icodet{path_concat p (eq_sym p) = eq_refl}
  \item \icodet{path_concat (eq_sym p) p = eq_refl}
  \end{enumerate}
  Where \icodet{(eq_sym p) : y = x}
\end{lemma}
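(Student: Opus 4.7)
The plan is to prove both equations by path induction (i.e., Coq's \icodet{destruct} or \icodet{induction} on \icodet{p : x = y}, which in the Coq standard library corresponds to the eliminator for \icodet{eq}). Since \icodet{p} is an arbitrary identity proof in a type without extra structure assumed, path induction lets us reduce to the case where \icodet{p} is \icodet{eq_refl} and the endpoints \icodet{x} and \icodet{y} are definitionally the same.

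Concretely, for part (i), after destructing \icodet{p}, the goal becomes \icodet{path_concat eq_refl (eq_sym eq_refl) = eq_refl}. By the computation rule for \icodet{eq_sym} we have \icodet{eq_sym eq_refl} reducing to \icodet{eq_refl}, and by the definition of \icodet{path_concat} (Definition \ref{def:path-concat}), which is defined by pattern matching on its first argument, \icodet{path_concat eq_refl eq_refl} reduces to \icodet{eq_refl}. Hence the goal is closed by \icodet{reflexivity}. Part (ii) is entirely symmetric: after destruction we must discharge \icodet{path_concat (eq_sym eq_refl) eq_refl = eq_refl}, which again reduces by computation rules to an identity and is closed by \icodet{reflexivity}.

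No axioms such as UIP or proof irrelevance are needed for this lemma, since the statement merely asserts the existence of a particular identity proof rather than uniqueness of identity proofs; path induction alone suffices. The only minor subtlety to be aware of is that the reduction behaviour described above depends on how \icodet{path_concat} is actually defined in the development — whether by pattern matching on the first path, the second, or via the standard \icodet{eq_trans}. If the definition is not directly computational on \icodet{eq_refl} (e.g., if it is opaque or defined by tactics using \icodet{rewrite}), then after destructing \icodet{p} one would additionally need to unfold the definition with \icodet{simpl} or \icodet{unfold path_concat} before \icodet{reflexivity} closes the goal. I do not anticipate this as a real obstacle, but it is the only point where the mechanical execution could require a small adjustment.
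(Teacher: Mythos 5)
Your proof is correct: path induction on \icodet{p} followed by computation of \icodet{eq\_sym} and \icodet{path\_concat} on \icodet{eq\_refl} is exactly the standard argument, and it matches the technique the paper uses for the neighbouring equality lemmas (e.g.\ the case analysis on \icodet{p} in the proof of Lemma~\ref{transp-cons}); the paper simply omits the proof of this particular lemma. Your caveat about whether \icodet{path\_concat} reduces definitionally on \icodet{eq\_refl} is a reasonable and minor implementation detail, not a gap.
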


Having lemmas about equality at hand, we can show the isomorphism between the
two representations. The idea of the approach is simple: we use a well-known
list-of-pairs to pair-of-lists correspondence.  Although, for us it is a bit
more subtle, since we use vectors instead of lists, and we have to maintain an
additional invariant --- vector of keys is sorted. First of all, we define zip
and unzip operations on vectors. The zip operation on vectors has the following
type:
\begin{lstlisting}
  vzip : forall {A B n}, Vec A n -> Vec B n -> Vec (A * B) n
\end{lstlisting}
The important difference from the \icode{zip} function on lists is that dependent
types allow us to ensure that the two input vectors and the output vector have the same
size. Writing definitions like this in Coq sometimes is a non-trivial
task. Pattern-matching in Coq requires additional work to pass all required
information for the definition to type-check. That is, we use a \emph{convoy}
pattern \cite{cpdt} to propagate the information that the length of two input
vectors is the same during pattern-matching. The implementation of the \icode{vzip}
function in inspired by \cite{Brecknell:patternmatch}.
\begin{lstlisting}
Definition vzip {A B : Type} :
  forall {n}, Vec A n -> Vec B n -> Vec (A * B) n :=
  fix zip {n} vs := match vs in Vec _ m
                    return Vec B m -> Vec (A * B) m with
  | [] => fun vs' => []
  | cons _ v n0 tl =>
    fun vs' =>
      (match vs' in Vec _ m'
         return (S n0 = m' -> match m' with 0 =>
                (unit : Type) | S _ => Vec _ _ end) with
       | [] => fun _ => tt
       | cons _ v' n1 tl' =>
         fun H =>
         cons _ (v,v') _
         (zip tl (transport (eq_add_S _ _ (eq_sym H)) tl'))
  end) eq_refl
end.
\end{lstlisting}
There are two details to point out in this definition. First, after matching
the first vector against the \icode{cons} constructor, we already know that the
second input vector cannot be empty, although Coq still requires us to
exhaustively cover all the cases. We apply the usual trick here: use
a \icode{return} clause to specify that in the impossible case we return a
trivial type \icode{unit}.  The second detail is the recursive call. The second
argument (the tail of the second vector) is not exactly of the right type. It
has type \icode{Vec B n1}, but we need a term of type \icode{Vec B n0}, where
n0 is the length of the first vector. At this point we use the convoy pattern
to make a connection between the lengths of the two vectors: the return type of the
match on \icode{vs'} is a function, taking equality as an argument:
\begin{lstlisting}
  S n0  = m' -> match m' with
                | 0 => (unit : Type)
                | S _ => Vec _ _ end.
\end{lstlisting}
\noindent Here \icode{n0} is bound to the length of the first vector \icode{vs},
and we apply this function to \icode{eq_refl} to indicate that we expect
lengths of the two vectors to be equal. Thus, in the \icode{cons} case for the
second vector, we now have \icode{H : S n0 = S n1} at our disposal, and we can
use it to construct a term of the right type for the second argument of the
recursive call. In order to do that, we have to transport \icode{tl'}
along the equality \icode{p : n0 = n1} (we will discuss transport later, when
we state the isomorphism between the two representations of environments).  We
get \icode{p} from \icode{H} using injectivity of constructors (we use the
\icode{eq_add_s} lemma from the standard library).

The definition for the \icode{vunzip} function is simpler, and does not
involve complications with dependent pattern-matching.
\begin{lstlisting}
Definition vunzip {A B : Type} :
    forall {n}, (Vec (A * B) n) -> (Vec A n * Vec B n) :=
  fix unzip {n} vs := match vs with
  | [] => ([],[])
  | (a,b) :: tl => (a :: fst (unzip tl), b :: snd (unzip tl))
end.
\end{lstlisting}

Now, we can show that vzip/vunzip are inverses of each other:
\begin{lemma}[vzip/vunzip inverses]\label{lem:modules:zip-unzip}
  For the functions \icodet{vzip} and \icodet{vunzip}, defined above,
  and any \kt{n}, \kt{A}, \kt{B}, the following holds:
  \begin{enumerate}[(i)]
  \item
    for any vector of pairs \icodet{kvs :  Vec (A*B) n}
    we have\\
    \icodet{(fun p => vzip (fst p) (snd p)) (vunzip kvs) = kvs}
  \item
    for any two vectors \icodet{vs : Vec A n} and \icodet{vs' : Vec A n}
    we have\\
    \icodet{vunzip (vzip vs vs') = (vs, vs')}
  \end{enumerate}
\end{lemma}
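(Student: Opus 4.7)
The plan is to prove both parts by structural induction on vectors, with part (i) being routine and part (ii) requiring some care with the transport that appears in the convoy pattern of \icode{vzip}.

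For part (\textit{i}), I would proceed by induction on \icodet{kvs : Vec (A*B) n}. The base case \icode{kvs = []} holds definitionally, since \icode{vunzip []} reduces to \icode{([],[])} and \icode{vzip [] []} reduces to \icode{[]}. In the inductive case \icode{kvs = (a,b) :: tl}, unfolding \icode{vunzip} gives \icode{(a :: fst (vunzip tl), b :: snd (vunzip tl))}, and \icode{vzip} on two \icode{cons} vectors reduces (the inner match on the second vector fires and the outermost \icode{eq_refl} makes the transport vanish) to \icode{(a,b) :: vzip (fst (vunzip tl)) (snd (vunzip tl))}. The induction hypothesis then finishes the goal.

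For part (\textit{ii}), I would use \icode{dependent induction} on \icodet{vs : Vec A n}, simultaneously destructing \icodet{vs' : Vec B n}. The nil case forces \icode{vs'} to be nil (by inversion on the length index), and reduces by computation. In the cons case \icode{vs = v :: tl}, the length index of \icode{vs'} is \icode{S n0}, so \icode{vs'} must be \icode{v' :: tl'}. After reducing \icode{vzip (v :: tl) (v' :: tl')} one step, the convoy match on \icode{vs'} fires with \icode{H := eq_refl : S n0 = S n0}, which leaves a recursive call \icode{vzip tl (transport (eq_add_S n0 n0 (eq_sym eq_refl)) tl')}. Here \icode{vunzip} is then applied to the resulting cons, giving \icode{(v :: fst (vunzip (vzip tl (transport _ tl'))), v' :: snd (vunzip (vzip tl (transport _ tl'))))}. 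The induction hypothesis on \icode{tl} and \icode{tl'} would close the goal if the transport were gone.

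The main obstacle is precisely handling this transport along \icode{eq_add_S n0 n0 (eq_sym eq_refl)}. Since \icode{nat} has decidable equality, Hedberg's theorem (or the K rule, provable for \icode{nat} without extra axioms) gives that this proof of \icode{n0 = n0} equals \icode{eq_refl}, so the transport reduces to the identity. Concretely, I would either rewrite with \icode{eq_proofs_unicity} from the standard library (for decidable types) to replace the proof by \icode{eq_refl}, or prove a small auxiliary lemma \icode{transport_refl_nat : forall (P : nat -> Type) n (p : n = n) (x : P n), transport p x = x} and apply it. Once the transport is eliminated, the induction hypothesis applied to \icode{tl} and \icode{tl'} directly yields \icode{vunzip (vzip tl tl') = (tl, tl')}, and the outer \icode{cons}es match up by \icode{ap} (Definition \ref{def:ap}) on both projections, completing the proof.
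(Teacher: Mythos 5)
Your proof is correct and takes essentially the same route as the paper's: induction on \icode{kvs} for part (i), and induction on \icode{vs} combined with dependent destruction of \icode{vs'} for part (ii). The additional care you devote to the transport produced by the convoy pattern is sound but not a structural difference — once \icode{H} is \icode{eq_refl} that transport typically computes away (since \icode{eq_add_S} is transparent), and your UIP-for-\icode{nat} fallback handles it in any case.
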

\begin{proof}
  We show
  \begin{enumerate}[(i)]
  \item By induction on \icode{kvs}.
  \item By induction on \icode{vs} and performing dependent case analysis
    (using the \icode{dependent destruction} tactics) on \icode{vs'}.
  \end{enumerate}
\end{proof}
So far, we have shown a conversion between a pair of vectors and a vector of
pairs, but we need another ``layer'' of conversion functions, which also
compose to identity in both directions.
\begin{defn}\label{def:to-of-list}
  Conversion function between lists and vectors:
\begin{lstlisting}
  to_list : forall (A : Type) (n : nat), Vec A n -> list A
  of_list : forall (A : Type) (xs : list A), Vec A (length xs)
\end{lstlisting}
\end{defn}
We need an auxiliary lemma, before we start showing that these two functions
are inverses of each other.
\begin{lemma}\label{transp-cons}
  For any \icodet{n m : nat}, a vector \icodet{vs : Vec A n}, \icodet{a : A},
  \icode{S} the successor constructor of \icode{nat}, and a path
  \icode{p : n = m} we have
  \normalfont
  \begin{lstlisting}
      transport (ap S p) (a :: vs) = a :: transport p vs.
  \end{lstlisting}
\end{lemma}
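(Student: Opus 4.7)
The plan is to prove this by path induction on \icodet{p}. Since \icodet{m} is universally quantified, we can apply the \icodet{destruct p} tactic (equivalently the $J$-eliminator for the identity type on \icodet{nat}), which reduces the problem to the single case where \icodet{m} becomes \icodet{n} and \icodet{p} is definitionally equal to \icodet{eq_refl}.

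In that base case, the goal simplifies as follows. On the left-hand side, \icodet{ap S eq_refl} reduces to \icodet{eq_refl : S n = S n} (this is the defining computation rule for \icodet{ap}, as given in Definition \ref{def:ap}). Then \icodet{transport eq_refl (a :: vs)} reduces to \icodet{a :: vs}. On the right-hand side, \icodet{transport eq_refl vs} reduces to \icodet{vs}, so the term becomes \icodet{a :: vs} as well. The two sides are thus definitionally equal, and the goal is closed by \icodet{reflexivity}.

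I do not expect any serious obstacle here. The only subtle point is that one must perform path induction \emph{before} touching either side of the equation: once \icodet{transport} is applied to a non-\icodet{eq_refl} path, Coq will not reduce it, and \icodet{simpl} alone would leave the goal stuck. Concretely, the proof script is essentially \icodet{intros; destruct p; reflexivity}. Should the automatic dependent elimination complain about the indices (because \icodet{Vec} is indexed by \icodet{nat} and \icodet{transport} is applied at a type depending on \icodet{n}), one can fall back on \icodet{dependent destruction p} from the \icodet{Program.Equality} module, used elsewhere in the development, which handles the required equality manipulations uniformly.
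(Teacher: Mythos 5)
Your proof is correct and matches the paper's own argument: both proceed by case analysis (path induction) on \icode{p}, reducing to the \icode{eq_refl} case where \icode{ap} and \icode{transport} compute and the goal closes by reflexivity. The paper additionally spells out why the statement type-checks before giving the one-line proof, but the mathematical content is identical.
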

\begin{proof}
  First, let us check if the statement type-checks. On the right-hand side
  we have \icode{(transport p vs) : Vec A m} (transporting \icode{Vec A n} along
  the path \icode{p : n = m}). Applying the \icode{cons} constructor gives
  us (according to the Definition \ref{def:vec})
  \icode{(a :: transport p vs) : Vec A (S m)}. For the left-hand side we have
  \icode{(a :: vs) : Vec A (S n)}, and we use action on paths of the successor
  constructor for natural numbers to get the right path :
  \icode{(ap S p) : S n = S m}. Now, it is easy to see that the whole left-hand side
  has the same type as the right-hand side:
  \icode{transport (ap S p) (a :: vs) : Vec A (S m)}.

  We prove the lemma by case analysis on \icode{p}: it suffices to consider the case,
  when \icode{m} is \icode{n} and \icode{p} is \icode{eq_refl}. Then our goal
  reduces to
  \begin{lstlisting}
    transport (ap S eq_refl) (a :: vs) = a :: transport eq_refl vs
  \end{lstlisting}
  Both, \icode{transport} and \icode{ap} compute on \icode{eq_refl}. After simplification,
  we get
  \begin{lstlisting}
    a :: vs = a :: vs
  \end{lstlisting}
  as required.
\end{proof}
Composition of \icode{to_list} and \icode{of_list} in one direction is easy to
state and prove to be the identity. Specifically, the proof of the fact that for
any \icode{A : Type} and \icode{xs : list ~A},
\icode{to_list (of_list xs) = xs} can be found in the standard library. Let us
focus on the other direction (there is no proof of this property in the
standard library). Even to state it requires some work. If we did it naively, as
``for any \icode{n : nat}, \icode{A : Type}, and \icode{vs : Vec A n},
\icode{(of_list (to_list vs)) = vs}'', then our definition would not type-check
and we would get the error ``The term "\icode{vs}" has type "\icode{Vec A n}" while it is
expected to have type "\icode{Vec A (length (to_list vs))}"''.  The reason for
that error is that the right-hand side and the left-hand side of the equation are of
different type. We can fix this problem using the following observation: if we pass a
vector \icode{vs : Vec A n} to the \icode{to_list} function, then the resulting
list will be exactly of length \icode{n}.
\begin{lstlisting}
  to_list_length : forall (n : nat) (A : Type) (vs : Vec A n),
                   n = length (to_list vs)
\end{lstlisting}

\begin{remark}\label{rem:to-list-length}
  Notice that in our Coq implementation we make the definition
  of \icode{to_list_length} transparent by using the \icode{Defined} keyword
  instead of \icode{Qed}. This way, we make the term, witnessing the equality,
  available for simplification, so we can exploit definitional qualities in
  some proofs.
\end{remark}

Now, we are ready to state properties of \icode{to_list} and \icode{of_list}.
\begin{lemma}\label{lem:modules:to-of-list}
  For the functions \icodet{to_list} and \icodet{of_list} (Definition \ref{def:to-of-list}),
  we have the following equations:
  \begin{enumerate}[(i)]
  \item for any \icodet{A : Type} and \icodet{xs : list ~A} we have\\
    \icodet{to_list (of_list xs) = xs}
  \item for any \icodet{A : Type}, \icodet{n : nat}, \icodet{vs : Vec A n} we have\\
    \icodet{of_list (to_list vs) = transport (to_list_length vs) vs}
  \end{enumerate}
\end{lemma}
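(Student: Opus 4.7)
The plan is to prove the two parts separately, by induction on the relevant structure, with part (ii) being the delicate one due to the dependent types and the explicit \icodet{transport}.

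For part (i), I will proceed by straightforward induction on \icodet{xs}. The base case \icodet{xs = []} reduces by computation of \icodet{of_list} and \icodet{to_list} on the empty list. For the inductive step \icodet{xs = x :: xs'}, the definitions of \icodet{of_list} and \icodet{to_list} on a cons unfold so that the goal becomes \icodet{x :: to_list (of_list xs') = x :: xs'}, which follows immediately from the induction hypothesis. This part is essentially the proof already present in Coq's standard library, so little work is needed.

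For part (ii), I will proceed by induction on \icodet{vs}. The base case is \icodet{vs = []}: both sides reduce to \icodet{[]} once we observe that \icodet{to_list_length []} computes to \icodet{eq_refl} (relying on the transparency noted in Remark \ref{rem:to-list-length}), so that \icodet{transport eq_refl [] = []} definitionally. The inductive case is the interesting one. Suppose \icodet{vs = a :: vs'} with \icodet{vs' : Vec A n}. Unfolding \icodet{to_list} and \icodet{of_list} on a cons, and applying the induction hypothesis, the left-hand side becomes \icodet{a :: transport (to_list_length vs') vs'}. The right-hand side is \icodet{transport (to_list_length (a :: vs')) (a :: vs')}.

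To close the gap I would apply Lemma \ref{transp-cons}, which rewrites \icodet{transport (ap S p) (a :: vs')} as \icodet{a :: transport p vs'}. Thus it suffices to show that \icodet{to_list_length (a :: vs')} and \icodet{ap S (to_list_length vs')} are equal paths. Because \icodet{to_list_length} was defined by induction on the vector and marked \icodet{Defined} (Remark \ref{rem:to-list-length}), this equality should hold definitionally; if not, a one-line auxiliary lemma proved by case analysis on the vector and using uniqueness of identity proofs on \icodet{nat} (which is decidable, hence satisfies UIP without any axiom) finishes the job.

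The main obstacle is the dependent-type bookkeeping: ensuring that the two length proofs agree up to the right path so that Lemma \ref{transp-cons} is directly applicable. This is precisely why \icodet{to_list_length} is made transparent — if it were opaque, we would have to invoke UIP or proof irrelevance to identify the two witnesses of \icodet{S n = S (length (to_list vs'))}, adding a nontrivial layer to the proof. Given the transparent definition, however, the recursive structure of \icodet{to_list_length} mirrors that of \icodet{of_list} and \icodet{to_list}, so the path equality should hold by reduction and the proof reduces to a clean induction.
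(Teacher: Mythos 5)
Your proof is correct and follows essentially the same route as the paper: part (i) by routine induction, and part (ii) by induction on the vector, using the transparency of \icodet{to_list_length} (Remark \ref{rem:to-list-length}) to identify \icodet{to_list_length (a :: vs')} with \icodet{ap S (to_list_length vs')} definitionally, then Lemma \ref{transp-cons} and the induction hypothesis. Your extra fallback via UIP on \icodet{nat} is a sensible safety net but is not needed, since the definitional equality does hold.
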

\begin{proof}
  We show
  \begin{enumerate}[(i)]
  \item by induction on \icode{xs}. In the base case, both sides of the equation
    evaluate to the empty list. In the induction step, we simplify the goal and
    rewrite it using the induction hypothesis.
  \item by induction on \icode{vs}. In the base case both sides of the equation
    evaluate to the empty list. In the induction step, we apply the following
    equational reasoning:
    \begin{align*}
      &\text{\icode{h :: of\_list (to\_list vs)}}\\
      & = \text{\icode{transport (to_list_length (h :: vs)) (h :: vs)}}\\
      & = \text{\{by definitional equality (see Remark \ref{rem:to-list-length})\}}\\
      & = \text{\icode{transport (ap S (to_list_length vs)) (h :: vs)}}\\
      & = \text{\{by lemma \ref{transp-cons}\}}\\
      & = \text{\icode{h :: transport (to_list_length vs) vs}}\\
      & = \text{\{by induction hypothesis\}}\\
      & = \text{\icode{h :: of_list (to_list vs)}}
    \end{align*}
  \end{enumerate}
\end{proof}
Another invariant, which should be preserved by the conversion functions is the
order of the elements. We have proved several auxiliary lemmas showing that the
order is preserved by the functions \icode{vzip}/\icode{vunzip} and the
functions \icode{to_list}/\icode{of_list}. Proofs of these lemmas are
quite straightforward, since none of these functions rearrange the elements.

We define the conversion function between the two representations of environments
using the definitions above. We use \icode{En.t} to refer to the standard library
implementation of environments corresponding to Definition \ref{def:slist}:
\begin{lstlisting}
Definition toOrdEnv {A : Type} (ve : VecEnv A) : En.t A :=
  match ve with
  | mkVecEnv _ _ (exist _ ks sorted_ks) vs =>
       let skvs := vzip_sorted ks vs sorted_ks
       in Build_slist (to_list_sorted (vzip' (ks,vs)) skvs)
  end.

Definition fromOrdEnv {A : Type} (oe : En.t A) : VecEnv A :=
  match oe with
  | @En.Build_slist _ xs xs_sort =>
    let kvs := vunzip (of_list xs) in
    let vs := snd kvs in
    let skvs := vunzip_sorted (of_list xs)
                              (of_list_sorted _ xs_sort)
    in mkVecEnv A _ (exist _ (fst kvs) skvs) vs
  end.
\end{lstlisting}
Notice that in the definition of \icode{toOrdEnv} we use the function
\begin{lstlisting}
  zip' : forall {A B : Type}, Vec A n * Vec B n -> Vec (A * B) a,
\end{lstlisting}
which is a curried version of the \icode{zip} function.

Let us prove congruence lemmas, which we are going to use in our proof that
the functions \icode{toOdrEnv}/\icode{fromOrdEnv} are inverses of each other.
\begin{lemma}[Congruence for ordered-list environments]\label{ordenv-congr}
  For any \icodet{A : Type}, lists \icodet{xs ys: list (Key.t * A)},
  which are sorted according to the strict order of keys \icodet{sxs : sort ltk xs},
  \icodet{sys : sort ltk ys}, if  \icodet{xs = ys} then two records, corresponding to
  the environments are equal: \\
  \icodet{\{| En.this := xs; En.sorted := sxs |\}} =\\
  \icodet{\{| En.this := ys; En.sorted := sys |\}}.
\end{lemma}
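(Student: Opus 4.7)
The plan is to reduce the record equality to equality of the \icodet{En.sorted} components and then appeal to proof irrelevance. First I would use the hypothesis \icodet{xs = ys} to substitute \icodet{ys} by \icodet{xs}, so that the two records share the same \icodet{En.this} field. After this substitution the two sortedness proofs \icodet{sxs} and \icodet{sys} have identical types \icodet{sort ltk xs}, and the goal collapses to an equality of records differing only in the sortedness field.

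The remaining equation \icodet{sxs = sys} is immediate from the \icodet{proof_irrelevance} axiom, which is already available in our Coq development (see Remark \ref{rem:modules:proof-irrel}) and applies because \icodet{sort ltk xs} lives in \icodet{Prop}. Once we have \icodet{sxs = sys}, congruence of the \icodet{Build_slist} constructor (via \icodet{f_equal}) closes the goal. This is in fact the same principle that underlies the environment extensionality lemma discussed immediately before the statement: once the canonical list component coincides, the attached well-formedness witness is uniquely determined propositionally.

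The only genuine technical obstacle is the dependent nature of the rewrite. Since \icodet{sys : sort ltk ys}, we cannot rewrite \icodet{ys} to \icodet{xs} inside the goal without first generalising \icodet{sys} (and symmetrically \icodet{sxs}); otherwise Coq will refuse the substitution on the grounds that it would ill-type the remaining hypotheses. My plan is therefore to \icodet{revert sxs sys}, then \icodet{destruct} the hypothesis \icodet{xs = ys} so that \icodet{ys} is replaced everywhere by \icodet{xs}, then \icodet{intros} the now identically-typed sortedness proofs and finish by \icodet{f_equal} together with \icodet{proof_irrelevance}. No auxiliary lemmas beyond the axiom already assumed by the development are required.
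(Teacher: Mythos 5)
Your proposal is correct and follows essentially the same argument as the paper: the list components coincide by assumption and the sortedness witnesses are identified by the \icodet{proof_irrelevance} axiom. The extra care you take with generalising \icodet{sxs} and \icodet{sys} before destructing the equality is a sound Coq-level refinement of the same idea, not a different proof.
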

\begin{proof}
  Essentially, this is the same property as for subset types in presence of proof
  irrelevance. The first components are equal by assumption, and \icodet{sxs = sys}
  by proof irrelevance.
\end{proof}

\begin{lemma}[Congruence for pair-of-vectors environments]\label{vecenv-congr}
  For any \icodet{A : Type}, \icodet{n,n' : A}, two sorted vectors of keys\\
  \icodet{ks : Vec Key.t n}, \icodet{ks_sort : vsort Key.lt ks},\\
  \icodet{ks' : Vec Key.t n'}, \icodet{ks_sort' : vsort Key.lt ks'},\\
  two vectors of values \icodet{vs : Vec A n} and \icodet{vs' : Vec A n'}, and
  path \icodet{p : n = n'}, \\
  if  \icodet{(transport p ks) = ks'} and  \icodet{(transport p vs) = vs'} then
  \normalfont
  \begin{lstlisting}
  {| v_size := n;  keys := exist _ ks ks_sort;   vals := vs  |} =
  {| v_size := n'; keys := exist _ ks' ks_sort'; vals := vs' |}.
  \end{lstlisting}
\end{lemma}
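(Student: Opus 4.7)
The plan is to reduce this congruence lemma to the simpler case where the two lengths are definitionally equal, and then invoke proof irrelevance to handle the sortedness witnesses. The outer path \icode{p : n = n'} is the only thing forcing the use of \icode{transport}, so my first move will be to perform path induction on \icode{p} (in Coq, \icode{destruct p} or equivalently \icode{subst n'}). Once we are in the case \icode{n = n'} and \icode{p = eq\_refl}, the \icode{transport} operations compute away to the identity, so the two hypotheses become \icode{ks = ks'} and \icode{vs = vs'}.

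After rewriting along these simplified equalities, the remaining goal asks for equality of two \icode{mkVecEnv} records whose \icode{v\_size} and \icode{vals} components are now judgmentally equal. The only remaining obstacle is the \icode{keys} field, which is a subset type packaging \icode{ks} with its sortedness proof. Since the underlying vectors are equal, it suffices to show the packaged subset-type elements are equal; this is precisely the analogue of Lemma \ref{ordenv-congr} for subset types over sorted vectors, and it follows from the \icode{proof\_irrelevance} axiom applied to \icode{ks\_sort} and \icode{ks\_sort'}. In Coq one typically discharges this with \icode{f\_equal} followed by an appeal to a small helper lemma
\[
  \forall (ks : \texttt{Vec Key.t n})\ (p_1\ p_2 : \texttt{vsort Key.lt } ks),\ \texttt{exist \_ } ks\ p_1 = \texttt{exist \_ } ks\ p_2,
\]
which is an immediate consequence of proof irrelevance (already assumed in our development, cf.\ Remark \ref{rem:modules:proof-irrel}).

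The only subtlety worth highlighting is that path induction on \icode{p} only works cleanly when \icode{n'} is a free variable in the goal; since \icode{ks'}, \icode{vs'}, and \icode{ks\_sort'} all mention \icode{n'} through their types, I will need to generalise these dependencies before destructing \icode{p} (e.g.\ by revert-ing them and then reintroducing after \icode{destruct}). This is the main technical obstacle: making sure the goal is in the right form for \icode{destruct p} to succeed without producing ill-typed subgoals. Once that bookkeeping is done, the proof is essentially immediate.
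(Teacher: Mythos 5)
Your proposal matches the paper's proof: case analysis on \icode{p} so that transport computes away, yielding \icode{ks = ks'} and \icode{vs = vs'}, and then proof irrelevance to identify the two sortedness witnesses. The extra remarks about generalising the \icode{n'}-dependent hypotheses before destructing \icode{p} are sound Coq bookkeeping but do not change the argument.
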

\begin{proof}
  The proof follows essentially the same patterns as the proof of Lemma \ref{ordenv-congr}.
  First, by case analysis on \icodet{p}, we get \icodet{ks = ks'} and \icodet{vs = vs'},
  because transport computes on \icodet{eq_refl}. It only remains to be shown, that
  \icodet{ks_sort = ks_sort'}, but this equality holds by proof irrelevance.
\end{proof}

\begin{lemma}[toOrdEnv/fromOrdEnv inverses]\label{ordvec-inv}
  For any \icodet{A}, \icodet{ve : VecEnv A}, and \icodet{oe : En.t} we have
  \begin{enumerate}[(i)]
  \item \icodet{(toOrdEnv (fromOrdEnv oe)) = oe}
  \item \icodet{(fromOrdEnv (toOrdEnv ve)) = ve}
  \end{enumerate}
\end{lemma}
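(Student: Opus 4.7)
The plan is to prove the two parts separately, using the congruence lemmas \ref{ordenv-congr} and \ref{vecenv-congr} to reduce each equation of records to equations on the underlying lists/vectors, and then to chain the inverse lemmas already established for \icodet{vzip}/\icodet{vunzip} (Lemma \ref{lem:modules:zip-unzip}) and for \icodet{to_list}/\icodet{of_list} (Lemma \ref{lem:modules:to-of-list}).

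For part (i), I would destruct \icodet{oe} as \icodet{\{| En.this := xs; En.sorted := sxs |\}}. Unfolding the two conversions, the underlying list of \icodet{toOrdEnv (fromOrdEnv oe)} is \icodet{to_list (vzip' (vunzip (of_list xs)))}. An appeal to Lemma \ref{ordenv-congr} reduces the goal to proving this list equals \icodet{xs}. Surjective pairing lets me rewrite \icodet{vzip' (vunzip ...)} as \icodet{(fun p => vzip (fst p) (snd p)) (vunzip ...)}, then Lemma \ref{lem:modules:zip-unzip}(i) collapses it to \icodet{of_list xs}, and Lemma \ref{lem:modules:to-of-list}(i) finishes.

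For part (ii), I destruct \icodet{ve} as \icodet{mkVecEnv A n (exist _ ks sks) vs}, and let \icodet{xs := to_list (vzip' (ks, vs))}, so \icodet{toOrdEnv ve} has \icodet{xs} as its underlying list and \icodet{fromOrdEnv} then applies \icodet{vunzip (of_list xs)}. I would apply Lemma \ref{vecenv-congr} with the path \icodet{p := to_list_length (vzip' (ks, vs)) : n = length xs}; after this, it remains to show \icodet{transport p ks = fst (vunzip (of_list xs))} and \icodet{transport p vs = snd (vunzip (of_list xs))}. Using Lemma \ref{lem:modules:to-of-list}(ii), \icodet{of_list xs} rewrites to \icodet{transport p (vzip' (ks, vs))}. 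An auxiliary naturality lemma stating that \icodet{vunzip} commutes with \icodet{transport} componentwise — i.e., \icodet{vunzip (transport p w) = (transport p (fst (vunzip w)), transport p (snd (vunzip w)))}, provable by path induction on \icodet{p} (essentially an instance of Lemma \ref{lemma:move_transp} for the projections of \icodet{vunzip}) — then lets me push the transport through \icodet{vunzip}, and Lemma \ref{lem:modules:zip-unzip}(ii) discharges the remaining \icodet{vunzip (vzip' (ks, vs)) = (ks, vs)}.

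The main obstacle will be the transport bookkeeping in part (ii): the two sides of the equation live in record types whose keys and value vectors are indexed by different (but propositionally equal) sizes, so I cannot simply \icodet{rewrite}. The auxiliary naturality lemma for \icodet{vunzip} above is the key piece of glue; without it the transports produced by Lemma \ref{lem:modules:to-of-list}(ii) remain stuck outside of \icodet{vunzip}. Once that is in place the proof concludes mechanically, since the transparent definition of \icodet{to_list_length} (cf.\ Remark \ref{rem:to-list-length}) makes the relevant paths reduce well under \icodet{simpl}, and proof irrelevance hidden inside Lemma \ref{vecenv-congr} dispatches the sortedness components.
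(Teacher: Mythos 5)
Your proposal is correct and follows essentially the same route as the formalisation: part (i) via Lemma \ref{ordenv-congr} together with Lemmas \ref{lem:modules:zip-unzip}(\textit{i}) and \ref{lem:modules:to-of-list}(\textit{i}), and part (ii) via Lemma \ref{vecenv-congr}, Lemmas \ref{lem:modules:to-of-list}(\textit{ii}) and \ref{lem:modules:zip-unzip}(\textit{ii}), and transport naturality (Lemma \ref{lemma:move_transp}). The only cosmetic difference is that the actual proof also invokes Lemmas \ref{lemma:transp-concat} and \ref{lemma:concat-inv} to cancel opposing paths, where you instead push a single transport through \icodet{vunzip}; both amount to the same transport bookkeeping you correctly identify as the crux of part (ii).
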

\begin{proof}
  We prove
  \begin{enumerate}[(i)]
    \item using Lemmas \ref{ordenv-congr}, \ref{lem:modules:to-of-list}(\textit i) and
      \ref{lem:modules:zip-unzip}(\textit i);
    \item as follows. This direction is a bit harder to prove, since we have to reason
      about equality of vectors. We use Lemma \ref{vecenv-congr}, which gives us
      two goals that are quite similar. We prove them using lemmas about
      transport (Lemmas \ref{lemma:move_transp} and \ref{lemma:transp-concat}) to bring
      together paths that give reflexivity by Lemma \ref{lemma:concat-inv}, and
      Lemmas \ref{lem:modules:to-of-list}(\textit{ii}) and
      \ref{lem:modules:zip-unzip}(\textit{ii}).
  \end{enumerate}
\end{proof}

We could have defined similar operations on pair-of-vectors environments
and prove all the required properties as for the environments from the standard
library, but this would be a quite time consuming process. Instead, we use Coq's
coercion mechanism. The coercion functions are precisely the functions given by
the isomorphism between the two representations.
\begin{lstlisting}
  Coercion _to {A} := toOrdEnv (A:=A).
  Coercion _from {A} := fromOrdEnv (A:=A).
\end{lstlisting}
In most situations, Coq inserts coercion functions automatically in an expected way,
which simplifies development using a pair-of-vectors representation of environments.
However, if there is some ambiguity in what way coercions could be applied, we
have to fallback to manual application of the coercion functions.

Our Coq development shows that in most of the proofs involving the pair-of-vectors
environments it suffices to just use Lemma \ref{ordvec-inv}.

We can now define semantic objects using a pair-of-vectors representation
without violating Coq's strict positivity check. The isomorphism between the
two representations ensures that we can transfer properties of one representation
to the other.
\begin{defn}[Semantic objects]\label{def:semobj}~
  \begin{lstlisting}
    Inductive Env :=
    | EnvCtr : TEnv -> VEnv -> MEnv -> MTEnv -> Env
    with MEnv :=
    | MEnvCtr : VecEnv Mod -> MEnv
    with MTEnv :=
    | MTEnvCtr : VecEnv MTy -> MTEnv
    with Mod :=
    | NonParamMod : Env -> Mod
    | Ftor : TSet -> Env -> MTy -> Mod
    with MTy :=
    | MSigma : TSet -> Mod -> MTy.
  \end{lstlisting}
\end{defn}

The definition of interpretation environments has a similar structure and uses the
same approach with a pair-of-vector representation.

\begin{defn}[Interpretation environments]\label{def:modules:int-envs}~
\begin{lstlisting}
Definition IVEnv := EnvMod.t (label*Ty).

Inductive IEnv :=
| IEnvCtr : TEnv -> IVEnv -> IMEnv -> MTEnv -> IEnv
with IMEnv :=
| IMEnvCtr : (VecEnv.VecEnv IMod) -> IMEnv
with IMod :=
| INonParamMod : IEnv -> IMod
| IFtor : IEnv -> TSet -> Env -> MTy -> mid -> mexp -> IMod.
\end{lstlisting}
\end{defn}

\subsubsection{Operations on Semantic Objects}
We briefly describe operations on semantic objects, such as lookup for
different types of identifiers, including \emph{long} identifiers, which
represent paths in the nested module structure. Most of the operations are just
liftings of corresponding operations of ``flat'' environments, such as
described in the subsection \ref{subsec:modules:sem-obj}. All the implicit
injections and operations on components of semantic objects, mentioned in
Section \ref{subsec:modules:sem-obj}, are made explicit in our Coq
implementation.

We consider several examples of definitions to sketch the overall idea. First,
we start with the concept of long identifiers. Long identifiers are defined as
an inductive data type with two constructors: one for the type, value or module
identifier, and the other one to build a path out of a sequence of module
identifiers.

\begin{lstlisting}
Inductive longtid :=
| Tid_longtid : tid -> longtid
| Long_longtid : mid -> longtid -> longtid.
\end{lstlisting}
This definition shows how the long identifiers for type lookup are
defined. Our implementation contains two more similar definitions of long
identifiers, \icode{longvid} and \icode{longmid} for looking up values and
modules, respectively.

The lookup function can be defined by recursion on the structure of a
long identifier.

\begin{lstlisting}
Fixpoint lookLongTid (longk : longtid) (e : Env) : option Ty :=
  match e  with
  | EnvCtr te _ me _ =>
    match longk with
    | Tid_longtid k => look k te
    | Long_longtid m_id longk' =>
      match (lookMid m_id me) with
      | Some (NonParamMod e') => lookLongTid longk' e'
      | _ => None
      end
    end
  end.
\end{lstlisting}

For the values and modules components of the semantic objects we define the similar functions,
following the name convention.
\begin{lstlisting}
  lookLongVid (longk : longvid) (e : Env) : option Ty
  lookLongMid (longk : longmid) (e : Env) : option Mod
\end{lstlisting}
The lookup function for ``flat'' environments
\begin{lstlisting}
  look : forall A : Type, En.key -> En.t A -> option A
\end{lstlisting}
is polymorphic with respect to the type of values in the
environment. Environments containing modules and module types (\icode{MEnv} and
\icode{MTEnv}) are part of the mutually recursive structure and therefore
wrapped in constructors. Before looking up in these environments we
pattern-match on the respective constructor to ``unwrap'' the environment and
then apply the \icode{look} function. The \icode{add} operation (the operation
that adds a new mapping to an environment) in case of the pair-of-vectors
environment types \icode{MEnv} and \icode{MTEnv}, first transports an
environment through the isomorphism, applies the usual \icode{add} (of standard
library implementation of environments) and then transports the result
back. The operation of environment modification for the semantic objects is
defined componentwise.

\subsection{Induction Principles}
In order to prove theorems by induction over the structure of semantic objects,
or relations containing mutual definitions, Coq's \texttt{Scheme} command is
used to generate suitable induction principles. For some of the definitions,
such as those for semantic objects and interpretation environments, the
generated induction principles are not sufficiently strong, which is caused
by the presence of nested inductive types; some constructors take environments
as parameters, and the environments, being essentially list-like structures,
make the whole definition a nested inductive definition.
For each of these cases, a suitable induction principle is defined manually,
following essentially the same approach as in \cite[Section 3.8]{cpdt}.

Let us consider an induction principle for the semantic objects given by
Definition \ref{def:semobj}. Usually, Coq generates an induction principle from
the definition of an inductive data type, but nested inductive definitions are
not covered by this procedure. In the case of \icode{MEnv} (the case of
\icode{MTEnv} is similar), we want the induction hypothesis saying that some
predicate \icode{P : Mod -> Prop} holds for all the values in the
pair-of-vectors environment argument of the \icode{MEnvCtr} constructor.  Since
we are interested only in a predicate on values in the environment (a
codomain), we can use the projection from the pair-of-vectors representation to
a vector of values:
\begin{lstlisting}
  vals : forall {A : Type} (v : VecEnv A), Vec A (v_size A v)
\end{lstlisting}
That is, we just need a predicate stating that some property holds for all the
elements in a vector.  There are at least two ways to define such a predicate:
by recursion and by induction.  We are going to use the inductive variant of the
predicate from the \icode{Vector} module of the standard library:
\begin{lstlisting}
Inductive Forall {A} (P: A -> Prop)
          : forall {n} (v: t A n), Prop :=
 |Forall_nil: Forall P []
 |Forall_cons {n} x (v: t A n) : P x -> Forall P v ->
                                 Forall P (x::v).
\end{lstlisting}
With this definition of the \icode{Forall} predicate, we can define a
sufficiently strong induction principle for semantic objects:

\begin{lstlisting}
Definition Env_mut' (P : Env -> Prop) (P0 : MEnv -> Prop)
  (P1 : MTEnv -> Prop)
  (P2 : Mod -> Prop) (P3 : MTy -> Prop)
  (f : forall (t : TEnv) (v : VEnv) (m : MEnv),
       P0 m -> forall m0 : MTEnv, P1 m0 -> P (EnvCtr t v m m0))
  (f0 : forall (t : VecEnv Mod),
      Forall P2 t -> P0 (MEnvCtr t))
  (f1 : forall (t : VecEnv MTy),
          Forall P3 t -> P1 (MTEnvCtr t))
  (f2 : forall e : Env, P e -> P2 (NonParamMod e))
  (f3 : forall (t : TSet) (e : Env),
        P e -> forall m : MTy, P3 m -> P2 (Ftor t e m))
  (f4 : forall (t : TSet) (m : Mod), P2 m -> P3 (MSigma t m)) :=
fix F (e : Env) : P e :=
  match e as e0 return (P e0) with
  | EnvCtr t v m m0 => f t v m (F0 m) m0 (F1 m0)
  end
with F0 (m : MEnv) : P0 m :=
  match m as m0 return (P0 m0) with
    | @MEnvCtr t => let fix step {n} (ms : Vec Mod n) : Forall P2 ms :=
                       match ms in Vec _ n' return @Forall _ P2 n' ms with
                         | [] => Forall_nil P2
                         | y :: l =>
                           @Forall_cons _ P2 _ y _ (F2 y) (step l)
                       end
                   in f0 t (step (vals _ t))
  end
with F1 (m : MTEnv) : P1 m :=
  match m as m0 return (P1 m0) with
  | @MTEnvCtr t => let fix step {n} (ms : Vec MTy n) : Forall P3 ms :=
                       match ms with
                         | [] => Forall_nil P3
                         | y :: l =>
                           @Forall_cons _ _ _ y l (F3 y) (step l)
                       end
                  in f1 t (step t)
  end
with F2 (m : Mod) : P2 m :=
  match m as m0 return (P2 m0) with
  | NonParamMod e => f2 e (F e)
  | Ftor t e m0 => f3 t e (F e) m0 (F3 m0)
  end
with F3 (m : MTy) : P3 m :=
  match m as m0 return (P3 m0) with
  | MSigma t m0 => f4 t m0 (F2 m0)
  end
for F.
\end{lstlisting}

A large portion of this induction principle is generated by the \icode{Scheme}
command.  The important modifications we have had to do manually are concerned with
the \icode{f0} and \icode{f1} cases (and the \icode{F0} and \icode{F1} cases of the
proof term respectively), where we use the \icode{Forall} predicate to specify the
desired property.

\subsection{Nominal Techniques in Coq}\label{subsec:modules:nominal-in-coq}
There are existing developments of nominal techniques for proof assistants.
Probably, the most developed one is the Nominal Isabelle
package for the Isabelle proof assistant \cite{Urban2005}, which includes generalised
name abstraction \cite{Urban2011}. On the other hand, for the Coq proof assistant,
there are no packages for nominal techniques in the standard distribution. Probably,
the only known work on nominal techniques in Coq is \cite{Aydemir:2007}. This work is mostly
focused on the case of simply typed lambda calculus and does not cover generalised name
abstraction.

We have developed an implementation of notions described in
Section \ref{sec:modules:var-binding-nominal} using Coq's module system along with
dependent records. Ideally, we would like to use only dependent records in our
implementation, but unfortunately, finite sets from the Coq's standard library
are implemented as parameterised modules. We wanted to use the standard library in
our development as much as possible to avoid extra efforts spent on implementing
standard functionality.

We started with the definition of atoms. Since the definition of atoms involves
finite sets, and since the nominal techniques use finite sets extensively, we decided
to use an \icode{MSet} implementation of finite sets from Coq's standard
library. We expose finite sets through our own module type, which adds
required operations and properties missing in the \icode{MSet} interface:
\begin{itemize}
\item set disjointness relation;
\item set extensionality;
\item map operation on sets.
\end{itemize}
We call our module type of finite sets \icode{SetExtT} and the implementation of
this module type \icode{SetExt}.

We define the following module type for atoms:
\begin{lstlisting}
Module Type Atom.
  Declare Module V : SetExtT.
  Axiom Atom_inf : forall (X : V.t), {x : V.elt | $\sim$ V.In x X}.
End Atom.
\end{lstlisting}
We use \icode{V.t} for the type of finite sets and \icode{V.elt} for the type of
elements.  The \icode{Atom_inf} axiom says that for any given finite subset of
atoms, one can always find an element, which is not in this finite subset. We use
subset types to specify that there exists such an element. It is important to
use \icode{Type} in this definition and not \icode{Prop}, since if we defined
the \icode{Atom_inf} as \icode{forall (X : V.t), exists x : V.elt, $\sim$ V.In x
  X}, we would not be able to use \icode{Atom_inf} to construct functions that
generate fresh atoms. This is due to limitations on \icode{Prop}, allowing to
eliminate propositions only to \icode{Prop} again. In other words, proofs can be
used to construct other proofs, but not programs.

We define a parameterised module \icode{Nominal} that accepts an implementation of
the \icode{Atom} module type. Definitions below are contained in the \icode{Nominal}
module.

Let us first give definitions of notions required to define a permutation.
We define predicates \icode{is_inj} and \icode{is_surj}, representing injectivity
and surjectivity of a function, respectively, as follows:

\begin{lstlisting}
  Definition is_inj  {A B : Type} (f : A -> B) : Prop :=
    forall (x y : A), f x = f y -> x = y.
  Definition is_surj {A B : Type} (f : A -> B) : Prop :=
    forall (y : B), exists (x : A), f x = y.
\end{lstlisting}

We then say that a function \icode{f} is bijective if it is both injective and surjective:
\begin{lstlisting}
  Definition is_biject {A B} (f : A -> B) :=
    (is_inj f) /\ (is_surj f).
\end{lstlisting}

We define a predicate \icode{has_fin_support} of a function \icode{f} as
\begin{lstlisting}
  Definition has_fin_supp f :=
      exists S, (forall t, $\sim$ V.In t S -> f t = t).
\end{lstlisting}

Finally, we define a finitely supported permutation using Coq's dependent records:
\begin{lstlisting}
Record Perm :=
      { perm : V.elt -> V.elt;
        is_biject_perm :  is_biject perm;
        has_fin_supp_perm : has_fin_supp perm}.
\end{lstlisting}
That is, to define an inhabitant of \icode{Perm}, one needs to provide
a function and proofs of two properties: that the function is a bijection, and
that is has a finite support. We call the projection \icode{perm} of the \icode{Perm}
record the \emph{underlying function} of a permutation.

Notice that since we defined properties of the underlying function of a
permutation as inhabitants of type \icode{Prop}, in presence of the proof
irrelevance axiom, we can prove that two permutations are equal if their
underlying functions are equal.

As an example, let us define first an identity permutation. We take the identity
function \icode{id} as an underlying permutation function.
Proofs of required properties of permutation are simple, and we use the \icode{refine}
tactic here to construct the permutation. The \icode{refine} tactic allows
one to provide parts of the definition leaving other parts as ``holes''
that generate proof obligations.
\begin{lstlisting}
Definition id_perm : Perm.
  refine ({| perm:=id; is_biject_perm := _; has_fin_supp_perm := _ |}).
  + split. auto. refine (fun y => ex_intro _ y _);reflexivity.
  + exists V.empty;intros;auto.
Defined.
\end{lstlisting}

Next, we define the composition of permutations. We use the same approach here,
namely the \icode{refine} tactic with a partially constructed record,
corresponding to the permutation.
\begin{lstlisting}
Definition perm_comp (p p' : Perm) : Perm.
  refine ({| perm:= (perm p) ∘ (perm p');
             is_biject_perm := _;
             has_fin_supp_perm := _ |}).
    (* Proofs are omitted *)
Defined.
\end{lstlisting}
We omit proofs of obligations generated by \icode{refine} here. The proof that composition
is a bijection boils down to facts that injectivity and surjectivity are preserved by
function composition. For the finite support we choose the union of supports of the
composed permutations. We define the following notation for the composition of permutations:
\begin{lstlisting}
  Notation "p ∘p p'" := (perm_comp p p') (at level 40).
\end{lstlisting}

The transposition function of two atoms follows Definition \ref{def:modules:transposition}:
\begin{lstlisting}
  Definition swap_fn (a b c : V.elt) : V.elt :=
    if (V.E.eq_dec a c) then b
      else (if (V.E.eq_dec b c) then a
        else c).
\end{lstlisting}
We prove that the \icode{swap_fn} function is a bijection and that its finite
support is the two element set $\{a,b\}$. By packing the \icode{swap_fn}
function with these proofs we define the corresponding instance of
\icode{Perm}.

To define the underlying function of a generalised transposition
(Definition \ref{def:modules:gen-transposition}) we use the \icode{fold_right} function
that accumulates elementary swaps by composing them as functions starting from the
identity function:
\begin{lstlisting}
  Definition swap_iter_fn (vs : list (V.elt * V.elt))
    : V.elt -> V.elt :=
      fold_right (fun (e' : (V.elt * V.elt)) (f : V.elt -> V.elt) =>
                  let (e1,e2) := e' in f ∘ (swap_fn e1 e2)) id vs.
\end{lstlisting}
We prove that this function satisfies properties of finitary permutations. That is,
\icode{swap_iter} is bijective and has a finite support, which is a set of all
variables from the argument list. With these properties at hand we can construct an
inhabitant of the \icode{Perm} type.

Our implementation also provides the functionality for generation of fresh names
for a given set of atoms. The freshness relation is defined as it is described in
Section \ref{sec:modules:var-binding-nominal} (Definitions \ref{def:modules:freshness}
and \ref{def:modules:gen-freshness}).
\begin{lstlisting}
  Definition fresh a A := ~ V.In a A.
  Definition all_fresh (x y : V.t) :=
      forall k, ~(V.In k x /\ V.In k y).

  Infix "$\#$" := fresh (at level 40) : a_scope.
  Infix "$\#$" := all_fresh (at level 40) : as_scope.

  Delimit Scope a_scope with Atom.
\end{lstlisting}
We overload the notation for the freshness relation to use it in both cases: for
a single atom and for a set of atoms.

We start with generating one fresh name along with the proof of
freshness. First, we define a type of functions from the type of finite sets
to the type of atoms, with the property that when applied to a finite set, it returns a
fresh atom (with respect to the given set).
We use subset types to equip a function with the property:
\begin{lstlisting}
  Definition FreshFn a :=
    {f : V.t -> V.elt | forall x, ((f x) $\#$ a)%Atom.
\end{lstlisting}
We use explicit scope annotation here to point out which freshness relation we use.

Now, we can define a function that takes a set of atoms and returns a fresh atom with
the proof of freshness. In order to obtain a fresh atom we use the fact that the
set of atoms is countably infinite:
\begin{lstlisting}
Definition fresh_fn : forall a : V.t, FreshFn a :=
fun a => let H := Atom.Atom_inf a in
         exist (fun f : t -> elt => forall x : t, (f x # a)%Atom)
               (fun _ : t => proj1_sig H)
               (fun _ : t => proj2_sig H).
\end{lstlisting}

In this way we abstract the mechanism of fresh atoms generation, since it will
work with any implementations of module type \icode{Atom} used to instantiate
the \icode{Nominal} module.

Next, we generalise the \icode{fresh_fn} to return a set of fresh atoms with the proof
of freshness. Again, we first define a subset type that packs together a finite set,
the property of freshness, and the cardinality of the set of fresh atoms:
\begin{lstlisting}
Definition AllFresh a n :=
   { b : V.t | (b # a) /\ V.cardinal b = n }.
\end{lstlisting}

To generate a set of \icode{n} fresh atoms (with respect to some finite set
\icode{X}), we have to recursively pass the set \icode{X} with a newly
generated atom added to the set to ensure freshness of all \icode{n} new atoms. We
define a function that generates a set of \icode{n} atoms by recursion on \icode{n}:
\begin{lstlisting}
Fixpoint get_freshs_internal (X : V.t) (n : nat) : V.t :=
  match n with
  | O => empty
  | S n' => let fatom := (proj1_sig (fresh_fn X)) X in
            add fatom (get_freshs_internal (add fatom X) n')
  end.

\end{lstlisting}

By induction on \icode{n} one can show that these atoms are indeed fresh and
the cardinality of the resulting set is equal to \icode{n}. Finally, we can define
a function that returns a value of type \icode{AllFresh a n} for some finite
set \icode{a} and a natural number \icode{n}:
\begin{lstlisting}
Definition get_freshs (X : V.t) (n : nat) : AllFresh X n :=
  exist _ (get_freshs_internal X n)
          (conj (get_freshs_internal_all_fresh n X)
                (get_freshs_cardinality n X)).
\end{lstlisting}

To implement a nominal set we use Coq's module system. That is, we define a
module type of nominal sets, which follows Definition
\ref{def:modules:nominal-set} and includes such components as the type of
elements, the action of a finitary permutation on the elements of this type (along
with properties of the action), and the finite support.

\begin{lstlisting}
Module Type NominalSet.
    Import V.

    Parameter X : Type.

    Parameter action : Perm -> X -> X.
    Notation "r @ x" := (action r x) (at level 80).

    Axiom action_id : forall (x : X), (id_perm @ x) = x.
    Axiom action_compose : forall (x : X) (r r' : Perm),
                           (r @ (r' @ x)) = ((r ∘p r') @ x).

    Parameter supp : X -> V.t.
    Axiom supp_spec : forall  (r : Perm)  (x : X),
      (forall (a : elt), In a (supp x) -> (perm r) a = a) ->
         (r @ x) = x.

End NominalSet.
\end{lstlisting}

\begin{remark}
  \note{The module type \icode{NominalSet} allows for defining a
    support function \icode{supp} that returns a finite support of an element that
    is not necessarily the smallest one, meaning that not all the definitions of \icode{supp}
    will be equivariant functions. In each instance
    of \icode{NominalSet} in our Coq development there is an obvious
    way to define a \icode{supp} function such that it returns a
    smallest support of an element. Although, this is not enforced by
    the definition of the nominal set that we have. One can remedy
    this by adding an explicit constraint on a \icode{supp} function to \icode{NominalSet}
    saying that the function must be equivariant.}
\end{remark}

Our \icode{Nominal} module includes an implementation of \icode{NominalSet} for
the type of finite sets of atoms in the way it is given by Definition
\ref{ex:modules:finset-nom}.  This module also includes additional facts like an
action on a singleton set, equivariance of the union and intersection operation
on finite sets, and equivariance of the set disjointness relation.

The running example of simplified semantic objects from Section
\ref{sec:modules:var-binding-nominal} is implemented as a nominal set using the
\icode{NominalSet} signature. Our implementation includes both definitions of
$\alpha$-equivalence: the one using generalised transpositions and the other
with a condition on a permutation. Moreover, we have developed a proof of
equivariance of the elaboration relation in the simplified setting (Figure
\ref{fig:modules:simpl-sem-obj}).

We have defined a nominal set of full semantic objects. One interesting aspect
to point out is that the action and the support, being defined as fixpoints,
were accepted by Coq, despite the fact that they call \icode{map} and
\icode{fold_right} functions for the case of module environments \icode{MEnv}
and module type environments \icode{MTEnv}. We provide one example of the permutation
action definition for semantic objects:
\begin{lstlisting}
Fixpoint action (p : Perm) (E : X) :=
  match E with
  | EnvCtr te ve me mte =>
      EnvCtr (PermPlainEnv.action p te)
             (PermPlainEnv.action p ve)
             (action_me p me) (action_mte p mte)
  end
with action_me p me :=
  match me with
  | MEnvCtr {| v_size := nn;  keys := ks; vals := vs |} =>
    MEnvCtr {| v_size := nn;  keys := ks;
               vals := (map (action_mod p) vs) |}
   end
with action_mte (p : Perm) mte:=
  match mte with
  | MTEnvCtr {| v_size := nn;  keys := ks; vals := vs |} =>
    MTEnvCtr {| v_size := nn;  keys := ks;
                vals := (map (action_mty p) vs) |}
  end
with action_mod p (md : Mod) : Mod :=
   match md with
   | NonParamMod e => NonParamMod (action p e)
   | Ftor ts e mty => Ftor (PFin.action p ts)
                           (action p e)
                           (action_mty p mty)
   end
with action_mty p (mty : MTy) : MTy :=
  match mty with
  | MSigma ts m => MSigma (PFin.action p ts) (action_mod p m)
  end.
\end{lstlisting}
The definition above uses previously constructed nominal sets for ``flat''
environments \icode{PermPlainEnv}, finite sets \icode{PFin}, and the \icode{map}
function on vectors to apply actions \icode{action_mod} and \icode{action_mty} to all
values in the corresponding environments.

The implementation of \icode{NominalSet} for the interpretation environments
follows the same pattern, since they have a structure similar to semantic
objects. Some components of the interpretation environments definition includes
pieces of semantic objects, such as \icode{Env}, \icode{MTEnv}, \icode{MTy}, and
we use respective functions from the nominal set of semantic objects in the
definition of the action and the support for interpretation environments.

We define the $\alpha$-equivalence relation on semantic objects using a
similar approach as in Definition \ref{eq:modules:MTy-alpha-equiv-perm} in Section
\ref{sec:modules:var-binding-nominal}, following the mutual inductive structure
of semantic objects.

We use the following notation for the difference operation on sets, the action of a
permutation on finite sets, and the action of a permutation on modules.
\begin{lstlisting}
Infix ":-:" := Atom.V.diff (at level 40).
Notation "r @ x" := (PFin.action r x) (at level 80) : set_scope.
Notation "r @ x" :=
    (PermSemOb.action_mod r x) (at level 80) : env_scope.

Delimit Scope set_scope with S.
Delimit Scope env_scope with E.
\end{lstlisting}

The $\alpha$-equivalence relation is defined as follows.
\begin{lstlisting}
Inductive ae_env : Env -> Env -> Prop :=
| ae_env_c : forall (ve' ve : VEnv) (te' te : TEnv)
                    (me' me : MEnv) (mte' mte : MTEnv),
    ve' = ve ->
    te' = te ->
    ae_menv me' me ->
    ae_mte mte' mte ->
    ae_env (EnvCtr te' ve' me' mte') (EnvCtr te ve me mte)
  with
  ae_menv : MEnv -> MEnv -> Prop :=
  | ae_menv_c : forall (me' me : VE.VecEnv Mod),
      (forall mid (e' e : Mod),
          look mid (_to me') = Some e' ->
          look mid (_to me) = Some e ->
          ae_mod e' e) ->
      ae_menv (MEnvCtr me') (MEnvCtr me)
  with
  ae_mte : MTEnv -> MTEnv -> Prop :=
  | ae_mte_c : forall (mte' mte : VE.VecEnv MTy),
      (forall mtid (e' e : MTy),
          look mtid (_to mte') = Some e' ->
          look mtid (_to mte) = Some e ->
          ae_mty e' e) ->
      ae_mte (MTEnvCtr mte') (MTEnvCtr mte)
  with
  ae_mod : Mod -> Mod -> Prop :=
  | ae_mod_np : forall e' e,
      ae_env e' e -> ae_mod (NonParamMod e') (NonParamMod e)
  | ae_mod_ftor : forall t e e' mty mty',
      ae_env e e' ->
      ae_mty mty mty' ->
      ae_mod (Ftor t e' mty') (Ftor t e mty)
 with
 ae_mty : MTy -> MTy -> Prop :=
 | ae_mty_c : forall m m',
     forall (T T' : Atom.V.t) p,
       (forall a, Atom.V.In a ((PermSemOb.supp_mod m) :-: T)
                  -> (perm p) a = a) ->
    ae_mod m' (p @ m)%E ->
    T' = (p @ T)%S ->
    ae_mty (MSigma T' m') (MSigma T m).

\end{lstlisting}

The essential part of this definition is the \icode{ae_mty} case. It allows for relating
module types up to permutations that affect only variables distinct form the set \icode{T}.

Let us show an example, where explicit $\alpha$-equivalence is needed
in order to elaborate a module declaration.

\begin{example}\label{ex:modules:ftor-app-alpha-equiv}
Let $F = \forall \emptyset.(\{\},\exists \{x\}.\{a \mapsto x\})$ be a functor
and $E = \{f \mapsto F\}$ be an environment containing this functor.
We want to elaborate a sequence of module declarations
\[ \kw{module}~m_1 = f(\eps);\kw{module}~m_2 = f(\eps) \]
in the environment $E$ into the following module type
\[ \exists \{x,y\}.\{m_1 \mapsto \{a \mapsto x\}, \{m_2 \mapsto \{a \mapsto y\}\} \]
Two different functor applications give us results that could differ in
names of bound variables. According to Rule
\Ref{rule:modules:elab-mdec-seq} we have to build two derivations (side
conditions are trivially satisfied in this case). We start with the first one.
\[ \infer{
  \infer{E |- \eps : \{\} \\ E(f) = (\{\},\exists \{x\}.\{a \mapsto x\})} {E |-
    f(\eps) : \exists \{x\}.\{a \mapsto x\}}} {E |- \kw{module}~m_1 = f(\eps) :
  \exists \{x\}.\{m_1 \mapsto \{a \mapsto x\}\}} \] The derivation for the
second declaration in the sequence is not very different. The key observation
here is that we can only derive the following:
\[ E + \{ a \mapsto x \} |- \kw{module}~m_2 = f(\eps) :
  \exists \{x\}.\{m_2 \mapsto \{a \mapsto x\}\} \]

The reason for this is that we are looking up $f$ in the environment
$E + \{ a \mapsto x \}$, which gives the same result as in the environment $E$.
Now, we have to apply $\alpha$-renaming. Otherwise it would be impossible to satisfy
the condition $T_1 \cap (\textrm{tvs}(E) \cup T_2) = \emptyset$ in
\Ref{rule:modules:elab-mdec-seq}. That is, we have to rename $x$ in the second
derivation.
\end{example}

The possibility of $\alpha$-renaming as illustrated in Example
\ref{ex:modules:ftor-app-alpha-equiv} is usually implicit on paper, but in the
Coq formalisation, we have to include it explicitly in the rule. Rule
\Ref{rule:modules:elab-mdec-seq} becomes the following.
\[
\fraccn{\label{rule:elab-mdec-seq-alpha}
  E \vd \id{mdec}_1 : \exists T_1.E_1 \LSP E + E_1 \vd \id{mdec}_2 : \exists T_2.E_2 \\
  \exists T_1.E_1 =_\alpha \exists T_1'.E_1' \LSP  \exists T_2.E_2 =_\alpha \exists T_2'.E_2' \\
  T_1' \# T_2' \LSP T_1' \# \textrm{tvs}(E) \\}
       {E \vd \id{mdec}_1 ~\id{mdec}_2 : \exists(T_1' \cup T_2').(E_1' + E_2')}
\]
Notice, that we also express the side condition on variable sets disjointness
using the freshness relation. Reflecting these changes to our implementation
allows us to build a derivation for Example
\ref{ex:modules:ftor-app-alpha-equiv}.

\begin{remark}\label{rem:modules:alpha-equiv}
  We have formalised the proof of Theorem \ref{norm.prop} in a simplified
  setting, by taking sets of variables in binding positions to be empty. For
  sequencing rules (rules \Ref{rule:modules:elab-mdec-seq} and
  \Ref{elab.spec.seq.rule}), addition of $\alpha$-equivalence is not required
  for the proof of Theorem \ref{norm.prop}, since we \emph{assume} elaborate
  modules. However, to \emph{build} a derivation for static interpretation we
  must be able to $\alpha$-rename sets of variables $N$ and $N'$ in Rule
  \Ref{rule:modules-sint-seq} appropriately. In order to achieve this, we can
  add additional premises allowing for $\alpha$-rename such as those in Rule
  \Ref{rule:elab-mdec-seq-alpha}.  The same applies to Lemma
  \ref{lem:modules:term-dec-comp} (see Rule \Ref{rule:modules:comp-decl}).~

  Currently, the proof of Theorem \ref{norm.prop} ignores issues related to
  $\alpha$-conversion. However, we have an implementation of the rules with
  explicit $\alpha$-equivalence demonstrating Example
  \ref{ex:modules:ftor-app-alpha-equiv}; these changes are not yet incorporated
  into the proof of Theorem \ref{norm.prop}.
\end{remark}

\subsection{Proof of Normalisation of Static Interpretation}
\label{subsec:modules:coq-norm-static-int}
The proof of static interpretation normalisation is carried out
as it is described in Section \ref{subset:modules:static-int-norm}.
The logical relation is implemented as a fixpoint rather than as an inductive
relation. The reason for this representation is essential. If the relation was
defined as an inductive predicate, the definition would not pass the strict
positivity constraint for inductive definitions in Coq. From the definition of
our logical relation, it is straightforward to establish that the relation is
well-formed because it is decreasing structurally in its left argument. For
this reason, it can be expressed as a fixpoint definition, using also Coq's
anonymous fix-construct, corresponding to the nested structure of the semantic
objects. Unfortunately, we cannot keep our environment representation
completely abstract, since we define the logical relation recursively on the
structure of environments. Restrictions on fixpoint definitions in Coq require
us to use a nested fixpoint on underlying structures in the definition of
environments. Again, we use a pair-of-vectors view to define a corresponding
nested fixpoint in the definition of the consistency relation of (Figure
\ref{fig:modules:consistency}).
\begin{lstlisting}
Fixpoint consistent_IEnv (E:Env) (IE:IEnv) : Prop :=
  match E, IE with
    EnvCtr TE VE ME MTE, IEnvCtr TE' IVE IME MTE' =>
    TE = TE'
    /\ consistent_IVEnv VE IVE
    /\ consistent_IMEnv ME IME
    /\ MTE = MTE'
  end
with consistent_IMEnv (ME:MEnv) (IME:IMEnv) : Prop :=
  match ME, IME with
    MEnvCtr me, IMEnvCtr ime =>
    dom (SemObjects.VE._to me) = dom (SemObjects.VE._to ime) /\
    match me,ime with
      VecEnv.mkVecEnv _  nn (exist _ ks _) vs,
      VecEnv.mkVecEnv _  nn' (exist _ ks' _) vs' =>
      (fix con_step {n n'} (l : Vec Mod n) (ll : Vec IMod n')
         : Prop :=
           match l,ll with
           | [],[] => True
           | m :: tl, im :: tl'  =>
             con_step tl tl' /\  consistent_IMod m im
           | _,_ => False
           end) nn nn' vs vs'
    end
  end
with consistent_IMod (M:Mod) (IM:IMod) : Prop :=
  match M with
    NonParamMod E =>
    match IM with
      INonParamMod IE => consistent_IEnv E IE
    | IFtor _ _ _ _ _ _ => False
    end
  | Ftor T0 E (MSigma T M) =>
    exists IE0 mid mexp,
    IM = IFtor IE0 T0 E (MSigma T M) mid mexp
    /\ forall IE,
       consistent_IEnv E IE ->
        exists N IM c,
         (Mexp_int (addIEnvMid mid (INonParamMod IE) IE0) mexp N IM c
         /\ consistent_IMod M IM)
  end.
\end{lstlisting}
There are several design decisions that we want to emphasise. Instead of
forcing two vectors to be of the same length in the \icode{consistent_IMEnv}
function, we just return False in case vectors are not aligned. This definition also
makes use of a \emph{concrete} representation of module type environments as a
pair-of-vectors.  We would not be able to use any \emph{abstract}
representation of environments (like a type constructor exposed through the
module type) in the definition. The reason for that is that we would have to
use a recursor provided by the abstract representation to define the inner
fixpoint and Coq would not accept the definition as terminating.

\begin{remark}\label{rem:modules:nested-fix}
  Instead of the inner fixpoint \icode{con_step} we would like to use a
  separately defined function stating that some predicate
  \icode{(P : A -> B -> Prop)} holds for two vectors componentwise.
  The function is defined as follows:
  \begin{lstlisting}
  Fixpoint Forall2_fix {A B} (P : A -> B -> Prop) (n n' : nat)
               (l : Vec A n) (ll : Vec B n') : Prop :=
        match l,ll with
        | [],[] => True
        | m :: tl, im :: tl'  =>
          Forall2_fix P _ _ tl tl'  /\  P m im
        | _,_ => False
        end.
  \end{lstlisting}
  Unfortunately, this does not work for our definition. Probably, Coq does not
  unfold \icode{Forall2_fix} here to see that the argument is
  decreasing. Although, sometimes Coq is able to do some unfoldings, when
  checking a definition for termination (see definition of \icode{ntsize} in
  \cite[Section 2.8]{cpdt} and the definition of \icode{action} in Section
  \ref{subsec:modules:nominal-in-coq}).
\end{remark}

Although we cannot use \icode{Forall2_fix} directly in the definition of the
consistency relation, we can prove that the nested fixpoint \icode{con_step}
corresponds to \icode{Forall2_fix}. This way we can use properties of
\icode{Forall2_fix} in the proofs related to the consistency
relation. Particularly, we are interested in converting this ``intensional''
representation to the extensional one that relates two environments using
environment membership:
\begin{lstlisting}
  Definition EnvRel {A B} (P : A -> B -> Prop)
                          (E : En.t A) (E' : En.t B) : Prop :=
    (forall k, In _ k E <-> In _ k E')
    /\ (forall k v v', look k E = Some v ->
                       look k E' = Some v' ->
                       P v v').
\end{lstlisting}
In most proofs we use properties of environments, which are stated in terms of
\icode{look}, and it is very unwieldy to use \icode{Forall2_fix} in such proofs,
since in this case proofs have to be carried out by induction on the structure
of the underlying vector environment.

Putting it all together, we define equations allowing us to fold the nested
fixpoint in the definition of the consistency relation:
\begin{lstlisting}
Lemma Forall2_fix_fold_unfold {A B} (P : A -> B -> Prop) :
        Forall2_fix P =
        (fix ff {n n'} (l : Vec A n) (ll : Vec B n') : Prop :=
           match l,ll with
           | [],[] => True
           | v :: tl, v' :: tl'  =>
             ff tl tl' /\ P v v'
           | _,_ => False
           end).

Lemma ForallEnv2_fold_unfold {A B} (P : A -> B -> Prop) ve ve' :
        ForallEnv2_fix P ve ve'=
        match ve,ve' with
          VecEnv.mkVecEnv _  n (exist _ ks _) vs,
          VecEnv.mkVecEnv _  n' (exist _ ks' _) vs' =>
             Forall2_fix P n n' vs vs'
        end.
\end{lstlisting}

We also provide a logical equivalence between \icode{Forall2_fix} and \icode{EnvRel}:
\begin{lstlisting}
Lemma ForallEnv2_fix_EnvRel_iff (A B : Type)
                               (P : A -> B -> Prop)
                               (ve : VecEnv A) (ve' : VecEnv B)
 : (dom ve = dom ve' /\ ForallEnv2_fix P ve ve') <-> EnvRel P ve ve'.
\end{lstlisting}

\noindent In proofs involving the consistency relation we use the following pattern:
\begin{itemize}
  \item rewrite using fold/unfold lemmas;
  \item rewrite by \icode{ForallEnv2_fix_EnvRel_iff}
\end{itemize}
In this way we bridge the gap between the ``intensional'' recursive definition and
the ``extensional'' definition, allowing for more convenient reasoning.

We have kept the Coq development close to the representation in this chapter.
However, the filtering relation in Figure \ref{filtering.fig} does
not define a filtering algorithm directly, but serves as a \emph{specification}
for it. In the proof of normalisation of static interpretation, we have to show
the existence of a filtered environment. Due to the limitations of Coq's
fixpoint constructs, we have defined a filtering algorithm as an inductively
defined relation. We consider it future work to investigate the use of general
recursion in Coq for filtering, which would be useful for applying code
extraction to obtain a certified static interpretation implementation. We
believe it is a reasonable approach to separate the relational
``declarative'' definitions from definitions that compute for code
extraction. One can then establish a correspondence between the relational and
functional representations to show soundness of the implementation.

\section{Related Work}
The concept of static interpretation of modules is not new and has been applied
earlier in the context of the MLKit Standard ML compiler \cite{elsman99}. In
the present work we focus on the formalisation of a higher-order module language in
the Coq proof assistant in the style of \cite{elsman99}. For that reason, as
related work we mention mostly alternative approaches at providing mechanised
meta-theories for module languages.

The work on compilation of higher-order modules into F$_\omega$, the
higher-order polymorphic lambda calculus \cite{Rossberg:f-ing-modules},
comes with a Coq implementation of the work. Compared to our work, which
eliminates all module language constructs at compile time,
\cite{Rossberg:f-ing-modules} make no distinction between core
language and module language constructs in the target code. Moreover, the style
of formalisation is different from our Coq development since we use a more direct
encoding of the module language semantics in term of semantic objects.

Earlier works include the work on using Twelf to provide a mechanised
meta-theory for Standard ML \cite{Lee:2007:TMM:1190216.1190245}, based on
Harper-Stone semantics of Standard ML \cite{Harper:2000:TIS:345868.345906}.
Compared to our work, however, this approach also does not address the
problem of eliminating modules at compile time.

Another body of work related to mechanising the meta-theory of ML is the work
on CakeML \cite{Tan:2016:NVC:2951913.2951924}, which, however, supports only
non-parameterised modules.

The category of works related to our proof techniques and the approach to the
formalisation includes works on reasoning with isomorphic representations of
abstract data types in the context of homotopy type theory
\cite{Licata:isomorphic-views,GabeDijkstra:msthesis}. Nominal techniques are
implemented in the Nominal Isabelle package \cite{Urban2005,Urban2011} and to a
limited extend in Coq (mostly focusing on simply-typed lambda calculus)
\cite{Aydemir:2007}.

\section{Conclusion and Future Work}
We have developed a formalisation in Coq of a higher-order module system along
with the static interpretation with the guarantee of termination in the style
of \cite{elsman99}. Our implementation is one of the first attempts to
formalise a module system in this style in the Coq proof assistant.  In the
course of the implementation we have developed the following techniques.
\begin{itemize}
\item Extension of the standard library implementation of sets and environments
  (finite maps) with extensionality property.
\item Isomorphic representations of environments (finite maps) to overcome the
  issue with the conservative strict positivity check in Coq.  The technique
  developed allowed us to implement semantic objects in Coq with almost no proof
  obligation overhead despite the fact of using different environment
  representations.
\item Formalisation of nominal sets and applications of generalised nominal
  techniques allowing to work with sets of variables in binding positions. This
  is the first implementation in Coq dealing with generalised binders.
\item The normalisation proof of the static interpretation using the Tait's
  method of logical relations \cite{tait1967} in the setting of higher-order
  modules.
\end{itemize}

The current version of our Coq development is about 6.5k lines of code,
excluding comments and blank lines. It includes definitions from Section
\ref{sec:modules-spec}, the proof of Theorem \ref{norm.prop}, the module
implementing nominal techniques (with examples in the simplified setting discussed in
Section \ref{sec:modules:var-binding-nominal}), the module implementing a pair-of-vectors
representation of environments, and the proof of strong normalisation for the
simply-typed lambda calculus.

Although our implementation makes some simplifying assumptions, we believe it
is can be extended to a full setting with no fundamental
limitations. Particularly, the nominal techniques gives a uniform structuring
principle for dealing with binders.

Moreover, our Coq implementation and the formal specification given in Section
\ref{sec:modules-spec} has been developed hand-in-hand with the Haskell
implementation integrated with the Futhark compiler, serving as a guiding line
for the development. Using semantic objects allows for the structure of the
Haskell implementation to be in a close correspondence with our Coq implementation.

As future work, we would like to extend our implementation of nominal
sets with more features and eventually expose it as a
library. Regarding the implementation of nominal techniques, we would
like to note that a solution making use of type classes instead of
modules to structure the library would be beneficial. Ideally, such an
implementation would require finite sets to be implemented using type
classes as well, which is not the case in the standard library of Coq,
where they are implemented using modules.\footnote{An experimental standalone implementation
  that uses Coq's type classes to implement nominal sets has been developed by the author,
  and it is available online: \url{https://github.com/annenkov/stlcnorm}.
  This implementation, however, still uses the module-based finite sets implementation from the standard library.}
We believe that such an implementation would allow for better proof automation, especially
for proving properties such as equivariance.

Another direction of extension of our development in Coq would be an
implementation of \emph{algorithms} corresponding to the relational specifications
of elaboration, filtering, and eventually, static interpretation. Having such
implementations, one could use Coq's code extraction mechanism to obtain
certified code in one of the target languages, which could be used as part of a
compiler for the Futhark programming language.

\chapter{Formalising Two-Level Type Theory}\label{chpt:tltt}

\section{Introduction}
Homotopy Type Theory (HoTT) is a variant of Martin-L\"of type theory that pays
particular attention to the equality (or identity) type.  The equality type in
Martin-L\"of type theory is defined as an inductive family generated by the
single constructor \icode{refl}, called reflexivity.  For any two inhabitants
$a$ and $b$ of some type $A$, one can ask if $a$ and $b$ are equal by forming the
equality type $a=b$. Since $a = b$ is also a type, one can ask if two proofs of
equality are equal, i.e. for $p, q : a=b$, one can form $h : p = q$, and so on.
The eliminator of the equality (or identity) types is usually called $J$ (see rule
\nameref{rule:elim-eq} in Section \ref{sec:ttlt-spec}).  As it was
observed by Hofmann and Streicher in \cite{hofmann-streicher:groupoid}, it is
not possible from $J$ to show that two proofs of equality are equal.

Therefore, there are two options: one could add axiom K
(or equivalently, Uniqueness of Identity Proofs axiom),
making any two proofs of equality equal, i.e.
\[\inferrule{\Gamma \vdash a_1, a_2 : A \\
  \Gamma \vdash p, q : a_1 = a_2}
{\Gamma \vdash K(p, q) : p = q}
\quad\deflabel{\textsc{uip}}\label{rule:uip}
\]

The other option is to allow for different ways to identify types by introducing
the \emph{univalence axiom} \cite{hott-book}. The univalence axiom says that
two types are considered equal when they are equivalent, reflecting the informal
principle that is usually used in mathematics. That is, for any two types $A$ and $B$
\[ \id{Univalence} : (A = B) \simeq (A \simeq B) \]
More precisely, is says that a function
$\id{idtoequiv} (A = B) -> (A \simeq B)$, which can be defined by induction
on equality, is an equivalence.  In other words there is another function,
$ua : (A \simeq B) -> (A =B)$ that goes in the opposite direction and allows one
to get proofs of equality from proofs of equivalence (see more on definition of
equivalences in \cite[Chapter 4]{hott-book}).

Types in HoTT are weak $\infty$-groupoids, which allows for capturing important
notions from homotopy theory and for developing it synthetically in type theory.

Homotopy type theory is implemented in a number of proof assistants allowing
for development of machine-checkable proofs in such areas as homotopy theory,
category theory and other areas of mathematics. From the functional programming
point of view, it also allows for solving abstraction problems in dependently
typed programming, since equivalent types can be identified and changing
between equivalent representation does not require additional efforts.

If we restrict ourselves to just sets (types, for which two proofs of equality
of two elements are equal, also called hSets), the notion of equivalence will
be just an isomorphism between types. In this setting, one can benefit from the
fact that isomorphic types become equal in the presence of the univalence
axiom. That means that all the proofs of properties of some type $A$ are
immediately transferred to types isomorphic to $A$, since we
can always substitute equals for equals. This approach allows for better
abstraction preservation in proof and program development. Particularly, one
could have an abstract representation of some structure, for example
environments (or finite maps), as we saw in Chapter \ref{chpt:modules}. One can
define several isomorphic representations satisfying the abstract specification
and move between these representations using the fact that isomorphic types are
considered equal. These ideas are considered in
\cite{Licata:isomorphic-views,GabeDijkstra:msthesis}.

In our development of the Futhark module system formalisation described in
Chapter \ref{chpt:modules} we could have used univalence to switch between the
two environment representations easily, while keeping the specification
completely abstract. Such a possibility would allow us to take advantage of
computational behaviours of concrete representations, since it is impossible
to compute with the specification given by abstract type or opaque module.

Another interesting feature of HoTT are higher inductive types (HITs). For
instance, HITs allow for avoiding so called ``setoid hell'' by adding equality
constructors to inductive definitions. A setoid is a set equipped with an equivalence
relation.  Whenever we want to compare two elements of setoid we have to use
this custom equivalence relation instead of usual equality. Because of that all
the operations on setoids have to respect its equivalence, which require a lot of
proofs. Using HITs (to be precise, a specialised version called quotient
inductive types, or QITs), one can use the usual propositional equality again
when comparing two elements of a setoid.

Homotopy type theory is an active developing field with a number of open
questions.  Particularly, not being able to talk about strict equality in HoTT
sometimes make certain constructions problematic. We will address this
problem in the following sections.

The rest of the chapter is structured as follows. In Section \ref{sec:tltt} we
discuss motivations for two-level type theory (2LTT). In Section
\ref{sec:ttlt-spec} we provide a formal specification of 2LTT and discus
differences with homotopy type system (HTS). Section \ref{tltt:sec:applications}
describes an internalisation of some results on inverse diagrams in 2LTT. The
implementation of 2LTT in the Lean proof assistant is discussed in Section
\ref{tltt:sec:lean-formalisation}, which outlines the overall idea of the
approach to the implementation of 2LTT in a proof assistant and then demonstrates
features of our Lean development including the results from Section
\ref{tltt:sec:applications}. Section \ref{tltt:sec:lean-formalisation}
represents the main contribution of the author to the development of two-level
type theory.

\section{Motivation}\label{sec:tltt}
The motivation for two-level type theory is twofold.

Many results of homotopy type theory are completely internal to HoTT and can be
formalised directly in a proof assistant, and a lot of work has been done using
Agda, Coq, and Lean.  Some other results are only \emph{partially} internal to
HoTT.  One example is the constructions of $n$-restricted semi-simplicial types
which we can do only after fixing the number $n$ \emph{externally} (i.e.\ we have
to decide which $n$ we use before we start writing it down in a proof
assistant).  The reason, why such constructions are problematic to write in HoTT
is that the usual definition of $n$-restricted semi-simplicial type as a
\emph{strict} functor from the category of finite non-empty ordinals and
strictly monotone maps to the category of types: $S : \deltop -> \UU$ require
functor laws to hold strictly (the functor laws correspond to semi-simplicial
identities). In HoTT it would require infinite tower of coherencies ensuring
that certain proofs of equality are equal, proofs of equality of proofs of
equality are equal, and so on.

One can try to avoid writing equalities corresponding to the semi-simplicial
identities by using an equivalent representation of $n$-restricted
semi-simplicial types as a nested $\Sigma$-type with face maps being
projections. For example, if we fix $n = 3$, we can write the following
definition (we use Lean \cite{moura:lean} notation here):
\lstset{language=lean}
\begin{lstlisting}
definition SST₃ :=
  Σ (X₀ : Type)
    (X₁ : X₀ → X₀ → Type),
       Π (x₀ x₁ x₂ : X₀), X₁ x₀ x₁ → X₁ x₁ x₂ → X₁ x₀ x₁ → Type
\end{lstlisting}
In the definition for \icode{SST₃} we think of \icode{X₀} as points, \icode{X₁}
as lines, and
\icode{$\Pi$ (x₀ x₁ x₂ : X₀), X₁ x₀ x₁ → X₁ x₁ x₂ → X₁ x₀ x₁ → Type} as triangles.
To form a triangle, we need for the end of one side and
the beginning of another side to be the same point.  Instead of using
semi-simplicial identities here (formulated using equalities), we just use the
same point again for the point that should match the given point. For example
\icode{X₁ x₀ x₁} is the line that starts at \icode{x₀} and ends at \icode{x₁},
and \icode{X₁ x₁ x₂} starts at \icode{x₁} and ends at \icode{x₂}.

Although it is possible to write such a definition for any fixed $n$, it seems
to be not possible to generalise it for an arbitrary $n$ internally
in HoTT. For the detailed explanation see the introduction section in
\cite{alt-cap-kra:two-level}.

Another example of this kind is the work by Shulman on inverse
diagrams~\cite{shulman:inverse-diagrams}, for which we can do constructions in type
theory once we fix a (finite) inverse category \emph{in the meta-theory}. One of
the examples of the inverse diagrams is $n$-restricted semi-simplicial types
described above.

In many situations, one would like such constructions to be completely internal
(using a variable $n : \mathbb{N}$ or an inverse category expressed internally)
and formalisable in a proof assistant, but unfortunately, it is either unknown
how this is doable or it is known to be impossible.
Two-level type theory gives a way to completely formalise such results.
This is the aspect that we explore in the paper~\cite{ann-cap-kra:two-level}.

A second motivation of two-level type theory is that it allows for extending homotopy
type theory in a ``controlled'' way.
It gives a framework that makes it easy to write down enhancements of the theory,
where one can relatively easily check whether these assumptions hold in some models
(models are explored in~\cite{paolo:thesis}).

In the present work we are focusing on the first motivation, namely
on internalisation of results that can only be partially internalised in HoTT.
We also will demonstrate how one can implement two-level type theory in a proof assistant
and discuss our experience with developing a formalisation in such an implementation.

\section{Two-Level Type Theory}\label{sec:ttlt-spec}
To address the issues arising from the lack of strict equality, presented in
Section \ref{sec:tltt}, we introduce two level type theory, which consists of
two fragments: a \emph{strict} fragment (a form of \MLTT with \UIP) and a
\emph{fibrant} fragment (essentially \HOTT).  The fibrant fragment of our type
theory has all the basic types and type formers found in \HOTT \cite[Appendix
  A.2]{hott-book}:
\begin{itemize}
\item $\Pi$, the type former of dependent functions;
\item $\Sigma$, the type former of dependent pairs;
\item $+$, the coproduct type former;
\item $\unit$, the unit type;
\item $\emptyt$, the empty type;
\item $\N$, the fibrant type of natural numbers;
\item $=$, the equality type (in the sense of $\HOTT$);
\item a hierarchy $\UU_0, \UU_1, \ldots$ of universes;
\item possibly inductive and higher inductive types.
\end{itemize}
Furthermore, we have:
\begin{itemize}
\item $\strict +$, the strict coproduct;
\item $\strict \emptyt$, the strict empty pretype;
\item $\strictN$, the strict pretype of natural numbers;
\item $\steq$, the strict equality;
\item a hierarchy $\strict{\UU_0}, \strict{\UU_1}, \ldots$ of strict universes;
\item possibly inductive types and quotient types.
\end{itemize}

We refer to the elements of $\UU_i$ as \emph{fibrant types}, while the elements
of $\strict{\UU_i}$ are \emph{pretypes}. The intuition is that fibrant types are
the usual types that are found in $\HOTT$, whereas pretypes are what one gets if
one is allowed to talk about strict equality internally.
The rules of $\steq$ :
\begin{mathpar}\label{eq:steq}
\inferrule{\Gamma \vdash A : \strict \UU_i \\
  \Gamma \vdash a, b : A}
{\Gamma \vdash a \steq b : \strict \UU_i}
\quad\deflabel{\textsc{form-$\steq$}}
\label{rule:steq-form}

\and

\inferrule{\Gamma \vdash a : A}
{\Gamma \vdash \strict{\refl{a}} : a \steq a}
\quad\deflabel{\textsc{intro-$\steq$}}
\label{rule:steq-intro}

\and

\inferrule{\Gamma \vdash a : A \\
  \Gamma(b : A)(p : a \steq b) \vdash P : \strict \UU_i \\
  \Gamma \vdash d : P[a, \strict{\refl a}]}
{\Gamma(b : a)(p : a \steq b) \vdash \strict J_P(d): P}
\quad\deflabel{\textsc{elim-$\steq$}}\label{rule:steq-elim},
\end{mathpar}
together with the judgmental computation rule:
\begin{equation*}
\strict J_P(d) [a, \strict{\refl a}] \equiv d.
\end{equation*}

Rules for fibrant equality look very similar to those for strict equality:
\begin{mathpar}
  \infer{\Gamma \vdash A : \UU_i \\ \Gamma \vdash a_1,a_2 : A}
          {\Gamma \vdash a_1 = a_2 : \UU_i}
          \quad\deflabel{\textsc{form-=}}\label{rule:form-eq}
  \and%
  \inferrule{\Gamma \vdash a : A}
            {\Gamma \vdash \refl{a} : a = a}
  \quad\deflabel{\textsc{intro-=}}\label{rule:intro-eq}
  \and%
  \infer{\Gamma \vdash a : A \\
    \Gamma(b : A)(p : a = b) \vdash P : \UU_i \\
    \Gamma \vdash d : P[a, \refl a]}
        {\Gamma(b : a)(p : a = b) \vdash J_P(d): P}
    \quad\deflabel{\textsc{elim-=}}\label{rule:elim-eq}
  \end{mathpar}
It is important to note, that the rules \nameref{rule:form-eq},
\nameref{rule:intro-eq}, and \nameref{rule:elim-eq} only involve fibrant
types. For example, we cannot apply the equality type former to two elements of
$\strict \UU_i$.  We assume that universes $\UU_i$ are univalent, that is for
any two types $X,Y : \UU_i$, the map $(X = Y) -> (X \simeq Y)$ is an equivalence.

\begin{mathpar}

\infer{\Gamma \vdash A : \UU_i}
{\Gamma \vdash A : \strict{\UU_i}}
\quad\deflabel{\textsc{fib-pre}} \label{rule:fib-pre}%
\and%
\inferrule{\Gamma \vdash A : \UU_i \\
  \Gamma.A \vdash B : \UU_i}
{\Gamma \vdash \prd A B : \UU_i}
\quad\deflabel{\textsc{pi-fib}}\label{rule:pi-fib}%
\and%
\infer{\Gamma \vdash A : \UU_i \\ \Gamma.A \vdash B : \UU_i}
{\Gamma \vdash \smsimple A B : \UU_i}
\quad\deflabel{\textsc{sigma-fib}}\label{rule:sigma-fib}
\end{mathpar}

By the rule \nameref{rule:fib-pre}, the type $a_1 = a_2$ is also a pretype, but
it is different from the pretype $a_1 \steq a_2$.

For pretypes $A,B : \strict{\UU_i}$, we can form the pretype of strict isomorphisms,
written $A \stiso B$ (unlike in $\HOTT$, it is enough to have maps in both
directions such that both compositions are pointwise strictly equal to the identity).
However, we do \emph{not} assume that $\strict{\UU_i}$ is univalent.
Instead, we add rules corresponding to the principle of uniqueness of identity proofs~
\nameref{rule:uip} and function extensionality~\nameref{rule:funext} as follows:
\begin{mathpar}
\inferrule{\Gamma \vdash a_1, a_2 : A \\
  \Gamma \vdash p, q : a_1 \steq a_2}
{\Gamma \vdash \strict K(p, q) : p \steq q}
\quad\deflabel{\textsc{uip}}\label{rule:uip}%
\and%
\inferrule{\Gamma \vdash f, g : \prd{a : A} B(a) \\
  \Gamma(a : A) \vdash p(a) : f(a) \steq g(a)}
{\Gamma \vdash \mathsf{funext}(p): f \steq g}
\quad\deflabel{\textsc{funext}}\label{rule:funext}%
\end{mathpar}

That is, strict equality serves as an internalised version of judgmental equality.
These ideas of differentiating pretypes and fibrant types are inspired by Voevodsky's
\emph{Homotopy Type System} (HTS)~\cite{voe:hts}.
Although, there are some important differences. In particular, two-level type
theory does not assume the reflection rule for strict equality. Instead, we
only require that it satisfies \UIP.
Another important difference is that HTS assumes that $\emptyt$, $\N$, and $+$
from the fibrant fragment eliminate to arbitrary types. We do not assume that in
two-level type theory, since all presented results do not depend on these assumptions.
Moreover, we leave a possibility to add such assumptions, which makes
two-level type theory a ``framework'' allowing to explore different variations
of resulting type theory. For example, if we allowed for coercion from strict
natural numbers $\strict \N$ to fibrant natural numbers $\N$ then from the construction
of type of the $n$-restricted semi-simplicial types, one would get a type family $\N -> \UU$
in the fibrant fragment.

\section{Applications}\label{tltt:sec:applications}
One way to look at two-level type theory is to start with ordinary type theory,
which correspond to the fibrant fragment, and then add parts of its meta-theory
on top of it as an additional type-theoretic layer. This additional layer
corresponds to the strict fragment, which can be used to capture meta-theoretic
reasoning. This internalisation leads to a uniform treatment of results, which
traditionally requires mixing external and internal reasoning.
We show applicability of two-level type theory to these kinds of problems
internalising some results on Reedy fibrant diagrams \cite{shulman:inverse-diagrams}.
Particularly, we define the notion of Reedy fibration, and show that Reedy fibrant
diagrams $I \to \UU$ have limits in $\UU$, where $I$ is a finite inverse category,
and $\UU$ is a universe of fibrant types.
Let us first define notions required to formulate the theorem about
Reedy fibrant diagrams, which is one of the central results of our Lean
formalisation (we will discuss implementation details in Section \ref{tltt:sec:lean-inv-diag}).
We closely follow the style of definitions given in \cite{ann-cap-kra:two-level}.

It is often not necessary to know that a pretype $A : \strict \UU$ is a fibrant
type.  Instead, it is usually sufficient to have a fibrant type $B : \UU$ and a
strict isomorphism $A \stiso B$.  If this is the case, we say that $A$ is
\emph{essentially fibrant}. Clearly, every fibrant type is also an essentially
fibrant pretype.

Recall that, in usual type-theoretic terminology, $\Fin_n$ is the finite type
with $n$ elements.  In two-level type theory, for a strict natural number $n :
\strict \N$, we have the finite type $\strict \Fin_n$.  If we have inductive
types in the strict fragment, we can define it as usual, but we do not need to:
we can simply define it as the pretype of strict natural numbers smaller than
$n$.  Similarly, we have a fibrant type $\Fin_n$ (note that a strict natural
number can always be seen as a fibrant natural number, but not vice versa).

We say that a pretype $I$ is \emph{finite} if we have a number $n : \strictN$ and a strict isomorphism $I \stiso \strict {\Fin_n}$.

Similar to essential fibrancy, we have the following definition:

\begin{defn}[fibration; see {\cite[Definition 4]{ann-cap-kra:two-level}}]\label{def:fibration}
Let $p : E \to B$ be a function (with $E, B : \strict \UU$).
We say that $p$ is a \emph{fibration} for all $b : B$, the fibre of $p$ over $b$, i.e.\ the pretype $\sm{e:E} p(e) \steq b$, is essentially fibrant.
\end{defn}

We define the notion of a category in the strict fragment with categorical laws formulated using
strict equality.

\begin{defn}[category; see {\cite[Definition 7]{ann-cap-kra:two-level}}] \label{def:strictcat}
A \emph{strict category} (or simply \emph{category}) $\C$ is given by
\begin{itemize}
 \item a pretype $\obj \C : \strict \UU$ of \emph{objects};
 \item for all pairs $x, y : \obj \C$, a pretype $\C(x, y) : \strict \UU$ of \emph{arrows} or \emph{morphisms};
 \item an \emph{identity} arrow $\mathsf{id} : \C(x, x)$ for every object $x$;
 \item and a \emph{composition} function $\circ : \C(y, z) \to \C(x, y) \to \C(x, z)$ for all objects $x,y,z$;
 \item such that the usual categorical laws holds, that is, we have $f \circ \mathsf{id} \steq f$ and $\mathsf{id} \circ f \steq f$, as well as $h \circ (g \circ f) \steq (h \circ g) \circ f)$.
\end{itemize}

We say that a strict category is \emph{fibrant} if the pretype of objects and
the family of morphisms are fibrant.
\end{defn}
The definition of the strict category corresponds to that of a \emph{precategory} from
\cite[Chapter 9]{hott-book}.

In the following, we will usually drop the attribute ``strict'' and
simply talk about \emph{categories}. A canonical example of a category
is the category of pretypes, whose objects are the pretypes in a given
universe $\strict \UU$, and morphisms are functions. By slight abuse
of notation, we write $\strict \UU$ for this category. The usual
theory of categories can be reproduced in the context of our
categories (at least as long as we stay constructive).  In particular,
one can define the notions of \emph{functor}, \emph{natural
  transformation}, \emph{limits}, \emph{adjunctions} in the obvious
way, or show that limits (if they exist) are unique up to isomorphism,
and so on.

In the context of Reedy fibrations, an important categorical construction is the following one:
\begin{defn}[reduced coslice; see {\cite[Definition 9]{ann-cap-kra:two-level}}]\label{def:red-coslice}
Given a category $\C$ and an object $c : \C$, the \emph{reduced coslice}
$c \sslash \C$ is the full subcategory of non-identity arrows in the coslice
category $c \slash \C$.
A concrete definition is the following.
The objects of $c \sslash \C$ are triples
of an $y : \obj \C$, a morphism $f : \C(x,y)$, and a proof $\neg \left(\trans p
f \steq \mathsf{id} \right)$, for all $p : x \steq y$,
where $\transf p$ denotes the $\mathsf{transport}$ function $\C(x,y) \to \C(y,y)$.
Morphisms between $(a,f,s)$ and $(b,g,s')$ are elements $u : \C(a,b)$ such that $u \circ f \steq g$ in $\C$ (see Figure \ref{fig:tltt:coslice}).

Notice that we have a ``forgetful functor'' $\mathsf{forget} : c \sslash \C \to \C$, given by the first projection on objects as well as on morphisms.
\end{defn}
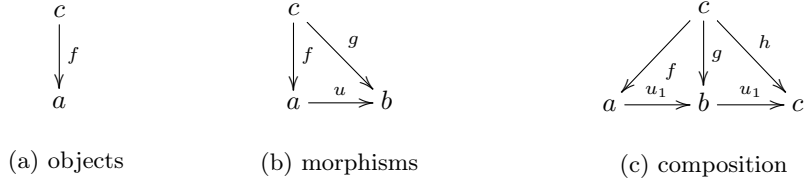
\begin{figure}
  \begin{subfigure}{0.2\textwidth}
    \[
      \xymatrix{
        c \ar[d]^f \\
        a
      }
      \]
      \caption{objects}\label{fig:tltt:coslice-ob}
  \end{subfigure}
  \hspace*{\fill}
  \begin{subfigure}{0.3\textwidth}
    \[
    \xymatrix{
      c \ar[d]^f \ar[dr]^{g} \\
      a \ar[r]^u & b
    }
    \]
    \caption{morphisms}\label{fig:tltt:coslice-hom}
  \end{subfigure}
  \hspace*{\fill}
  \begin{subfigure}{0.4\textwidth}
    \[
    \xymatrix{
      \hole & c \ar[dl]^f \ar[d]^g \ar[dr]^h & \hole \\
      a \ar[r]^{u_1} & \ar[r]^{u_1} b & c
    }
    \]
    \caption{composition}\label{fig:tltt:coslice-comp}
  \end{subfigure}
  \caption{Coslice category.}\label{fig:tltt:coslice}
\end{figure}

Consider the category $\strictNop$ which has $n : \strictN$ as objects, and
\[ \strictNop(n,m) \defeq n \strict \geq m \]
(the function $\strict > : \strictN \to \strictN \to \strict \Prop$ is defined in the canonical way).
Then, we define:
\begin{defn}[inverse category; see {\cite[Definition 10]{ann-cap-kra:two-level}}] \label{def:inverse-category}
 We say that a category $\C$ is an \emph{inverse category}
 if there is a functor $\varphi : \C \to \strictNop$ which reflects identities; i.e.\ if we
 have $f : \C(x,y)$ and $\varphi_x \steq \varphi_y$, then we also have $p : x \steq y$ and $\trans p f \steq \mathsf{id}_y$.
 We call $\varphi$ the \emph{rank functor}, and say that an object $i : \obj \C$ has rank $\varphi(i)$.
\end{defn}

Notice that \emph{reflecting identities} usually means that $f$ is an identity
whenever $\varphi(f)$ is.  In $\strictNop$, a morphism is an identity if and
only if its domain and codomain coincide.  Notice that the expression $f \steq
\mathsf{id}_y$ does not type-check, and to remedy this, we have to transport
$f$ along a strict equality between $x$ and $y$, using the notation $\trans p
f$ from~\cite{hott-book}.

\begin{remark}
  There are several equivalent ways to define inverse categories.  They are
  often characterised as dual to \emph{direct} categories, which in turn can be
  described as not having an infinite sequence of non-identity arrows as in
  $\to \to \to \cdots$. Another way to formulate this following
  \cite{shulman:inverse-diagrams} is ``An inverse category is a category such
  that the relation `x receives a nonidentity arrow from y' on its objects is
  well-founded.'' This formulation allows one to use well-founded induction to
  define diagrams on inverse categories.  One important example of an inverse
  category is $\deltop$, a category of finite non-empty ordinals and strictly monotone maps.
  That is, a functor $S : \deltop -> \UU$ is an inverse diagram on $\deltop$.
\end{remark}

\subsection{Reedy Fibrant Limits}\label{sec:reedy-fibrant-limits}

Recall that our first example of a category was a strict universe $\strict \UU$ of pretypes and functions.
Much of what is known about the category of sets in traditional category theory
holds for $\strict \UU$.
For example, the following result translates rather directly:

\begin{lemma}[see {\cite[Lemma 11]{ann-cap-kra:two-level}}]\label{lem:all-strict-limits}
  The category $\strict \UU$ has all \emph{small} limits, where \emph{small} means
  that the corresponding diagram has an index category whose objects and morphisms are pretypes in $\strict \UU$.
\end{lemma}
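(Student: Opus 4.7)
The plan is to mimic the standard construction of limits in $\mathsf{Set}$, adapted to the strict fragment. Given an index category $I$ (with $\obj I : \strict\UU$ and hom-pretypes in $\strict\UU$) and a diagram $F : I \to \strict\UU$, I would define the candidate limit as the pretype of coherent families
\[
  L \;\defeq\; \sm{s : \prd{i : \obj I} F(i)} \prd{i,j : \obj I}\prd{f : I(i,j)} F(f)(s_i) \steq s_j,
\]
together with projection maps $\pi_i : L \to F(i)$ given by $(s,\alpha) \mapsto s_i$. This is well-formed inside $\strict\UU$ because $\strict\UU$ is closed under $\Pi$, $\Sigma$, and $\steq$, and because $\obj I$ and the hom-pretypes all live in $\strict\UU$.

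Next I would verify that $(L,(\pi_i)_i)$ is a cone, i.e.\ $F(f) \circ \pi_i \steq \pi_j$ for every $f : I(i,j)$. By $\Pi$-extensionality (rule \nameref{rule:funext}) this reduces pointwise to $F(f)(s_i) \steq s_j$, which is exactly the second component of an element of $L$. For universality, given any cone $(C,(\varphi_i)_i)$ with $\varphi_i : C \to F(i)$ and coherences $\gamma_{f} : F(f) \circ \varphi_i \steq \varphi_j$, I would define $u : C \to L$ by $u(c) \defeq \bigl((\varphi_i(c))_i,\, \lambda i\,j\,f.\, \gamma_f\,c\bigr)$, using \nameref{rule:funext} to turn the pointwise coherences into strict equalities of dependent functions. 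Commutation $\pi_i \circ u \steq \varphi_i$ is then judgmental.

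For uniqueness, suppose $u' : C \to L$ also commutes with the projections. Writing $u'(c) = (s'(c),\alpha'(c))$, the equations $\pi_i \circ u' \steq \varphi_i$ together with \nameref{rule:funext} force $s'(c) \steq s(c)$ pointwise, and hence by extensionality $s'(c) \steq (\varphi_i(c))_i$. The coherence components then live in a $\Sigma$-fibre over equal first components and are themselves iterated strict equalities, so they are all equal by \nameref{rule:uip} combined with \nameref{rule:funext}; assembling these with another application of function extensionality yields $u' \steq u$.

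The conceptual content is entirely routine; the only real obstacle is bureaucratic. I expect the most awkward step to be dependent-transport bookkeeping in the uniqueness argument: once we know the first projections of $u(c)$ and $u'(c)$ are strictly equal, equating the second projections requires transporting one coherence family along a strict equality of dependent functions and checking that the result is strictly equal to the other. This is where \nameref{rule:uip} is indispensable, since in \HOTT\ alone one would be forced to produce higher coherences; inside the strict fragment the transport is collapsed to the identity by UIP and function extensionality, and the argument goes through without appealing to fibrancy of any of the types involved.
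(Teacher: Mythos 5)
Your proposal is correct and is essentially the paper's own construction: the paper defines the limit as the pretype of natural transformations $\mathbf{1} \to X$ from the constant functor on $\unit$, which unfolds to exactly your $\Sigma$-type of coherent families, and then leaves the universal property as a ``routine verification'' that you carry out explicitly (including the UIP/funext bookkeeping for uniqueness). The only difference is presentational --- your explicit $\Sigma$-type is the same object the paper itself writes down later, in equation \reff{eq:tltt:lim-explicit} of the proof of Theorem~\ref{thm:fibrant-limits}.
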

\begin{proof}
Let $\C$ be a category with $\obj \C : \strict \UU$ and $\C(x,y) : \strict \UU$ (for all $x,y$).
Let $X : \C \to \strict \UU$ be a functor.
We define $L$ to be the pretype of natural transformations $\mathsf 1 \to X$, where $\mathbf 1 : \C \to \strict \UU$ is the constant functor on $\unit$.
Clearly, $L : \strict \UU$, and a routine verification shows that $L$ satisfies the universal property of the limit of $X$.
\end{proof}

Unfortunately, the category $\UU$ of fibrant types is not as well-behaved.
Even pullbacks of fibrant types are not fibrant in general (but see Lemma~\ref{lem:fibrant-pullback}).
Since $\UU$ is a subcategory of $\strict \UU$,
a functor $X : \C \to \UU$ can always be regarded as a functor $\C \to \strict \UU$, and we always have a limit in $\strict \UU$.
If this limit happens to be essentially fibrant, we say that $X$ has a \emph{fibrant limit}.
Since $\UU$ is a full subcategory of $\strict \UU$, this limit will then be a limit of the original diagram $\C \to \UU$.

\begin{lemma}[see {\cite[Lemma 12]{ann-cap-kra:two-level}}]\label{lem:fibrant-pullback}
The pullback of a fibration $E \to B$ along any function $f : A \to B$ is a fibration.
\end{lemma}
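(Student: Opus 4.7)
The plan is to take the standard pullback construction and check that the resulting projection satisfies Definition~\ref{def:fibration} by exhibiting, for each point of the base, a strict isomorphism between its fibre and the corresponding fibre of the original fibration.

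Concretely, given a fibration $p : E \to B$ and any $f : A \to B$, I would form the pullback pretype
\[ P \;\defeq\; \sm{a : A} \sm{e : E} f(a) \steq p(e), \]
with projection $q : P \to A$ defined by $q(a,e,s) \defeq a$. To verify that $q$ is a fibration, I would fix $a_0 : A$ and analyse the fibre
\[ F_{q}(a_0) \;\defeq\; \sm{(a,e,s) : P} q(a,e,s) \steq a_0, \]
showing it is essentially fibrant. First, I would establish a strict isomorphism $F_q(a_0) \stiso F_p(f(a_0))$, where $F_p(b) \defeq \sm{e:E} p(e) \steq b$ is the fibre of $p$ over $b$. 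The map from left to right sends $(a,e,s,t)$ to $\bigl(e,\, \strict J(s)\bigr)$, where $\strict J$ transports the equality $s : f(a) \steq p(e)$ along $t : a \steq a_0$ and reorients it to a proof $p(e) \steq f(a_0)$; the inverse sends $(e,s')$ to $(a_0,e,s'^{-1},\strict{\refl{a_0}})$. Both composites reduce to the identity by repeated use of UIP~(\nameref{rule:uip}), since the remaining proof obligations are equalities between strict equality proofs.

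Having produced the strict isomorphism $F_q(a_0) \stiso F_p(f(a_0))$, I would invoke the hypothesis that $p$ is a fibration, which hands us a fibrant type $X : \UU$ and a strict isomorphism $F_p(f(a_0)) \stiso X$. Composing strict isomorphisms (which is strictly associative and has identities, by \nameref{rule:uip}) yields $F_q(a_0) \stiso X$, establishing that $F_q(a_0)$ is essentially fibrant in the sense introduced just before Definition~\ref{def:fibration}. Since $a_0$ was arbitrary, this gives the required fibrancy of $q$.

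The bookkeeping around the strict equality $s : f(a) \steq p(e)$ and the transport along $t : a \steq a_0$ is where I expect the main subtlety, which is purely a matter of shuffling strict equalities. In particular, I expect to rely on the ``singleton contraction'' pattern: the subtype $\sm{a : A}(a \steq a_0)$ is strictly isomorphic to $\unit$ via the $\strict J$-eliminator~\nameref{rule:steq-elim}, which effectively lets us substitute $a_0$ for $a$ everywhere inside $F_q(a_0)$. This is the step that would be awkward in the fibrant fragment alone (as $J$ does not give definitional specialisation), but in the strict fragment it goes through cleanly because \UIP and the computation rule for $\strict J$ discharge all higher coherence obligations automatically.
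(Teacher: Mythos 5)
Your argument is correct. It reaches the same essential fact as the paper --- that the fibre of the pulled-back map over $a_0 : A$ is strictly isomorphic to the fibre of $p$ over $f(a_0)$, which is essentially fibrant by hypothesis --- but it gets there by a different decomposition. The paper first ``straightens'' the fibration: it replaces $E$ by the strictly isomorphic $\sm{b:B}C(b)$ with $p$ the first projection (where $C(b)$ is the fibre of $p$ over $b$), and then observes that the first projection of the reindexed family $\sm{a:A}C(f(a))$ satisfies the universal property of the pullback; its fibres are then manifestly the $C(f(a))$. You instead work with the canonical pullback pretype $\sm{a:A}\sm{e:E} f(a)\steq p(e)$ and compute its fibres directly by singleton contraction on $\sm{a:A}(a \steq a_0)$, discharging the coherence obligations with \UIP. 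The paper's route is shorter on paper because it delegates the fibre computation to the universal property (at the cost of silently invoking the replacement $E \stiso \sm{b:B}C(b)$ and the invariance of fibration-hood under strict isomorphism over the base); your route is more explicit about exactly which strict isomorphisms are being built, and is in fact closer to what the accompanying Lean formalisation does, where the concrete pullback and the singleton-contraction lemma appear by name. Both are sound; the only thing worth making explicit in your version is that essential fibrancy is closed under composition of strict isomorphisms, which you do state and which indeed follows from \UIP.
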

\begin{proof}
We can assume that $E$ is of the form $\sm{b:B} C(b)$ and $p$ is the first projection.
Clearly, the first projection of $\sm{a:A}C(f(a))$
satisfies the universal property of the pullback.
\end{proof}

Lemma~\ref{lem:fibrant-pullback} makes it possible to construct
fibrant limits of certain ``well-behaved'' functors from inverse
categories.

In the subsequent definitions we always assume that $\C$ is an inverse category.

\begin{defn}[matching object; see {\cite[Definition 13]{ann-cap-kra:two-level}} and {\cite[Chapter.~11]{shulman:inverse-diagrams}}]\label{def:matching-object}
Let $X : {\C} \to \strict \UU$ be a functor.
For any $z : \C$, we define the \emph{matching object} $M_z^X$ to be the limit of
the composition
$z \sslash \C \xrightarrow{\mathsf{forget}} \C \xrightarrow{X} \strict \UU$.
\end{defn}

\begin{defn}[Reedy fibrations; see {\cite[Definition 14]{ann-cap-kra:two-level}} and
    {\cite[Def. 11.3]{shulman:inverse-diagrams}}] \label{def:reedy-fibrations}
 Let $X, Y : \C \to \strict \UU$ be two diagrams (functors).
 Further, assume $p : X \to Y$ is a natural transformation.
 We say that $p$ is a \emph{Reedy fibration} if, for all $z : \C$, the canonical map
 \begin{equation*}
  X_z \to M_z^X \times_{M_z^Y} Y_z,
  \end{equation*}
 induced by the universal property of the pullback, is a fibration.

 A diagram $X$ is said to be \emph{Reedy fibrant} if the canonical map
 $X \to \mathbf 1$ is a Reedy fibration, where $\mathbf 1$ is of
 course the diagram that is constantly the unit type.
\end{defn}

The following lemma will be useful for choosing an element with the
maximal rank from an inverse category with non-empty finite type of
objects.
\begin{lemma}\label{lem:tltt:max-rank-elem}
  For some inverse category $\C$ with non-empty finite type of objects,
  i.e. for any $n : \strict \N$, we have
  $\psi : \obj \C \stiso \strict \Fin_{n+1}$,
  and the rank functor $\varphi : \C -> \strictNop$, we can chose an element $z : \obj \C$
  with the maximal rank, i.e.  for any
  $c : \obj \C$, $\varphi_c \leq \varphi_z$.
\end{lemma}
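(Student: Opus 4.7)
The plan is to reduce the problem to finding the maximum of a finite family of strict natural numbers, which can be done by straightforward induction in the strict fragment (no coherence issues arise, since we work entirely with strict equality and the pretype $\strictN$).

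First, I would use the strict isomorphism $\psi : \obj \C \stiso \strict \Fin_{n+1}$ to transport the problem from $\obj \C$ to $\strict \Fin_{n+1}$. Namely, define $f : \strict \Fin_{n+1} \to \strictN$ by $f(i) \defeq \varphi_{\psi^{-1}(i)}$, and reduce the goal to showing that there exists $i_0 : \strict \Fin_{n+1}$ such that for all $i : \strict \Fin_{n+1}$, $f(i) \leq f(i_0)$. Once this is obtained, I take $z \defeq \psi^{-1}(i_0)$. Given any $c : \obj \C$, one has the strict equality $c \steq \psi^{-1}(\psi(c))$ from $\psi$ being a strict isomorphism; transporting along this equality (using $\strict J$) gives $\varphi_c \steq f(\psi(c))$, and by the choice of $i_0$ we get $f(\psi(c)) \leq f(i_0) = \varphi_z$, so $\varphi_c \leq \varphi_z$ as required.

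The remaining task, then, is the lemma that every function $f : \strict \Fin_{k+1} \to \strictN$ attains a maximum. I would prove this by induction on $k : \strictN$ (staying entirely in the strict fragment). In the base case $k = 0$, the only element $0 : \strict \Fin_1$ trivially achieves the maximum. In the step case, apply the induction hypothesis to the restriction $f' : \strict \Fin_{k+1} \to \strictN$ obtained by precomposing with the canonical inclusion $\strict \Fin_{k+1} \hookrightarrow \strict \Fin_{k+2}$, yielding $j_0$. Then compare $f'(j_0)$ with $f(k+1)$ using decidability of $\leq$ on $\strictN$: if $f(k+1) \leq f'(j_0)$, take $i_0$ to be the inclusion of $j_0$; otherwise take $i_0 \defeq k+1$. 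Case analysis on an arbitrary $i : \strict \Fin_{k+2}$ (either $i$ is in the image of the inclusion or $i = k+1$) closes the proof.

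The work is essentially routine once reframed. The only mildly delicate point is the initial transport step: one must transport $\varphi_c$ along a strict equality between objects of $\C$, which is where having $\steq$ with its eliminator $\strict J$ (rather than only the fibrant equality) is essential. This is precisely the kind of bookkeeping that two-level type theory is designed to make painless, so I do not expect a real obstacle. In Lean, I anticipate the main practical effort will be in setting up the strict $\mathsf{transport}$ lemmas for $\varphi$ and using $\mathsf{funext}$/\UIP to discharge boilerplate equalities between composed isomorphisms.
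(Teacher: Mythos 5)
Your proof is correct, but it takes a different route from the one in the paper. You factor the problem through a standalone combinatorial lemma --- every function $\strict\Fin_{k+1} \to \strictN$ attains a maximum, proved by induction on $k$ using the inclusion $\strict\Fin_{k+1} \hookrightarrow \strict\Fin_{k+2}$ --- and then transport the argmax back along $\psi$, using $c \steq \psi^{-1}(\psi(c))$ and congruence of $\varphi$ to bound an arbitrary object. The paper instead performs the induction directly on the category: it picks an arbitrary object $z' \defeq g(\strict 0)$, forms the subcategory $\C'$ with $z'$ removed (so $\obj{\C'} \stiso \strict\Fin_{n'+1}$), applies the induction hypothesis to the restricted rank function, and then compares $\varphi_{z'}$ with $\varphi_{z''}$ by decidability of the order. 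The two are essentially the same induction, but yours isolates the combinatorial content in a lemma about $\strict\Fin$ and $\strictN$, avoiding the need to construct the restricted category and restricted rank functor inside the induction --- exactly the kind of ``remove an object and re-establish everything'' bookkeeping that the paper's own formalisation notes is costly. The paper's version, on the other hand, reuses the ``category with an object removed'' machinery that Theorem \ref{thm:fibrant-limits} needs anyway. Note also that your argmax construction is not the naive choice warned against in the paper's remark (taking $\psi^{-1}$ of the top \emph{index}); you maximise $\varphi \circ \psi^{-1}$ itself, so the resulting $z$ is genuinely of maximal rank irrespective of how $\psi$ enumerates the objects. One small point: the step from $c \steq \psi^{-1}(\psi(c))$ to $\varphi_c \steq \varphi_{\psi^{-1}(\psi(c))}$ only needs the action of $\varphi$ on strict paths (congruence), not a dependent transport, though of course both are derived from $\strict J$.
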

\begin{proof}
  We want to construct an inhabitant of the following type:
  \begin{equation}\label{eq:tltt:max-rank}
    \sm{z : \obj \C} \prd{c : \obj \C} \varphi_c \leq \varphi_z
  \end{equation}
  We use
  \begin{align*}
    f &: \obj \C -> \strict \Fin_{n+1}\\
    g &: \strict \Fin_{n+1} -> \obj \C\\
    l &: \forall x, g (f~x) \steq x\\
    r &: \forall y, f (g~y) \steq y
  \end{align*}
  for the components of the isomorphism $\psi$.

 \noindent We proceed by induction on $n$.
  \begin{case}[$\obj \C \stiso \strict \Fin_1$]
    We take $z \jdeq g~\strict0$. Thus, we have to show that for any
    $c : \obj \C$ we have $\varphi_c \leq \varphi_{(g~\strict0)}$.
    \begin{align*}
      c  & \steq \text{\{by left inverse $l$\}}\\
         & \steq g (f ~ c)\\
         & \steq \text{\{$(f ~ c) \steq \strict0$, since $\strict0$ is the only inhabitant of $\strict \Fin_1$\}}\\
         & \steq g~\strict0
    \end{align*}
    So, we get $\varphi_{(g~\strict0)} \leq \varphi_{(g~\strict0)}$ as required.
  \end{case}
  \begin{case}[$\obj \C \stiso \strict \Fin_{(n'+1)+1}$]
    Let us pick some element $z' : \obj \C$ (this is possible, since $\obj \C$ is finite,
    for example, we could take $z' \jdeq g~\strict0$, but this proof does not depend
    on our choice of an element of $\strict \Fin_{(n'+1)+1}$, to which we apply the function $g$).
    We call $C'$ the category $\C$ with the element $z'$ removed.
    Also, we have $\psi' : \obj \C' \stiso \strict \Fin_{n'+1}$. We call $\varphi'$ a
    function $\varphi$ restricted to $\obj{\C'}$.
    By induction hypothesis with $\psi'$ and $\varphi'$ we have $z'' : \obj{\C'}$ s.t.
    \begin{equation}\label{eq:tltt:IH}
      \prd{c : \obj{C'}}\varphi'_c \leq \varphi'_{z''}
    \end{equation}
    We observe that $\varphi'_c \jdeq \varphi_c$ and $\varphi'_{z''} \jdeq
    \varphi_{z''}$, since both $c$ and $z''$ are in $\obj \C'$. We do
    not know how $\varphi_{z'}$ and $\varphi_{z''}$ are related, but
    since the order on $\N$ is decidable, we proceed by case analysis on
    $\varphi_{z'} \leq \varphi_{z''}$.

    \emph{Subcase 1} $\varphi_{z'} \leq \varphi_{z''}$. We take $z \jdeq z''$
    in \Ref{eq:tltt:max-rank}
    We have to show that for any $c : \obj \C$, $\varphi_c \leq \varphi_z''$. By case analysis
    on decidable equality we again get two cases:
    \begin{itemize}
      \item $c \steq z''$. The claim follows from the assumption $\varphi_z' \leq \varphi_z''$.
      \item $c \strict{\neq} z''$. Since $\C'$ is a category without $z''$, we know that $c : \obj{\C'}$.
        We complete the proof by \Ref{eq:tltt:IH}.
    \end{itemize}
    \emph{Subcase 2} $\varphi_z' > \varphi_z''$. We take $z \jdeq z'$ in \Ref{eq:tltt:max-rank}.
    We have to show that for any $c : \obj \C$, $\varphi_c \leq \varphi_z'$. We again proceed
    by case analysis on decidable equality.
    \begin{itemize}
      \item $c \steq z'$. Follows immediately, since $\varphi_z' \leq \varphi_z'$.
      \item $c \strict{\neq} z'$. Again, we now that $c : \obj{\C'}$, and
        complete the proof by \Ref{eq:tltt:IH}.~
    \end{itemize}
  \end{case}
\end{proof}

\begin{remark}
Notice that in Lemma \ref{lem:tltt:max-rank-elem} we could not just pick the
maximal element in $\strict \Fin_{(n+1)}$ and get an element in $\C$ with
maximal rank, since we want the lemma to be independent of particular
isomorphism $\psi$.
\end{remark}

Using the definition of Reedy fibrations (Definition
\ref{def:reedy-fibrations}), we can make precise the claim that we can
construct fibrant limits of certain well-behaved diagrams.  The following
theorem is (probably) the most involved result of our formalisation:

\begin{thm}[see {\cite[Theorem 15]{ann-cap-kra:two-level}} and {\cite[Lemma 11.8]{shulman:inverse-diagrams}}] \label{thm:fibrant-limits}
Assume that $\C$ is an inverse category with a finite type of objects $\obj \C$.
Assume further that $X : \C \to \strict \UU$ is a Reedy fibrant diagram which is
pointwise essentially fibrant (which means we may assume that it is given as a
diagram $\C \to \UU$).

Then, $X$ has a fibrant limit.
\end{thm}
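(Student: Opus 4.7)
The plan is to prove this by induction on the natural number $n : \strict \N$ such that $\obj \C \stiso \strict \Fin_n$, which is available from the hypothesis that $\obj \C$ is finite. In the base case $n = 0$ the category is empty, and the limit is $\unit$, which is fibrant. For the inductive step, I would apply Lemma \ref{lem:tltt:max-rank-elem} to extract an object $z : \obj \C$ of maximal rank with respect to the rank functor $\varphi$. Let $\C'$ denote the full subcategory of $\C$ whose objects are $\obj \C$ with $z$ removed; this is again an inverse category with finite object pretype in strict bijection with $\strict \Fin_{n-1}$, using the restriction of $\varphi$ as its rank functor.

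The key observation is that, since $z$ has maximal rank, every non-identity morphism out of $z$ has codomain in $\C'$: if $f : \C(z, y)$ then $\varphi_z \strict \geq \varphi_y$, and equality of ranks together with the identity-reflecting property of $\varphi$ (Definition \ref{def:inverse-category}) would force $p : z \steq y$ with $\trans p f \steq \mathsf{id}$, contradicting $f$ being non-identity. Hence the reduced coslice $z \sslash \C$ is obtained entirely from data living in $\C'$, and so the matching object $M_z^X$ from Definition \ref{def:matching-object} depends only on the restriction $X|_{\C'}$. By an analogous argument, for every $w : \obj{\C'}$ the reduced coslice $w \sslash \C$ contains no morphisms into $z$ (which would have to come from objects of strictly greater rank than $z$, impossible by maximality of $\varphi_z$), so $w \sslash \C \stiso w \sslash \C'$ and the matching objects of $X|_{\C'}$ agree with those of $X$.

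From this it follows that $X|_{\C'}$ is itself pointwise essentially fibrant and Reedy fibrant, so the induction hypothesis applies and yields a fibrant limit $L'$ of $X|_{\C'}$ in $\strict \UU$. The universal property of $L'$ together with the structure maps of $X$ restricted to $z \sslash \C$ induces a canonical map $L' \to M_z^X$. On the other hand, the Reedy fibrancy of $X$ (with $Y = \mathbf 1$, so that $M_z^{\mathbf 1} \stiso \unit$ and the pullback $M_z^X \times_{M_z^{\mathbf 1}} \mathbf 1_z$ reduces strictly to $M_z^X$) specialises, at the object $z$, to the statement that $X_z \to M_z^X$ is a fibration in the sense of Definition \ref{def:fibration}. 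I would then define $L$ to be the pullback $X_z \times_{M_z^X} L'$, which exists in $\strict \UU$ by Lemma \ref{lem:all-strict-limits}, and invoke Lemma \ref{lem:fibrant-pullback} to conclude that $L$ is essentially fibrant.

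It remains to check that $L$ carries the structure of a cone over $X$ and satisfies the universal property of the limit in $\strict \UU$; since $\UU$ is a full subcategory of $\strict \UU$, this will then automatically be a limit in $\UU$ as well. The main obstacle I anticipate is the bookkeeping around the strict isomorphisms between matching objects computed in $\C$ and in $\C'$, and the careful verification that the pullback $X_z \times_{M_z^X} L'$ genuinely represents cones over $X$ — essentially, unfolding a natural transformation $\mathbf 1 \to X$ into the pair of a natural transformation $\mathbf 1 \to X|_{\C'}$ and a compatible element of $X_z$ over $M_z^X$. This is the step where the strict categorical laws of Definition \ref{def:strictcat} and the function extensionality rule \nameref{rule:funext} for $\steq$ are used most heavily.
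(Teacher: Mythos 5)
Your proposal follows essentially the same route as the paper's proof: induction on the cardinality of $\obj\C$, removal of a maximal-rank object $z$ chosen via Lemma \ref{lem:tltt:max-rank-elem} (so that $z$ receives no non-identity arrows and the matching objects over $\C'$ are unchanged), and identification of the limit with the pullback of the span $L' \xrightarrow{p} M_z^X \xleftarrow{q} X_z$, whose fibrancy follows from Reedy fibrancy of $X$, Lemma \ref{lem:fibrant-pullback}, and the induction hypothesis. The only differences are presentational — the paper arrives at the pullback by rewriting the explicit $\Sigma$-type description of the limit through a chain of strict isomorphisms rather than defining the pullback first and verifying its universal property afterwards — plus one small elision on your part: Lemma \ref{lem:fibrant-pullback} only yields that $L \to L'$ is a fibration, so you still need the (easy) closing step that the domain of a fibration over an essentially fibrant base is essentially fibrant, which is exactly how the paper concludes.
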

\begin{proof}
By induction on the cardinality of $\obj\C$. In the case $\obj \C \stiso \strict \Fin_0$, the limit is the unit type.

 Otherwise, we have $\obj \C \stiso \strict \Fin_{n+1}$.
 Let us consider the rank functor
 \[ \varphi : \C \to \strictNop. \]
 We choose an object $z : \C$ such that $\varphi_z$ is maximal using Lemma
 \ref{lem:tltt:max-rank-elem}
 Let us call $\C'$ the category that we get if we remove $z$ from $\C$;
 that is, we set $\obj {\C'} \defeq \sm{x : \obj \C} \neg (x \steq z)$.
 Clearly, $\C'$ is still inverse, and we have $\obj{\C'} \stiso \strict \Fin_n$.

 Let $X : \C \to \UU$ be Reedy fibrant.
 We can write down the limit of $X$
 explicitly as
 \begin{equation}\label{eq:tltt:lim-explicit}
  \sm{c : \prd{y : \obj \C} X_y} \prd{y,y' : \obj \C} \prd{f : \C(y,y')} Xf(c_y) \steq c_{y'}.
 \end{equation}
 Writing the pretype $\obj \C$ as a coproduct $1 \strict + \obj\C'$, we get that the above pretype is strictly isomorphic to
 \begin{equation}\label{eq:tltt-lim-split}
  \begin{alignedat}{1}
   & \sm{c_z : X_z}\sm{c : \prd{y : \obj\C'} X_y} \\
   & \phantom{\Sigma} \left(\prd{f : \C(z, z)} Xf(c_z) \steq c_z \right) \times \\
   & \phantom{\Sigma} \left(\prd{y : \obj\C'} \prd{f : \C(y, z)} Xf(c_y) \steq c_z \right) \times \\
   & \phantom{\Sigma} \left(\prd{y : \obj\C'} \prd{f : \C(z, y)} Xf(c_z) \steq c_y \right) \times \\
   & \phantom{\Sigma} \left(\prd{y,y' : \obj\C'} \prd{f : \C(y, y')} Xf(c_y) \steq c_{y'} \right).
  \end{alignedat}
 \end{equation}
 Using that $z$ has no incoming non-identity arrows (together with \textsc{uip}), two of the components of the above type are contractible and can be removed, leaving us with
 \begin{equation}\label{eq:limit-2}
 \begin{alignedat}{1}
  & \sm{c_z : X_z}\sm{c : \prd{y : \obj\C'} X_y} \\
  & \phantom{\Sigma} \left(\prd{y : \obj\C'} \prd{f : \C(z, y)} Xf(c_z) \steq c_y \right) \times \\
  & \phantom{\Sigma} \left(\prd{y,y' : \obj\C'} \prd{f : \C(y, y')} Xf(c_y) \steq c_{y'} \right).
 \end{alignedat}
 \end{equation}

 Let us write $L$ for the limit of $X$ restricted to $\C'$,
 and let us further write $p$ for the canonical map $p : L \to M_z^X$.
 Further, we write $q$ for the map $X_z \to M_z^X$.
 Then, \eqref{eq:limit-2} is strictly isomorphic to
 \begin{equation}\label{eq:limit-3}
  \sm{c_z : X_z} \sm{d : L} p(d) \steq q(c_z).
 \end{equation}
 Swapping sigmas in \reff{eq:limit-3} gives us
 \begin{equation}\label{eq:limit-4}
  \sm{d : L} \sm{c_z : X_z} p(d) \steq q(c_z).
 \end{equation}
 This is the pullback of the span $L \xrightarrow{p} M_z^X \xleftarrow{q} X_z$:
 \begin{equation}\label{eq:tltt:pullback-L-Xz}
    \xymatrix{
      L \times_{M^X_z} X_z \ar[d]^f \ar[r]^g & X_z \ar[d]^q \\
      L \ar[r]^{p} & M^X_z
    }
 \end{equation}
 By Reedy fibrancy of $X$, the map $q$ is a fibration.
 Thus, by Lemma~\ref{lem:fibrant-pullback}, the map
 $f : \sm{c_z : X_z} \sm{d : L} p(d) \steq q(c_z) -> L$
 on \eqref{eq:tltt:pullback-L-Xz} is a fibration.
 By the induction hypothesis, $L$ is essentially fibrant.
 This implies that \eqref{eq:limit-3} is essentially fibrant, as it is the domain of a fibration whose codomain is essentially fibrant.
\end{proof}

\section{Formalisation in Lean}\label{tltt:sec:lean-formalisation}
With a proof assistant that implements our two-level theory, one would thus be
able to mechanise the results of the paper rather directly, or at least similarly
directly as papers with purely internal results can be implemented in current
proof assistants: of course, there is always still some work to do because some
low-level steps are omitted in informal presentations.
As we do not have such a proof assistant at hand, the task is to implement two
level type theory in conventional proof assistants reusing as many features as possible.

An overall idea of an approach to implementation that is suitable for most
existing proof assistants is the following. We work in a type theory with
universes of ``strict'' types (i.e.\ where \textsc{uip} holds).
Pretypes correspond to the ordinary types of the proof assistant and (fibrant) types are represented as pretypes ``tagged'' with the extra property of being fibrant.
The role of our strict equality is played by the ordinary propositional equality of the proof assistant (which, thanks to \textsc{uip}, is indeed propositional in the sense of HoTT).
The fibrant equality type is postulated together with its elimination rule $J$.
We further postulate fibrancy preservation rules for $\Pi$ and $\Sigma$ as they
are given in Section~\ref{sec:ttlt-spec}. The usual computation rule for $J$ is
defined using strict equality.

The proof assistant Lean~\cite{moura:lean}, which we have chosen for our formalisation,%
\footnote{The code is available at \url{https://github.com/annenkov/two-level}.}
can operate in two different ``modes'': one with a built-in \textsc{uip}, and one which is suitable for $\HOTT$.
Our Lean implementation is based on ``strict'', Lean mode. That means that Lean's
\icode{Type} now becomes a pretype in two-level type theory sense.

\begin{remark}[Notation]
  Before we start describing our Lean development
  we introduce some notation used in the Lean code snippets:
  \begin{itemize}
  \item \icode{A → B} for the type of functions from \icode{A} to \icode{B}
  \item \icode{x ⟶ y} for morphisms (\icode{hom} field in the \icode{category}
    structure)
  \item \icode{C ⇒ D} for functors from a category \icode{C} to a category \icode{D}
  \item \icode{Nat(F,G)} for natural transformations from a functor \icode{F}
    to a functor \icode{G}
  \item \icode{p ▹ a} for transport of \icode{a} along the equality \icode{p}
  \end{itemize}
\end{remark}

 Fibrant types $\texttt{Fib}$ are implemented using Lean dependent
 records with two fields: a pretype, and the property that it is
 fibrant:
\begin{lstlisting}
structure Fib : Type := mk ::
  (pretype : Type)
  (fib : is_fibrant pretype)
\end{lstlisting}

The \texttt{is\_fibrant} property is defined using the type class mechanism
provided by the language.
\begin{lstlisting}
constant is_fibrant_internal : Type → Prop

structure is_fibrant [class] (X : Type) := mk ::
  fib_internal : is_fibrant_internal X
\end{lstlisting}

We declare the \icode{Fib.pretype} field to be a coercion, allowing to project
a pretype out of \icode{Fib}. For that purpose, we use the mechanism of \emph{attributes}.
\begin{lstlisting}
attribute Fib.pretype [coercion]
\end{lstlisting}
Such a declaration implements the \nameref{rule:fib-pre} rule, which says that every
fibrant type is also a pretype.

For the second component of the \icode{Fib} structure, we define the following attribute:
\begin{lstlisting}
attribute Fib.fib [instance]
\end{lstlisting}
This definition makes available for every inhabitant of \icode{Fib} an instance
of the \icode{is_fibrant} type class.

The rules that $\Sigma$- and $\Pi$-types preserve fibrancy
(rules \nameref{rule:sigma-fib} and \nameref{rule:pi-fib}) are also postulated and
exposed as instances of the \icode{is_fibrant} type class:
\begin{lstlisting}
constant sigma_is_fibrant_internal {X : Type}{Y : X → Type}
  : is_fibrant X
  → (Π (x : X), is_fibrant (Y x))
  → is_fibrant_internal (Σ (x : X), Y x)

definition sigma_is_fibrant [instance] {X : Type}{Y : X → Type}
  [fibX : is_fibrant X] [fibY : Π (x : X), is_fibrant (Y x)] :
  is_fibrant (Σ (x : X), Y x) :=
    is_fibrant.mk (sigma_is_fibrant_internal fibX fibY)

\end{lstlisting}
The rule for $\Pi$-types is implemented in a similar way. In the same
way we postulate that the unit type, equality types and \icode{Fib} itself are
fibrant.

The presentation of fibrant types using type classes results in a very elegant
implementation of the fibrant fragment of the type theory.
The class instance resolution mechanism
allows us to leave the property of being fibrant implicit in most cases.
We use $\texttt{Fib}$ in definitions and let Lean insert coercions automatically
in places where a pretype is expected, or where a witness that a type is fibrant
is required. We will consider several examples showing how Lean's class resolution
mechanism helps writing definitions involving fibrant types.

First, we declare some variables, which will be used in our examples:
\begin{lstlisting}
variables {A : Fib} {B : Fib} {C : Fib} (P : A → Fib)
\end{lstlisting}
Now we can use these declarations in any definition in the same
namespace and Lean will automatically add them as arguments to
definitions that use them.\footnote{The details about namespaces,
  variables and other Lean features can be found in the Lean Tutorial
  \url{https://leanprover.github.io/tutorial/tutorial.pdf}}
Another point to note here is that because of the \icode{[instance]}
attribute for the \icode{Fib.fib} field, all instances of the
\icode{is_fibrant} type class for declared variables of type \icode{Fib}
are available for Lean's resolution mechanism.

Our fist example is an equivalence lemma known as associativity of the product type:
\begin{lstlisting}
definition prod_assoc : A × (B × C) ≃ (A × B) × C := sorry
\end{lstlisting}
Since in this example we care only about the statement itself, and not about
the proof, we will use the \icode{sorry} keyword, which allows us to assume a
definition. Because we state a fibrant equivalence between fibrant types in
\icode{prod_assoc}, both sides of the equivalence must be some fibrant
types. All we know from the variable declarations above is that types
\icode{A}, \icode{B}, and \icode{C} are fibrant. To show that the product of these
types is fibrant as well we would have to apply a special case of the
\nameref{rule:sigma-fib} rule (which we call \icode{prod_is_fibrant}) two times
on each side.  Thanks to the class instance resolution mechanism we can leave this
task to Lean.  We change some pretty-printing options to be able to see
implicit arguments for the \icode{prod_assoc} definition.
\begin{lstlisting}
set_option pp.implicit true
set_option pp.notation false
\end{lstlisting}
Running the \icode{check @prod_assoc} command gives us the following result:
\begin{lstlisting}
prod_assoc :
Π {A} {B} {C},
  @fib_equiv (prod A (prod B C)) (prod (prod A B) C)
    -- inferred by Lean --
    (@prod_is_fibrant A (prod B C) (Fib.fib A) (@prod_is_fibrant B C (Fib.fib B) (Fib.fib C)))
    (@prod_is_fibrant (prod A B) C (@prod_is_fibrant A B (Fib.fib A) (Fib.fib B)) (Fib.fib C))
    ----------------------
\end{lstlisting}
This example shows nested applications of \icode{prod_is_fibrant}, which were
resolved automatically by Lean.  The same resolution procedure allows for inferring
implicit fibrancy conditions using fibrancy-preservation rules for different
type formers. In the following example, expressing the universal property of
the product type, rules \nameref{rule:pi-fib} and a special case of
\nameref{rule:sigma-fib} are used.
\begin{lstlisting}
definition prod_universal : (C → A × B) ≃ (C → A) × (C → B) := sorry
\end{lstlisting}
We can inspect the inferred implicit arguments again by running the \icode{check
  @prod_assoc} command:
\begin{lstlisting}
prod_universal :
Π {A} {B} {C},
@fib_equiv (C → prod A B) (prod (C → A) (C → B))
  -- inferred by Lean --
  (@pi_is_fibrant C (λ a, prod A B) (Fib.fib C)
    (λ x, @prod_is_fibrant A B (Fib.fib A) (Fib.fib B)))
  (@prod_is_fibrant (C → A) (C → B)
    (@pi_is_fibrant C (λ a, A) (Fib.fib C) (λ x, Fib.fib A))
      (@pi_is_fibrant C (λ a, B) (Fib.fib C) (λ x, Fib.fib B)))
  ----------------------
\end{lstlisting}

The following example  highlights one of the reasons behind our choice of Lean
for implementing two-level type theory.
\begin{lstlisting}
variables  (Q : A → Type) [Π a, is_fibrant (Q a)]

definition pi_eq (f : Π (a :A), Q a) : f $\sim$ f := refl _
\end{lstlisting}
An experimental implementation in Agda uses a definition like the above, and
the example failed to work in Agda.  As it became clear later, the example
failed to work because of a small difference in the inference of implicit
arguments, That is, changing \icode{$\Pi$ a, is_fibrant (Q a)} to \icode{$\Pi$
  $\{a\}$, is_fibrant (Q a)} (in the corresponding Agda code), would make the
example be accepted by Agda (see also Section \ref{tltt:subsec:agda}).  In our
Lean development \icode{pi_eq} successfully type-checks, and Lean infers the
following:
\begin{lstlisting}
pi_eq : Π {A} Q [_inst_1] f,
  @fib_eq (Π a, Q a)
    -- inferred by Lean --
    (@pi_is_fibrant A Q (Fib.fib A) _inst_1)
    ----------------------
    f f
\end{lstlisting}
Where \icode{_inst_1} is an automatically generated name for the instance of a type class
corresponding to \icode{[Π a, is_fibrant (Q a)]} appearing in the variables declaration.

\begin{remark}\label{rem:tltt-essentially-fibrant}
It is worth pointing out, however, that in our formalisation we do not make a
distinction between fibrant and essentially fibrant pretypes, having instead a
single predicate \texttt{is\_fibrant} to express this property. That is, every
type which is strictly isomorphic to a fibrant type is also considered fibrant
by the axiom we postulate in our implementation. This makes the development more
convenient as long as we use essentially fibrant types for most of the results
presented in the current formalisation. For instance, Theorem~\ref{thm:fibrant-limits}
and a number of auxiliary lemmas for this theorem involve essentially fibrant types.
\end{remark}

\section{Working in the Fibrant Fragment}
To show how to work with the fibrant type theory, we have formalised some
simple facts from the HoTT library. Our implementation shows that many proofs
can be reused almost without change, provided that the same notation is used
for basic definitions. For instance, we have ported some theorems about product
types with only minor modifications. In particular, induction on fibrant
equalities works as expected: we annotate the postulated elimination principle
with the \texttt{[recursor]} attribute, and the \texttt{induction} tactic
applies this induction principle automatically.

A point to note is that the computation (or $\beta$-) rule for the $J$
eliminator of the fibrant equality type is implemented as a strict equality,
using the propositional equality of the proof assistant.  This means that the
rule does not hold judgmentally.  Consequently, this computation does a priori
not happen automatically, and explicit rewrites along the propositional
$\beta$-rules are needed in proof implementations.  This and other issues of
the same kind are addressed by using one of Lean's proof automation
features. We annotate all the ``computational'' rules with the attribute
\texttt{[simp]}. This attribute is used to guide Lean's simplification tactic
\texttt{simp} which performs reductions in the goal according to the base of
available simplification rules. That allows us to use a simple proof pattern:
do induction on relevant equalities and then apply the \texttt{simp} tactic.
However, the \texttt{simp} tactic is not a well-documented feature of
Lean. Sometimes it fails to simplify goals, and in such cases we apply repeated
rewrites using propositional computation rules.

Let us consider the following example from the HoTT Lean library. Notice that
we use $=$ for equality here and that it corresponds to the usual HoTT
equality, since we are considering an example from the HoTT Lean library, which
supports HoTT natively.
\begin{lstlisting}
definition prod_transport (p : a = a') (u : P a $\times$ Q a) :
  p $\triangleright$ u = (p $\triangleright$ u.1, p $\triangleright$ u.2) :=
  by induction p; induction u; reflexivity
\end{lstlisting}
After applying induction on \icode{p} and \icode{u}, we get the following goal:
\begin{lstlisting}
  refl$_a$ $\triangleright$ (a$_1$, a$_2$) = (refl$_a$ $\triangleright$ (a$_1$, a$_2$).1, refl$_a$ $\triangleright$ (a$_1$, a$_2$).2)
\end{lstlisting}
Since we are in the HoTT mode of Lean, computation rule for transport holds
judgmentally, and we can simplify the expression above to
\begin{lstlisting}
  (a$_1$, a$_2$) = ((a$_1$, a$_2$).1, (a$_1$, a$_2$).2)
\end{lstlisting}
This goal we can prove by refl$_{(a_1, a_2)}$, since
(($\mathtt{a}_1$, $\mathtt{a}_2$).1, ($\mathtt{a}_1$, $\mathtt{a}_2$).2) reduces
to ($\mathtt{a}_1$, $\mathtt{a}_2$). The \icode{reflexivity} tactic performs these
reduction steps and proves the goal.
We use ``$\sim$'' in place of  ``='' for the fibrant equality in our two-level type theory.

In the fibrant fragment of type theory we can express this proof in the following form:
\begin{lstlisting}
definition prod_transport (p : a $\sim$ a') (u : P a  $\times$ Q a) :
  p $\triangleright$ u $\sim$ (p $\triangleright$ u.1, p $\triangleright$ u.2) :=
  by induction p; induction u; repeat rewrite transport$_\beta$
\end{lstlisting}
In this case, after applying induction on \icode{p} and \icode{u}, we get the same goal as
in the previous example (up to notation for the equality):
\begin{lstlisting}
  refl$_a$ $\triangleright$ (a$_1$, a$_2$) $\sim$ (refl$_a$ $\triangleright$ (a$_1$, a$_2$).1, refl$_a$ $\triangleright$ (a$_1$, a$_2$).2)
\end{lstlisting}
The difference is that we cannot reduce this goal and apply \icode{reflexivity},
since in the fibrant fragment computation rule for transport is defined using propositional
equality. That is, simplification of the goal gives us only that:
\begin{lstlisting}
  refl$_a$ $\triangleright$ (a$_1$, a$_2$) $\sim$ (refl$_a$ $\triangleright$ a$_1$, refl$_a$ $\triangleright$ a$_2$)
\end{lstlisting}
We can finish the proof with the help of Lean's proof automation by repeatedly
applying rewriting with \texttt{transport}$_\beta$.  An alternative would be to
use the \icode{simp} tactic to simplify the goal; in this case the proof will look
very close to the original one:
\begin{lstlisting}
  by induction p; induction u; simp
\end{lstlisting}

There is another issue which arises, particularly, when defining propositional $\beta$-rules for equality-dependent definitions.
As an example let us consider an action on paths, which depends of some path $p$.
\begin{lstlisting}
  apd {X : Type} {P : X -> Fib} {x y : X}
      (f : Π x, P x) (p : x $\sim$ y) : p $\triangleright$ f x $\sim$ f y
\end{lstlisting}
When defining the computation rule for \texttt{apd} we would like to write the following:
\begin{lstlisting}
  apd f refl$_x$ $\steq$ refl$_{(f x)}$,
\end{lstlisting}
Unfortunately, this term is not well-typed, since the left-hand side of the equation
has type $\mathtt{refl}_x \triangleright (f x) = f x$, while the right-hand side
has type $f x = f x$, where $\triangleright$ stand for transport along the fibrant
equality. In order to make this definition well-typed we have to apply explicitly
the propositional computation rule for transport. This leads to the following equation:
\begin{lstlisting}
  apd$_\beta$ {P : X → Fib} (f : Π x, P x) {x y : X} :
      (transport$_\beta$ (f x)) $\triangleright_s$ (apd f refl$_x$) $\steq$ refl$_{(f x)}$
\end{lstlisting}
where $\triangleright_s$ denotes transport along the strict equality, i.e.
Lean's propositional equality (we could have transported the right-hand side instead).
Writing definitions like that is inconvenient, but there is a way to avoid it.
We can define propositional $\beta$-rules only for some basic cases (like transport)
and unfold definitions in proofs to a form for which these basic rules can be applied.
We tested this strategy while porting some theorems about $\Sigma$-types from
Lean's HoTT library. In general, this issue could appear in more complex cases
than those we have investigated; it is similar to the problem appearing in axiomatic
definitions of higher inductive types in Coq~\cite{barras:hitsincoq}, where a
proposed solution has been to use private inductive types (see section
\ref{tltt:subsec:Boulier-Tabareau}).

\section{Internalising the Inverse Diagrams}\label{tltt:sec:lean-inv-diag}
This section describes the details of the implementation of results from
Section \ref{tltt:sec:applications}. As we are working in strict Lean, we have
decided to use the existing formalisation of category theory from the standard
library.\footnote{The standard library is part of the Lean's distribution.  The
  source code in available at \url{https://github.com/leanprover/lean2}.}
Unfortunately, it is not as developed as the formalisation in HoTT Lean.  For
that reason, additional effort was needed to formalise some concepts from
category theory required for the results given in the paper.

The following definitions from the Lean standard library are used in our
formalisation:
\begin{itemize}
\item categories;
\item functors;
\item natural transformations.
\end{itemize}
We had to implement the following notions:
\begin{itemize}
\item pullbacks and general limits;
\item construction of the limit for Pretype category;
\item coslice and reduced coslice;
\item matching object;
\item inverse categories
\end{itemize}
In addition, we have proved some properties of the strict isomorphism and
finite sets.

We will outline some definitions from the Lean's standard library
and then discuss in details what we have implemented ourselves.

A definition of a category from Lean standard library corresponds to the notion
of strict category (Definition \ref{def:strictcat}). The usual approach to
encode algebraic structures as dependent records (or structures) is used in the
Lean standard library.  Dependent records, being a generalisation of
$\Sigma$-types, allow for fields in the definition to depend on previously
defined fields. In this way, one can combine operations and laws that these
operations must satisfy. Usually, laws are expressed in the form of
propositions (type \icode{Prop} in Lean).

In particular, a category is defined as follows:
\begin{lstlisting}
structure category [class] (ob : Type) : Type :=
  (hom : ob → ob → Type)
  (comp : Π⦃a b c : ob⦄, hom b c → hom a b → hom a c)
  (ID : Π (a : ob), hom a a)
  (assoc : Π ⦃a b c d : ob⦄ (h : hom c d)
           (g : hom b c) (f : hom a b),
     comp h (comp g f) = comp (comp h g) f)
  (id_left : Π ⦃a b : ob⦄ (f : hom a b), comp !ID f = f)
  (id_right : Π ⦃a b : ob⦄ (f : hom a b), comp f !ID = f)

\end{lstlisting}
The definition specifies a category for the pretype of objects \icode{ob}
(note that \icode{Type}in our Lean formalisation corresponds to pretypes in two-level type theory).
Other fields in the definition follow Definition \ref{def:strictcat} quite closely.

The definition of functors follows the same idea of using dependent records.
Functors are structures with four fields, defining how the functor acts on
objects and morphisms, along with two usual functor laws. Natural
transformations could have been represented as structures as well, but library
implementors have chosen an equivalent representation as an inductive data type
with one constructor taking two arguments: the components and the property,
expressing naturality square. Since laws in the definitions of our structures
have type \icode{Prop}, and in the strict Lean mode we have proof irrelevance,
to prove that two inhabitants of the structure are (strictly) equal it is
sufficient to show that only components, for which these laws are defined are
equal.  Moreover, Lean treats any two propositions of the same type as
definitionally equal, allowing for more proofs to be completed by computation.
For example, for any two natural transformations \icode{N M : Nat(F,G)}, for
some functors \icode{F} and \icode{G}, we have the following:
\begin{lstlisting}
  natural_map N = natural_map M → N = M
\end{lstlisting}
Where \icode{natural\_map : Nat(F,G) → Π (a : C), hom (F a) (G a)} projects
components from a given natural transformation.

We defined a reduced coslice directly following Definition \ref{def:red-coslice}
as a coslice with an additional property.
\begin{lstlisting}
structure coslice_obs {ob : Type} (C : category ob) (a : ob) :=
(to  : ob)
(hom_to : hom a to)

open coslice_obs

structure red_coslice_obs {A : Type} (C : category A) (c : A) extends coslice_obs C c :=
(RC_non_id_hom : Π (p : c = to), $\neg$ (p ▹$_s$ hom_to = category.id))
\end{lstlisting}
We use an inheritance mechanism here to add an additional property to the regular coslice
definition. The \icode{rc_non_id_hom} property states that a morphism \icode{hom_to}
cannot be the identity morphism. We use transport along the strict equality ($\triangleright_s$)
to make the definition well-typed.

The reduced coslice forms a category, which is a full subcategory of the
coslice category, although, we do not use this fact in the implementation.
Instead, we define the reduced coslice category directly.\footnote{this code is
  a slightly modified definition of the coslice category from Lean's standard
  library, which was commented out for some reason unknown to the author.} We
will show the full definition of the reduced coslice category to demonstrate
some Lean's features.
\begin{lstlisting}
definition reduced_coslice {ob : Type}
                           (C : category ob) (c : ob)
  : category (red_coslice_obs C c) :=
⦃ category,
  hom := λa b, Σ(u : hom (to a) (to b)),
           u ∘ hom_to a = hom_to b,
  comp := λ a b c g f,
  ⟨ (pr1 g ∘ pr1 f),
    (show (pr1 g ∘ pr1 f) ∘ hom_to a = hom_to c,
    proof
      calc
        (pr1 g ∘ pr1 f) ∘ hom_to a = pr1 g ∘ (pr1 f ∘ hom_to a): eq.symm !assoc
         ... = pr1 g ∘ hom_to b : {pr2 f}
         ... = hom_to c : {pr2 g}
      qed) ⟩,
  ID := (λ a, ⟨ id, !id_left ⟩),
  assoc := (λ a b c d h g f, sigma.eq !assoc !proof_irrel),
  id_left := (λ a b f, sigma.eq !id_left !proof_irrel),
  id_right := (λ a b f, sigma.eq !id_right !proof_irrel) ⦄

\end{lstlisting}
The morphisms in the this category are commutative triangles with a ``tip'' $c$
(Figure  \ref{fig:tltt:coslice-hom}), which we define as
\begin{lstlisting}
  Σ(u : hom (to a) (to b)), u ∘ hom_to a = hom_to b
\end{lstlisting}
The first projection is a morphism $u$ between the codomains of morphisms going from
$c$ (a morphism $u : a --> b$ in Figure  \ref{fig:tltt:coslice-hom}).
The second projection is an equation corresponding to the commutative triangle.

This definition uses one convenient feature of Lean, namely the \icode{calc} environment.
The \icode{calc} environment allows one to combine a sequence of equations combined using
transitivity of equality. Moreover, the same reasoning is possible with any transitive
relation, and we are going to use it later for reasoning with isomorphism.

To construct a composition of morphisms in the category of (reduced) coslices we have to show
that two commutative triangles with one common side form a bigger commutative triangle.
The angle brackets notation $\langle a,~b \rangle$ is used to construct a $\Sigma$-type.
In the definition of \icode{comp} the first component is just a composition of two
morphisms (\icode{pr1 f} and \icode{pr1 g} correspond to morphisms
$u_1 : a --> b$ and $u_2 : b --> c$ in Figure \ref{fig:tltt:coslice-comp}),
and the second component must be a proof that we get a commutative triangle
with a composition \icode{pr1 g} $\circ$ \icode{pr1 f} as a bottom side.
This proof is carried out in three steps using reasoning in the \icode{calc}
environment. First, we use associativity of the function composition (\icode{assoc})
to rearrange the composition. After that we rewrite using the commutative triangle
for $f$ (\icode{{pr2 f}}), and then complete the proof by rewriting with the
commutative triangle for $g$ (\icode{{pr2 g}}).

The proofs of the categorical laws \icode{assoc}, \icode{id_left}, \icode{id_riht}
for \icode{reduced_coslice} boils down to respective properties of morphisms in \icode{C}.
To ``lift'' these properties to morphisms in \icode{c//C} the property of path in
$\Sigma$-type \icode{sigma.eq} is used.

Before we define inverse categories, we have to define a category $\strictNop$:
\begin{lstlisting}
definition nat_cat_op [instance] : category ℕ :=
  ⦃ category,
    hom := λ a b, a ≥ b,
    comp := λ a b c, @nat.le_trans c b a,
    ID := nat.le_refl,
    assoc := λ a b c d h g f, eq.refl _,
    id_left := λ a b f, eq.refl _,
    id_right := λ a b f, eq.refl _ ⦄

definition ℕop : Category := Mk nat_cat_op
\end{lstlisting}
In our Lean implementation we use $\mathbb{N}$\icode{op} to denote
$\strictNop$.  A morphism between two objects in $\mathbb{N}$\icode{op} is the
$\geq$-relation on natural numbers.  composition is given by transitivity of
$\geq$, identity morphism is reflexivity of $\geq$ (we use properties of
$\geq$-relation from the standard library), and proofs of other properties is
just \icode{eq.refl} meaning that they hold definitionally in Lean. This is a
consequence of what we mentioned before: two inhabitants of the same
proposition are definitionally equal in Lean.  Let us consider a case for
associativity. For any \icode{f : a ≥ b, g : b ≥ c, h : c ≥ d} we have to show
that
\begin{lstlisting}
  nat.le_trans h (nat.le_trans g f) : a ≥ d =
  nat.le_trans (nat.le_trans h g) f : a ≥ d
\end{lstlisting}
Since both sides have the same type \icode{a ≥ d}, and this is a proposition, from
Lean's point of view they represent the same value.

We define the property of a functor that it reflects identities in the following way:
\begin{lstlisting}
definition id_reflect {C D: Category} (φ : C ⇒ D) :=
    Π ⦃x y : C⦄ (f : x ⟶ y), (Σ (q : φ x = φ y), q ▹ φ f = id) → Σ (p : x = y),  p ▹ f = id
\end{lstlisting}
The definition of inverse categories uses a specialised version of
\icode{id_reflect} for the case of $\mathbb{N}$\icode{op}, which does not
require $\varphi$ \icode{f} to be an identity.
\begin{lstlisting}
  definition id_reflect_ℕop {C : Category} (φ : C ⇒ ℕop) :=
    Π ⦃x y : C⦄ (f : x ⟶ y), φ x = φ y → (Σ (p : x = y), p ▹ f = id)
\end{lstlisting}
We use \icode{id_reflect} in our Lean implementation, and show that
these two definitions are logically equivalent. The proof of that uses
the fact that the only morphism \icode{f : x} $\longrightarrow$ \icode{x}
in $\mathbb{N}$\icode{op} is the identity morphism.

Now we can define inverse categories by equipping a category \icode{C} with a rank functor.
\begin{lstlisting}
structure has_idreflect [class] (C D : Category) :=
    (φ : C ⇒ D)
    (reflecting_id : id_reflect φ)

structure invcat [class] (C : Category) :=
    (reflecting_id_ℕop : has_idreflect C ℕop)
\end{lstlisting}

For the definition of the matching object we use the previously constructed limit in the
Pretype category (see remark \ref{tltt:rem:limits-pretype}).
\begin{lstlisting}
 lemma limit_nat_unit {C : Category.{1 1}}
                      (X : C ⇒ Type_category) (z : C)
        : limit_obj (limit_in_pretype X) = Nat(𝟙,X) := rfl

 definition matching_object.{u} {C : Category.{1 1}} [invcat C] (X : C ⇒ Type_category.{u}) (z : C) :=
    Nat(𝟙, (X ∘f (forget C z)))
\end{lstlisting}
Where $\mathbf{1}$ \icode{: C} $=>$ \icode{Type_category} is a functor, which is
constantly the unit type. The functor \icode{forget} is defined exactly as
specified in Section \ref{tltt:sec:applications}: an action on objects is
the projection \icode{to} of the \icode{coslice_obs} structure, and on morphisms the
action is a projection of the ``bottom'' morphism of the commutative triangle
(the morphism $u$ in Figure  \ref{fig:tltt:coslice-hom}), which is a first component
of the $\Sigma$-type defining morphisms in \icode{reduced_coslice}.
\begin{lstlisting}
  definition forget (C : Category) (c : C) : (c // C) ⇒ C :=
    ⦃ functor,
      object := λ a, to a,
      morphism := λ a b f, pr1 f,
      respect_id := λa, eq.refl _,
      respect_comp := λ a b c f g, eq.refl _ ⦄

\end{lstlisting}

\begin{remark}\label{tltt:rem:limits-pretype}[Limits and Pullbacks]
  The \texttt{limit.lean} contains general definition of limits as a terminal
  object in the category of cones. We also define here an explicit
  representation of limits (see \Ref{eq:tltt:lim-explicit}) along with the
  proof that this definition is isomorphic to our general definition. We also
  construct limits in the category of pretypes and show that the limit of the
  diagram $\mathbf{2} -> \strict{\UU}$ is isomorphic to the product pretype.
  The \texttt{pullback.lean} file contains definitions of pullbacks constructed
  in different ways along with proofs that the definitions are isomorphic.
\end{remark}

We chose to specialise Definition \ref{def:reedy-fibrations},
by taking $Y$ to be a constant functor to the unit type, instead of implementing a
general definition of a Reedy fibration. That is, we define the canonical map
$X_z \to M_z^X$  as the following:
\begin{lstlisting}
  definition matching_obj_map {C : Category.{1 1}}
                              [invC : invcat C]
                              (X : C ⇒ Type_category) (z : C)
       : X z → matching_object X z :=
         λ x, natural_transformation.mk (λ a u, X (hom_to a) x)
           begin
           -- proof of naturality is omitted
           end
\end{lstlisting}

Since matching object is the limit in the Pretype category, it is a natural transformation.
We give only the natural map in the definition, omitting the proof of the naturality condition for
brevity.

In our implementation we use the following definition of a fibration:
\begin{lstlisting}
definition is_fibration_alt [reducible] {E B : Type} (p : E → B)
  := Π (b : B), is_fibrant (fibreₛ p b)
\end{lstlisting}
Where \icode{fibreₛ} is a ``strict'' fibre, that is a fibre defined using strict
equality:
\begin{lstlisting}
definition fibreₛ {X Y : Type} (f : X → Y) (y : Y)
  := Σ (x : X), f x = y
\end{lstlisting}

Now we have all the ingredients to write a definition of Reedy fibrant diagrams.
\begin{lstlisting}
definition is_reedy_fibrant [class] (X : C ⇒ Type_category) :=
    Π z, is_fibration_alt (matching_obj_map X z)
\end{lstlisting}

\subsection{Proof of the Fibrant Limit Theorem}
In the current implementation, besides the general two-level framework, we have
implemented the machinery required to define Reedy fibrant diagrams and have fully
formalised a proof of Theorem~\ref{thm:fibrant-limits}.
The formalisation of Theorem~\ref{thm:fibrant-limits} closely follows the structure
given in the paper \cite{ann-cap-kra:two-level}.

Let us consider some steps of the proof in details. The base case was relatively
easy to prove, and we will focus on the inductive step. We use the lemma \icode{max_fun_to_ℕ}
to pick an element with the maximal rank from the category $\C$. The way we
do it closely corresponds to the proof of Lemma \ref{lem:tltt:max-rank-elem}
Surprisingly, proving that after removal of $z$ from $\C$ the resulting $\C'$ is
finite, inverse and that diagram $X' : \C' \to \strict \UU$ is still Reedy fibrant
required writing a lot of boilerplate code (see Remark \ref{rem:tltt-removing-z}).

The overall idea of the proof of the inductive step is to show that our goal is
strictly isomorphic to some fibrant type. Thus, one of the central parts of the
proof was a transformation of the limit of a Reedy fibrant diagram through the
chain of strict isomorphisms. In our Lean formalisation, it is implemented using
the \texttt{calc} environment, which gives a very convenient way of chaining
transitive steps.

We proved each step of the reasoning using isomorphisms in a separate lemma and
then chained them together in the \icode{calc} environment. The most essential
transformations are the \icode{limit_two_piece_limit_equiv} and the\\
\icode{two_piece_limit_pullback_p_q_equiv} lemmas.

The \icode{limit_two_piece_limit_equiv} lemma allows us to represent the limit as
a product of two parts:
\begin{lstlisting}[caption={Limit isomorphism},label={lst:tltt-lean-split-lim}]
(Σ (c : Π y, X y), Π y y' f, morphism X f (c y) = c y')
  ≃ₛ
(Σ (c_z : X z) (c : (Π y : C_without_z z, X y)),
   (Π (y : C_without_z z) (f : z ⟶ obj y ), X f c_z = c y) ×
   (Π (y y' : C_without_z z) (f : @hom (subcat_obj _ _) _ y y'),
      (Functor_from_C' z X) f (c y) = c y'))
\end{lstlisting}
Where \icode{C_without_z z} corresponds to the category $\C'$ (a category $\C$ with $z$ removed)
in Theorem \ref{thm:fibrant-limits}.
We explicitly construct a functor from the category $\C'$ using the functor $X : C \to \strict \UU$.
This lemma also implicitly includes the step \Ref{eq:tltt-lim-split}
(see Theorem \ref{thm:fibrant-limits}).
While constructing the isomorphism given in Listing \ref{lst:tltt-lean-split-lim} we perform a case analysis
on the equalities (since they are decidable) $y = z$ and $y' = z$, which gives us four cases
corresponding to the product of four parts in \ref{eq:tltt-lim-split}.
In case where $y=z$ and $y'=z$, we use the property that the only morphism from
$z$ to $z$ in $\C$ is the identity morphism.
The case where $y \neq z$ and $y' = z$ is impossible, since we have a morphism
$f : y --> z$, and $z$ has a maximal rank, the only morphism with $z$ as a codomain
is the identity morphism, but we know that $y \neq z$.

For the next transformation step, we use the \icode{two_piece_limit_pullback_p_q_equiv} lemma,
which allows us to get \Ref{eq:limit-3}:
\begin{lstlisting}
Σ (c_z : X z) d, p d = q c_z
\end{lstlisting}
Where \icode{p} and \icode{q} are defined as the following:
\begin{lstlisting}
  q := matching_obj_map X z
  p := map_L_to_Mz_alt z X
\end{lstlisting}
In the Lean implementation we use a tactic-level \icode{let} construct to declare these
maps. It is important to use \icode{let} here, although we could use the \icode{have} construct
instead. The difference is that \icode{let} allows us to keep definitions transparent for
the simplification.

The map \icode{p} is a map from the limit of the diagram \icode{X} restricted to
the category \icode{C_without_z z} to the matching object.
We use explicit an representation of the limit as a natural transformation:

\begin{lstlisting}
definition lim_restricted (X : C ⇒ Type_category) (z : C)
                          [invC : invcat C]
:= Σ (c : Π y, (Functor_from_C' z X) y),
     Π (y y' : C_without_z z) (f : @hom (subcat_obj C _) _ y y'),
       ((Functor_from_C' z X) f) (c y) = c y'

definition map_L_to_Mz_alt (z : C) (X : C ⇒ Type_category.{u})
                          [invC : invcat C]
                          (L : lim_restricted X z)
 : matching_object X z :=
    match L with
    | ⟨η, NatSq⟩ :=
        by refine natural_transformation.mk
           (λ a u, η (mk _ (reduced_coslice_ne z a)))
           (λ a b f, funext (λ u, NatSq _ _ _))
    end
\end{lstlisting}

To show that \icode{lim_restricted X z} is fibrant we would like to use the induction
hypothesis, but in order to do that we first have to show that \icode{(Functor_from_C' z X)}
is Reedy fibrant and that \icode{C_without_z z} is still a category with a finite object type.
We prove this facts in the separate lemmas \icode{Functor_from_C'_reedy_fibrant} and
\icode{C_without_z_is_obj_finite}, respectively.
\begin{remark}\label{rem:tltt-removing-z}
From the code above, one can see, that removing the element from \icode{C}
requires us to show to propagate this change through all the layers, such as definitions
of functors and limits, properties if category \icode{C_without_z z} etc.
These changes are usually considered ``obvious'' on paper, but in the formal
setting in a proof assistant could require significant efforts.
It was important to write the definitions related to these lemmas in such a way that
they can be simplified as much as possible using Lean's definitional equalities.
Currently, this part of the implementation is the least readable part.
Probably, there is a way to generalise this by developing suitable machinery to
work with subcategories, but we did not explore that possibility.
\end{remark}

By rearranging sigmas using \icode{sigma_swap} lemma, we get the following:
\begin{lstlisting}
  (Σ (c_z : X z) (d : lim_restricted X z), p d = q c_z) ≃ₛ
  (Σ (d : lim_restricted X z) (c_z : X z), q c_z = p d)
\end{lstlisting}
And this is exactly the pullback of the span
\begin{lstlisting}
  lim_restricted X z $-->$ matching_object X z $<--$ X z
\end{lstlisting}
By the lemma \ref{lem:fibrant-pullback} (which we call \icode{Pullback'_is_fibrant}),
and we know that the map
\begin{lstlisting}
  (Σ (d : lim_restricted X z) (c_z : X z), q c_z = p d) -> lim_restricted X z}
\end{lstlisting}
is a fibration.

To complete the proof, we need to show that the domain of a fibration with fibrant codomain
is fibrant, i.e.
\begin{lstlisting}
  Π (p : E → B), is_fibration_alt p → is_fibrant E
\end{lstlisting}
In order to do this we make use of the fact that we can ``contract'' some parts of the type.
This is known as a \emph{singleton contraction} (see Lemma 3.11.8 in \cite{hott-book}):
\begin{lstlisting}
  definition singleton_contrₛ [instance] {A}
    : Π b, (Σ (a : A), b = a) ≃ poly_unit
\end{lstlisting}
Here \icode{poly_unit} is a universe-polymorphic unit type. We mark our definition
with the \icode{[instance]} attribute to make this definition available for Lean's
type class instance resolution mechanism. The full proof of the lemma looks very concise:
\begin{lstlisting}
definition singleton_contr_fiberₛ {E B : Type} {p : E → B}
  : (Σ b, fibreₛ p b) ≃ₛ E :=
    calc
    (Σ b x, p x = b) ≃ₛ (Σ x b, p x = b) : _
                ...  ≃ₛ (Σ (x : E), poly_unit) : _
                ...  ≃ₛ E : _
\end{lstlisting}
This lemma shows another example where type classes are convenient to use in proofs.
All the witnesses for the proof steps are inferred automatically on the base of
available instances of the strict isomorphism. For the second step, Lean's inference
is able to infer that we first need to apply a congruence for $\Sigma$-types and
then use a singleton contraction lemma. The resulting term, which was constructed by
Lean looks as follows (showing all the implicit arguments and replacing
notation with textual representation):
\begin{lstlisting}
  @sigma_congr₂ E (λ x, @sigma B (@eq B (p x)))
                  (λ x, poly_unit)
                  (λ a, @singleton_contrₛ B (p a))
\end{lstlisting}

Now we can show, that for any fibration \icode{p : E → B} and fibrant \icode{B},
the type \icode{E} is fibrant. First, use \icode{singleton_contr_fiberₛ} to show \icode{E} is
strictly isomorphic to \icode{(Σ b, fibreₛ p b)}, and this type is fibrant,
since $\Sigma$-type preserve fibrancy, \icode{B} is fibrant and
\icode{$\Pi$ (b : B), is_fibrant (fibre$_s$ p b)} from the fact that \icode{p} is a fibration.
Again, in Lean it is sufficient to give a hint, which type \icode{E} is equivalent to, and
the rest could be resolved automatically. The resulting proof looks very concise:
\begin{lstlisting}
  definition fibration_domain_is_fibrant {E : Type} {B : Fib} :
    Π (p : E → B), is_fibration_alt p → is_fibrant E
      := λp is_fibr_p, @equiv_is_fibrant (Σ b, fibreₛ p b) _ _ _
\end{lstlisting}
Application of this lemma concludes the proof of the Theorem \ref{thm:fibrant-limits}.

\subsection{Additional facts}
In the proof of Theorem \ref{thm:fibrant-limits} we use some properties of
strict isomorphism and finite sets. These lemmas can be found in the files
\texttt{facts.lean} and \texttt{finite.lean} of our implementation. For the
strict isomorphism (which is called \icode{equiv} in the Lean's standard
library), we implemented congruence lemmas for $\Pi$- and $\Sigma$-types.

\begin{lstlisting}
pi_congr₁ [instance] {F' : A'  → Type} [φ : A ≃ A']
    : (Π (a : A), F' (φ ∙ a)) ≃ (Π (a : A'), F' a)

pi_congr₂ [instance] {F G : A → Type}  [φ : Π a, F a ≃ G a]
    : (Π (a : A), F a) ≃ (Π (a : A), G a)

pi_congr [instance] {F : A → Type} {F' : A' → Type}
                    [φ : A ≃ A'] [φ' : Π a, F a ≃ F' (φ ∙ a)]
    : (Π a, F a) ≃ Π a, F' a

sigma_congr₁ [instance] {A B: Type} {F : B → Type} [φ : A ≃ B]
    : (Σ a : A, F (φ ∙ a)) ≃ Σ b : B, F b

sigma_congr₂ [instance] {A : Type} {F G : A → Type}
                        [φ : Π a : A, F a ≃ G a]
    : (Σ a, F a) ≃ Σ a, G a

sigma_congr {A B : Type} {F : A → Type} {G : B → Type}
            [φ : A ≃ B] [φ' : Π a : A, F a ≃ G (φ ∙ a)]
    : (Σ a, F a) ≃ Σ a, G a
\end{lstlisting}

We used these lemmas in many proofs including the proofs of Theorem \ref{thm:fibrant-limits},
where transformation thorough the sequence of isomorphisms is an important part
of the proof. As mentioned before, properties of the isomorphism are
instances of a corresponding type class. That allows for proving some goals
involving isomorphisms automatically.

Properties of finite sets required for our development are related to the
removal of an element from a given finite set.
\begin{lstlisting}
  fin_remove_max {n : ℕ} :
      (Σ i : fin (nat.succ n), i $\neq$ maxi) ≃ fin n

  fin_remove_equiv {n : ℕ } (z : fin (nat.succ n))
    : (Σ i : fin (nat.succ n), i $\neq$ z) ≃ fin n
\end{lstlisting}
The \icode{fin_remove_max} lemma shows that removing the maximal element gives
us a finite set of smaller cardinality, and \icode{fin_remove_equiv}
generalises this result to the removal of any element. The proof of the last lemma
uses several additional lemmas allowing us to manipulate finite sets using
transpositions:
\begin{lstlisting}
definition fin_transpose {n} (i j k : fin n) : fin n :=
    match fin.has_decidable_eq _ i k with
    | inl _ := j
    | inr _ := match fin.has_decidable_eq _ j k with
               | inl _ := i
               | inr _ := k
               end
    end
\end{lstlisting}

\section{Other Formalisations}
\subsection{The Boulier-Tabareau Coq Development}\label{tltt:subsec:Boulier-Tabareau}

Boulier and Tabareau~\cite{boulier:formalisation} have implemented a theory with two equalities in the Coq proof assistant~\cite{bertot:coq}.  It uses an approach that is somewhat similar to our own development of two-level type theory.
In particular, the authors use Coq type classes to track fibrant types and exploit the corresponding features of the type class resolution mechanism to derive fibrancy automatically. However, there are some differences in the details of how the fibrant equality type is implemented.

In our Lean development we postulate a fibrant equality type and the equality eliminator, while in~\cite{boulier:formalisation} the authors define it as a \emph{private} inductive type~\cite{bertot:private}. This feature of the Coq proof assistant allows one to define an inductive type so that no eliminators are generated and no pattern-matching is allowed outside of the module where this type is defined. Exposing a custom induction principle for such a private inductive type allows one to retain computational behaviour, while restricting the user to explicitly provided eliminators.

Although such an implementation has some advantages, like making more proofs compute, it relies on specific implementation details. In the current version of Coq, private inductive definitions are still an experimental extension. The authors of~\cite{boulier:formalisation} had to use a custom rewrite tactic implemented in OCaml in order to fix an incorrect behaviour of the private definition under the standard Coq rewriting tactic.

Our development in Lean could be seen as more explicit and straightforward approach to the implementation of a two-level type theory, and the simplicity of the encoding of fibrancy constraints makes it potentially more portable to different systems, as long as they are equipped with a powerful enough type class resolution mechanism.

\subsection{Experience with Agda}\label{tltt:subsec:agda}

Our choice of Lean as the language for the formalisation of this paper has been
a consequence of a failed attempt at embedding two-level type theory in the
Agda proof assistant~\cite{norell:towards}.

Analogously to the development that has been eventually realised in Lean, our
plan was to consider Agda's underlying theory, which includes \textsc{uip}, as
the strict fragment of our two-level type theory, and use \emph{instance
  arguments}, which are Agda's implementation of type classes, to express
fibrancy conditions on pretypes.

Unfortunately, due to the way instance and implicit arguments get resolved in
Agda, we were not able to get automatic propagation of fibrancy conditions over
type expressions involving families of types, such as $\Pi$ or $\Sigma$ types
in our initial attempt in Agda.

The self-contained example by Paolo Capriotti shows that certain ways of defining
fibrancy condition fail to resolve implicit arguments.
\begin{lstlisting}
module tltt where


postulate

  is-fibrant : ∀ {i} → Set i → Set i
  instance Π-is-fibrant : ∀ {i}{j}{A : Set i}{B : A → Set j}
                    → ⦃ fibA : is-fibrant A ⦄
                    → ⦃ fibB : (a : A) → is-fibrant (B a) ⦄
                    → is-fibrant ((a : A) → B a)

module _ {i} {A : Set i} ⦃ fibA : is-fibrant A ⦄ where

  postulate

    _~_ : A → A → Set i

module test {i}{j}{A : Set i}{B : A → Set j}

  ⦃ fibA : is-fibrant A ⦄

-- this will work, if we change (a : A) → is-fibrant (B a)
-- to {a : A} → is-fibrant (B a)
  ⦃ fibB : (a : A) → is-fibrant (B a) ⦄ where

  test : (f : (a : A) → B a) → f ~ f

  test = ?

\end{lstlisting}
As it became clear later, a small change in the definition of \icode{fibB}
in the test module from \icode{fibB : (a : A) → is-fibrant (B a)} to
\icode{fibB : \{a : A\} → is-fibrant (B a)} will make Agda's resolution
mechanism work. It was not clear from the documentation why the original definition
fails to work\footnote{Thanks to Nils Danielsson for pointing out what needs to be changed in this example.
  Also, see \url{https://github.com/agda/agda/issues/2755}}. Although there is a way
to make the example above work, it is still not clear if it is possible to develop two-level type
theory in Agda in the same way as we have done in Lean.

We therefore considered alternative approaches, such as postulating a universe
of fibrant types and the corresponding type formers.  Using a certain trick
suggested by Thorsten Altenkirch, one can make sure that the fibrant type
formers agree with the primitive ones.  The trick is similar to the one that
allows higher inductive types with judgemental reduction rules to be
implemented in Agda~\cite{hott-agda}.

However, it appeared that such an approach, although probably feasible, is not
as convenient and immediate as the solution based on type classes that we
eventually settled with in Lean.

\section{Conclusion}
Two-level type theory is a promising approach to internalisation of results
which are currently only partially internal to HoTT, and it is unclear if they
can be fully internalised.  We have demonstrated that two level type theory can
be implemented in an existing proof assistant, outlining the general idea of the
implementation. The approach to the implementation is suitable for most of the
existing proof assistants based on dependent type theory, since we do not rely
on implementation-specific details. Although, for our implementation to be
convenient to use, one will require type classes, proof automation, or some way
to add new judgmental equalities.

Our Lean development should still be considered a proof of concept, as
it does not fully implement all the results presented in the paper
\cite{ann-cap-kra:two-level}.  However, we hope that it serves as a compelling
demonstration of the feasibility of our formalisation approach. To test how one
can work in the fibrant fragment in our Lean development, we have ported some
theorems from the Lean HoTT library. We used proof automation to mimic
computational behavior of the $\beta$-rule for the fibrant equality
eliminator. In most cases that we have considered, modifications of proofs were
not substantial. Although, it is worth pointing out that in some situations it
is inconvenient to write statements where reduction in types involving the
$\beta$-rule for fibrant equality is required.

As a possible extension of results presented in this work, one could consider
to explore the conservativity result from \cite{paolo:thesis}. Having
conservativity, one could take, for example, a definition of $n$-restricted
semi-simplicial types in in two-level type theory. Instantiating the definition
with particular \emph{strict} natural number (i.e. some $n : \strict \N$) 0, 1,
3, etc., and evaluating the term in the strict fragment one could acquire a term,
which belongs to the fibrant fragment of our two-level type theory. Since the
fibrant fragment represents HoTT, it should be possible to use this term in the
proof assistant, where HoTT is supported directly (after converting the term
appropriately). In the context of our Lean development it would require only
minimal efforts, since Lean (version 2) has a mode supporting HoTT
``natively''.

\chapter{Conclusion}

The results of our work show how we have addressed the questions that we stated
in the introduction to the thesis.

First, we have developed a payoff intermediate language along with a
compilation procedure. The compilation procedure allows us to translate contract
specifications in a domain-specific language for financial contracts (CL) to
payoff expressions. The template extension to the original contract language and
the parameterisation of payoff expressions with the ``current time'' parameter
allow for compiling contact templates (or instruments) once and then reuse the
compiled code, instantiating the parameters with different values. Our
experience with Haskell code generation shows that payoff expressions are relatively
easy to map to a subset of a functional language, and we expect that code
generation into the Futhark language \cite{Henriksen:Futhark} could be
implemented in a similar way.  Moreover, performance measurements for payoff expressions
compiled ``by hand'' to OpenCL show that for simple contracts the runtime
overhead is very small in comparison with the recompilation time. Although,
one could potentially find some limitations for more complicated contracts,
properties of our implementation do not restrict one from using the contract
reduction and recompilation approach. All the development has been carried out
in the Coq proof assistant from the beginning and all proofs and definitions
related to the payoff intermediate language (including the soundness proofs in
Chapter \ref{chpt:contracts}) are high-level explanations of the Coq
formalisation.  We can say that we successfully addressed most of the questions
related to this part of the thesis.

Second, we have partially formalised static interpretation of a higher-order
module system for Futhark. Particularly, we have defined the core notions
required for the formalisation, namely semantic objects, along with required
relations.  We have proved one part of the normalisation theorem, specifically, that
static interpretation terminates with a target language expression. Another
part of the theorem related to the typing of target language expressions has
not been formalised (we leave this for future work).

The implementation of semantic objects turned out to be surprisingly tricky due
to the conservative strict positivity check in Coq. We have developed a
technique allowing for another representation of finite maps to be used in the
definition of semantic objects. We have proved that this representation is
isomorphic to the one from the standard library of Coq and used the implicit
coercion mechanism to reuse operations on finite maps from the standard
library. We have built most of the required machinery to deal with bindings in
semantic objects using nominal techniques. For a simplified notion of semantic
objects, we have developed a corresponding nominal set and have defined
$\alpha$-equivalence. In the full setting we have sketched the approach and
have developed examples showing the feasibility of our approach. However these
changes are not yet incorporated into the proof of Theorem \ref{norm.prop}.

Our experience shows that some restrictions on definitions in Coq do not allow
one to use abstractions properly. For example, in the definition of
semantic objects, it is impossible to use an abstract type of finite maps,
otherwise Coq will not be able to check the definition for strict
positivity. When using the \icode{Fixpoint} construct one has to be very
explicit about recursive calls and in most cases it is not possible to call
another function, implementing for example a nested fixpoint. Instead, one has
to inline the definition (see Remark \ref{rem:modules:nested-fix}).

We have developed most of the machinery required for the full formalisation of
static interpretation normalisation, and we believe it is possible to finish
the formalisation given enough time. Having the full result we can then approach
the problem of extracting a certified implementation from our formalisation.

Third, we have implemented a formalisation of two-level type theory in the Lean
proof assistant using only features available for the users and not by modifying
Lean's code.  We showed that in order to make such an encoding usable in an
existing proof assistant, support for type classes, or support for a similar
mechanism is required. Our example application to internalisation of inverse
diagrams shows that our implementation is suitable for this purpose.

We believe that there are no conceptual difficulties in extending our Lean
development with other results from \cite{ann-cap-kra:two-level} because most of
the reasoning happens in the strict fragment. However, such an extension requires a
well-developed standard library to avoid formalising notions not directly related
to the results. We would like to mention that working in the fibrant fragment is
somewhat less convenient in comparison with the proof assistant with
``native'' support for homotopy type theory. This is mainly due to the lack of
computational behavior of the fibrant equality eliminator. We have found ways
to overcome this limitation, but it could still cause inconveniences for more
complicated proofs in the fibrant fragment.

\bibliographystyle{alpha} \bibliography{thesis}
\end{document}